\setlist{nosep} % remove extra spacing from lists 
\def\nrm#1{\mathopen{}\left\Vert #1\right\Vert\mathclose{}}
\def\floor#1{\mathopen{}\left\lfloor #1\right\rfloor\mathclose{}}
\def\ceil#1{\mathopen{}\left\lceil #1\right\rceil\mathclose{}}
\def\at#1{\mathopen{}\left(#1\right)\mathclose{}}
\def\set#1{\mathopen{}\left\{  #1\right\}\mathclose{}}
\def\ket#1{\mathopen{}\left|#1\right\rangle\mathclose{}}
\def\abs#1{\mathopen{}\left|#1\right|\mathclose{}}
\def\z{\mathbb{Z}}
\def\n{\mathbb{N}}
\def\c{\mathbb{C}}
\def\q{\mathbb{Q}}
\def\f{\mathbb{F}}
\def\r{\mathbb{R}}
\def\dst{\mathcal{D}_\diamond}
\def\tr{\mathrm{Tr}}
\newcommand{\eq}[1]{\hyperref[eq:#1]{(\ref*{eq:#1})}}
\renewcommand{\sec}[1]{\hyperref[sec:#1]{Section~\ref*{sec:#1}}}
\newcommand{\app}[1]{\hyperref[app:#1]{Appendix~\ref*{app:#1}}}
\newcommand{\theo}[1]{\hyperref[thm:#1]{Theorem~\ref*{thm:#1}}}
\newcommand{\algo}[1]{\hyperref[alg:#1]{Algorithm~\ref*{alg:#1}}}
\newcommand{\lemm}[1]{\hyperref[lem:#1]{Lemma~\ref*{lem:#1}}}
\newcommand{\defin}[1]{\hyperref[defn:#1]{Definition~\ref*{defn:#1}}}
\newcommand{\corr}[1]{\hyperref[cor:#1]{Corollary~\ref*{cor:#1}}}
\newcommand{\fig}[1]{\hyperref[fig:#1]{Figure~\ref*{fig:#1}}}
\newcommand{\tab}[1]{\hyperref[tab:#1]{Table~\ref*{tab:#1}}}
\newcommand{\propos}[1]{\hyperref[prop:#1]{Proposition~\ref*{prop:#1}}}
\newcommand{\problem}[1]{\hyperref[prob:#1]{Problem~\ref*{prob:#1}}}
\newcommand{\rema}[1]{\hyperref[rem:#1]{Remark~\ref*{rem:#1}}}
\newcommand{\exm}[1]{\hyperref[exm:#1]{Example~\ref*{exm:#1}}}
\newtheorem{thm}{Theorem}[section]
\newtheorem{lem}[thm]{Lemma}
\newtheorem{prop}[thm]{Proposition}
\newtheorem{cor}[thm]{Corollary}
\newtheorem{prob}[thm]{Problem}
\newtheorem{dfn}[thm]{Definition}
\theoremstyle{remark}
\newtheorem{rem}[thm]{Remark}
\newcommand{\parnum}{[\arabic{parcount}]}
\newcounter{parcount}
\newcommand{\CP}[1]{}
\newcommand{\VK}[1]{}
\newcommand{\change}[1]{{#1}}
\begin{document}

\title{Shorter quantum circuits via single-qubit gate approximation}
% \date{\today}
% Authors in the alphabetical order
%
\author{Vadym Kliuchnikov}
\affiliation{Microsoft Quantum, Redmond, WA, US}
\affiliation{Microsoft Quantum, Toronto, ON, CA}
\author{Kristin Lauter}
% \affiliation{Microsoft Research, Redmond, WA, US}
\affiliation{Facebook AI Research, Seattle, WA, US}
% \email{latex@quantum-journal.org}
% \homepage{http://quantum-journal.org}
% \orcid{0000-0003-0290-4698}
% \thanks{You can use the \texttt{\textbackslash{}email}, \texttt{\textbackslash{}homepage}, and \texttt{\textbackslash{}thanks} commands to add additional information for the preceding \texttt{\textbackslash{}author}. If applicable, this can also be used to indicate that a work has previously been published in conference proceedings.}
\author{Romy Minko}
\affiliation{University of Oxford, Oxford, UK}
\affiliation{Heilbronn Institute for Mathematical Research, University of Bristol, Bristol, UK}
\thanks{This work was supported by the CDT in Cyber Security at the University of Oxford (EP/P00881X/1) and the Additional Funding Programme for Mathematical Sciences, delivered by EPSRC (EP/V521917/1) and the Heilbronn Institute for Mathematical Research.}
\author{Adam Paetznick}
\affiliation{Microsoft Quantum, Redmond, WA, US}
\author{Christophe Petit}
\affiliation{University of Birmingham, Birmingham, UK}
\affiliation{Universit\'{e} Libre de Bruxelles, Brussels, Belgium}

\maketitle

\begin{abstract}
  We give a novel procedure for approximating general single-qubit unitaries from a finite universal gate set by reducing the problem to a novel magnitude approximation problem, achieving an immediate improvement in sequence length by a factor of 7/9. Extending the work of \cite{Hastings2017,Campbell2017}, we show that taking probabilistic mixtures of channels to solve fallback~\cite{BRS2015} and magnitude approximation problems saves factor of two in approximation costs.
  In particular, over the Clifford+$\sqrt{\textnormal{T}}$ gate set we achieve an average non-Clifford gate count of $0.23\log_2(1/\varepsilon)+2.13$ and T-count $0.56\log_2(1/\varepsilon)+5.3$ with mixed fallback approximations
  for diamond norm accuracy $\varepsilon$.
      
  This paper provides a holistic overview of gate approximation, in addition to these new insights. We give an end-to-end procedure for gate approximation for general gate sets related to some quaternion algebras, providing pedagogical examples using common fault-tolerant gate sets (V, Clifford+T and Clifford+$\sqrt{\textnormal{T}}$). We also provide detailed numerical results for Clifford+T and Clifford+$\sqrt{\textnormal{T}}$
  gate sets. In an effort to keep the paper self-contained, we include an overview of the relevant algorithms for integer point enumeration and relative norm equation solving. We provide a number of further applications of the magnitude approximation problems, as well as improved algorithms for exact synthesis, in the Appendices. 
\end{abstract}

\newpage
\setcounter{tocdepth}{2}
\tableofcontents
\newpage

%\CP{
%I wonder if we could/should add a small section to our paper (or maybe a paragraph in the contribution section), describing the concrete impact of our improvements on the unitary approximation problem to some flagship quantum algorithms, in particular the QFT up to some precision, then integer factoring, DLP and ECDLP?
%Naively, it seems to me that a factor C improvement on gate count for diagonal matrices (provided by sections 4.4 and 4.5?) leads to an improvement between C and $C^2$ for the QFT, and then the same factor for integer factoring, DLP, ECDLP? 
%Figuring out the exact factor for the QFT shouldn't be difficult (I guess C up to logarithmic factors?), and the rest would just be a matter of taking the current best result in the literature and scaling it down. 
%If we can reduce the number of gates needed for these applications, I suspect this will be of significant interest to the crypto community. 
%}

\section{Introduction}
%Section is about motivation and previous work.
%\begin{itemize}
%    \item Why do we need to approximate single diagonal qubit gates ?
%    \item What is the cost of quantum computation ?
%    \item How do we measure distance between ideal gate and its approximation ?
%    \item Why we consider V, Clifford + T and Clifford + $\sqrt{T}$ gates? Discuss the relative cost of the three gate-sets. Discuss why
%          in quantum computinng only l=5 for V gates is considered.
%    \item Related work overview
%    \item Very short summary of main contributions
%\end{itemize}

In the quantum circuit model, quantum algorithms are expressed as sequences of unitary operations and measurements. Any $n$-qubit unitary can be implemented by a circuit of elementary gates, comprising controlled-NOT (CNOT) gates and single-qubit gates~\cite{Barenco1995}. Fault tolerant quantum computers require that the single-qubit gates belong to a finite set. Such a set is called universal if it generates a dense covering of $SU(2)$. That is, if any unitary $U\in SU(2)$ can be approximated to any accuracy by a finite sequence of gates from the set. Of particular interest is the subject of approximating single-qubit diagonal unitaries, as a Euler decomposition guarantees that any single-qubit unitary can be decomposed into diagonal $R_z$- and $R_x$-rotations. In addition, diagonal $R_z$ rotations are directly used in many quantum algorithm.

The cost of an approximation is quantified by the gate complexity, or gate cost. Associating each gate $g_i$ in a sequence with a weight $w_i$, the gate cost of that sequence is $\sum_iw_i$. The gate cost of approximating $U$ to within $\varepsilon$ is then taken as the minimum gate cost of all possible approximating sequences. Note that select gates, such as the Pauli or Clifford gates, are considered cheap to implement and so take zero weight. Typically, expensive gates will be given a weight of 1, so that the gate cost of an approximation corresponds to the number of expensive gates in the sequence. Consequently, minimizing the length of an approximating sequence is a problem integral to the subject of gate synthesis. A fundamental and general result is the Solovay-Kitaev theorem, which states that a universal gate-set $G$ can approximate any unitary $U\in SU(2)$ to accuracy $\varepsilon$ by a finite sequence of gates from $G$ of length $O(\log^c(1/\varepsilon))$, where $c$ is a constant. Significant progress has been made since Solovay-Kitaev for specific gate-sets associated with fault-tolerant quantum computers. Bourgain and Gamburd~\cite{bourgain2011spectral} showed that universal gate-sets of unitaries with algebraic entries give approximating sequences with lengths $O(\log(1/\varepsilon))$.
\change{Such gate-sets are called efficiently universal.}
This result was quickly applied to find efficient constructive algorithms for the Clifford$+$T gate set~\cite{A3,Selinger} and, later, the V basis~\cite{BGS}. Constructive algorithms for optimal diagonal approximations for both gate sets followed soon thereafter~\cite{RossSelinger2014,Ross2015,BlassEtAl2015}. Many of these algorithms adopted a common framework of integer point enumeration followed by a solving a norm equation. 

\begin{figure}[h!]
    \centering
    \includegraphics[scale=1.2]{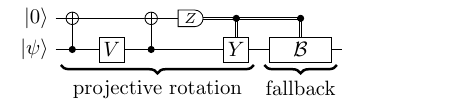}
    \caption{
        \label{fig:fallback-circuit}
        A fallback protocol circuit \cite{BRS2015}. 
        The circuit is composed of two steps, a projective rotation step and a fallback step. 
        The projective rotation step effects one of two diagonal rotations on the input $\ket{\psi}$ depending on the $Z$-basis measurement outcome of the top ancilla qubit.  
        The two rotation angles are determined by the matrix entries of the unitary $V$.
        If the measurement outcome is one, then a Pauli $Y$ is applied followed by the fallback step $\mathcal{B}$. 
    }
    % Q: Picture is not obvious to read with non-quantum background, in particular the meaning of single vs double wires, and where measurement takes place, are not obvious
    % A: single wires correspond to qubits; double wires correspond to classical bits 
    % |Z) is a computational basis measurement
    % Gates |Y| and |B| are executed only if the measurement outcome in one 
    % 
\end{figure}

Measurements were also introduced to aid gate synthesis~\cite{paetznick2013repeat}, culminating in the fall-back circuit~\change{(\cref{fig:fallback-circuit})}, although this work was a departure from the common framework. Sarnak~\cite{sarnak2637letter} noted a connection between gate synthesis and quaternion algebras in his letter to Aaronson and Pollington, which has been used to build frameworks for both exact~\cite{Kliuchnikov2015a}\footnote{This work has been developed independently of \cite{sarnak2637letter}.} and approximate synthesis~\cite{Kliuchnikov2015b}. The letter also characterizes `golden gate sets', of which the Clifford$+$T and V gates are examples, that achieve optimal sequence lengths. For approximations of diagonal unitaries, this is shown to be $3\log_\ell(1/\varepsilon)$, \change{where $\ell$ is a parameter determined by the number theoretic structure of each gate-set.}
These results are further expanded and generalizations of two-step approach to diagonal approximations from \cite{RossSelinger2014} to other gate gets also discussed in~\cite{Parzanchevski}.
Recent research~\cite{Campbell2017,Hastings2017} shows that approximation with quantum channels, rather than unitaries, achieves quadratic improvement in $\varepsilon$ and reduces the length by factor of two.

The quality of an approximation $V$ for some desired unitary $U$ is captured by the accuracy parameter $\varepsilon$. 
\change{When we want to measure the distance between two unitary operators, we usually use the operator norm, which is the maximum singular value of the difference between the two operators. This allows us to bound how accurate a quantum algorithm is, based on how well we can approximate the unitary operations that make up the algorithm with other unitary operations~\cite{nielsen00}. However, in our work, we consider methods for approximating unitary operators using measurements, conditional gates, and probabilistic mixtures. This results in quantum channels, which are completely-positive trace-preserving linear maps on density matrices.
To measure the distance between two $n$-qubit quantum channels, we use the diamond norm, which is defined as
\begin{equation}
    \nrm{\mathcal{U}-\mathcal{V}}_\diamond=\max\limits_\rho\set{ \nrm{((\mathcal{U}-\mathcal{V})\otimes \mathcal{I}_{2^n})\at{\rho}}_1}, \text{ where } \nrm{A}_1 = \mathrm{Tr}\sqrt{A^\dagger A}
\end{equation}
where $\mathcal{I}_{2^n}$ is the identity map from $\mathbb{C}^{2^n\times 2^n}$ into $\mathbb{C}^{2^n\times 2^n}$ and the maximum is taken over all probability density matrices $\rho$. The diamond norm allows us to estimate the accuracy of a quantum algorithm based on the inaccuracies of the channels that make up the algorithm~(\app{diamond-norm-properties}). We also recall in~\app{diamond-norm-properties} that the diamond norm is bounded above by twice the spectral norm of the difference between the unitary operators that correspond to the channels. This makes it easy to compare our results, which are stated in terms of diamond norm accuracy, with previous results that are stated in terms of spectral norm accuracy.
} 

We consider three universal gate sets associated with fault-tolerant quantum computation: the V basis, the Clifford+T basis and the Clifford+$\sqrt{\text{T}}$ basis. The Clifford+T gate set is commonly used for fault tolerant quantum computation, and is known to be efficiently universal \cite{Selinger}, with gate cost depending solely on the number of $T$ gates used. 
The V basis was shown to be efficiently universal in~\cite{harrow2002efficient}, and provides a simple pedagogical example of approximation. The  Clifford$+\sqrt{\text{T}}$ gate set is an alternative to the Clifford$+$T gate set for fault tolerant computing. The merits of these gate sets with regard to fault tolerant computing are discussed in greater detail in \sec{fault-tolerant-gate-sets}.

The rest of this paper is organized as follows. \sec{main-results} summarizes our main results. In \sec{crypto-connection}, we briefly discuss connections between gate synthesis and cryptography. \sec{approximation-problems} defines the five approximation problems that are the focus of this paper. We detail a complete method for solving these problems in \sec{approx-solutions}, with examples for the V, Clifford$+$T and Clifford $+\sqrt{\text{T}}$ gate sets, in addition to a general solution. 
We provide extensive numerical results for our method in \sec{numerical-results} for Clifford$+$T and Clifford$+\sqrt{\text{T}}$. \sec{point-enum} and \sec{normeqsolv} recall algorithms for solving two problems that arise in our solution method: integer point enumeration and norm equation solving. 

\subsection{Fault tolerant gate sets} \label{sec:fault-tolerant-gate-sets}
Unitary synthesis translates the description of a quantum algorithm into a sequence of operations gates") that can be implemented on the target quantum computer.  The set of operations permitted by a particular quantum computing platform are limited by physical constraints and may not match the operations prescribed in the algorithm.  Moreover, even if the operations offered by the quantum computer match the operations in the algorithm, the accuracy to which the quantum computer can perform each operation is likely to be limited.  Loosely speaking, existing quantum computing systems offer single-qubit unitary operations with accuracy up to 
$10^{-4}$~(see~\href{https://static-content.springer.com/esm/art\%3A10.1038\%2Fs41586-019-1666-5/MediaObjects/41586_2019_1666_MOESM1_ESM.pdf\#page=19}{Fig. S.17} in \cite{Supremacy2019} for the accuracy of one and two qubit gates)
whereas useful quantum algorithms require accuracy of $10^{-10}$ or better~(see \href{https://journals.aps.org/prresearch/pdf/10.1103/PhysRevResearch.3.033055#t1}{Table I} in~\cite{Catalysis2021} for 
the typical number of gates in useful quantum algorithms).

Fault-tolerant quantum computation bridges the accuracy gap by encoding many physical qubits into a smaller number of logical qubits.  To guarantee accuracy, logical qubits must be encoded at all times. Operations of the quantum algorithm must be performed on the logical qubits \emph{while} they are encoded. Each logical operation must both preserve the code structure and carefully limit the spread of errors. Those requirements restrict the available set of logical quantum operations.

The cheapest form of logical operations involves executing the same physical operation to each of the physical qubits in the code.  For example, some codes admit the logical Hadamard operation by executing the physical Hadamard operation to each physical qubit. Unfortunately, these so-called "transversal" gates can yield only a sparse discrete set within a single quantum code 
\cite{EastinKnill,BravyiKonig,WebsterBartlett}.

Stabilizer codes, the most widely studied class of quantum error correcting codes, typically admit transversal implementation of some or all of the Clifford group---the group generated by \{$H$, $S$, CNOT\}. A broad and widely studied subset of stabilizer codes called CSS support transversal implementation of the CNOT operation, for example.\footnote{CSS is an initialism that comes from the three authors that first defined the codes: Calderbank, Shor and Steane.}  Circuits composed of Clifford group operations have the added benefit of being efficiently simulable, an essential feature for the study of fault tolerant quantum computing schemes.

At least one operation outside of the Clifford group is required for universal quantum computation (see Theorem 6.7.3 in \cite{Nebe2006}). In most fault-tolerant quantum computing proposals, that non-Clifford operation is the $T$ gate, \change{$T = \left(\begin{smallmatrix}e^{-i\pi/8}&0\\0&e^{i\pi/8}\end{smallmatrix}\right)$.}  The logical $T$ operation is typically implemented by "distillation" which combines many (noisy) physical $T$ gates with (less noisy) logical Clifford operations \cite{BravyiKitaevDistillation}.  Distillation is regarded as the most efficient known technique for non-Clifford gates, but remains roughly an order of magnitude more expensive than transversal operations despite much study \cite{BeverlandDistillation}.  Distillation can be used to construct other operations, but known distillation protocols are limited to operations that belong to the so-called "Clifford hierarchy" introduced in~\cite{Gottesman1999}.  Of the known distillation techniques the most cost competitive operations are $T$ and the three qubit double-controlled-Z (CCZ) \cite{Gidney2019}. \change{Controlled on the first two qubits, the CCZ gate sends $\ket{111}\rightarrow -\ket{111}$ and $\ket{x}\rightarrow\ket{x}$ for all other $x\in\{0,1\}^3$.}

To the best of our knowledge, there are no operations outside of the Clifford hierarchy that admit implementation through distillation of the corresponding resource states.
However, by including measurement it is possible to construct other kinds of circuits out of fault tolerant Clifford+$T$ gates.  For example, $V_Z = (I + 2iZ)/\sqrt{5}$ can be implemented with the circuit shown in Figure \ref{fig:vz-rus-circuit}. The idea then is to approximate a unitary $U$ with a sequence of Clifford+$V_Z$ gates then substitute Figure \ref{fig:vz-rus-circuit} for each $V_Z$ in the sequence. Unfortunately, that strategy yields higher number of $T$ gates than synthesis with Clifford+$T$ alone. The strategy could be redeemed with cheaper Clifford+$T$ implementations of $V_Z$ or similar non-Clifford gates.
But finding such circuits is difficult,and the best known circuits do not produce better resource requirements overall. 
Though we consider the $V$-basis gate set in this paper, it is largely for instructional purposes.

\CP{Has this search been systematized somehow? sounds like an interesting problem. Also, with a better clue on the respective cost of constructing various gate sets this way, we can perhaps combine these various gate sets and improve overall $T$-gate cost.}
\VK{ An attempt was made in \cite{Paetznick2013b}, I have also wrote some code to search for them in past (un-published). First, I would check how far are we from the lower-bounds described in 
Section 5 in \href{https://arxiv.org/pdf/1904.01124.pdf}{arXiv:1904.01124}. In particular, one should check the lower-bounds in post-selected measurements model that would be stronger than \href{https://arxiv.org/pdf/1904.01124.pdf\#thm.5.2}{Theorem 5.2 in arXiv:1904.01124}.
In \href{https://arxiv.org/pdf/1904.01124.pdf}{arXiv:1904.01124} things get looser when we go from post-selected model to a standard quantum computation model. It is plausible that we are quite close to those lower-bounds with fall-back strategy.
One can use  \href{https://arxiv.org/pdf/1904.01124.pdf}{arXiv:1904.01124} to show lower-bounds on the cost of particular gate-sets in terms of Clifford+T.
This will let one to rule-out a lot of potential candidates.}

Incorporation of measurements into Clifford+$T$ circuits can be used to emulate other gates \emph{within} the Clifford hierarchy, as well.
For example, the gate \change{$\sqrt{T}=\left(\begin{smallmatrix}e^{-i\pi/16}&0\\0&e^{i\pi/16}\end{smallmatrix}\right)$} can be implemented with Clifford+$T$ and measurements~\cite{LowerBounds2020}.
Approximations with the set Clifford+$\sqrt{T}$, through emulation, use the the same number of $T$ gates as direct Clifford+$T$ sequences.
Using Clifford+$\sqrt{T}$ gates, however, yields approximating sequences that are half the length of corresponding Clifford+$T$ sequences, offering an advantage when rotations must be executed as fast as possible.

\begin{figure}
    \centering
    \includegraphics[scale=1]{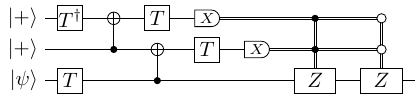}
    \caption{
        \label{fig:vz-rus-circuit}
        A Clifford+$T$ circuit implementation of $V_Z = (I+2iZ)/\sqrt{5}$ proposed in \cite{paetznick2013repeat}.  Conditioned on $X$-basis measurement outcomes of zero on the top two qubits, the circuit outputs $V_Z|\psi\rangle$.  For any other measurement outcome the circuit outputs $|\psi\rangle$.  Repeating the circuit until obtaining the 00 measurement outcome yields $V_Z|\psi\rangle$ using approximately $5.26$ $T$ gates, in expectation.
        \change{ The first conditional Pauli $Z$ gate is applied when measurement both outcomes are one, and the second conditional Pauli $Z$ gate is applied when both measurement outcomes are zero.}
    }
\end{figure}

\subsection{A roadmap to approximate quantum synthesis}

\change{Here we describe the standard approach to approximate unitary synthesis for a diagonal single-qubit unitary using the $V$ basis, then motivate generalisations to other gate sets and approximation techniques.}

\change{Suppose the approximation target is a single-qubit unitary $U\in SU(2)$, and target accuracy is some $\varepsilon>0$. The standard approach is to take the Euler decomposition of $U$, that is $U = e^{i\phi_1Z}e^{i\theta X}e^{i\phi_2 Z}$ for real $\phi_1, \theta, \phi_2$, and approximating each term independently. The matrices $e^{i\phi_j Z}$ are diagonal and we can diagonalize the central term by conjugation with Hadamard matrices. Now, for a diagonal target unitary, an approximation, $V=\left(\begin{smallmatrix} u & -v^\ast \\ v & u^\ast \end{smallmatrix}\right)$, is within $\varepsilon$ of the target if the top-left element resides within a pre-determined two-dimensional region. This region depends on the accuracy $\varepsilon$ and the rotation angle of the target unitary.}

\change{For approximation with the $V$ basis, unitaries that admit exact synthesis in $N$ gates are of the form $\frac{1}{\sqrt{5^N}}\left(\begin{smallmatrix} m_1 & -m_2^\ast \\ m_2 & m_1^\ast \end{smallmatrix}\right),$ where $m_1$ and $m_2$ are Gaussian integers, so in particular we have $m_1m_1^*+m_2m_2^*=5^N$. These matrices can additionally be efficiently factored into a sequence of basis elements. Moreover, given $N$ and $m_1$ there exists an efficient algorithm to determine whether an acceptable $m_2$ exists, and compute it if so. This establishes a general algorithm for approximate diagonal synthesis: repeatedly sample candidates for $m_1$ from a pre-determined approximation region and try to find a corresponding $m_2$, and when found apply the exact synthesis algorithm. Finding $m_1$ is a special case of integer point enumeration problem and finding $m_2$ is a special case of a relative norm equation problem, for both of which algorithms exist, discussed in \sec{point-enum} and \sec{normeqsolv}, respectively. A worked example of diagonal approximation with the $V$ basis is given in \sec{V-basis-example}.}

\change{This three step process generalises quite well to other gate sets. A `good' gate set generates unitaries of the form  $\frac{1}{\sqrt{\ell^N}}\left(\begin{smallmatrix} m_1 & -m_2^\ast \\ m_2 & m_1^\ast \end{smallmatrix}\right)$, where $\ell$ is a parameter specific to the gate set and $N$ corresponds to the minimum number of basis gates required to synthesise that unitary. It is known \cite{sarnak2637letter, KliuchnikovMaslovMosca2013, Kliuchnikov2015b} that the family of gate sets known as quaternion gate sets satisfy these conditions, and we rely on the framework of \cite{KliuchnikovMaslovMosca2013, Kliuchnikov2015b} in particular to solve the relative norm equation to compute $m_2$ and for exact synthesis.}

\change{Further generalisations also exist. We find that established techniques of fallback approximation and probabilistic mixing align with the outlined approach, albeit leading to various other approximation regions. In \sec{approximation-problems} of this paper, we will additionally show that an alternative to the Euler decomposition method also follows the three steps, again with a modified approximation region, this time in one-dimension. In fact, modifying the approximation region is the only affect of these generalisations; the rest of the algorithm proceeds as before.}

\subsection{Paper outline}
We summarize our main results in \sec{main-results}.  \sec{approximation-problems} defines the five approximation problems that are the focus of this paper. We detail a complete method for solving these problems in \sec{approx-solutions}, with examples for the V, Clifford$+$T and Clifford $+\sqrt{\text{T}}$ gate sets, in addition to a general solution. 
 \sec{point-enum} and \sec{normeqsolv} recall algorithms for solving two problems that arise in our solution method: integer point enumeration and norm equation solving. \sec{exact-synthesis} contains an algorithm for exact synthesis, which completes the gate approximation procedure. We provide extensive numerical results for the application of our method to single qubit gate approximation in \sec{numerical-results} for Clifford$+$T and Clifford$+\sqrt{\text{T}}$, and outline further applications in \sec{magnitude-approx-applications}. Finally, in \sec{crypto-connection}, we discuss a number of related problems from other fields. 

\section{Summary of main results}
\label{sec:main-results}

\begin{table}[h]
    \caption[Approximation cost scaling]{
    \label{tab:approximation-cost-scaling}
    Scaling of the approximation cost for random angles. 
    Approximation accuracy $\varepsilon$ is measured using diamond distance.
    Linear fit of the cost is based on the numerical results reported in~\fig{clifford-t-random} and~\fig{clifford-root-t-random}. 
    Mixed diagonal rows contain results for our new and improved version of a mixed diagonal approximation protocol first introduced in~\cite{Campbell2017,Hastings2017}.
    Results from \cite{Campbell2017,Hastings2017} apply to any gate set for which diagonal approximation is available, \change{this is why we mark mixed diagonal results with New$^\ast$}.
    Fallback rows correspond to our improved and generalized fallback synthesis method~(\cref{fig:fallback-circuit}) first introduced in~\cite{BRS2015}.
    For power cost, the cost of $T$ is two and the cost of $T^{1/2}, T^{3/2}$ is three. Our algorithms are optimal with respect to this cost. 
    For gate count cost, the cost of $T$, $T^{1/2}, T^{3/2}$ is one. 
    For $T$-count cost, the cost of $T$ is one and the cost of $T^{1/2}, T^{3/2}$ is four. 
    Clifford gate costs are always zero. 
    Different costs are discussed in \cref{sec:numerical-results}.
    Heuristic cost estimates are discussed in \cref{sec:cost-scaling}.
    Applications of magnitude approximation are illustrated in \cref{tab:magnitude-approx-applications}.
    }
\begin{center}
{
\scriptsize
\setlength{\tabcolsep}{0.5em}
\begin{tabular}{|c|c|l|l|c|c|}
\hline 
Gate set & Approximation & \multicolumn{2}{c|}{Linear fit of the cost ($\varepsilon < 10^{-4}$)} & Heuristic & Novelty \tabularnewline
\cline{3-4} \cline{4-4} 
(cost) & protocol & Mean & Max & cost estimate & \tabularnewline
\hline 
\hline 
% CliffordT : Random-power50-balanced
                  & Diagonal \cite{RossSelinger2014} & $3.02\log_2(1/\varepsilon)+1.77 $ & $3.02\log_2(1/\varepsilon)+9.19 $ & $3.0\log_2(1/\varepsilon)+O(1)$ & Known \\
\cline{2-6}
                  & Fallback \cite{BRS2015} & $1.03\log_2(1/\varepsilon)+5.75 $ & $1.05\log_2(1/\varepsilon)+11.83$ & $1.0\log_2(1/\varepsilon)+O(1)$ & Improved \\
\cline{2-6}
Clifford+$T$      & Magnitude & -- & -- & $1.0\log_2(1/\varepsilon)+O(1)$ & New \\
\cline{2-6}
($T$-count)       & Mixed diagonal  & $1.52\log_2(1/\varepsilon)-0.01 $ & $1.54\log_2(1/\varepsilon)+6.85 $ & $1.5\log_2(1/\varepsilon)+O(1)$ & Improved \\
\cline{2-6}
\cref{fig:clifford-t-random}& Mixed fallback  & $0.53\log_2(1/\varepsilon)+4.86 $ & $0.57\log_2(1/\varepsilon)+8.83 $ & $0.5\log_2(1/\varepsilon)+O(1)$ & New \\
\cline{2-6}
                  & Mixed magnitude & -- & -- & $0.5\log_2(1/\varepsilon)+O(1)$ & New \\
\hline 
\hline 
% CliffordRootT : Random-power30-balanced : power
                   & Diagonal        & $3.02\log_2(1/\varepsilon)+2.80 $ & $3.01\log_2(1/\varepsilon)+8.53 $ & $3.0\log_2(1/\varepsilon)+O(1)$ & New \\
\cline{2-6}
                   & Fallback        & $1.04\log_2(1/\varepsilon)+6.61 $ & $1.02\log_2(1/\varepsilon)+11.83$ & $1.0\log_2(1/\varepsilon)+O(1)$ & New \\
\cline{2-6}
Clifford+$\sqrt{T}$& Magnitude & -- & -- & $1.0\log_2(1/\varepsilon)+O(1)$ & New \\
\cline{2-6}
(power)            & Mixed diagonal  & $1.53\log_2(1/\varepsilon)+1.06 $ & $1.58\log_2(1/\varepsilon)+4.98 $ & $1.5\log_2(1/\varepsilon)+O(1)$ & New* \\
\cline{2-6}
\cref{fig:clifford-root-t-random-power}& Mixed fallback  & $0.56\log_2(1/\varepsilon)+5.32 $ & $0.62\log_2(1/\varepsilon)+7.66 $ & $0.5\log_2(1/\varepsilon)+O(1)$ & New \\
\cline{2-6}
                   & Mixed magnitude & -- & -- & $0.5\log_2(1/\varepsilon)+O(1)$ & New \\
\hline 
\hline 
% CliffordRootT : Random-power30-balanced : rcount
                   & Diagonal        & $1.21\log_2(1/\varepsilon)+1.18 $ & $1.26\log_2(1/\varepsilon)+3.86 $ & $1.2\log_2(1/\varepsilon)+O(1)$ & New \\
\cline{2-6}
                   & Fallback        & $0.42\log_2(1/\varepsilon)+2.68 $ & $0.44\log_2(1/\varepsilon)+5.13 $ & $0.4\log_2(1/\varepsilon)+O(1)$ & New \\
\cline{2-6}
Clifford+$\sqrt{T}$& Magnitude & -- & -- & $0.4\log_2(1/\varepsilon)+O(1)$ & New \\
\cline{2-6}
(gate count)       & Mixed diagonal  & $0.61\log_2(1/\varepsilon)+0.43 $ & $0.64\log_2(1/\varepsilon)+2.52 $ & $0.6\log_2(1/\varepsilon)+O(1)$ & New* \\
\cline{2-6}
\cref{fig:clifford-root-t-random-rcount}& Mixed fallback  & $0.23\log_2(1/\varepsilon)+2.13 $ & $0.25\log_2(1/\varepsilon)+3.85 $ & $0.2\log_2(1/\varepsilon)+O(1)$ & New \\
\cline{2-6}
                   & Mixed magnitude & -- & -- & $0.2\log_2(1/\varepsilon)+O(1)$ & New \\
\hline 
\hline 
% CliffordRootT : Random-power30-balanced : tcount
                   & Diagonal        & $3.03\log_2(1/\varepsilon)+2.48 $ & $3.25\log_2(1/\varepsilon)+14.40$ & $3.0\log_2(1/\varepsilon)+O(1)$ & New \\
\cline{2-6}
                   &Fallback        & $1.04\log_2(1/\varepsilon)+6.43 $ & $1.18\log_2(1/\varepsilon)+14.01$ & $1.0\log_2(1/\varepsilon)+O(1)$ & New \\
\cline{2-6}
Clifford+$\sqrt{T}$& Magnitude & -- & -- & $1.0\log_2(1/\varepsilon)+O(1)$ & New \\
\cline{2-6}
($T$ count)        & Mixed diagonal  & $1.53\log_2(1/\varepsilon)+1.02 $ & $1.68\log_2(1/\varepsilon)+7.30 $ & $1.5\log_2(1/\varepsilon)+O(1)$ &  New* \\
\cline{2-6}
\cref{fig:clifford-root-t-random-tcount}& Mixed fallback  & $0.56\log_2(1/\varepsilon)+5.30 $ & $0.67\log_2(1/\varepsilon)+9.85 $ & $0.5\log_2(1/\varepsilon)+O(1)$ & New \\
\cline{2-6}
                   & Mixed magnitude & -- & -- & $0.5\log_2(1/\varepsilon)+O(1)$ & New \\
\hline 
\end{tabular}
}
\end{center}
\end{table}

\begin{figure}
\caption[Cost of approximating random diagonal gates with Clifford+$T$ gates]{
\label{fig:clifford-t-random}
Cost of approximating a set of random diagonal rotation gates with Clifford+$T$ gates using four approximation protocols.
Diagonal rotation angles are random angles drawn from the uniform distribution on $[0,2\pi]$.
We fix a set of approximation accuracy values. For each value in the set we compute mean cost over 
all target angles. Vertical bars show the cost standard deviation for the given accuracy value. 
Shaded regions indicate range of costs from min to max over all angles for the given accuracy value. 
For the diagonal approximation protocol the cost for a given angle and given accuracy is equal to number of $T$ gates in the approximating sequence. 
For the other protocols the cost is the expectation of $T$-count. For example, if the first step of fall-back protocol requires 10 $T$ gates and 
fails with probability 0.01 and the step to correct failure requires 30 $T$ gates, the expected cost is 10.3. 
In all reported fallback protocols the probability of fallback is at most $0.01$.
}
\includegraphics{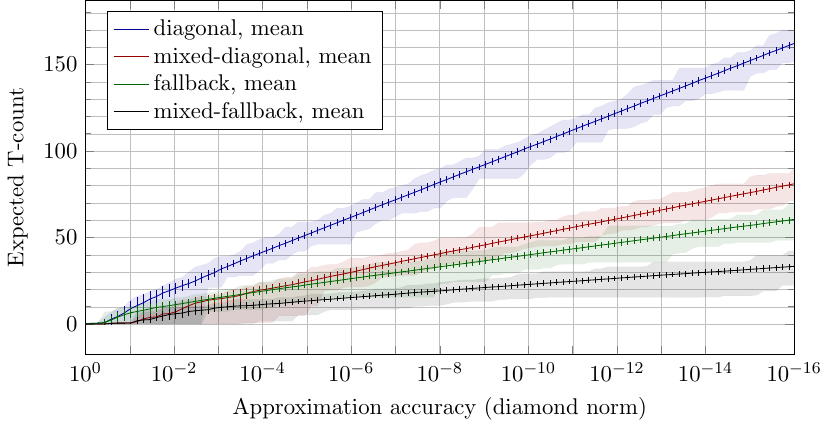}
\end{figure}

\begin{table}
\caption[Some problems that benefit from magnitude approximation]{
\label{tab:magnitude-approx-applications}
Examples of problems that benefit from magnitude approximation. 
The case of a general qubit unitary approximation is described in \cref{sec:magnitude-approximation} and \cref{sec:mixed-magnitude-approximation}.
Solutions to qubit approximation and general $\mathrm{SU}(4)$ approximation are outlined in \cref{sec:magnitude-approx-applications}.
More generally, the magnitude approximation problem can be used when compiling $\mathrm{CNOT}$ and rotation circuits for isometries produced by UniversalQCompiler~\cite{Iten2016,Iten2021,Malvetti2021} for 
fault-tolerant quantum computers.
}
\begin{center}
{
\scriptsize
\setlength{\tabcolsep}{0.5em}

\begin{tabular}{|c|c|c|c|c|}
\hline 
\multicolumn{2}{|c|}{Problems that benefit from} & Qubit state & General $\mathrm{SU}(2)$ & General $\mathrm{SU}(4)$\tabularnewline
\multicolumn{2}{|c|}{magnitude approximation} & preparation & approximation & approximation\tabularnewline
\hline 
\hline 
 & Number of real parameters & 2 & 3 & 15\tabularnewline
\cline{2-5} \cline{3-5} \cline{4-5} \cline{5-5} 
Problem & Number of magnitude  & 1 & 1 & 6\tabularnewline
properties & approximation instances &  &  & \tabularnewline
\cline{2-5} \cline{3-5} \cline{4-5} \cline{5-5} 
 & Number of diagonal & 1 & 2 & 9\tabularnewline
 & approximation instances &  &  & \tabularnewline
\hline 
\hline 
Gate set & Approximation method & \multicolumn{3}{c|}{Heuristic T-count scaling with diamond norm accuracy $\varepsilon$}\tabularnewline
\hline 
\hline 
 & Diagonal (known) & $6\log_{2}(1/\varepsilon)+O(1)$ & $9\log_{2}(1/\varepsilon)+O(1)$ & $45\log_{2}(1/\varepsilon)+O(1)$\tabularnewline
\cline{2-5} \cline{3-5} \cline{4-5} \cline{5-5} 
Clifford+T & Diagonal + Magnitude (new)& $4\log_{2}(1/\varepsilon)+O(1)$ & $7\log_{2}(1/\varepsilon)+O(1)$ & $33\log_{2}(1/\varepsilon)+O(1)$\tabularnewline
\cline{2-5} \cline{3-5} \cline{4-5} \cline{5-5} 
 & Mixed Diagonal (improved) & $3\log_{2}(1/\varepsilon)+O(1)$ & $4.5\log_{2}(1/\varepsilon)+O(1)$ & $22.5\log_{2}(1/\varepsilon)+O(1)$\tabularnewline
\cline{2-5} \cline{3-5} \cline{4-5} \cline{5-5} 
 & Mixed Diag. + Mag. (new)& $2\log_{2}(1/\varepsilon)+O(1)$ & $3.5\log_{2}(1/\varepsilon)+O(1)$ & $11.5\log_{2}(1/\varepsilon)+O(1)$\tabularnewline
\hline 
\end{tabular}

}
\end{center}
\end{table}

% The main contributions of this paper are summarized as follows.

In \sec{approximation-problems} we define six approximation problems: diagonal unitary approximation, fallback approximation, magnitude approximation and their versions with probabilistic mixing. The first two of these problems have been the subject of research for some time, with many results pertaining to specific gate sets \cite{RossSelinger2014, KliuchnikovMaslovMosca2013, BGS, BRS2015,Kliuchnikov2015a,Kliuchnikov2015b}. The magnitude approximation problem is new \change{and appears in our new approach to approximating an arbitrary unitary, $U$. When we independently approximate the three elements of the Euler angle decomposition of $U$, approximating the central term $e^{i\theta X}$ results in a unitary $V$ with entries of similar magnitude to the entries of $U$. Hence, the magnitude approximation problem.} One of its applications is solving the general unitary approximation problem, which exploits the connection between unitary approximation and LPS graphs \change{as we explain in \sec{crypto-connection}}. Explicitly, we adapt the path-finding algorithm of Carvalho Pinto and Petit \cite{pinto2018better} to the quantum setting, requiring only two diagonal approximations and one more efficient magnitude approximation. The sequence costs obtained using our method improve on the standard Euler decomposition, which requires three diagonal approximations, by roughly one-third. \change{More precisely, we find that the costs from magnitude approximation are one-third those of diagonal approximations. This means our approach to the general unitary approximation problem, which requires two diagonal approximations and one magnitude approximation, has a cost that is $7/9$ that of the standard Euler decomposition method.}

 Stier \cite{stier} has concurrently and independently produced a similar result. We discuss additional applications of the magnitude approximation problem in \cref{sec:magnitude-approx-applications}.

The latter three problems are defined by applying the concept of channel mixing to diagonal, fallback and magnitude approximation, expanding on the ideas of \cite{Campbell2017,Hastings2017}.
Channel mixing employs a probabilistic combination of sequences of unitaries to approximate the target.
The key idea is to use probabilistic combination of under-rotated and over-rotated approximations of a given target.
We combine channel mixing with fallback and magnitude approximation to achieve a roughly two-fold improvement in cost compared to non-mixed problem variants.

To account for the fact that we are approximating with channels rather than unitaries, we use the diamond norm to measure the accuracy of approximation. 
We introduce the use of diagonal Clifford twirling, which ensures that the difference between the ideal and approximating channel is a Pauli channel. 
Because of this, we obtain a closed-form expression for the diamond distance which improves on analysis in \cite{Campbell2017,Hastings2017} and results in lower approximation costs.
The method in \cite{Campbell2017} uses diagonal approximation algorithms as black-boxes and finds under-rotations and over-rotations by modifying target rotation angle.
In contrast, we modify synthesis algorithms to directly find under and over-rotated approximations and further reduce approximations costs.

We provide a uniform approach to the six approximation problems and various gate sets.
For each problem, we show that a constraint on the diamond norm can be reduced to a constraint on a single complex number.
In contrast to \cite{BRS2015}, our approach to fallback approximation ensures desired success probability and approximation accuracy.
The set of feasible solutions is represented geometrically as a region in $\r^2$, illustrated in \sec{approximation-problems}.
\fig{diagonal-approximation-region-areas} compares the areas of the regions associated to the approximation problems with respect to varying approximation accuracy $\varepsilon$ and success probability $q$. 
The scaling of the region area with epsilon determines the scaling of the approximation sequence cost with accuracy~$\varepsilon$, 
as illustrated in \cref{tab:approximation-cost-scaling}. 
To establish dependence between region area scaling and cost scaling we use several heuristic assumptions as discussed in \cref{sec:cost-scaling}.
We also provide experimental justification of relation between the cost and region scaling. 

The results of our numerical experiments for Clifford+$T$  and Clifford+$\sqrt{T}$ gate sets are summarized in \cref{tab:approximation-cost-scaling}.
More detailed numerical results for Clifford+$T$ are provided in \cref{fig:clifford-t-random},
in particular they show that linear fits for cost scaling with accuracy are well justified.
Even more detailed results for Clifford+$T$ and for Clifford+$\sqrt{T}$ are in \sec{numerical-results}. We show numerical results for approximating uniformly random diagonal rotations and angles and rotations by Fourier angles $\pi/2^k$.
The study of uniformly random diagonal rotations is motivated by the use of diagonal rotations in quantum algorithm for
chemistry, material science applications~\cite{TheoryOfTrotterError}
and the rotations used in Quantum Signal Processing~\cite{QuantumSignalProcessing}; rotations by Fourier angles appear in the Quantum Fourier Transform~\cite{Nam2020} and 
preparation of Phase Gradient states~\cite{Gidney2018}. 

\change{To provide practical context for our results, let us consider their impact on two practically interesting quantum dynamics and quantum chemistry problems, which are discussed in detail in Appendix F of~\cite{beverland2022assessing}. The quantum dynamics problem instance involves $30100$ diagonal rotation gates and requires an accuracy of $\varepsilon_1 = 1.1\cdot 10^{-8}$ per rotation. In this problem, rotations are the only non-Clifford gates used. The quantum chemistry instance involves $2.06 \cdot 10^{8}$ diagonal rotations and requires an accuracy of $\varepsilon_2 =1.6 \cdot 10^{-11}$ per rotation. 
In addition to the diagonal rotation gates, this instance uses $5.4 \cdot 10^{11}$ $T$ gates. 
To implement diagonal rotations with an accuracy of $\varepsilon_1$ using the mixed fallback protocol, we need an average of $20$ $T$ states, while for an accuracy of $\varepsilon_2$, we need an average of $26$ T states. When using Clifford+$\sqrt{T}$, the number of $T$, $\sqrt{T}$, and $\sqrt{T}^3$ gates in the approximating sequence is $8$~(on average) for $\varepsilon_1$ and $10$~(on average) for $\varepsilon_2$.
}

We find that Clifford+$\sqrt{T}$ is a promising gate-set for approximation 
when executing rotations as fast impossible. In this case the execution speed is limited by the gate count, in particular when 
executing rotations using a circuit from \href{https://arxiv.org/pdf/1808.02892.pdf#figure.33}{Figure~33} in~\cite{GameOfSurfaceCodes}.
We achieve average gate-count scaling $0.23\log_2(1/\varepsilon)+2.13 $ when using mixed fallback protocol with Clifford+$\sqrt{T}$
and T-count similar to Clifford+$T$ approximations. We assume that each $\sqrt{T}$ gate requires four $T$ gates, which is justified in~\cref{sec:cost-scaling}.
These are the first numerical studies of approximation cost scaling for Clifford+$\sqrt{T}$ gates that include additive constants, which are practically important because the $\log_2(1/\varepsilon)$ prefactor is small. 
These are also the first numerical results for mixed fallback and mixed diagonal approximations for Clifford+${T}$.
We also provide more detail on approximate synthesis for general gate sets than the high-level approach outlined in~\cite{Parzanchevski}.

% We are also first to show fallback approximation protocol using Clifford+$\sqrt{T}$ gate-set.
% Approximation with mixing therefore admits more candidate solutions. Finally, in \sec{numerical-results}, we provide the experimental justification of our claim of improved sequence lengths. 
%\CP{how volumes translate to precision}
%\CP{table showing $c\log(1/\epsilon)$ with $c$ values given by various papers, and the $c$ values obtained here}

% \medskip

In \sec{approx-solutions} we describe a complete method for solving the six approximation problems, restricting the scope to considering gate sets that can be represented by quaternion algebras. The general solution method is described in \sec{approximation-problems:general}, and includes a process for constructing quaternion gate sets, as defined in \cite{Kliuchnikov2015b}. To summarize, a gate set is defined by a complex field $L$, its maximal totally real subfield $K$ and a fixed set of elements in $K$. A solution to an approximation problem involves finding a matrix $M=\left(\begin{smallmatrix}
m_1&-m_2^\ast\\m_2&m_1^\ast
\end{smallmatrix}\right)$ with entries in the integer ring of $L$. Our approach to finding $M$ can be summarized in two steps: point enumeration in a region defined by the approximation problem to find $m_1$, followed by solving a relative norm equation to recover $m_2$. To guide the reader, we work through three pedagogical examples: the V basis (\sec{approximation-problems:Vbasis}), the Clifford$+$T basis (\sec{approximation-solutions-clifford-t}), and the Clifford$+\sqrt{\textnormal{T}}$ basis (\sec{approximation-solutions-clifford-root-t}). 

%The general method in \sec{approximation-problems:general} makes calls to two subroutines: a point enumeration algorithm, outlined in \sec{point-enum}, and a relative norm equation solver, described in \sec{normeqsolv}. Specifically, we employ Lenstra's algorithm \cite{lenstra1983integer} for point enumeration, making adjustments where necessary to ensure that we are working in a convex body. To solve the relative norm equations, we use Kliuchnikov, Bocharov, R\"otteler and Yard's algorithm \cite{Kliuchnikov2015b}, showing that if $L$ and $K$ satisfy a few reasonable assumptions (listed in \sec{normeqsolv}, \defin{comprops}) a solution is found in polynomial time, if one exists.
%\medskip

\section{Approximation problems}\label{sec:approximation-problems}
In this section we introduce six problems that address the approximation of qubit
unitaries. Recall that in this paper we follow a two-step approach 
to solving approximation problems. First, in this section, we relate each problem to one-dimensional or two-dimensional 
regions, which define `good' approximations. Second, in \cref{sec:approx-solutions}, we find sequences of gates $g_1,\ldots,g_n$ over gate-set $G$ such that 
for a unitary computed by the sequence
$g_1\ldots g_n = \at{\begin{smallmatrix} u & -v^\ast \\ v & u^\ast \end{smallmatrix}}$
the top-left entry $u$ belongs to a two-dimensional region of the complex plain,
or the absolute values $|u|$ belongs to an interval, that is one-dimensional region. We show that this membership condition is sufficient for the unitary to be a good approximation.
In this section we also show that these sequences $g_1,\ldots,g_n$ then can be used to construct solutions to the approximation problems.

We begin by establishing some notation and key definitions~\cite{Kaye2007,nielsen00,watrous2018}.
% \CP{Maybe add a general textbook reference for the various concepts used here?}
An arbitrary two-by-two unitary matrix with determinant one~(i.e., a special unitary matrix) can be written as:
\[
    U = \at{\begin{array}{cc}u & -v^{\ast} \\ v & u^{\ast}\end{array}}, \text{ for } u,v \in \c \text{ such that } \abs{u}^2 + \abs{v}^2 = 1
\]
Using the polar form of complex numbers we can write $u = r_1 \exp\at{i \psi_1}$ and $v = i r_2 \exp\at{i \psi_2}$.
Let us introduce $r_1 = \cos(\theta)$ and $r_2 = \sin(\theta)$ for some $\theta \in [0, \pi/2]$ because $r_1 ^2 + r_2 ^2 = 1$.
The unitary $U$ can then be expressed as
\begin{equation}\label{eq:polar}
    U = \cos(\theta) e^{i \psi_1 Z} + \sin(\theta) i X e^{ i \psi_2 Z}  \text{ for } \psi_1, \psi_2, \theta \in \mathbb{R} 
\end{equation}
We will interchangeably use both parameterizations of a special unitary $U$. We denote the special unitary group, that is the group of all two by two unitary matrices with determinant equal to 1, by $\mathrm{SU}(2).$ We will also frequently use the fact that Pauli matrices are Hermitian and equal to their inverses. 

A probabilistic ensemble of pure quantum states is represented as a trace-one positive semidefinite operator called a \emph{density matrix}.
The most general type of transformation on quantum state is a \emph{channel}, a linear completely positive trace-preserving map on the space of density matrices.
The action of a unitary $U$ on density matrix $\rho$ is given by
\begin{equation}
  \mathcal{U}(\rho) = U\rho U^\dagger
\end{equation}
and we refer to $\mathcal{U}$ as the ``channel induced by $U$''. 
For diagonal unitaries of the form $e^{i\phi Z}, e^{i\theta X}$ we denote the induced channel by
\begin{equation}
  \mathcal{Z}_\phi(\rho) = e^{i\phi Z}\rho e^{-i\phi Z},\,\mathcal{X}_\theta(\rho) = e^{i\theta X}\rho e^{-i\theta X}
\end{equation}
To measure the distance between channels we use the diamond norm
\begin{equation}
  \nrm{\Phi}_\diamond := \max_\rho \nrm{\left(\Phi\otimes\mathcal{I}\right)(\rho)}_1
\end{equation}
% \CP{is the meaning of $\Phi\otimes\mathcal{I}$ clear?}
where $\mathcal{I}$ is the channel induced by the identity matrix.
For additional discussion and facts about the diamond norm see~\app{diamond-norm-properties}.

Our main goal in this paper is to solve single-qubit unitary approximation problems. The most general form is:

\begin{prob}[General qubit unitary approximation]\label{prob:general-unitary-approximation}
    Given:
    \begin{itemize}
        \item target unitary $U\in \mathrm{SU}(2)$,
        \item gate set $G$, a finite set of unitary matrices with determinant one
        \item accuracy $\varepsilon$,\footnote{The parameter $\varepsilon$ is commonly referred to as the \textit{precision} in the literature. Since we use $\varepsilon$ as a measure of approximation error from a target, we believe the term accuracy is more appropriate.} 
        a positive real number
    \end{itemize}
    Find a channel $\mathcal{V}$ implemented using elements of $G$ and computational basis measurements such that 
    \[
        \nrm{ \mathcal{U} - \mathcal{V}}_\diamond \le \varepsilon,
    \]
    where $\mathcal{U}$ is the channel induced by $U$.
\end{prob}

In a simpler case, when channel $\mathcal{V}$ corresponds to unitary $V$ equal to the product $g_1 \ldots g_n $ of two-by-two matrices from gate set $G$,
the diamond norm is tightly bounded by twice the minimum spectral norm distance between $\pm U$ and $V$~(see \cref{cor:diamond-distance-between-general}).
To avoid frequent explicit references to channels $\mathcal{U}$ and $\mathcal{V}$ induced by unitaries $U$ and $V$ we introduce distance between 
unitaries $U$ and $V$ as: 
\begin{equation}
\dst\at{U,V} = \nrm{ \mathcal{U}  - \mathcal{V} }_\diamond.
\end{equation}

Of particular interest is the case where $U$ is a diagonal unitary, namely $U= e^{i\phi Z}$ for real $\phi$.
This case is very common in many quantum algorithms.
In addition, the state-of-the-art way of solving the general unitary approximation problem is to use \emph{Euler angle decomposition} to reduce the problem
to three diagonal unitary approximation problems.
Recall that $e^{i \theta X } = \cos\at{\theta} I + i \sin\at{\theta} X$.
Euler decomposition is performed by solving for $\phi_1,\phi_2$ in the equation below:
\begin{equation}\label{eq:edecomp}
    U = \cos(\theta) e^{i \psi_1 Z} + \sin(\theta) i X e^{ i \psi_2 Z}
    =
    e^{i \phi_1 Z} e^{i \theta X} e^{i \phi_2 Z}
    =
    \change{\cos\at{\theta}} e^{i\at{\phi_1+\phi_2} Z} + \sin\at{\theta} i X e^{ i \at{\phi_2 - \phi_1} Z}
\end{equation}
To obtain the last equality we used the fact that $e^{i \phi Z} X = X e^{ -i \phi Z}$, since $XZX = -Z$ and for any invertible matrix $A$ and any matrix $B$, it is the case that $A e^B A^{-1} = e^{A B A^{-1}}$.

In this section, 
we demonstrate how the general unitary approximation problem reduces to two diagonal approximations and a search for elements in a one-dimensional (1D) region,
that we call magnitude approximation problem,
improving on the traditional Euler angle decomposition approach. 
We then introduce a series of four problems for approximating diagonal unitaries, corresponding to the combinations of using probabilistic mixing (or not) and using fallback protocols~\cite{RUS1} (or not).
For each problem we give a reduction to the search for elements in two-dimensional (2D) regions. 
We conclude the section with applying mixing to the magnitude approximation.
\cref{tab:approximation-problems-summary} summarizes this section.

\begin{table}[h]
\caption[Approximation problems summary]{
\label{tab:approximation-problems-summary}
Summary of the qubit unitary approximation protocols.
Each protocol corresponds to a "key problem" for which
%that reduces to a search 
% of sequences of gates $g_1,\ldots,g_n$ over gate-set $G$ such that 
% for a unitary computed by the sequence
% $g_1\ldots g_n = \at{\begin{smallmatrix} u & -v^\ast \\ v & u^\ast \end{smallmatrix}}$
the top-left entry of matrix
$\at{\begin{smallmatrix} u & -v^\ast \\ v & u^\ast \end{smallmatrix}}$
belongs to a two-dimensional region of complex plain,
or the absolute value $|u|$ belongs to a one-dimensional interval.
Some approximation protocols use others as sub-protocols.
Combining the solution to the key problem(s) with the sub-protocols is described by the statements 
in "Full protocol analysis" column. 
Comparisons of the geometric regions are given in \cref{sec:geometric-interpretations}, \cref{tab:approx-region-areas} and \cref{fig:diagonal-approximation-regions}.
For cost scaling see \cref{tab:approximation-cost-scaling} and \cref{tab:magnitude-approx-applications}.
}
\begin{center}
{
\scriptsize
\setlength{\tabcolsep}{0.5em}
\begin{tabular}{|c|c|c|c|c|c|c|}
\hline 
Approximation &  & Target & Key & Key & Full & Sub-\tabularnewline
protocol & Section & unitary & problem & problem & protocol & protocols\tabularnewline
 &  &  & definition & region & analysis & \tabularnewline
\hline 
\hline 
General  & \cref{sec:magnitude-approximation}  & $\mathrm{SU}(2)$ & \cref{prop:magnitude-approximation-condition} & \cref{prop:magnitude-approximation-condition}  & \cref{prop:general-unitary-approximation} & Diagonal or\tabularnewline
unitary &  &  & (magnitude) & \cref{fig:amplitude-interval} &  & Fallback\tabularnewline
\hline 
Mixed general & \cref{sec:mixed-magnitude-approximation} & $\mathrm{SU}(2)$ & \cref{prob:magnitude-mixing} & \cref{prop:magnitude-mixing-condition} & \cref{prop:general-mixing-condition} & Mixed diagonal\tabularnewline
unitary &  &  & (magnitude) & \cref{fig:magnitude-mixing-condition} &  & or mixed fallback\tabularnewline
\hline 
Diagonal & \cref{sec:approx:diag} & $e^{i\varphi Z}$ & \cref{prob:diagonal-approximation} & \cref{prop:segment-condition} &  \cref{prop:segment-condition} & -\tabularnewline
unitary &  &  & (diagonal) & \cref{fig:diagonal-condition} &  & \tabularnewline
\hline 
Mixed &  \cref{sec:unitary-mixing} & $e^{i\varphi Z}$ & \cref{prob:diagonal-approximation-by-unitary-mixing} & \cref{prop:diagonal-mixing-condition} & \cref{prop:diagonal-mixing-condition}  & -\tabularnewline
diagonal unitary &  &  & (diagonal) & \cref{fig:diagonal-mixing-condition} &  & \tabularnewline
\hline 
Fallback & \cref{sec:fallback-approximation}  & $e^{i\varphi Z}$ & \cref{prob:projective-approximation} & \cref{prop:segment} & \cref{prop:fallback-approximation} & Diagonal\tabularnewline
(\cref{fig:fallback-circuit}) &  &  & (projective) & \cref{fig:fallback-condition} &  & \tabularnewline
\hline 
Mixed & \cref{sec:fallback-mixing}  & $e^{i\varphi Z}$ & \cref{prob:projective-rotation-mixing} & \cref{prop:fallback-mixing-condition} & \cref{thm:fallback-mixture-diamond-distance} & Mixed diagonal\tabularnewline
fallback &  &  & (projective) & \cref{fig:fallback-mixing-condition} &  & \tabularnewline
\hline 
\end{tabular}
}
\end{center}
\end{table}

\subsection{Magnitude approximation}
\label{sec:magnitude-approximation}

In the general unitary approximation problem (\problem{general-unitary-approximation}), the task is to approximate an arbitrary unitary $U$.
The standard approximation strategy is to use the Euler angle decomposition $U = e^{i\phi_1 Z} e^{i \theta X} e^{i\phi_2 Z}$ and independently
approximate the three elements of the product.
Our new approach is to first approximate $e^{i \theta X}$ up to phases, that is finding a unitary $V$ equal to
$e^{i\phi'_1 Z} e^{i \theta' X} e^{i\phi'_2 Z}$ such that $\theta$ and $\theta'$ are close. In other words only magnitudes of entries of $U$ and $V$ are close, and phases $\phi'_1$ and $\phi'_2$ are arbitrary.
We then re-express $U$ as:
\begin{equation}
U =  e^{i(\phi_1-\phi'_1) Z}  \underline{ e^{i\phi'_1 Z} e^{i \theta X} e^{i\phi'_2 Z} } e^{i(\phi_2-\phi'_2)Z}.
\end{equation}
The underlined  middle part of the product is approximated by $V$, so it remains to approximate two diagonal $Z$ rotations.

The first main insight behind this strategy is that magnitude approximations have lower cost (see \cref{tab:approximation-cost-scaling} and \cref{sec:cost-scaling}) and are easier to find than diagonal approximations.
The second insight is that, for a random angle $\theta$, the approximation cost of a diagonal unitary $e^{i\theta Z}$ is independent of $\theta$ (see~\cref{fig:clifford-t-random} and \cref{sec:numerical-results}).
Therefore, we may freely adjust the angles of the $Z$-axis rotations in the Euler decomposition in order to compensate for phase inaccuracy of the $X$-axis rotation.
This results in a circuit that is approximately $7/9$ times the length, in terms of gate-count, of the solution resulting directly from Euler decomposition. An analogous strategy, discussed in \sec{crypto-connection}, was developed by Carvalho Pinto and Petit in \cite{pinto2018better} for path finding in LPS graphs, and they noted that their method could be adapted to the quantum setting. This was also confirmed by Stier \cite{stier}, concurrent to the work done in this paper. To construct $V$, we use the following proposition, which determines the approximate synthesis of any unitary by imposing the condition that the norm of its upper left element lies in a given interval.
\newpage

\begin{prop}[Magnitude approximation condition]
    \label{prop:magnitude-approximation-condition}
    \label{prop:absolute-value-constraint}
    Let $\theta$ be from $[0,\pi/2]$ and \change{$\varepsilon\leq2$} be a positive real number. 
    Suppose that we have found a special unitary $V$ 
    $$
    V = g_1 \ldots g_n =  \at{\begin{array}{cc}u & -v^{\ast} \\ v & u^{\ast}\end{array}}
    $$
    over gate set $G$ 
    such that $|u|$ belongs to the interval $\left\{ \cos(\theta'') : \theta'' \in [0,\pi/2], |\theta'' - \theta| \le \delta \right\}$ for $\delta = \arcsin(\varepsilon/2)$.

    Then unitary $V$ satisfies the inequality $\dst\at{V, e^{i \phi_1' Z} \underline{e^{i \theta X}} e^{i \phi_1' Z}} \le \varepsilon$, 
    for $\phi_1'$ and $\phi'_2$ defined by the equality $V = e^{i \phi_1' Z} e^{i \theta' X} e^{i \phi_1' Z}$ with $\theta' \in [0,\pi/2]$.
    We call such $V$ a \textbf{magnitude $\varepsilon$-approximation} of $e^{i \theta X}$. 
    For a geometric interpretation of the constraint see~\fig{amplitude-interval}.

\end{prop}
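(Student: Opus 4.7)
The plan is to reduce the inequality to a diamond distance between two pure $X$-axis rotations by exploiting unitary invariance of $\|\cdot\|_\diamond$, and then to apply the standard closed-form formula for the diamond distance of a single rotation from the identity.

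First I would fix the Euler-angle decomposition of $V$. Using the polar parametrization from \eq{polar}, since $V$ is special unitary with top-left entry $u$, there exist $\theta' \in [0,\pi/2]$ with $|u| = \cos\theta'$ and real $\phi_1', \phi_2'$ such that $V = e^{i\phi_1' Z} e^{i\theta' X} e^{i\phi_2' Z}$. The hypothesis that $|u| = \cos\theta''$ for some $\theta'' \in [0,\pi/2]$ with $|\theta'' - \theta| \le \delta$, together with the fact that $\cos$ is injective on $[0,\pi/2]$, forces $\theta' = \theta''$ and hence $|\theta' - \theta| \le \delta$.

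Next I would let $W = e^{i\phi_1' Z} e^{i\theta X} e^{i\phi_2' Z}$ denote the target unitary in the proposition, built using the \emph{same} phase angles $\phi_1', \phi_2'$ as in the decomposition of $V$. Then
\[
    V = e^{i\phi_1' Z} e^{i\theta X} \bigl( e^{i(\theta'-\theta) X} \bigr) e^{i\phi_2' Z} \cdot e^{-i\phi_2' Z} e^{i\phi_2' Z} = e^{i\phi_1' Z} e^{i\theta X} e^{i\phi_2' Z} \cdot \bigl( e^{-i\phi_2' Z} e^{i(\theta'-\theta) X} e^{i\phi_2' Z} \bigr),
\]
so $W^\dagger V$ is unitarily conjugate to $e^{i(\theta'-\theta) X}$. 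Because the diamond norm is invariant under composition with unitary channels on either side, this gives
\[
    \dst(V,W) \;=\; \dst\bigl(e^{i\theta' X},\, e^{i\theta X}\bigr) \;=\; \dst\bigl(e^{i(\theta'-\theta) X},\, I\bigr).
\]

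Finally I would apply the standard single-qubit formula $\dst(e^{i\alpha X}, I) = 2|\sin\alpha|$ (derivable directly, or cited from \app{diamond-norm-properties}, since for a Pauli rotation the maximizing input in the definition of the diamond norm is an entangled Bell-type state yielding the explicit expression $2|\sin\alpha|$). Since $|\theta'-\theta| \le \delta \le \pi/2$ and $\sin$ is monotonically increasing on $[0,\pi/2]$,
\[
    \dst(V,W) \;=\; 2|\sin(\theta'-\theta)| \;\le\; 2\sin\delta \;=\; 2\cdot (\varepsilon/2) \;=\; \varepsilon,
\]
which is the desired inequality. The main conceptual obstacle is arranging the cancellation so that only an $X$-axis rotation survives inside $W^\dagger V$; the main technical obstacle is the diamond-norm calculation for a pure $X$-rotation, but both are standard once the Euler-angle factorization of $V$ is fixed with matching phases $\phi_1', \phi_2'$.
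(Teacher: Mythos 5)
Your proof is correct and follows essentially the same route as the paper: fix the Euler decomposition of $V$ with $|u|=\cos\theta'$, use unitary invariance of the diamond norm to strip the $Z$-phases and reduce to the distance between two $X$-rotations, and then evaluate that distance as $2|\sin(\theta'-\theta)|\le 2\sin\delta=\varepsilon$. The only cosmetic difference is that the paper reaches $2|\sin(\theta-\theta')|$ via the smallest-enclosing-disc-of-eigenvalues characterization (\cref{thm:diamond-distance-between-unitary-channels}), whereas you invoke the closed-form $\dst(e^{i\alpha X},I)=2|\sin\alpha|$ directly, which is a corollary of that same theorem.
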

\begin{proof}
    By unitary invariance property of the diamond norm (see \cref{prop:diamond-norm-unitary-invariance}),
    distance $\dst\at{V, e^{i \phi_1' Z} \underline{e^{i \theta X}} e^{i \phi_1' Z}}$ is equal to $\dst\at{e^{i \theta' X}, e^{i \theta X}}$.
    Now we use the the fact that diamond norm distance between unitaries $U$ and $V$ is equal to the diameter of the smallest disc containing eigenvalues of  $U^\dagger V$~(see \cref{thm:diamond-distance-between-unitary-channels}).
    The eigenvalues of $e^{i (\theta'-\theta) X}$ are $e^{\pm i(\theta - \theta')}$ because Hadamard diagonalizes $e^{i\phi X} = H e^{i\phi Z} H$.
    Because we only have two eigenvalues, the diameter of the disc containing them is equal to distance $|e^{i(\theta - \theta')} - e^{-i(\theta - \theta')}|$,
    which is equal to $2|\sin(\theta - \theta')|$. It remain to upper-bound this quantity.
    By definition of $\theta'$, $|u| = \cos\at{\theta'}$. This implies that $|\theta' - \theta | \le \delta$ because cosine is a bijection from $[0,\pi/2]$ onto $[0,1]$.
    Using $\sin(\delta) =\varepsilon/2$ we get the required bound.
\end{proof}

\begin{figure}
    \centering
    \includegraphics[width=0.5\textwidth]{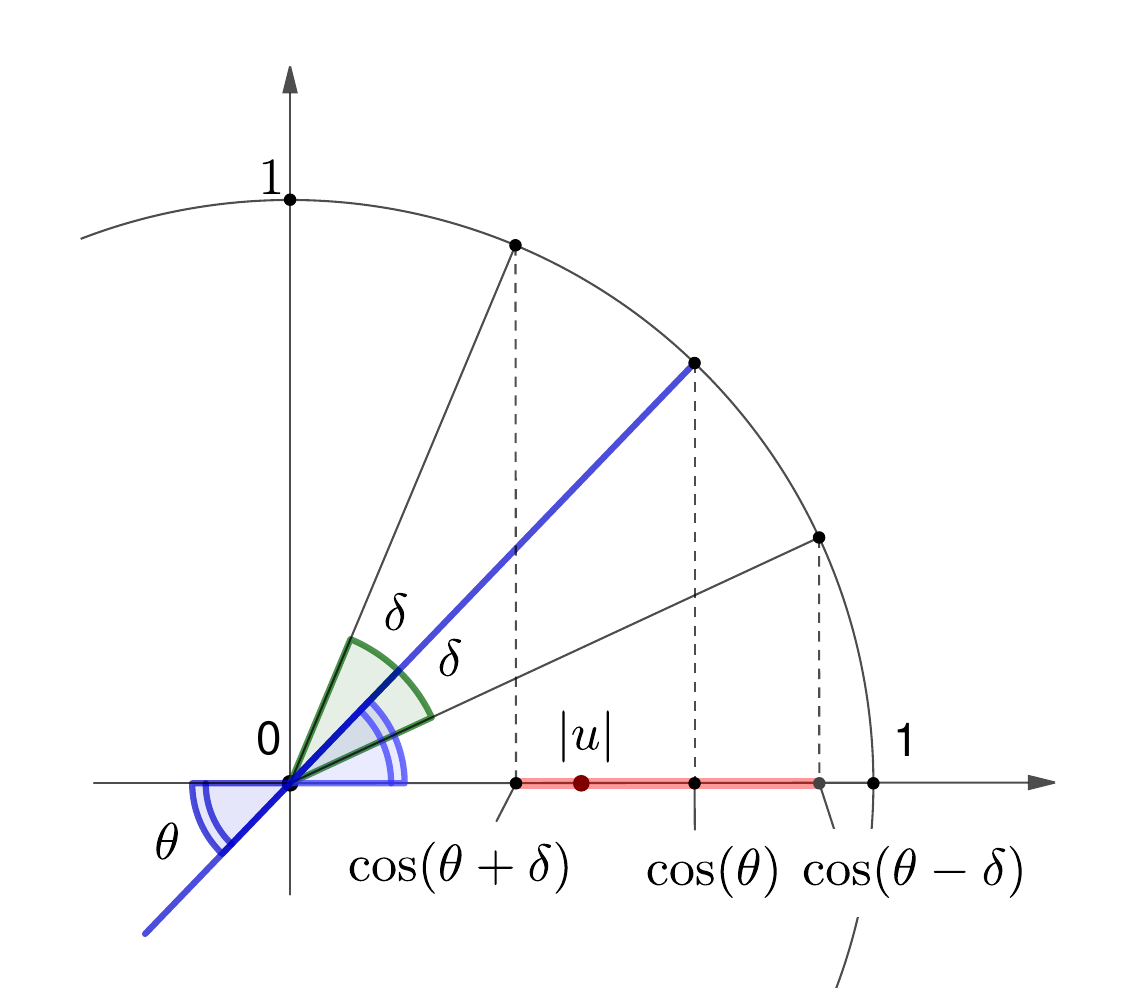}
    \caption{
        \label{fig:amplitude-interval}
        Geometric interpretation of the constraint on complex number $u$ appearing in~\propos{absolute-value-constraint}. 
        Possible absolute values $|u|$ belong to the interval $\left\{ \cos(\theta'') : \theta'' \in [0,\pi/2], |\theta'' - \theta| \le \delta \right\}$ for $\delta = \arcsin(\varepsilon/2)$ and are shown as blue and dashed blue 
        segments on the horizontal axis. 
    }
\end{figure}

Note that when $ \arcsin(\varepsilon/2) \le \theta \le \pi/2 - \arcsin(\varepsilon/2)$, the absolute value $|u|$ must simply belong to interval $[\change{\cos(\theta-\arcsin(\varepsilon/2))},\cos(\theta+\arcsin(\varepsilon/2))]$.
\change{\begin{rem}Since in practice $\varepsilon$ is chosen to be very small, we are able to take the approximation $\arcsin(x)\approx x$ in \propos{absolute-value-constraint}, hence, $\delta\approx \varepsilon/2$.
\end{rem}}

A simple way to leverage~\propos{absolute-value-constraint} for general unitary approximation is to split the accuracy $\varepsilon$ evenly among the three factors of the Euler decomposition.  Then, use~\propos{absolute-value-constraint} to find a magnitude $\epsilon/3$-approximation of the $X$-axis rotation.  Finally, find $\varepsilon/3$-approximations of the two remaining $Z$-axis rotations, adjusting the angles to compensate for the phase inaccuracy of the $X$-axis rotation. 
%  \CP{It's an interesting view but I'm not 100\% convinced by it. In particular I think it suggests that Euler decomposition has more fundamental importance than it really has}
This strategy is captured formally in the following proposition.

\begin{prop}[General unitary approximation] \label{prop:general-unitary-approximation}
  Suppose we are given a target unitary $U = e^{i\phi_1 Z} e^{i \theta X} e^{i\phi_2 Z}$ and target accuracy $\varepsilon$. 
  Let $V$ be a magnitude $\varepsilon_0$-approximation of $e^{i \theta X}$ (see~\cref{prop:absolute-value-constraint}) and let $V=e^{i\phi'_1 Z} e^{i \theta' X} e^{i\phi'_2 Z}$.
  Let channels $\Psi_k$ be within diamond norm distance $\varepsilon_k$ from unitary $e^{i(\phi_k-\phi'_k)Z}$, for $k=1,2$, and let  $\varepsilon \ge \varepsilon_0 + \varepsilon_1 + \varepsilon_2$.

  Then channel $\mathcal{U}$ induced by $U$ and composition $\Psi_1 \mathcal{V} \Psi_2$ satisfy 
$$
  \nrm{ \mathcal{U} - \Psi_1 \mathcal{V} \Psi_2 }_\diamond \le \varepsilon \text{, where } \mathcal{V} \text{ is the channel induced by } V.
$$
\end{prop}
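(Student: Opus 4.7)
The plan is to rewrite $U$ so that the exact form matches the structure of the approximation $\Psi_1 \mathcal{V} \Psi_2$, and then apply a telescoping triangle inequality. Concretely, using the decomposition of $V$ given in the hypothesis, I would set $W = e^{i\phi'_1 Z} e^{i\theta X} e^{i\phi'_2 Z}$ and observe the identity
\begin{equation}
U \;=\; e^{i(\phi_1 - \phi'_1)Z}\,W\,e^{i(\phi_2 - \phi'_2)Z},
\end{equation}
so that the induced channel factorises as $\mathcal{U} = \mathcal{E}_1\,\mathcal{W}\,\mathcal{E}_2$, where $\mathcal{E}_k$ denotes the channel induced by $e^{i(\phi_k - \phi'_k)Z}$ and $\mathcal{W}$ the channel induced by $W$. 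This aligns term-by-term with the three factors of the approximation $\Psi_1\,\mathcal{V}\,\Psi_2$.

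Next, I would introduce two intermediate products to telescope the difference:
\begin{equation}
\mathcal{E}_1\mathcal{W}\mathcal{E}_2 \,-\, \Psi_1\mathcal{V}\Psi_2
\;=\; \mathcal{E}_1\mathcal{W}(\mathcal{E}_2 - \Psi_2) \,+\, \mathcal{E}_1(\mathcal{W} - \mathcal{V})\Psi_2 \,+\, (\mathcal{E}_1 - \Psi_1)\mathcal{V}\Psi_2,
\end{equation}
and take diamond norms via the triangle inequality. Each of the three summands is a difference of superoperators sandwiched between genuine channels, so I would invoke sub-multiplicativity of the diamond norm together with the fact that $\|\Phi\|_\diamond = 1$ for any channel $\Phi$ to conclude that pre- and post-composition by channels does not inflate the norm. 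The three summands are then bounded by $\varepsilon_2$, $\varepsilon_0$, and $\varepsilon_1$ respectively: the first and third bounds are precisely the hypotheses on $\Psi_1,\Psi_2$, and the middle bound comes from the magnitude approximation condition (Proposition~\ref{prop:magnitude-approximation-condition}), possibly after one application of unitary invariance to identify $\dst(V,W)$ with $\dst(e^{i\theta' X}, e^{i\theta X})$.

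Summing the three contributions gives $\varepsilon_0 + \varepsilon_1 + \varepsilon_2 \le \varepsilon$, as required. No single step is particularly hard, but the one place to be careful is the middle term: the statement of Proposition~\ref{prop:magnitude-approximation-condition} gives $\dst(V, W) \le \varepsilon_0$ directly in terms of $W = e^{i\phi'_1 Z} e^{i\theta X} e^{i\phi'_2 Z}$, so I would want to cite it explicitly rather than re-derive the bound from $|\theta - \theta'| \le \arcsin(\varepsilon_0/2)$. The only other subtlety is making sure the sub-multiplicativity statement is applied to differences of maps (not channels), which is fine since the diamond norm is a norm on the space of all completely bounded superoperators and is sub-multiplicative under composition.
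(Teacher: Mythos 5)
Your proof is correct and follows essentially the same route as the paper's: both rewrite $U=e^{i(\phi_1-\phi'_1)Z}\cdot e^{i\phi'_1 Z}e^{i\theta X}e^{i\phi'_2 Z}\cdot e^{i(\phi_2-\phi'_2)Z}$ to align its three factors with $\Psi_1,\mathcal{V},\Psi_2$, apply the chain-rule/triangle inequality for the diamond norm, and invoke \cref{prop:magnitude-approximation-condition} to bound the middle term by $\varepsilon_0$. The only difference is that you re-derive the chain rule explicitly via the telescoping decomposition together with sub-multiplicativity and $\|\Phi\|_\diamond=1$ for channels, whereas the paper cites it as a known diamond-norm property.
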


\begin{proof}
  Let us write $U$ as a product $U_1 U_0 U_2$, where $U_k = e^{i(\phi_k-\phi'_k)Z}$ and $U_0 = e^{i\phi'_1 Z} e^{i \theta X} e^{i\phi'_2 Z}$.
  We then write channel $\mathcal{U}$ as composition of channels $\mathcal{U}_1 \mathcal{U}_0 \mathcal{U}_2$, where $\mathcal{U}_k$ is 
  the channel induced by $U_k$.
  Using the chain rule for diamond norm we have: 
  \begin{equation} \label{eq:su-2-chain-rule}
  \nrm{ \mathcal{U} - \Psi_1 \mathcal{V} \Psi_2  }_\diamond =
  \nrm{ \mathcal{U}_1\mathcal{U}_0\mathcal{U}_2 - \Psi_1 \mathcal{V} \Psi_2  }_\diamond \le 
  \nrm{ \mathcal{U}_1 - \Psi_1}_\diamond + \nrm{ \mathcal{U}_0 - \mathcal{V} }_\diamond + \nrm{\mathcal{U}_2 - \Psi_2}_\diamond
  \end{equation}
  By \cref{prop:magnitude-approximation-condition} $\nrm{ \mathcal{U}_0 - \mathcal{V} }_\diamond \le \varepsilon_0$.
  Combining this bound with \cref{eq:su-2-chain-rule} completes the proof.
\end{proof}

One can optimize the choices of $\varepsilon_k$ in the above \cref{prop:general-unitary-approximation}.
For random diagonal approximation, the cost scales as $3\log_2(1/\varepsilon) + O(1)$
and for random magnitude approximation the cost scales as $\log_2(1/\varepsilon) + O(1)$~(see~\cref{tab:approximation-cost-scaling}).
To minimize overall sequence length one can choose $\varepsilon_1 = \varepsilon_2 = 0.43 \varepsilon$ and $\varepsilon_0 = 0.14 \varepsilon$,
however this improves the sequence length only by a small additive constant $0.95$ in comparison to distributing errors equally.

\subsection{Diagonal unitary approximation \label{sec:approx:diag}}
The Euler angle decomposition~\cref{eq:edecomp} describes a qubit unitary as a product of two diagonal unitaries of the form $e^{i\theta Z}$ and one $X$ rotation of the form $e^{i\theta X}$. 
\propos{absolute-value-constraint} further reduces the $X$ rotation to a one-dimensional search problem, leaving just the diagonal unitaries.
Therefore, the special case of diagonal unitary approximation is relevant to the general unitary approximation problem. In this section we recall some of the known results regarding the diagonal approximation problem.

\begin{prob}[Diagonal unitary approximation]
    \label{prob:diagonal-approximation}
    Given:
    \begin{itemize}
        \item target angle $\theta$, a real number,
        \item gate set $G$, a finite set of two by two unitary matrices with determinant one,
        \item accuracy $\varepsilon$, a positive real number,
    \end{itemize}
    Find a sequence $g_1, \ldots, g_n$ of elements of $G$ such that
    \[
        \mathcal{D}_\diamond\left(\exp(i\theta Z), g_1\dots g_n\right) \le \varepsilon.
    \]
\end{prob}

Observe that \problem{diagonal-approximation} is a special case of the general unitary approximation problem, where the target unitary is diagonal
and approximating channel $\mathcal{V}$ is induced by unitary $g_1\dots g_n$.
The diagonal unitary approximation problem is easier to solve because it admits the following bound on the diamond norm
that depends only on the top left entry of $V=g_1\ldots g_n.$

\begin{lem}[Diamond difference from a diagonal unitary]
    \label{lem:diagonal-diamond-difference}
    Given an angle $\theta$ and a unitary 
    \[
      V = \at{\begin{array}{cc} u & -v^\ast \\ v & u^\ast \end{array}},
    \]
    the distance 
    \[
      \mathcal{D}_\diamond(e^{i\theta Z},V) = 2 \sqrt{1 - \left(\mathrm{Re}(ue^{-i\change{\theta}})\right)^2} \leq  2\sqrt{2 - 2|\mathrm{Re}\at{u e^{-i\theta}}|}.
    \]
\end{lem}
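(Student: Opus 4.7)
The plan is to invoke the eigenvalue characterization of the diamond norm distance between unitary channels that was cited in the proof of \cref{prop:absolute-value-constraint}: namely, that $\mathcal{D}_\diamond(U,V)$ equals the diameter of the smallest disc in $\mathbb{C}$ containing the spectrum of $U^{\dagger}V$. All the work then reduces to identifying this diameter for the $2\times 2$ matrix $W := e^{-i\theta Z} V$ and then massaging the resulting expression into the claimed upper bound.

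First I would compute $W = e^{-i\theta Z} V$ entry-wise and read off its trace, which is $e^{-i\theta} u + e^{i\theta} u^{\ast} = 2 \operatorname{Re}(u e^{-i\theta})$. Setting $a := \operatorname{Re}(u e^{-i\theta})$, and noting $\det W = 1$ because both factors are special unitaries, the characteristic polynomial of $W$ is $\lambda^{2} - 2a\lambda + 1$, so the spectrum is $\lambda_{\pm} = a \pm i\sqrt{1-a^{2}}$. Unitarity of $W$ forces these eigenvalues to lie on the unit circle, and $|a| \le |u| \le 1$ guarantees the square root is real, so $\lambda_{+}$ and $\lambda_{-}$ are complex conjugates. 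Hence the smallest enclosing disc of $\{\lambda_{+}, \lambda_{-}\}$ has diameter $|\lambda_{+}-\lambda_{-}| = 2\sqrt{1-a^{2}}$, yielding the claimed equality
\[
\mathcal{D}_\diamond(e^{i\theta Z}, V) = 2\sqrt{1 - \bigl(\operatorname{Re}(u e^{-i\theta})\bigr)^{2}}.
\]

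For the inequality, it suffices to show $1 - a^{2} \le 2 - 2|a|$, which rearranges to $(|a|-1)^{2}\ge 0$; this is trivially true. Taking the nonnegative square root and multiplying by $2$ gives the stated bound. I do not expect any serious obstacle here: the only thing to double-check is the hypothesis that the eigenvalue/diameter formula applies when the eigenvalues fall on the short arc (rather than spanning more than a semicircle), which is automatic because conjugate pairs always lie on a common diameter's worth of arc, so the diameter of the two-point convex hull is indeed the correct quantity and agrees with the general statement cited from \cref{thm:diamond-distance-between-unitary-channels}.
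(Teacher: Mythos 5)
Your proof is correct and follows the same route as the paper: the paper also invokes the smallest-enclosing-disc characterization (\cref{thm:diamond-distance-between-unitary-channels}, via \cref{cor:diamond-distance-between-general-and-diaognal}), identifies the trace of the product with $2\operatorname{Re}(u e^{-i\theta})$, reads off the eigenvalue separation as $2\sqrt{1-\cos^2\delta}$, and proves the inequality by the equivalent algebraic rearrangement (the paper writes $\sqrt{2\varepsilon - \varepsilon^2}\le\sqrt{2\varepsilon}$ with $\varepsilon=1-|\operatorname{Re}(ue^{-i\theta})|$, which is the same as your $(|a|-1)^2\ge 0$). The only cosmetic difference is that you parametrize the eigenvalues in Cartesian form $a\pm i\sqrt{1-a^2}$ rather than the paper's polar form $e^{\pm i\delta}$.
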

Proof of this bound is in \cref{cor:diamond-distance-between-general-and-diaognal} in~\app{diamond-norm-properties}.
\lemm{diagonal-diamond-difference} immediately suggests a simple condition for solutions of the diagonal approximation problem.
\begin{prop}[Diagonal approximation condition]
    \label{prop:segment-condition}
    Let $g_1,\dots,g_n$ be a sequence of gates from a gate set $G$ and let
    \[
        g_1\dots g_n = \at{\begin{array}{cc} u & -v^\ast \\ v & u^\ast \end{array}}.
    \] 
    Then $g_1,\dots,g_n$ is a solution to the diagonal approximation problem for target angle $\theta$, gate set $G$ and accuracy $\varepsilon$ if 
    \begin{equation}
      \abs{\mathrm{Re}\left(ue^{-i\theta}\right)} \ge \sqrt{ 1 - \varepsilon^2/4}. 
    \end{equation}
    For a geometric interpretation of the constraints see~\fig{diagonal-condition}.
\end{prop}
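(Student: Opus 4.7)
The plan is to derive this condition as a direct consequence of \lemm{diagonal-diamond-difference}, which already provides a closed-form expression for the diamond distance between a diagonal unitary and an arbitrary special unitary in terms of the real part of its top-left entry. Since \propos{segment-condition} merely restates the requirement $\dst(e^{i\theta Z}, V) \le \varepsilon$ as an explicit inequality on $\mathrm{Re}(ue^{-i\theta})$, essentially no new analytic content is needed beyond the lemma.

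First I would write $V = g_1\dots g_n$ and invoke \lemm{diagonal-diamond-difference} to get the exact equality
\[
  \dst(e^{i\theta Z}, V) = 2\sqrt{1 - \bigl(\mathrm{Re}(ue^{-i\theta})\bigr)^2}.
\]
Then I would impose the desired bound $\dst(e^{i\theta Z}, V) \le \varepsilon$, square both sides (both are nonnegative, so this is an equivalence), and rearrange to obtain
\[
  \bigl(\mathrm{Re}(ue^{-i\theta})\bigr)^2 \ge 1 - \varepsilon^2/4.
\]
Taking the nonnegative square root on both sides yields $|\mathrm{Re}(ue^{-i\theta})| \ge \sqrt{1 - \varepsilon^2/4}$, which is precisely the hypothesis of the proposition. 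Hence the stated inequality is sufficient (indeed equivalent) to the approximation bound, completing the proof.

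There is really no obstacle here — the work was already done in establishing \lemm{diagonal-diamond-difference}. The only subtlety worth noting is that the condition is stated as an implication rather than an equivalence, so we only need the ``sufficient'' direction; and that the bound $\sqrt{1-\varepsilon^2/4}$ is only meaningful when $\varepsilon \le 2$, which is automatic since the diamond norm between unitary channels is always bounded by $2$. I would also briefly remark that geometrically the feasible set for $u$ is the union of two half-planes bounded by lines perpendicular to the direction $e^{i\theta}$ at distance $\sqrt{1-\varepsilon^2/4}$ from the origin, intersected with the unit disc (since $|u|\le 1$), matching \fig{diagonal-condition}.
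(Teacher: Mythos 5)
Your proof is correct and follows essentially the same route as the paper: both invoke \lemm{diagonal-diamond-difference} to obtain the closed-form expression $\dst(e^{i\theta Z},V)=2\sqrt{1-(\mathrm{Re}(ue^{-i\theta}))^2}$ and then perform the straightforward algebraic rearrangement. Your observation that the condition is in fact an equivalence (not merely sufficient) is a slight strengthening of the stated proposition but does not change the argument.
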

\begin{proof}
    Let $\mathcal{V}$ be the channel induced by unitary $V = g_1\dots g_n$.
    Then by~\lemm{diagonal-diamond-difference}
    \begin{equation}
      \nrm{\mathcal{Z}_\theta - \mathcal{V}}_\diamond
      = 2 \sqrt{1-|\mathrm{Re}(u e^{-i\theta})|^2}
      \leq 2 \sqrt{\varepsilon^2/4} 
      = \varepsilon.
    \end{equation}
\end{proof}

\begin{figure}
    \centering
    \includegraphics[scale=0.4]{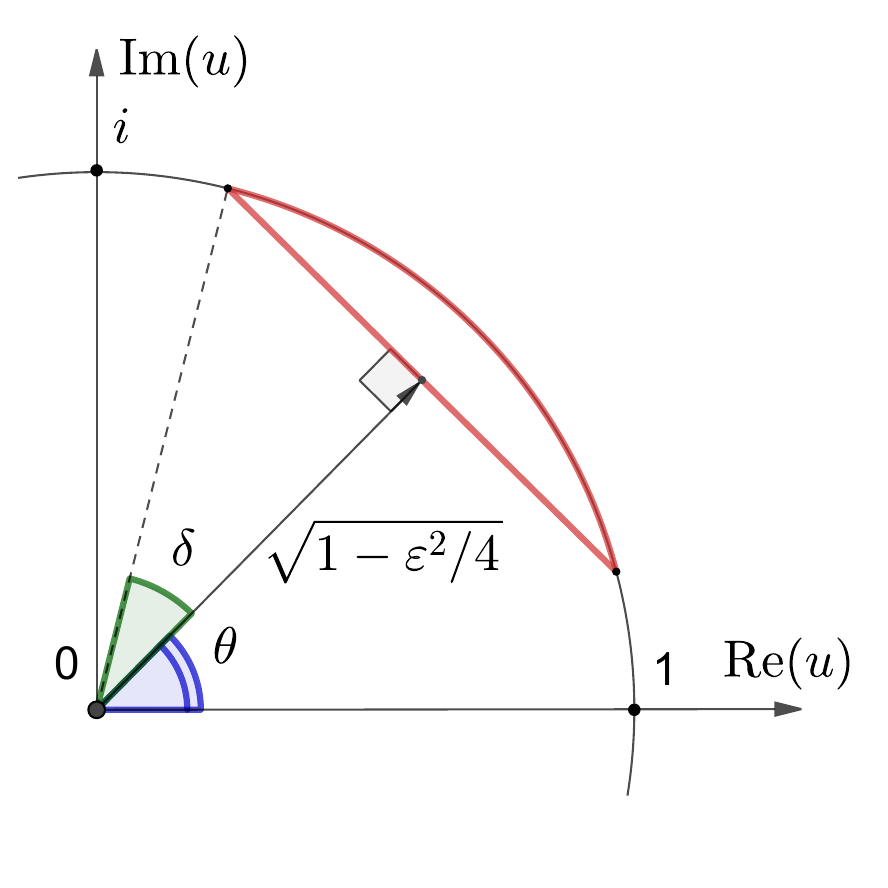}
    \caption{
        \label{fig:diagonal-condition}
        Geometric interpretation of constraints on complex number $u$ appearing in ~\propos{segment-condition}.
        The region with red boundary contains complex numbers $u$ that satisfy constraints $\mathrm{Re}\at{u e^{-i \theta}} \ge \sqrt{1-\varepsilon^2/4}$ and $\abs{u} \le 1$.
        Note that the segment spans points with angular coordinates $[\theta - \delta, \theta + \delta]$ for $\delta = \arcsin(\varepsilon/2)$.
        Constraints in \propos{segment-condition} lead to two regions: one with segment spanning points with angular coordinates $[\theta - \delta, \theta + \delta]$ and another one with $[\pi + \theta - \delta, \pi+\theta + \delta]$.
    }
\end{figure}

\subsection{Fallback approximation \label{sec:fallback-approximation}}

Fallback protocols~\cite{BRS2015} offer a more efficient way to approximate diagonal unitaries by incorporating measurements.
A fallback protocol is a non-deterministic single-qubit quantum channel consisting of two steps: a projective rotation and a conditional fallback. 
The projective rotation and fallback steps may be implemented in a variety of ways. We limit our discussion to fallback protocols with the form illustrated in~\fig{fallback-circuit}.

% \begin{figure}[h!]
%     \centering
%     \includegraphics[scale=1.2]{graphics/fallback-circuit}
%     \caption{
%         \label{fig:fallback-circuit}
%         A fallback protocol circuit \cite{BRS2015}. 
%         The circuit is composed of two steps, a projective rotation step and a fallback step. 
%         The projective rotation step effects one of two diagonal rotations on the input $\ket{\psi}$ depending on the $Z$-basis measurement outcome of the top ancilla qubit.  
%         The two rotation angles are determined by the matrix entries of the unitary $V$.
%         If the measurement outcome is one, then a Pauli $Y$ is applied followed by the fallback step $\mathcal{B}$. 
%     }
%     % Q: Picture is not obvious to read with non-quantum background, in particular the meaning of single vs double wires, and where measurement takes place, are not obvious
%     % A: single wires correspond to qubits; double wires correspond to classical bits 
%     % |Z) is a computational basis measurement
%     % Gates |Y| and |B| are executed only if the measurement outcome in one 
%     % 
% \end{figure}
\newpage
For a fixed single-qubit unitary
\begin{equation}
  V = \left(\begin{array}{cc} u & -v^\ast \\ v & u^\ast \end{array} \right)
\end{equation}
the corresponding projective rotation effects one of two diagonal rotations on $\ket{\psi}$ depending on the measurement outcome.
% Q: It is hard to make the link from the picture to this sentence without QC background 
% A: One needs to do step-by-step calculation of what happens in the circuit in fig:fallback-circuit. We omit this calculation here
With probability $|u|^2$ the measurement outcome is zero, the projective rotation is said to have ``succeeded'' and 
the input state undergoes the transformation
\begin{equation}
  \ket{\psi} \mapsto e^{i\theta_0 Z}\ket{\psi} = e^{i\mathrm{Arg}(u)Z}\ket{\psi}.
\end{equation}
otherwise, the measurement outcome is one, the projective rotation is said to have ``failed'' and
\begin{equation}
  \ket{\psi} \mapsto e^{i\theta_1 Z}\ket{\psi} = e^{i\mathrm{Arg}(v) Z}\ket{\psi}.
\end{equation}

The projective rotation is intended to approximate a target diagonal unitary $e^{i\theta Z}$ so that $\theta_0 \approx \theta$. 
The constraints necessary to achieve that goal are captured by the following problem.
\begin{prob}[Projective approximation]
    \label{prob:projective-approximation}
    Given:
    \begin{itemize}
        \item target angle $\theta$, a real number,
        \item success probability $q$, a positive real number between $0$ and $1$,
        \item gate set $G$, a finite set of two by two unitary matrices with determinant one,
        \item accuracy
        $\varepsilon$, a positive real number,
    \end{itemize}
    find a sequence $g_1,\ldots, g_n$ of elements in $G$, such that for $u,v$ defined via $g_1 \ldots g_n = \left(\begin{array}{cc} u & -v^\ast \\ v & u^\ast \end{array} \right)$
    the following holds:
    \begin{itemize}
        \item $|u|^2 \ge q$, and
        \item $\mathcal{D}_\diamond\left(e^{i\theta Z},e^{i \mathrm{Arg}(u) Z}\right) \le \varepsilon$. 
    \end{itemize}
\end{prob}

\begin{figure}
    \centering
    \includegraphics[scale=0.4]{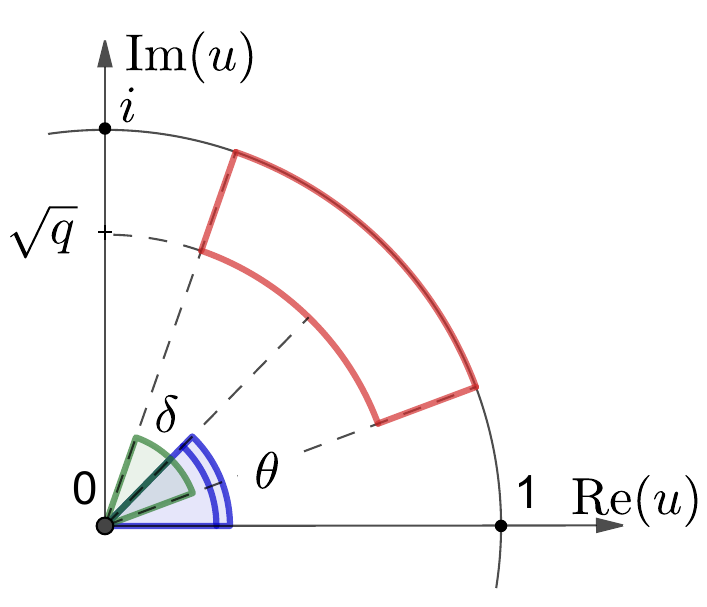}
    \caption{
        \label{fig:fallback-condition}
        Geometric interpretation of constraint on complex number $u$ appearing in \propos{segment}.
        The region with red boundary contains complex numbers $u$ that satisfy constraints $\mathrm{Arg}\at{u} \in  [\theta - \delta, \theta + \delta]$ and $\abs{u} \ge q$, where $\delta = \arcsin(\varepsilon/2)$.
        Constraints in \propos{segment} lead to two regions: one with $\mathrm{Arg}\at{u} \in  [\theta - \delta, \theta + \delta]$ and another one with $\mathrm{Arg}\at{u} \in  [\pi + \theta - \delta, \pi+\theta + \delta]$
    }
\end{figure}

Much like the case of the diagonal approximation problem~(\ref{prob:diagonal-approximation}), solutions to \problem{projective-approximation} can be characterized entirely by conditions on the complex value $u$ at the top-left entry of the circuit unitary $g_1 \ldots g_n$. These conditions are, however, less restrictive than those prescribed by~\propos{segment-condition}. 
\change{We compare the conditions in detail in \cref{sec:geometric-interpretations} with the geometric representations shown in \cref{fig:diagonal-approximation-regions}.}
\begin{prop}[Projective approximation condition]\label{prop:segment}  
  Let $g_1,\dots,g_n$ be a sequence of gates from a gate set $G$ and
    $g_1\ldots g_n = \at{\begin{smallmatrix} u & -v^\ast \\ v & u^\ast \end{smallmatrix}}$.
    Then $g_1,\dots,g_n$
    is a solution to the projective approximation problem~(\problem{projective-approximation}) if $u$ 
    satisfies
    \[
      \abs{u} \ge \sqrt{q} \mathrm{~and~}
      \left( \sin\abs{\mathrm{Arg}(u)-\theta} \le \varepsilon/2 \mathrm{~or~} \sin\abs{\mathrm{Arg}(u)-(\theta+\pi)} \le \varepsilon/2 \right)
    \]
    % Can we re-write the constraints to facilitate comparisons with 4.6? Maybe include the angle constraint? We have modified the figure related to 4.6 to make comparison this easier
    For a geometric interpretation of these constraints see~\fig{fallback-condition}.
\end{prop}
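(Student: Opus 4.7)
The plan is to verify the two defining conditions of the projective approximation problem (Problem~\ref{prob:projective-approximation}) directly from the two geometric constraints on $u$. The first condition, $|u|^2 \ge q$, is immediate from squaring $|u|\ge \sqrt{q}$, so everything will reduce to proving the diamond-norm bound $\mathcal{D}_\diamond\bigl(e^{i\theta Z},e^{i\mathrm{Arg}(u) Z}\bigr)\le \varepsilon$ under the angular constraint on $\mathrm{Arg}(u)$.

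For the diamond-norm part, my first step would be to reduce $\mathcal{D}_\diamond$ to a concrete expression in $\mathrm{Arg}(u)-\theta$. The matrix $\bigl(e^{i\theta Z}\bigr)^\dagger e^{i\mathrm{Arg}(u) Z} = e^{i(\mathrm{Arg}(u)-\theta) Z}$ has eigenvalues $e^{\pm i(\mathrm{Arg}(u)-\theta)}$. Using the characterization of the diamond distance between unitary channels already invoked in the proof of Proposition~\ref{prop:magnitude-approximation-condition} (diameter of the smallest disc containing the eigenvalues of $U^\dagger V$), the distance equals
\begin{equation*}
\bigl|e^{i(\mathrm{Arg}(u)-\theta)}-e^{-i(\mathrm{Arg}(u)-\theta)}\bigr| \;=\; 2\bigl|\sin(\mathrm{Arg}(u)-\theta)\bigr|.
\end{equation*}
If the first branch of the hypothesis holds, namely $\sin|\mathrm{Arg}(u)-\theta|\le \varepsilon/2$, we immediately obtain $\mathcal{D}_\diamond\le\varepsilon$.

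The second branch, $\sin|\mathrm{Arg}(u)-(\theta+\pi)|\le\varepsilon/2$, is handled by the observation that $e^{i\pi Z}=-I$, so the unitaries $e^{i\theta Z}$ and $e^{i(\theta+\pi) Z}$ differ only by a global phase and therefore induce the \emph{same} channel. Hence $\mathcal{D}_\diamond\bigl(e^{i\theta Z},e^{i\mathrm{Arg}(u) Z}\bigr) = \mathcal{D}_\diamond\bigl(e^{i(\theta+\pi)Z},e^{i\mathrm{Arg}(u) Z}\bigr)$, and the argument above applied with $\theta+\pi$ in place of $\theta$ again yields the bound $\varepsilon$. Combining the two branches gives the required disjunction in the hypothesis.

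There is no serious obstacle: the whole proof is a direct computation once one invokes the eigenvalue characterization of the diamond norm for unitary channels. The only subtlety worth flagging explicitly is the $\theta \mapsto \theta+\pi$ symmetry, which is the reason the geometric region in Figure~\ref{fig:fallback-condition} has two lobes rather than one, and which explains why the constraint on $\mathrm{Arg}(u)$ in Proposition~\ref{prop:segment} is weaker than the corresponding constraint in Proposition~\ref{prop:segment-condition}.
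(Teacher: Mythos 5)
Your proof is correct and follows essentially the same route as the paper's. The paper cites Corollary~\ref{cor:diamond-distance-between-diaognal} directly for the identity $\nrm{\mathcal{Z}_\theta - \mathcal{Z}_{\mathrm{Arg}(u)}}_\diamond = 2|\sin(\mathrm{Arg}(u)-\theta)|$ and then uses $\mathcal{Z}_\theta = \mathcal{Z}_{\theta+\pi}$ to cover the second branch, whereas you re-derive the same formula from the eigenvalue characterization in Theorem~\ref{thm:diamond-distance-between-unitary-channels}; the substance is identical.
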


\begin{proof}
    The condition $|u| \geq \sqrt{q}$ is equivalent to $|u|^2 \geq q$ from~\problem{projective-approximation}.

    It remains to show that $\nrm{\mathcal{Z}_\theta - \mathcal{Z}_{\mathrm{Arg}(u)}}_\diamond \le \varepsilon$.
    According to \cref{cor:diamond-distance-between-diaognal}, we have 
    $\nrm{\mathcal{Z}_\theta - \mathcal{Z}_{\mathrm{Arg}(u)}}_\diamond \le 2\sin|\mathrm{Arg(u)}-\theta|$.
    This immediately implies that the channels are $\varepsilon$-close when $\sin\abs{\mathrm{Arg}(u)-\theta} \le \varepsilon/2$.
    Because $\mathcal{Z}_\theta = \mathcal{Z}_{\theta+\pi}$ inequality $\sin\abs{\mathrm{Arg}(u)-(\theta+\pi)} \le \varepsilon/2$
    also ensures $\nrm{\mathcal{Z}_\theta - \mathcal{Z}_{\mathrm{Arg}(u)}}_\diamond \le \varepsilon$.
\end{proof}

\cite{BRS2015} constructs a solution to~\problem{projective-approximation} by first approximating the target phase factor $e^{i\theta}$ with a cyclotomic rational of the form $z^\ast/z$, then searching for a real-valued modifier to achieve the desired success probability $q$. 
The characterization of the fallback approximation problem given by \propos{segment} differs by addressing accuracy ($\varepsilon$) and success probability ($q$) conditions simultaneously, resulting in an intuitive geometric description as illustrated in~\fig{fallback-condition}.

Any solution to the diagonal approximation problem is also a solution to the corresponding projective approximation problem. The projective problem admits additional and possibly cheaper solutions.

\problem{projective-approximation} constrains the action of a successful projective rotation but ignores the failure case.
The difference $\theta-\theta_1$ between the target and failure angles may be large, in general.
Therefore, in the case of failure (measurement outcome one), the fallback step is applied in order to recover and approximate the target rotation.   

The problem of constructing a fallback step can be treated independently of the projective rotation.
In \cite{BRS2015}, the fallback step is a unitary $B \approx e^{i(\theta - \theta_1)Z}$ chosen so that the net effect of the failure case is
\begin{equation}
  \ket{\psi} \mapsto B e^{i\theta_1Z}\ket{\psi} \approx e^{i\theta Z}\ket{\psi}.
\end{equation}
This choice corresponds directly to the diagonal approximation \cref{prob:diagonal-approximation} defined earlier.
A complete fallback protocol of this form may be constructed by first solving~\problem{projective-approximation} and then solving~\problem{diagonal-approximation} for appropriate values of $\varepsilon$.
This is captured by the following proposition that follows from standard properties of the diamond norm (see~\cref{app:diamond-norm-properties}).

\begin{prop}[Fallback approximation]
    \label{prop:fallback-approximation}
    Suppose we are given:
    \begin{itemize}
        \item target angle $\theta$, a real number,
        \item success probability $q$, a positive real number between $0$ and $1$,
        \item gate set $G$, a finite set of two by two unitary matrices with determinant one
    \end{itemize}
    and
    \begin{itemize}
      \item real numbers $\varepsilon_1, \varepsilon_2$
      \item $g_1 \ldots g_n = \left(\begin{array}{cc} u & -v^\ast \\ v & u^\ast \end{array} \right)$, a solution to~\problem{projective-approximation} for $\{\theta, q, G, \varepsilon_1\}$, and
      \item $b_1 \ldots b_m = B$, a solution to~\problem{diagonal-approximation} for $\{\theta - \mathrm{Arg}(v), G, \varepsilon_2\}$
    \end{itemize}
    then  overall fallback protocol accuracy is
    \begin{equation}
        \nrm{\mathcal{Z}_\theta - |u|^2 \mathcal{Z}_{\mathrm{Arg}(u)} - |v|^2 \mathcal{B} \mathcal{Z}_{\mathrm{Arg}(v)}}_\diamond \le \varepsilon_1 + |v|^2 \varepsilon_2 
    \end{equation}
    where $\mathcal{B}(\rho) := B\rho B^\dagger$. 
    % \CP{Maybe explain why RHT of Eq 22 is the error we want to bound}
\end{prop}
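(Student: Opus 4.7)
The plan is to exploit that $|u|^2 + |v|^2 = 1$ (since $V = g_1\ldots g_n$ is special unitary) to rewrite $\mathcal{Z}_\theta$ as a convex combination of itself with weights $|u|^2$ and $|v|^2$. Concretely, I would write
\begin{equation}
\mathcal{Z}_\theta - |u|^2\,\mathcal{Z}_{\mathrm{Arg}(u)} - |v|^2\,\mathcal{B}\mathcal{Z}_{\mathrm{Arg}(v)}
= |u|^2\bigl(\mathcal{Z}_\theta - \mathcal{Z}_{\mathrm{Arg}(u)}\bigr) + |v|^2\bigl(\mathcal{Z}_\theta - \mathcal{B}\mathcal{Z}_{\mathrm{Arg}(v)}\bigr),
\end{equation}
and apply the triangle inequality for the diamond norm, reducing the problem to bounding each of the two pieces separately by the quantity promised by the corresponding hypothesis.

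For the first summand, the hypothesis that $g_1\ldots g_n$ solves the projective approximation problem (\problem{projective-approximation}) for parameters $(\theta,q,G,\varepsilon_1)$ directly gives $\dst(e^{i\theta Z},e^{i\mathrm{Arg}(u)Z})\le \varepsilon_1$. For the second summand, I factor out the unitary channel $\mathcal{Z}_{\mathrm{Arg}(v)}$ by composing $\mathcal{Z}_\theta = \mathcal{Z}_{\theta - \mathrm{Arg}(v)}\,\mathcal{Z}_{\mathrm{Arg}(v)}$, which gives
\begin{equation}
\mathcal{Z}_\theta - \mathcal{B}\mathcal{Z}_{\mathrm{Arg}(v)} = \bigl(\mathcal{Z}_{\theta - \mathrm{Arg}(v)} - \mathcal{B}\bigr)\mathcal{Z}_{\mathrm{Arg}(v)}.
\end{equation}
Unitary invariance of the diamond norm (\propos{diamond-norm-unitary-invariance}) then collapses the right-composition with $\mathcal{Z}_{\mathrm{Arg}(v)}$, and the hypothesis that $B = b_1\ldots b_m$ solves \problem{diagonal-approximation} for $(\theta - \mathrm{Arg}(v), G, \varepsilon_2)$ yields $\nrm{\mathcal{Z}_{\theta - \mathrm{Arg}(v)} - \mathcal{B}}_\diamond \le \varepsilon_2$.

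Combining these two bounds through the triangle inequality gives $|u|^2\varepsilon_1 + |v|^2\varepsilon_2$, and the stated form $\varepsilon_1 + |v|^2\varepsilon_2$ follows from $|u|^2\le 1$. There is no real obstacle: every step is a textbook diamond-norm manipulation (convex decomposition, triangle inequality, unitary invariance). The only modest subtlety worth flagging in the write-up is the identification of the effective channel displayed in \fig{fallback-circuit} with the convex combination $|u|^2\mathcal{Z}_{\mathrm{Arg}(u)} + |v|^2\mathcal{B}\mathcal{Z}_{\mathrm{Arg}(v)}$—i.e., checking that the Pauli $Y$ correction after a failed measurement indeed produces $\mathcal{Z}_{\mathrm{Arg}(v)}$ acting on $\ket{\psi}$, so that no additional mixed cross-terms appear. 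After that sanity check, the bound is tight and the derivation is a three-line computation.
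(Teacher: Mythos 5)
Your proof is correct and uses exactly the route the paper has in mind: the paper itself supplies no explicit argument, deferring to "standard properties of the diamond norm," and your convex-decomposition / triangle-inequality / unitary-invariance chain is precisely that argument spelled out, yielding the slightly sharper bound $|u|^2\varepsilon_1 + |v|^2\varepsilon_2$ before relaxing $|u|^2 \le 1$ to match the stated form.
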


A simple approach to solving the above problem is to choose $\varepsilon_1 = \varepsilon/2$, solve \cref{prob:projective-approximation}, then choose $\varepsilon_2 = \varepsilon/2/|v|^2$
and then solve the corresponding instance of \cref{prob:diagonal-approximation}.

\CP{This looks slightly  suboptimal to me (when $|v|$ is small most errors should be allocated to $\epsilon_1$ to save on overall expected length)}
\VK{Given out scaling pre-factors cost of $\varepsilon/2$ approximation is very close to the cost of  $\varepsilon$ approximation. That is why we call this "a simple" approach. It is slightly sub-optimal.}

\cref{prop:fallback-approximation} can be generalized to admit an arbitrary quantum channel (denoted by $\mathcal{B}$ in~\fig{fallback-circuit}) as the fallback step.  
For example, the fallback may be simply to repeat the projective rotation until success is achieved \cite{paetznick2013repeat,RUS1}.  
In Section \ref{sec:fallback-mixing} we consider fallbacks that are probabilistic mixtures of unitaries.

The cost of of fallback protocol is a random variable. When success probability is $q = 1 - p$ for small $p$, the average cost is equal to
the cost of the projective rotation step plus $p$ times the cost of the fallback step, which is the cost of a diagonal approximation.
The worst case cost is the sum of the costs of the projective and the diagonal approximation. 
High worst case cost might become a problem when using $N$ fallback approximations in parallel,
however we can always ensure that the probability of at least one of them requiring a fallback step is $p$
by choosing the success probability of each of them $q = 1 - p/N$.

\CP{Implicitly assuming that $qN$ is small?} 
\VK{ $q$ is a success probability, to it is not small. We only assume overall failure probability to be $p < 1$.}

\subsection{Mixed diagonal unitary approximation \label{sec:unitary-mixing}}
\cref{prob:diagonal-approximation}
describes synthesis of a qubit unitary by construction and application of a deterministic sequence of elementary gates. An alternative approach, proposed by \cite{Campbell2017} and \cite{Hastings2017},
is to construct several sequences of elementary gates and apply one of them according to a probability distribution. Given the correct probabilistic mixture of unitaries the overall error of the approximation is reduced quadratically, cutting the approximation cost roughly in half. 
In this paper, we introduce the use of diagonal Clifford twirling to construct these sequences, which ensures that the difference between the ideal and approximating channel is a Pauli channel. 
Because of this, we obtain a closed-form expression for the diamond distance (see~\cref{thm:diamond-distance-beween-pauli-channels}) which improves on analysis in \cite{Campbell2017,Hastings2017} and results in lower approximation costs.
Method in \cite{Campbell2017} uses diagonal approximation algorithms as black-boxes and finds under-rotations and over-rotations by modifying target rotation angle.
In contrast, we modify synthesis algorithms to directly find under and over-rotated approximations and further reduce approximations costs.

\begin{prob}[Diagonal unitary approximation by unitary mixing]
    \label{prob:diagonal-approximation-by-unitary-mixing}
    Given:
    \begin{itemize}
        \item target angle $\theta$, a real number,
        \item gate set $G$, a finite set of two by two unitary matrices with determinant one,
        \item accuracy $\varepsilon$, a positive real number,
    \end{itemize}
    Find 
    \begin{itemize}
        \item $G_1,\ldots, G_n$, a sequence of sequences $G_k$ of elements of $G$  and
        \item $p_1,\ldots, p_n$, a probability distribution
    \end{itemize}
    such that
    \[
        \left\Vert \mathcal{Z}_\theta - \sum_{k=1}^n p_k \mathcal{G}_k \right\Vert_\diamond \le \varepsilon
    \]
    where $\mathcal{G}_k$ is the channel obtained by applying the sequence $G_k$.
\end{prob}
This problem generalizes~\problem{diagonal-approximation} by allowing a random choice among multiple gate sequences.

\cite{Campbell2017} gives an algorithm for constructing the mixture by ``Z twirling'' two unitary approximations: an under-rotation and an over-rotation.
The \emph{twirl} of a unitary $U$ over generators $\mathcal{G}$ is a channel obtained by uniformly selecting a random element $V$ over the set generated by $\mathcal{G}$ and then applying $VUV^\dagger$.
For example, the twirl of $U$ over $\{Z, S=e^{-i\pi Z/4}\}$ which we denote by $\mathcal{T}_U$ is given by
\begin{equation}
  \mathcal{T}_{U}(\rho) =
  \frac{1}{4} 
  \sum_{V\in\{I,Z,S,S^\dagger\}} \left(VUV^\dagger\right) \rho \left(V^\dagger U^\dagger V\right).
\end{equation}

% \CP{no simplification possible in this case?}
% What is the intuition behind the concept of twirl and why it is useful here: 
% Generally twirling helps turn arbitrary noise in quantum computers into Pauli noise ( described by a Pauli channel ).
% Usually noise is described as some error channel E, that follows ideal operation U, so UE is the noisy version of U. Noise stregth can be estimated as || E - I ||⋄.
% Pauli channels are easy to analyse in many applications, in particular there is a closed form expression for the diamond norm such channel.
% Generally twirling can be achieved by adding random Clifford unitaries into your circuit.

We show that by twirling over the set $\{Z, S\}$, instead of $Z$ alone, the approximation error of the unitary mixture is a probabilistic mixture of Pauli operators---i.e., a Pauli channel. This allows for an alternative proof of \cite{Campbell2017} and \cite{Hastings2017} and yields a simple expression for the approximation error in terms of diamond distance. %\CP{should lemma 11 come here?}

The procedure is as follows.  Find two unitaries (defined formally below): $U_1$ an under-rotation and $U_2$ an over-rotation.  Calculate a probability $p$ (also defined below) that depends on $U_1$ and $U_2$.  With probability $p$ select $U_1$ and otherwise select $U_2$. Then apply the $\{Z,S\}$ twirl to that selection.  The resulting channel $p\mathcal{T}_{U_1} + (1-p)\mathcal{T}_{U_2}$ approximates a diagonal unitary $e^{i\theta Z}$ with an accuracy given by the following theorem.

\begin{thm}[Diamond difference of a twirled mixture]
  \label{thm:unitary-mixture-diamond-distance}
  Let $\theta$ be an angle and let unitaries
  \begin{equation}
    \label{eq:over-under-unitary-approximations}
    U_1 =
    \left(
    \begin{array}{cc}
        r_1e^{i(\theta + \delta_1)} & v_1^*                        \\
        v_1                           & r_1e^{-i(\theta + \delta_1)}
      \end{array}
    \right)
  \end{equation}
  \begin{equation}
    U_2 =
    \left(
    \begin{array}{cc}
        r_2e^{i(\theta + \delta_2)} & v_2^*                        \\
        v_2                           & r_2e^{-i(\theta + \delta_2)}
      \end{array}
    \right)
  \end{equation}
  for real values $r_1, r_2$ and $\sin(\delta_1) < 0 < \sin(\delta_2)$.  Define probability
  \begin{equation}
    \label{eq:mixing-probability}
    p = \frac{r_2^2\sin(2\delta_2)}{r_2^2\sin(2\delta_2) - r_1^2\sin(2\delta_1)}.
  \end{equation}
  Then
  \begin{equation}
    ||p\mathcal{T}_{U_1} + (1-p)\mathcal{T}_{U_2} - \mathcal{Z}_\theta||_\diamond = 2\left(1 - p r_1^2\cos^2(\delta_1) - (1-p) r_2^2\cos^2(\delta_2)\right).
  \end{equation}
\end{thm}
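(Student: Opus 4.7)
By unitary invariance of the diamond norm (\cref{prop:diamond-norm-unitary-invariance}), pre-composing with $\mathcal{Z}_{-\theta}$ reduces the task to computing $\nrm{\mathcal{Z}_{-\theta}\circ (p\mathcal{T}_{U_1}+(1-p)\mathcal{T}_{U_2}) - \mathcal{I}}_\diamond$. The plan is to show that, for the $p$ prescribed in \cref{eq:mixing-probability}, this channel is a Pauli channel, and then to invoke the closed-form diamond distance between Pauli channels (\cref{thm:diamond-distance-beween-pauli-channels}). The key structural fact is that the twirling set $\{I,Z,S,S^\dagger\}$ is a subgroup of the Clifford group normalizing $Z$; after twirling, the Pauli transfer matrix (PTM) of a qubit channel is block diagonal on the splitting $\{X,Y\}\oplus\{Z\}$, with the $\{X,Y\}$ block a scaled $2$D rotation and the $\{Z\}$ block a real scalar.

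The first computational step is to evaluate $\mathcal{T}_{U_k}$. Factor $U_k = e^{i(\theta+\delta_k)Z}\,\tilde{W}_k$ with $\tilde{W}_k = \at{\begin{smallmatrix} r_k & -\tilde{v}_k^{\ast} \\ \tilde{v}_k & r_k\end{smallmatrix}}$ for some $\tilde{v}_k$ with $|\tilde{v}_k|^2 = 1-r_k^2$. Since every element of the twirling group commutes with the diagonal phase $e^{i(\theta+\delta_k)Z}$, one obtains $\mathcal{T}_{U_k} = \mathcal{Z}_{\theta+\delta_k}\circ \mathcal{T}_{\tilde{W}_k}$. The unitary $\tilde{W}_k$ is a Bloch rotation by angle $2\arccos(r_k)$ around an axis in the $XY$-plane, and symmetrizing its PTM over the twirling group yields the diagonal matrix $\mathrm{diag}(r_k^2, r_k^2, 2r_k^2 - 1)$ on the Pauli indices $\{X,Y,Z\}$. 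I would verify this by directly computing $\tilde{W}_k P \tilde{W}_k^\dagger$ for $P\in\{X,Y,Z\}$ and averaging over the four group elements, using the fact that the symmetrization kills the $XZ$, $YZ$ cross terms and equates the $XX$ and $YY$ entries while nulling $XY-YX$.

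Composing with $\mathcal{Z}_{-\theta}$ then cancels the target angle and leaves $\mathcal{Z}_{-\theta}\circ\mathcal{T}_{U_k}$ with PTM equal to the scaled rotation $r_k^2 R(2\delta_k)$ on $\{X,Y\}$ and $2r_k^2-1$ on $\{Z\}$. The off-diagonal $\{X,Y\}$ entry of the convex combination $p(\mathcal{Z}_{-\theta}\mathcal{T}_{U_1}) + (1-p)(\mathcal{Z}_{-\theta}\mathcal{T}_{U_2})$ is then proportional to $pr_1^2\sin(2\delta_1) + (1-p)r_2^2\sin(2\delta_2)$; setting this to zero recovers exactly the formula $p = r_2^2\sin(2\delta_2)/(r_2^2\sin(2\delta_2) - r_1^2\sin(2\delta_1))$ from \cref{eq:mixing-probability}, and the hypothesis $\sin(\delta_1) < 0 < \sin(\delta_2)$ ensures $p\in(0,1)$. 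With this choice of $p$, the PTM is diagonal with $R_{XX}=R_{YY} = a$ and $R_{ZZ} = b$, where $a = pr_1^2\cos(2\delta_1) + (1-p)r_2^2\cos(2\delta_2)$ and $b = 2pr_1^2 + 2(1-p)r_2^2 - 1$, so the channel is Pauli.

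For a Pauli channel $p_I\mathcal{I} + p_X\mathcal{X} + p_Y\mathcal{Y} + p_Z\mathcal{Z}$, \cref{thm:diamond-distance-beween-pauli-channels} gives $\nrm{\Phi - \mathcal{I}}_\diamond = 2(1-p_I)$. Inverting the PTM-to-Pauli-probability map from $(R_{XX}, R_{YY}, R_{ZZ}) = (a,a,b)$ yields $p_X = p_Y = (1-b)/4$ and $p_I = (1+2a+b)/4$, so the diamond distance equals $(3-2a-b)/2$. Using the identity $1+\cos(2\delta_k) = 2\cos^2(\delta_k)$ one computes $2a + b = 4pr_1^2\cos^2(\delta_1) + 4(1-p)r_2^2\cos^2(\delta_2) - 1$, and substitution produces the stated value $2\bigl(1 - pr_1^2\cos^2(\delta_1) - (1-p)r_2^2\cos^2(\delta_2)\bigr)$. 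The main obstacle is carrying out the PTM algebra for the twirl cleanly; once the diagonal form of $\mathcal{T}_{\tilde{W}_k}$ is pinned down, the rest is routine trigonometric bookkeeping.
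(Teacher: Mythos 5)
Your proposal is correct and follows essentially the same path as the paper: reduce via unitary invariance of the diamond norm, show that the twirled mixture becomes a Pauli channel for the prescribed $p$, then invoke the closed-form Pauli-channel diamond distance. The only difference is presentational---you carry out the symmetrization calculation in the Pauli transfer matrix picture after the tidy factorization $U_k = e^{i(\theta+\delta_k)Z}\tilde{W}_k$, whereas the paper (\cref{prop:sz-twirl}, \cref{lem:pauli-channel-error}) works with the process ($\chi$) matrix and the rotation $V_k = U_k e^{-i\theta Z}$---and the two bookkeeping schemes yield the same cancellation of the single off-diagonal term and the same value of $p_I$.
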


The proof of the theorem in given in \cref{sec:diamond-difference-twirled-mixture}.
A simple way to leverage \cref{thm:unitary-mixture-diamond-distance} is by splitting an approximation error $\varepsilon$ evenly between $U_1$ and $U_2$ so that
\begin{equation}\begin{aligned}
    \label{eq:evenly-split-approximation-error}
    1 - r_1^2\cos^2(\delta_1) & \le \varepsilon/2  \\
    1 - r_2^2\cos^2(\delta_2) & \le \varepsilon/2.
  \end{aligned}\end{equation}
The synthesis task then is to find two unitary approximations, an ``under rotation'' $U_1$ and ``over rotation'' $U_2$ each such that
\begin{equation}
  |r_k \cos(\delta_k)| \ge \sqrt{1- \varepsilon/2} =
  1 - \varepsilon/4 - \varepsilon^2/32 + o(\varepsilon^2), \text{ for }k=1,2.
\end{equation}
This strategy is captured in \propos{diagonal-mixing-condition}, below.
\newpage
\begin{prop}[Diagonal mixing approximation condition]\label{prop:diagonal-mixing-condition}  
  Suppose we are given sequences $g_1,\dots,g_n$ and $h_1,\dots,h_m$ of gates from a gate set $G$.
  Define $u_k,v_k$ from the equations below
  \[
        g_1\dots g_n = \at{\begin{array}{cc} u_1 & -v_1^\ast \\ v_1 & u_1^\ast \end{array}},
        h_1\dots h_m = \at{\begin{array}{cc} u_2 & -v_2^\ast \\ v_2 & u_2^\ast \end{array}}.
  \]
    Then 
    \begin{itemize}
      \item sequence $G_1,\ldots,G_4,H_1,\ldots,H_4$, where 
        $G_k = \sigma_k,g_1,\dots,g_m,\sigma_k^\dagger$, 
        $H_k = \sigma_k,h_1,\dots,h_m,\sigma_k^\dagger$, 
        and $\sigma_1,\sigma_2,\sigma_3,\sigma_4 = I,S,Z,S^\dagger$.
      \item probability distribution $p/4,p/4,p/4,p/4,(1-p)/4,(1-p)/4,(1-p)/4,(1-p)/4$ 
    \end{itemize}
    is a solution to the diagonal unitary approximation~\problem{diagonal-approximation-by-unitary-mixing} with accuracy $\varepsilon$ and target angle $\theta$ if 
    \begin{itemize}
      \item $u_1$ satisfies $\abs{\mathrm{Re}\at{u_1 e^{-i\theta}}} \geq \sqrt{1-\varepsilon/2}$, $\mathrm{Im}\at{u_1 e^{-i\theta}} < 0$, and
      \item $u_2$ satisfies $\abs{\mathrm{Re}\at{u_2 e^{-i\theta}}} \geq \sqrt{1-\varepsilon/2}$, $\mathrm{Im}\at{u_2 e^{-i\theta}} > 0$.
    \end{itemize}
    For a geometric interpretation of these constraints see \fig{diagonal-mixing-condition}.
\end{prop}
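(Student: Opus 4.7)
The plan is to reduce the claim directly to \cref{thm:unitary-mixture-diamond-distance} by (i) identifying the unitaries $U_1,U_2$ produced by the sequences $g_1\dots g_n$ and $h_1\dots h_m$ with the parametric form in \eqref{eq:over-under-unitary-approximations}, (ii) showing that the prescribed sequences and probability weights implement precisely the twirled mixture $p\,\mathcal{T}_{U_1}+(1-p)\,\mathcal{T}_{U_2}$, and (iii) applying the hypotheses on $\mathrm{Re}(u_k e^{-i\theta})$ to bound the closed-form diamond distance.

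First I would write $u_k = r_k e^{i(\theta+\delta_k)}$ with $r_k=|u_k|$, so that $\mathrm{Re}(u_k e^{-i\theta})=r_k\cos\delta_k$ and $\mathrm{Im}(u_k e^{-i\theta})=r_k\sin\delta_k$. The sign hypotheses then give $\sin\delta_1<0<\sin\delta_2$, which is exactly the hypothesis of \cref{thm:unitary-mixture-diamond-distance}. The modulus hypothesis $|\mathrm{Re}(u_k e^{-i\theta})|\ge\sqrt{1-\varepsilon/2}$ combined with $r_k\le 1$ (since $u_k$ is the top-left entry of a unitary) forces $\cos^2\delta_k\ge 1-\varepsilon/2$, so $|\delta_k|<\pi/4$ for sufficiently small $\varepsilon$; in particular $\sin(2\delta_k)$ has the same sign as $\sin\delta_k$, so the denominator in \eqref{eq:mixing-probability} is strictly positive and $p\in(0,1)$. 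A minor point, which I would dispatch in one line, is that if $\mathrm{Re}(u_k e^{-i\theta})$ is negative then $U_k$ approximates $-e^{i\theta Z}$; since global phases induce the same channel, this case reduces to the positive one and the argument proceeds identically.

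Next I would verify that the listed sequences and weights reproduce the twirled mixture. Each sequence $G_k=\sigma_k,g_1,\dots,g_n,\sigma_k^\dagger$ with $\sigma_k\in\{I,S,Z,S^\dagger\}$ implements the unitary $\sigma_k U_1 \sigma_k^\dagger$, and similarly for $H_k$ with $U_2$. Summing over $k=1,\dots,4$ with uniform weight $1/4$ inside each group yields exactly the twirls $\mathcal{T}_{U_1}$ and $\mathcal{T}_{U_2}$ as defined before the theorem. Weighting the two groups by $p$ and $1-p$ then produces the channel $p\,\mathcal{T}_{U_1}+(1-p)\,\mathcal{T}_{U_2}$, and the weights $p/4,\dots,(1-p)/4$ form a valid probability distribution.

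Finally, I would plug into \cref{thm:unitary-mixture-diamond-distance} to obtain
\[
\nrm{p\,\mathcal{T}_{U_1}+(1-p)\,\mathcal{T}_{U_2}-\mathcal{Z}_\theta}_\diamond
= 2\bigl(p(1-r_1^2\cos^2\delta_1)+(1-p)(1-r_2^2\cos^2\delta_2)\bigr)
\le 2\bigl(p\tfrac{\varepsilon}{2}+(1-p)\tfrac{\varepsilon}{2}\bigr)=\varepsilon,
\]
where the inequality uses $r_k^2\cos^2\delta_k=(\mathrm{Re}(u_k e^{-i\theta}))^2\ge 1-\varepsilon/2$. The main obstacle is really only the bookkeeping in the second step: one has to be careful that the stated probability weights and twirl generators combine to yield the two-term twirled mixture in exactly the form demanded by \cref{thm:unitary-mixture-diamond-distance}, and that the sign and modulus hypotheses simultaneously place $p$ in $[0,1]$ and match the theorem's sign convention. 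Once this identification is made, the bound on $\nrm{\cdot}_\diamond$ follows immediately.
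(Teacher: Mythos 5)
Your proof takes the same route as the paper's: identify the prescribed sequences and weights with the twirled mixture $p\,\mathcal{T}_{U_1}+(1-p)\,\mathcal{T}_{U_2}$, set $r_k=\abs{u_k}$, $\delta_k=\mathrm{Arg}(u_k)-\theta$, and apply \theo{unitary-mixture-diamond-distance} together with $\at{\mathrm{Re}(u_ke^{-i\theta})}^2 = r_k^2\cos^2\delta_k \ge 1-\varepsilon/2$ to get the bound. The additional step you include --- confirming that $p$ lies in $(0,1)$ so the weights really are a probability distribution --- is a genuine improvement; the paper's proof takes this for granted.

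One caveat about your one-line dispatch of $\mathrm{Re}(u_ke^{-i\theta})<0$: it is not as innocuous as you suggest. Replacing $U_k$ by $-U_k$ negates $\mathrm{Im}(u_ke^{-i\theta})$ as well, so the under-/over-rotation role of $U_k$ flips. If only one of $u_1,u_2$ has negative real part, then after normalizing by the global phase both unitaries are over-rotations (or both under-rotations), the hypothesis $\sin\delta_1<0<\sin\delta_2$ of \theo{unitary-mixture-diamond-distance} fails, and $p$ as defined in \eq{mixing-probability} can escape $[0,1]$. So "the argument proceeds identically" is not quite right; a correct treatment would either restrict to $\mathrm{Re}(u_ke^{-i\theta})>0$ (as in \fig{diagonal-mixing-condition}, whose caption drops the absolute value) or handle the sign cases explicitly, possibly swapping the roles of $U_1,U_2$ when both real parts are negative. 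This is a wrinkle in the proposition as stated rather than a flaw specific to your proof --- the paper's own proof silently treats only the positive-real-part case --- so your attempt is at least as complete and slightly more self-aware than the published argument.
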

\begin{proof}
  First, note that the sequences $G_1,G_2,G_3,G_4$ along with probabilities
  $\{p/4, p/4, p/4, p/4\}$ corresponds to the $\{Z,S\}$ twirl of $U_1 = g_1\dots g_n$ with probability $p$,
  \begin{equation}
    p\mathcal{T}_{U_1}(\rho) 
    = \frac{p}{4}\sum_{\sigma \in \{I,Z,S,S^\dagger\}} \at{\sigma U_1 \sigma^\dagger} \rho \at{\sigma^\dagger U_1^\dagger \sigma}.
  \end{equation}
  Similarly, the sequences $H_1,H_2,H_3,H_4$ along with probabilities
  $\{(1-p)/4, (1-p)/4, (1-p)/4, (1-p)/4\}$ 
  corresponds to the $\{Z,S\}$ twirl of $U_2 = h_1\dots h_m$ with probability $(1-p)$,
  \begin{equation}
    (1-p)\mathcal{T}_{U_2}(\rho) 
    = \frac{1-p}{4}\sum_{\sigma \in \{I,Z,S,S^\dagger\}} \at{\sigma U_2 \sigma^\dagger} \rho \at{\sigma^\dagger U_2^\dagger \sigma}.
  \end{equation}
  We therefore seek to show that
  $
    \nrm{p\mathcal{T}_{U_1} + (1-p)\mathcal{T}_{U_2} - \mathcal{Z}_\theta}_\diamond \leq \varepsilon.
  $
  Let $r_1 = |u_1|, \delta_1 = \mathrm{Arg}(u_1) - \theta$ and similarly $r_2 = \abs{u_2}, \delta_2 = \mathrm{Arg}(u_2) - \theta$.
  Then $\mathrm{Im}\at{u_1 e^{-i\theta}} = \sin(\delta_1) < 0$ and $\mathrm{Im}\at{u_2 e^{-i\theta}} = \sin(\delta_2) > 0$.
  Substituting
  $U_1 = \at{\begin{smallmatrix} u_1 & -v_1^\ast \\ v_1 & u_1^\ast \end{smallmatrix}}$,
  $U_2 = \at{\begin{smallmatrix} u_2 & -v_2^\ast \\ v_2 & u_2^\ast \end{smallmatrix}}$ 
  into~\theo{unitary-mixture-diamond-distance} and using 
  $\abs{\mathrm{Re}(u_1 e^{-i\theta})} \geq \sqrt{1-\varepsilon/2}$, $\abs{\mathrm{Re}(u_2 e^{-i\theta})} \geq \sqrt{1-\varepsilon/2}$, 
  we obtain
  \begin{equation}\begin{aligned}
    \nrm{p\mathcal{T}_{U_1} + (1-p)\mathcal{T}_{U_2} - \mathcal{Z}_\theta}_\diamond 
    &= 2\at{1-pr_1^2\cos^2(\delta_1) - (1-p)r_2^2\cos^2(\delta_2)} \\
    &= 2\at{1-p \mathrm{Re}\at{u_1 e^{-i\theta}}^2 - (1-p)\mathrm{Re}\at{u_2 e^{-i\theta}}^2} \\
    &\leq 2\at{1-p(1-\varepsilon/2) - (1-p)(1-\varepsilon/2)} \\
    &=\varepsilon.
  \end{aligned}\end{equation}
  
  % \CP{I skipped the calculations so far}
\end{proof}

\begin{figure}
    \centering
    \includegraphics[scale=0.38]{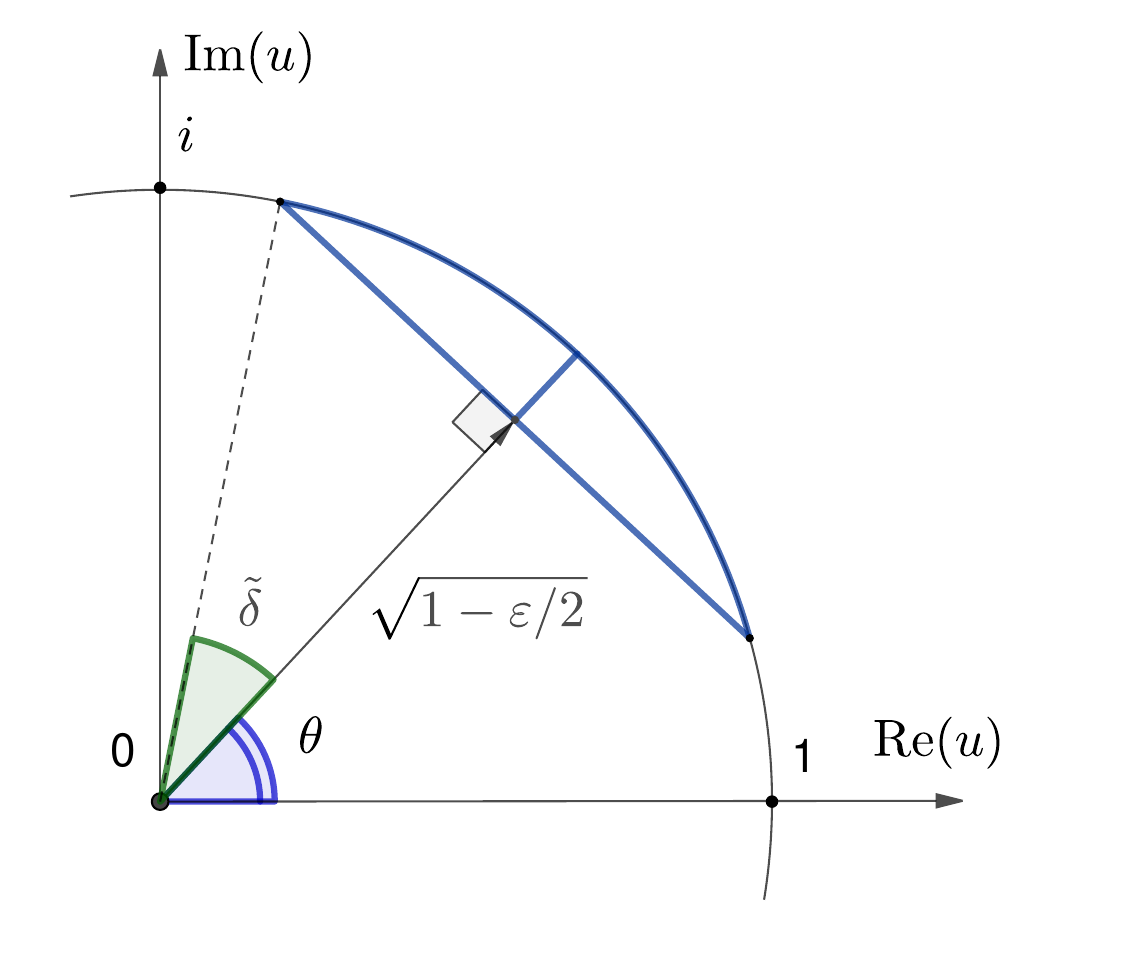}
    \caption{
        \label{fig:diagonal-mixing-condition}
        Geometric interpretation of constraints on complex numbers $u_1$ and $u_2$ appearing in~\propos{diagonal-mixing-condition}.
        The region with blue boundary contains complex numbers $u$ that satisfy constraints $\mathrm{Re}\at{u e^{-i \theta}} \ge \sqrt{1-\epsilon/2}$ and $\abs{u} \le 1$.
        This region is split into two parts, an under-rotation region for $u_1$ with angular coordinates spanning $[\theta- \tilde\delta,\theta]$ and an over-rotation region for $u_2$
        with angular coordinates spanning $[\theta,\theta+\tilde\delta]$ for $\tilde\delta = \arcsin(\sqrt{\varepsilon/2})$.
    }
\end{figure}

As observed by \cite{Campbell2017,Hastings2017}, the constraints imposed by~\propos{diagonal-mixing-condition} admit quadratically better scaling in $\varepsilon$ as compared to approximation without mixing (\propos{segment-condition}), which would require $|r \cos(\delta)| \ge \sqrt{1 - \varepsilon^2/4}$.

Evenly splitting the error as in~\propos{diagonal-mixing-condition}
does not yield optimal solutions in general. A better approach is to first find a cheap (but possibly poor) approximation of the target.  With the first approximation fixed, a search region for the second approximation can be defined. 
In particularly, this is useful when identity is a sufficiently good under-rotated or over-rotated approximation to the target rotation. This happens in practice when approximating Fourier angles.
% \CP{Not clear to me whether you suggest this is better in general, or can be better in particular case. Intuitively, I can see that this strategy can be better for (very?) special inputs where a cheap high-quality first approximation is at hand. However, for ``average inputs'' equally splitting the error seems like an a priori optimal strategy to me? I also fear that a two-step adaptive strategy (fixing one then the other accordingly) can miss optimal solutions compared to globally optimizing both simulatneously. (I comfess I haven't read Appendix D yet.)}
\change{We give a detailed treatment in \cref{appendix:optimal-mixing-solutions}.}

The main technical component of \cref{thm:unitary-mixture-diamond-distance} is to show that the twirled mixture $p\mathcal{T}_{U_1}(\rho) + (1-p)\mathcal{T}_{U_2}(\rho)$ is equal to the target rotation $e^{i\theta Z}$ followed by a Pauli channel error. \CP{I haven't checked calculations in the remaining of this section and linked appendices}
\begin{lem}[Twirled mixture yields Pauli channel error]
  \label{lem:pauli-channel-error}
  Let $\theta$ be an angle, $U_1, U_2$ be unitaries as in
  \cref{eq:over-under-unitary-approximations} and $p$ be a probability as in \cref{eq:mixing-probability}.
  Then
  \begin{equation}
    \label{eq:pauli-channel-error-claim}
    p\mathcal{T}_{U_1}(\rho) + (1-p)\mathcal{T}_{U_2}(\rho) = \mathcal{E}(\mathcal{Z}_\theta(\rho))
  \end{equation}
  where $\mathcal{E}$ is a qubit Pauli channel
  \begin{equation}
    \mathcal{E}(\rho) = p\mathcal{P}_{r_1, \delta_1}(\rho) + (1-p)\mathcal{P}_{r_2,\delta_2}(\rho)
  \end{equation}
  and
  \begin{equation}
    \mathcal{P}_{r,\delta}(\rho) = r^2\cos^2(\delta)\rho + \frac{1-r^2}{2}(X\rho X + Y\rho Y) + r^2\sin^2(\delta)Z\rho Z.
  \end{equation}
\end{lem}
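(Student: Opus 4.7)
My plan is to compute $\mathcal{T}_{U_k}$ explicitly by decomposing $U_k$ into its diagonal and off-diagonal parts in the $Z$-basis, and then to recognise the mixture as a Pauli channel composed with $\mathcal{Z}_\theta$. Writing $v_k = a_k + i b_k$ and $D_k := e^{i(\theta+\delta_k)Z}$, I decompose $U_k = r_k D_k + O_k$ with $O_k := a_k X + b_k Y$ collecting the off-diagonal entries. Every $V \in \set{I, Z, S, S^\dagger}$ commutes with $e^{i\alpha Z}$, so $V U_k V^\dagger = r_k D_k + V O_k V^\dagger$. Using the standard relations $ZXZ = -X$, $ZYZ = -Y$, $SXS^\dagger = Y$, $SYS^\dagger = -X$, $S^\dagger X S = -Y$, $S^\dagger Y S = X$, the four matrices $V O_k V^\dagger$ evaluate to $\pm(a_k X + b_k Y)$ and $\pm(-b_k X + a_k Y)$; in particular $\sum_V V O_k V^\dagger = 0$.

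Expanding $(r_k D_k + V O_k V^\dagger)\rho(r_k D_k + V O_k V^\dagger)^\dagger$ and summing over $V$, the linear cross terms vanish because $\sum_V V O_k V^\dagger = 0$, while a short direct calculation (the off-diagonal contributions from $V \in \set{I,Z}$ and from $V \in \set{S,S^\dagger}$ each produce $X\rho Y + Y\rho X$ terms with opposite signs that cancel in pairs) gives $\sum_V (V O_k V^\dagger)\rho (V O_k V^\dagger)^\dagger = 2|v_k|^2(X\rho X + Y\rho Y)$. Combined with $r_k^2 + |v_k|^2 = 1$, this yields
\begin{equation}
\mathcal{T}_{U_k}(\rho) = r_k^2 D_k \rho D_k^\dagger + \tfrac{1-r_k^2}{2}\at{X\rho X + Y\rho Y}.
\end{equation}
Setting $\rho_\theta := \mathcal{Z}_\theta(\rho)$ and using $D_k = e^{i\theta Z} e^{i\delta_k Z}$, I then expand the first summand via $e^{i\delta_k Z} = \cos\delta_k\, I + i \sin\delta_k\, Z$ to get $r_k^2 \cos^2\delta_k\, \rho_\theta + r_k^2 \sin^2\delta_k\, Z \rho_\theta Z + \tfrac{i}{2}r_k^2 \sin(2\delta_k)\,[Z,\rho_\theta]$.

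The key structural observation, which I expect to be the subtle point, is the identity $X\rho X + Y\rho Y = X\rho_\theta X + Y\rho_\theta Y$: a direct Pauli expansion shows $X\sigma X + Y\sigma Y = (\tr\sigma) I - \tr(Z\sigma) Z$ for any $2 \times 2$ matrix $\sigma$, and $\mathcal{Z}_\theta$ preserves both $\tr\sigma$ and $\tr(Z\sigma)$, so the $X,Y$ conjugation combination is $\theta$-invariant. With this in hand, $\mathcal{T}_{U_k}(\rho)$ rearranges to $\mathcal{P}_{r_k,\delta_k}(\rho_\theta) + \tfrac{i}{2}r_k^2 \sin(2\delta_k)\,[Z,\rho_\theta]$. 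Forming the mixture $p\mathcal{T}_{U_1} + (1-p)\mathcal{T}_{U_2}$, the surviving commutator term has coefficient $\tfrac{i}{2}\bigl[p\, r_1^2 \sin(2\delta_1) + (1-p)\, r_2^2 \sin(2\delta_2)\bigr]$, which is exactly zero by the definition of $p$ in~\cref{eq:mixing-probability}. What remains is $p\mathcal{P}_{r_1,\delta_1}(\rho_\theta) + (1-p)\mathcal{P}_{r_2,\delta_2}(\rho_\theta) = \mathcal{E}(\mathcal{Z}_\theta(\rho))$, completing the proof. Without the identity $X\rho X + Y\rho Y = X\rho_\theta X + Y\rho_\theta Y$ the $X$- and $Y$-conjugation contributions would visibly depend on $\theta$ and could not be packaged as a Pauli channel acting after the ideal rotation; everything else is direct Pauli algebra and the defining relation for $p$.
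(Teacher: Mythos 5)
Your proof is correct, and it takes a genuinely different route from the paper's. The paper first establishes a general formula for the $\{S,Z\}$ twirl of an arbitrary qubit unitary (Proposition~\ref{prop:sz-twirl}), proved via a process-matrix calculation, and then applies it to the rotated unitaries $V_k = U_k e^{-i\theta Z}$ after substituting $\mathcal{Z}_{-\theta}(\rho)$ for $\rho$ to reduce to the $\theta=0$ case. You instead decompose $U_k = r_k D_k + O_k$ directly, with $O_k$ the off-diagonal Hermitian part, compute the twirl by elementary Pauli algebra ($\sum_V V O_k V^\dagger = 0$ kills cross terms, the pairwise cancellation of $X\rho Y + Y\rho X$ handles the quadratic off-diagonal contribution), and then use the invariance identity $X\rho X + Y\rho Y = X\rho_\theta X + Y\rho_\theta Y$ to package the result as $\mathcal{P}_{r_k,\delta_k}(\rho_\theta)$ plus a $Z$-commutator term acting on $\rho_\theta$. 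What your approach buys: it avoids the intermediate proposition and the rotation substitution entirely, and the invariance identity isolates exactly the structural reason the depolarizing piece of the error channel commutes with the ideal rotation, which is left implicit in the paper's computation. What the paper's approach buys: Proposition~\ref{prop:sz-twirl} is a reusable formula for the twirl of \emph{any} qubit unitary, which they may rely on elsewhere, whereas your calculation is tailored to the lemma. Both proofs culminate in the same cancellation: the coefficient $p r_1^2\sin(2\delta_1) + (1-p) r_2^2\sin(2\delta_2)$ of the commutator (or, in the paper's normalization, of $\rho Z - Z\rho$) vanishes precisely by the choice of $p$ in~\cref{eq:mixing-probability}.
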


This lemma can be proved by calculating the four by four process matrices for channels induced by $U_1$, $U_2$, channels $\mathcal{T}_{U_k}$ and $\mathcal{E}$.
Recall that for a qubit channel $\Psi$ the process matrix $\chi$ is given by
$$
    \Psi(\rho)  = \sum_{P,Q \in \{I,X,Y,Z\}} \chi_{P,Q} P \rho Q.
$$
The $\{Z,S\}$ twirl eliminates all but two off-diagonal elements and the mixture with probability $p$ eliminates the remaining off-diagonal elements.
We then see that the process matrix of $\mathcal{E}$ is a diagonal matrix and therefore $\mathcal{E}$ is a Pauli channel.
Finally we apply the following result from \href{https://arxiv.org/pdf/1109.6887.pdf#page=14}{Section V.A} in \cite{Magesan2012} to get a closed form expression for the 
diamond norm distance between $\mathcal{E}$ and the identity channel:
\begin{thm}[Diamond norm distance between Pauli channels]
\label{thm:diamond-distance-beween-pauli-channels}
Suppose $\mathcal{E}_1$,  $\mathcal{E}_2$ are $n$-qubit Pauli channels, that is 
$$
  \mathcal{E}_1(\rho) = \sum_{P \in \{I,X,Y,Z\}^{\otimes n}} q_P P \rho P^\dagger,\,  \mathcal{E}_2(\rho) = \sum_{P \in \{I,X,Y,Z\}^{\otimes n}} r_P P \rho P^\dagger, 
$$
then $\nrm{\mathcal{E}_1-\mathcal{E}_2}_\diamond = \sum_{P \in \{I,X,Y,Z\}^{\otimes n}} |q_P - r_P|$.
\end{thm}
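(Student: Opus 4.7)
The plan is to prove matching upper and lower bounds. For the upper bound, write the difference channel as $\mathcal{E}_1 - \mathcal{E}_2 = \sum_P (q_P - r_P)\,\mathcal{P}$, where $\mathcal{P}(\rho) = P\rho P^\dagger$. By the triangle inequality and unitary invariance of the diamond norm~(\cref{prop:diamond-norm-unitary-invariance}), which gives $\nrm{\mathcal{P}}_\diamond = 1$ for any Pauli $P$, we immediately get
\begin{equation}
\nrm{\mathcal{E}_1 - \mathcal{E}_2}_\diamond \le \sum_P |q_P - r_P|.
\end{equation}

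For the lower bound, I would exhibit an input state for which the Choi–Jamio\l kowski image of $\mathcal{E}_1 - \mathcal{E}_2$ has trace norm exactly $\sum_P |q_P-r_P|$. The natural candidate is the maximally entangled state $\ket{\Phi^+} = \frac{1}{\sqrt{2^n}} \sum_{i} \ket{i}\ket{i}$ on $2n$ qubits, together with the well-known fact that the vectors $\ket{\phi_P} := (P \otimes I)\ket{\Phi^+}$, as $P$ ranges over the $n$-qubit Pauli group modulo signs, form an orthonormal basis of the $2^{2n}$-dimensional space. Applying $(\mathcal{E}_1-\mathcal{E}_2)\otimes \mathcal{I}$ to $\ket{\Phi^+}\bra{\Phi^+}$ yields
\begin{equation}
\bigl((\mathcal{E}_1-\mathcal{E}_2)\otimes \mathcal{I}\bigr)\bigl(\ket{\Phi^+}\bra{\Phi^+}\bigr) = \sum_P (q_P - r_P) \ket{\phi_P}\bra{\phi_P},
\end{equation}
which is a Hermitian operator already in spectral form. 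Its trace norm is the sum of absolute values of its eigenvalues, giving $\sum_P |q_P - r_P|$. Since the diamond norm is the supremum over all inputs of the trace norm of the image, this provides the matching lower bound.

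The two bounds together establish the claimed equality. The only nontrivial ingredient is the orthonormality of the Pauli-twisted Bell vectors $\ket{\phi_P}$, which follows from the identity $\bra{\Phi^+} (P^\dagger Q \otimes I) \ket{\Phi^+} = \tfrac{1}{2^n}\tr(P^\dagger Q) = \delta_{P,Q}$ using the trace orthogonality of the Pauli basis; I would state this as a one-line computation rather than a separate lemma. I expect no real obstacle: the argument is short and self-contained once the maximally entangled input is chosen, and matches the standard proof referenced via~\cite{Magesan2012}.
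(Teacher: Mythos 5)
Your proof is correct. The paper itself offers no proof of \cref{thm:diamond-distance-beween-pauli-channels} (and its restatement as \cref{thm:app-diamond-distance-beween-pauli-channels}); it simply cites \cite{Magesan2012}. Your argument supplies a correct, self-contained derivation of exactly the kind the cited reference uses: the upper bound via the triangle inequality and $\nrm{\mathcal{P}}_\diamond = 1$ for unitary conjugation channels, and a matching lower bound by feeding in the maximally entangled state, under which $(\mathcal{E}_1 - \mathcal{E}_2)\otimes\mathcal{I}$ becomes diagonal in the orthonormal Bell basis $\{\ket{\phi_P}\}_P$ of dimension $4^n$, so that the trace norm of the image is $\sum_P |q_P - r_P|$ on the nose. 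The one step you leave implicit is that the diamond norm, defined in this paper (see~\cref{app:diamond-norm-properties}) as a supremum over \emph{all} $X$ with $\nrm{X}_1 \le 1$, may be restricted to density matrices (indeed pure states) for a Hermiticity-preserving map such as $\mathcal{E}_1 - \mathcal{E}_2$; this is a standard fact (e.g., \cite{Watrous2018}), but it deserves a sentence, since without it exhibiting one good density-matrix input only certifies a lower bound on the restricted quantity. With that remark added, the argument is complete.
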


Detailed proofs of \cref{lem:pauli-channel-error} and \cref{thm:unitary-mixture-diamond-distance} are given in \cref{sec:diamond-difference-twirled-mixture}.

% \lemm{pauli-channel-error} is the key component to~\theo{unitary-mixture-diamond-distance}, \CP{only key technical step in the proof, or is the building a link with a Pauli channel error the main idea behind the approach? the beginning of the section gives en results without intuitions} but may also be of independent interest.
% {}For example, error mitigation for unitary synthesis has been recently proposed~\cite{Suzuki2020}.  
% This technique involves simulation of the inverse error channel by applying a probabilistic mixture of Clifford operators.  
% If, however, the approximation error is a Pauli channel (as it is in~\lemm{pauli-channel-error}) then the inverse error can be simulated more cheaply by probabilistic modification of the Pauli frame. \CP{I don't know what the later is} \CP{not clear whether this reproduces an observation from the work cited, or should be seen as a ``contribution'' of this paper}

% \CP{generally this section is less self-contained than previous ones}

\subsection{Mixed fallback approximation \label{sec:fallback-mixing}}
The results of \cite{Campbell2017,Hastings2017} halve the cost of qubit unitary approximation by taking probabilistic mixtures of unitaries.  We show that an additional factor of two improvement in cost can be obtained by taking probabilistic mixtures of \emph{channels}.  In particular, we demonstrate a procedure for mixing fallback protocols (see~\sec{fallback-approximation}).
The basic idea is to apply at random one of two projective rotations, one that approximates $e^{i\theta Z}$ by over-rotation and one that approximates $e^{i\theta Z}$ by under-rotation. If the projective rotation fails, then a corresponding fallback channel is applied.  

\begin{prob}[Diagonal unitary approximation by projective rotation mixing]
    \label{prob:projective-rotation-mixing}
    Given:
    \begin{itemize}
        \item target angle $\theta$, a real number,
        \item success probability $q$, a positive real number between $0$ and $1$,
        \item gate set $G$, a finite set of two by two unitary matrices with determinant one,
        \item accuracy $\varepsilon$, a positive real number.
    \end{itemize}
    Find 
    \begin{itemize}
        \item $G_1,\ldots,G_n$, a sequence of sequences of elements of $G$ and
        \item $p_1,\ldots,p_n$, a probability distribution
    \end{itemize}
    such that 
    \begin{itemize}
        \item $|u_k|^2 \ge q$ for all $k\in [n]$, and
        \item the diamond norm of the channel below is at most $\varepsilon$
        \[
           \sum_{k=1}^n p_k |u_k|^2 \left( \mathcal{Z}_{\mathrm{Arg}(u_k)} - \mathcal{Z}_\theta \right)
        \]
    \end{itemize}
    where $u_k$ is the top-left entry of the unitary $\at{\begin{smallmatrix} u_k & -v_k^\ast \\ v_k & u_k^\ast \end{smallmatrix}}$ corresponding to sequence $G_k$.
\end{prob}

In analogy to~\problem{diagonal-approximation-by-unitary-mixing}, this problem generalizes the projective rotation approximation problem (\problem{projective-approximation}) by allowing multiple projective rotation circuits in convex combination.  
Note that the elements of the probability distribution $p_1,\ldots,p_n$ are distinct from the success probabilities of the projective rotations.

In the analysis of the fall-back protocol in \cref{prop:fallback-approximation}  we considered only unitary fallback steps.  
We now consider fallbacks that are probabilistic mixtures of unitaries. The channel $\mathcal{F}$ for a fallback protocol has the form
\begin{equation}
  \mathcal{F}(\rho) = q\mathcal{Z}_{\theta_0}(\rho) + (1-q)\mathcal{B'}(\rho)
\end{equation}
where $\mathcal{B'}$ is the composition of the failure rotation $e^{i\theta_1 Z}$ and the fallback step $\mathcal{B}$ and $q$ is the probability of success (measurement outcome of zero).

The following theorem provides a simple closed form bound for the diamond distance of a mixture of fallback protocols, similar to the expression obtained from \cref{thm:unitary-mixture-diamond-distance} for unitary mixtures.  The proof is provided in \cref{sec:diamond-distance-fallback-mixture}.
\begin{thm}[Diamond distance of a fallback mixture]
  \label{thm:fallback-mixture-diamond-distance}
  Let $\theta$ be an angle and let fallback channels
  \begin{equation}
    \mathcal{F}_1(\rho) = q_1\mathcal{Z}_{\theta + \delta_1}(\rho) + (1-q_1)\mathcal{B}_1(\rho)
  \end{equation}
  \begin{equation}
    \mathcal{F}_2(\rho) = q_2\mathcal{Z}_{\theta + \delta_2}(\rho) + (1-q_2)\mathcal{B}_2(\rho)
  \end{equation}
  where $\sin(\delta_1) \leq 0 \leq \sin(\delta_2)$.
  Define probability
  \begin{equation}
    p = \frac{q_2\sin(2\delta_2)}{q_2\sin(2\delta_2) - q_1\sin(2\delta_1)}.
  \end{equation}
  Then
  \begin{equation} \label{eq:mixed-projective-rotation-diamond-distance}
        \nrm{  p q_1 \left( \mathcal{Z}_{\theta + \delta_1} - \mathcal{Z}_\theta \right) + (1-p) q_2 \left( \mathcal{Z}_{\theta + \delta_2} - \mathcal{Z}_\theta \right) }_\diamond = 
        2\left(pq_1\sin^2(\delta_1) + (1-p)q_2\sin^2(\delta_2)\right) 
  \end{equation}
  and the total accuracy of the mixed fall-back approximation protocol is
  \begin{equation}\begin{aligned}
      \label{eq:fallback-mixture-bound}
      ||p\mathcal{F}_1 + (1-p)\mathcal{F}_2 - \mathcal{Z}_\theta||_\diamond
      \leq & \, 2\left(pq_1\sin^2(\delta_1) + (1-p)q_2\sin^2(\delta_2)\right) \\
           & + p(1-q_1)||\mathcal{B}_1 - \mathcal{Z}_\theta||_\diamond + (1-p)(1-q_2)||\mathcal{B}_2 - \mathcal{Z}_\theta||_\diamond.
    \end{aligned}\end{equation}
\end{thm}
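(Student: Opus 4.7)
The plan is to first prove the equality \cref{eq:mixed-projective-rotation-diamond-distance} by showing that the combined superoperator inside the diamond norm is, after the clever choice of $p$, a Pauli-diagonal map whose diamond norm we can compute in closed form via \cref{thm:diamond-distance-beween-pauli-channels}. The bound \cref{eq:fallback-mixture-bound} will then follow from a standard triangle-inequality decomposition and the first part.

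I would begin by using unitary invariance of the diamond norm (\cref{prop:diamond-norm-unitary-invariance}) to reduce the quantity in \cref{eq:mixed-projective-rotation-diamond-distance} to $\nrm{pq_1(\mathcal{Z}_{\delta_1}-\mathcal{I}) + (1-p)q_2(\mathcal{Z}_{\delta_2}-\mathcal{I})}_\diamond$. Then, expanding $e^{i\delta Z}=\cos(\delta)I+i\sin(\delta)Z$ and simplifying gives the identity
\[
    \mathcal{Z}_\delta(\rho)-\rho = -\sin^2(\delta)\,(\rho - Z\rho Z) + \tfrac{i}{2}\sin(2\delta)\,[Z,\rho].
\]
Substituting this for $\delta=\delta_1$ and $\delta=\delta_2$, the combined map splits into a ``dephasing'' contribution with coefficient $\alpha := pq_1\sin^2(\delta_1)+(1-p)q_2\sin^2(\delta_2)$ and a commutator contribution with coefficient $\tfrac{1}{2}\bigl(pq_1\sin(2\delta_1)+(1-p)q_2\sin(2\delta_2)\bigr)$. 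A short algebraic manipulation shows that the formula for $p$ given in the statement is precisely the one that makes the commutator coefficient vanish.

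After this cancellation, the operator simplifies to the Hermitian-preserving Pauli-diagonal map $\rho\mapsto -\alpha\rho + \alpha Z\rho Z$, whose process-matrix Pauli diagonal is $(-\alpha,0,0,\alpha)$. Applying (the Hermitian-preserving version of) \cref{thm:diamond-distance-beween-pauli-channels}, the diamond norm equals $|-\alpha|+|\alpha|=2\alpha$, which is exactly the right-hand side of \cref{eq:mixed-projective-rotation-diamond-distance}. The sign hypothesis $\sin(\delta_1)\le 0\le\sin(\delta_2)$ ensures both that $p\in[0,1]$ (the denominator has the same sign as the numerator) and that $\alpha\ge 0$, so the absolute values in the last step are handled consistently.

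For \cref{eq:fallback-mixture-bound}, I would use the identity $pq_1 + p(1-q_1) + (1-p)q_2 + (1-p)(1-q_2)=1$ to rewrite
\[
    p\mathcal{F}_1+(1-p)\mathcal{F}_2-\mathcal{Z}_\theta = \Phi_{\mathrm{proj}} + p(1-q_1)(\mathcal{B}_1-\mathcal{Z}_\theta) + (1-p)(1-q_2)(\mathcal{B}_2-\mathcal{Z}_\theta),
\]
where $\Phi_{\mathrm{proj}} := pq_1(\mathcal{Z}_{\theta+\delta_1}-\mathcal{Z}_\theta) + (1-p)q_2(\mathcal{Z}_{\theta+\delta_2}-\mathcal{Z}_\theta)$. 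The triangle inequality and the already-proven part~1 then yield the stated bound. The main obstacle I anticipate is verifying that \cref{thm:diamond-distance-beween-pauli-channels} is applicable to a signed Pauli-diagonal superoperator (the quantity inside the norm is not a difference of two genuine channels, but of two scaled Pauli conjugations); this requires either invoking the version of the statement for Hermitian-preserving Pauli-diagonal maps or equivalently rewriting $-\alpha\mathcal{I}+\alpha\mathcal{Z}(\cdot)Z$ as the difference of two bona fide subnormalized Pauli channels and applying the result directly.
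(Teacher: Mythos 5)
Your proof is correct, but it takes a genuinely different route from the paper's argument. The paper establishes \cref{eq:mixed-projective-rotation-diamond-distance} by factoring out $pq_1 + (1-p)q_2$ from the mixture, rewriting the rescaled mixture as $s\mathcal{Z}_{\theta+\delta_1} + (1-s)\mathcal{Z}_{\theta+\delta_2} - \mathcal{Z}_\theta$ with $s=\sin(2\delta_2)/(\sin(2\delta_2)-\sin(2\delta_1))$, then observing that every $\mathcal{Z}_\phi$ is already equal to its own $\{S,Z\}$-twirl $\mathcal{T}_{e^{i\phi Z}}$, so \cref{thm:unitary-mixture-diamond-distance} applies directly with $r_1=r_2=1$. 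Your approach instead computes the superoperator explicitly via $\mathcal{Z}_\delta(\rho)-\rho = -\sin^2(\delta)(\rho - Z\rho Z) + \tfrac{i}{2}\sin(2\delta)[Z,\rho]$, verifies that the choice of $p$ annihilates the commutator piece, and reads off the diamond norm from \cref{thm:diamond-distance-beween-pauli-channels}. In effect you are re-deriving the $r_1=r_2=1$ special case of \cref{lem:pauli-channel-error} from scratch rather than invoking the twirling machinery; this is more self-contained and makes the role of $p$ transparent (it is exactly the zero of the commutator coefficient), at the cost of not explaining why the paper's normalization trick with $s$ links back to the unitary-mixture theorem. The concern you raise in the last paragraph is legitimate but resolves cleanly the way you suggest: write $-\alpha\mathcal{I} + \alpha Z(\cdot)Z$ as $\bigl[(1-\alpha)\mathcal{I} + \alpha Z(\cdot)Z\bigr] - \mathcal{I}$, which is a difference of bona fide Pauli channels since $\alpha = pq_1\sin^2\delta_1 + (1-p)q_2\sin^2\delta_2 \leq pq_1 + (1-p)q_2 \leq 1$, and then \cref{thm:diamond-distance-beween-pauli-channels} gives $|{-\alpha}| + |\alpha| = 2\alpha$ immediately. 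The second part via triangle inequality matches the paper.
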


The goal of a synthesis algorithm then is to bound \cref{eq:fallback-mixture-bound}%
\footnote{When the composition $\mathcal{B}_k \mathcal{Z}_{-\theta}$ is a Pauli channel, we can replace inequality in \cref{eq:fallback-mixture-bound} by an exact value}
by an approximation error $\varepsilon$. We have some flexibility in bounding the accuracy of the components of the two fallback protocols.
We may set the accuracy of each term separately
\begin{equation}\begin{aligned}
  \label{eq:fallback-mixing-terms}
    2\left( pq_1\sin^2(\delta_1) + (1-p)q_2\sin^2(\delta_2) \right) & = \varepsilon_1   \\
    p(1-q_1)||\mathcal{B}_1 - \mathcal{Z}_{\theta}||_\diamond       & = \varepsilon_2   \\
    (1-p)(1-q_2)||\mathcal{B}_2-\mathcal{Z}_{\theta}||_\diamond       & = \varepsilon_3   \\
    \varepsilon_1 + \varepsilon_2 + \varepsilon_3 & \leq \varepsilon.
  \end{aligned}\end{equation}

The first condition is ensured by solving \cref{prob:projective-rotation-mixing}. 
The second two conditions are ensured by solving the mixed diagonal approximation problems.
Note that for the two fallback terms $||\mathcal{B}_1-\mathcal{Z}_\theta||_\diamond$ and $||\mathcal{B}_2-\mathcal{Z}_\theta||_\diamond$, the accuracy is scaled by $1-q_1$ and $1-q_2$ thereby reducing the fallback step approximation T-count on average by $1.5 \log_2(1/(p(1-q_1)))$ and $1.5 \log_2(1/(1-p)/(1-q_2))$ when using mixed diagonal approximation with Clifford+$T$ gate set.
This is in comparison to the cost of mixed diagonal approximation with Clifford+$T$ gate set of accuracy $\varepsilon$.
As in previous sections, \cref{prob:projective-rotation-mixing} is solved by finding gate sequences with certain constraints on the top-left entry of the unitaries 
they compute.

\begin{prop}[Projective rotation mixing approximation condition]\label{prop:fallback-mixing-condition}  
  Suppose that we are given two sequences $G = g_1,\dots,g_n$ and $ H = h_1,\dots,h_m$ from a gate set $G$.
  Define
  \[
        g_1\dots g_n = \at{\begin{array}{cc} u_1 & -v_1^\ast \\ v_1 & u_1^\ast \end{array}},
        h_1\dots h_n = \at{\begin{array}{cc} u_2 & -v_2^\ast \\ v_2 & u_2^\ast \end{array}}.
  \]
  Suppose that for angle $\theta$ and accuracy $\varepsilon$
  \begin{itemize}
      \item $u_1$ satisfies $\abs{u_1} \ge \sqrt{q}$, $-\sqrt{\varepsilon/2} \leq \sin\at{\mathrm{Arg}(u_1) - \theta} \leq 0$, and
      \item $u_2$ satisfies $\abs{u_2} \ge \sqrt{q}$, $0 \leq \sin\at{\mathrm{Arg}(u_2) - \theta} \leq \sqrt{\varepsilon/2}$.
  \end{itemize}
  Then sequence $G,H$ and probability distribution $p, 1-p$ for 
  \begin{equation}
    p = \frac{q_2\sin(2\delta_2)}{q_2\sin(2\delta_2) - q_1\sin(2\delta_1)}, \text{ where } q_k = |u_k|^2, \delta_k = \mathrm{Arg}(u_k)-\theta
  \end{equation}
    is a solution to the projective rotation mixing approximation~\problem{projective-rotation-mixing}.
    For a geometric interpretation of the constraints on $u_1, u_2$ see \fig{fallback-mixing-condition}.
\end{prop}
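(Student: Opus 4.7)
The plan is to reduce the proposition directly to the already-proved \cref{thm:fallback-mixture-diamond-distance}, specifically the equality in \cref{eq:mixed-projective-rotation-diamond-distance}, and then use the hypothesized bounds on $u_1$ and $u_2$ to verify the two conditions required by \cref{prob:projective-rotation-mixing}.

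First, I would check the success-probability condition. The hypothesis $|u_k| \ge \sqrt{q}$ immediately gives $|u_k|^2 = q_k \ge q$, which is the first bullet of \cref{prob:projective-rotation-mixing}. I would also quickly verify that the mixing weight $p$ in the statement lies in $[0,1]$: since $\sin(\delta_1)\le 0 \le \sin(\delta_2)$ with $|\delta_k|$ small enough that $\cos(\delta_k)>0$, we have $q_1\sin(2\delta_1)\le 0 \le q_2\sin(2\delta_2)$, so numerator and denominator in the formula for $p$ have the same sign and the numerator is dominated by the denominator in absolute value. This also ensures the hypotheses of \cref{thm:fallback-mixture-diamond-distance} are met with the given $\delta_1,\delta_2$.

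Next, I would apply \cref{eq:mixed-projective-rotation-diamond-distance} directly: setting $q_k = |u_k|^2$ and $\delta_k = \mathrm{Arg}(u_k) - \theta$, the diamond norm of the mixture appearing in \cref{prob:projective-rotation-mixing} equals $2\bigl(p q_1 \sin^2(\delta_1) + (1-p) q_2 \sin^2(\delta_2)\bigr)$. It then suffices to bound this expression by $\varepsilon$.

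The final step is the elementary bound. By the hypotheses, $\sin^2(\delta_k)\le \varepsilon/2$ for $k=1,2$, and $q_k=|u_k|^2 \le 1$ since $u_k$ is the top-left entry of a special unitary. Therefore
\begin{equation}
2\bigl(p q_1 \sin^2(\delta_1) + (1-p) q_2 \sin^2(\delta_2)\bigr)
\le 2\bigl(p\cdot \tfrac{\varepsilon}{2} + (1-p)\cdot \tfrac{\varepsilon}{2}\bigr) = \varepsilon,
\end{equation}
completing the verification. I do not anticipate any real obstacle here: essentially all of the work is packaged in \cref{thm:fallback-mixture-diamond-distance}, and the role of the proposition is simply to translate the hypotheses ($|u_k|\ge\sqrt{q}$ and the sine bounds) into the abstract conditions on $q_k,\delta_k$ used by that theorem. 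The only subtlety worth stating explicitly is the sign convention that makes $p\in[0,1]$, which is built into the hypothesis that $u_1$ under-rotates ($\sin(\delta_1)\le 0$) and $u_2$ over-rotates ($\sin(\delta_2)\ge 0$).
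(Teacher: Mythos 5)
Your proposal is correct and follows essentially the same route as the paper's own proof: both invoke the equality from \cref{thm:fallback-mixture-diamond-distance} (specifically \cref{eq:mixed-projective-rotation-diamond-distance}) with $q_k = |u_k|^2$ and $\delta_k = \mathrm{Arg}(u_k)-\theta$, and then bound the resulting expression by $\varepsilon$ using $q_k \le 1$ and $\sin^2(\delta_k) \le \varepsilon/2$. Your extra remark verifying that $p\in[0,1]$ from the sign conditions on $\sin(\delta_1),\sin(\delta_2)$ is a sensible sanity check the paper leaves implicit, but otherwise the two arguments coincide.
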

\newpage
\begin{proof}
The conditions $|u_1| \geq \sqrt{q}$, $|u_2| \geq \sqrt{q}$ trivially ensure success probability conditions of~\problem{projective-rotation-mixing}.
We need to bound diamond distance of the channel
$$
p |u_1|^2 \left( \mathcal{Z}_{\mathrm{Arg}(u_1)} - \mathcal{Z}_\theta \right) + (1-p) |u_2|^2 \left( \mathcal{Z}_{\mathrm{Arg}(u_2)} - \mathcal{Z}_\theta \right).
$$
By~\theo{fallback-mixture-diamond-distance} it is equal to
\begin{equation}\begin{aligned}
  & 2\left(p|u_1|^2 \sin^2(\delta_1) + (1-p)|u_2|^2 \sin^2(\delta_2) \right) \leq 2\left(p\sin^2(\delta_1) + (1-p)\sin^2(\delta_2) \right)  \\
  &\leq \varepsilon
\end{aligned}\end{equation}
\end{proof}

\begin{figure}
    \centering
    \includegraphics[scale=0.4]{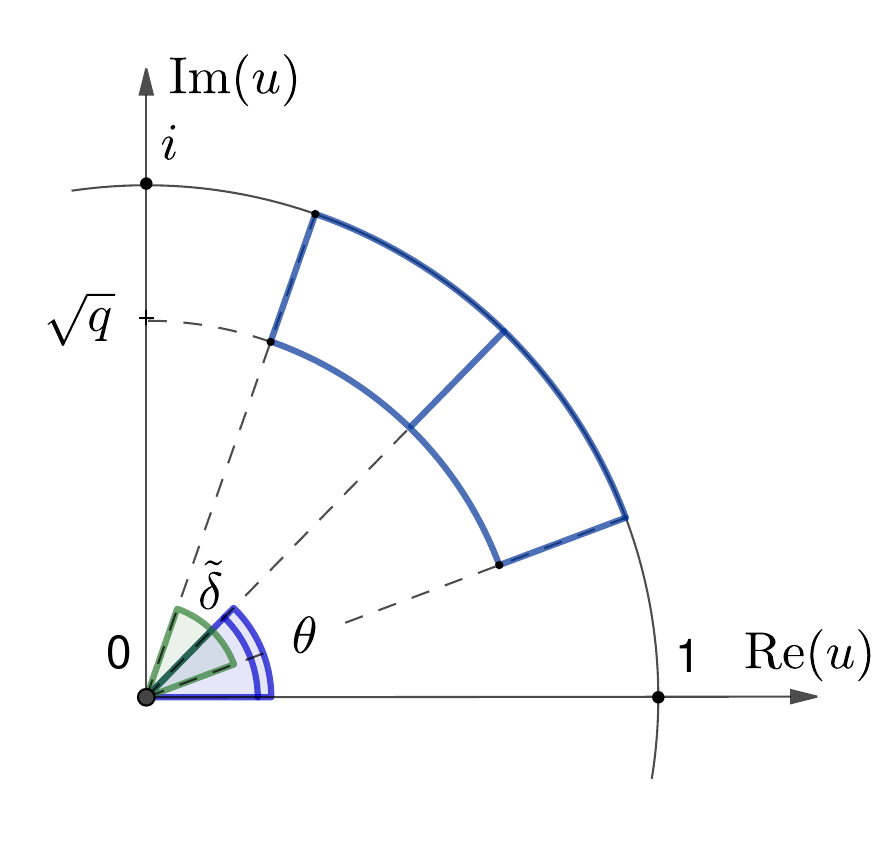}
    \caption{
        \label{fig:fallback-mixing-condition}
        A geometric interpretation of constraints on the projective rotations given by~\propos{fallback-mixing-condition}.
        The sector with blue boundary contains complex numbers $u$ that satisfy constraints $\mathrm{Arg}\at{u} \in  [\theta - \tilde\delta, \theta + \tilde\delta]$ and $\abs{u} \ge \sqrt{q}$, where $\tilde\delta = \arcsin(\sqrt{\varepsilon/2})$.
        This sector is split into two parts, an over-rotation sector for one projective rotation and an under-rotation sector for the other projective rotation.
    }
\end{figure}

The conditions 
$|\sin(\delta_k)| \leq \sqrt{\varepsilon/2}$
imposed by~\propos{fallback-mixing-condition} are quadratically looser than the equivalent condition for projective rotations without mixing (\cref{prop:segment}) which requires
$|\sin(\delta)| \leq \varepsilon/2$.
When combined with unitary mixing for the fallback step, this yields a quadratic improvement in $\varepsilon$ for the entire fallback protocol.
The gate cost of a fallback protocol scales as $C\log_2(1/\varepsilon) + O(1)$.
Thus the quadratic improvement in $\varepsilon$ translates to a roughly two times savings in expected gate cost over conventional fallback protocols.
For the more detailed cost comparisons see \cref{tab:approximation-cost-scaling}.

For reasonable ranges of $\varepsilon$ the $\log(1/\varepsilon)$ term is well below $100$, making additive constants and higher order terms an important consideration.  In that sense, solutions obtained by~\propos{fallback-mixing-condition} are sub-optimal.  
The overall cost can be optimized by a more careful assignment of $\varepsilon_1, \varepsilon_2$ and $\varepsilon_3$ in \cref{eq:fallback-mixing-terms}. This is discussed further in \cref{appendix:optimal-mixing-solutions}.

\subsection{Mixed magnitude approximation \label{sec:magnitude-mixing}}
\label{sec:mixed-magnitude-approximation}

We have shown that taking a probabilistic mixture of channels leads to improvement in accuracy for diagonal approximations with and without the fallback protocol. It is natural to then question whether a similar improvement can be achieved for general unitary approximation. We show that this is indeed the case, following the same strategy of finding approximations corresponding to under- and over- rotations of a target angle. 

We begin by defining the following problem for approximating an arbitrary X-rotation up to phases.

\begin{prob}[Magnitude approximation by mixing]
    \label{prob:magnitude-mixing}
    Given:
    \begin{itemize}
        \item target angle $\theta$,
        \item gate set $G$, a finite set of two by two unitary matrices with determinant one,
        \item accuracy $\varepsilon$, a positive real number.
    \end{itemize}
    Find 
    \begin{itemize}
        \item $G_1,\ldots,G_n$, a sequence of sequences of elements of $G$ and
        \item $p_1,\ldots,p_n$, a probability distribution on these sequences
    \end{itemize}
    such that
        \[
          \left\Vert \sum_{k=1}^n p_k \mathcal{X}_{\arccos(\abs{u_k})}-\mathcal{X}_\theta \right\Vert_\diamond \leq \varepsilon.
        \]

    where $u_k$ is the top-left entry of the matrix corresponding to sequence $G_k$ and $\mathcal{X}_\theta$ is the channel induced by $e^{i \theta X}$.
\end{prob}

As in \problem{diagonal-approximation-by-unitary-mixing} and \problem{projective-rotation-mixing}, \problem{magnitude-mixing} allows for a solution comprising a probabilistic mixture of channels. We can further assume without loss of generality that $\abs{\cos(\theta)}\ge1/\sqrt{2}$, by the following remark.

\begin{rem}
Let $g_1\dots g_n$ be a solution to \problem{magnitude-mixing} for target angle $\theta$. Then $iX \cdot g_n^\dagger \dots g_1^\dagger$ is a solution for target angle $\frac{\pi}{2}-\theta$, since $iX\cdot g_n^\dagger \dots g_1^\dagger =  iX e^{-i\phi_2 Z} (-iX) iX e^{-i\theta X} e^{-i\phi_1 Z} = e^{i\phi_2 Z} e^{i(\pi/2-\theta)X} e^{i\phi_1 Z}.$
\end{rem}

Hence, if $\abs{\cos(\theta)}<1/\sqrt{2}$ we can simply apply magnitude approximation to $\pi/2 - \theta$, noting that $\cos(\pi/2-\theta)=\sin(\theta) \ge 1/\sqrt{2}$.

 The following proposition shows that the error bound on the diamond norm in \problem{magnitude-mixing} induces a constraint on the top-left entries of the matrices corresponding to sequences $G_k$. Our approach is to again find X-approximations corresponding to under- and over-rotations of the angle $\delta$.

\begin{prop}[Mixed magnitude approximation condition]\label{prop:magnitude-mixing-condition}  
  Suppose we are given sequences $g_1,\dots,g_n$ and $h_1,\dots,h_m$ of gates from a gate set.
  Define complex numbers $u_k, v_k$
  \[
        g_1\dots g_n = \at{\begin{array}{cc} u_1 & -v_1^\ast \\ v_1 & u_1^\ast \end{array}},
        h_1\dots h_m = \at{\begin{array}{cc} u_2 & -v_2^\ast \\ v_2 & u_2^\ast \end{array}}
  \]
  Suppose that for accuracy $\varepsilon$ and target angle $\theta$        \begin{itemize}
      \item $\abs{u_1}\in \{\cos(\theta''):\theta''\in[0,\pi/2], 0\le\theta-\theta''\le\arcsin\sqrt{\varepsilon/2}\}$, and
      \item $\abs{u_2}\in \{\cos(\theta''):\theta''\in[0,\pi/2], 0\le\theta''-\theta\le\arcsin\sqrt{\varepsilon/2}\}$.
    \end{itemize}
  Define 
  $$
  p = \frac{\sin(2\delta_2)}{\sin(2\delta_2) - \sin(2\delta_1)}, \text{ where } \delta_k = \mathrm{arccos}(u_k) - \theta
  $$
  The the sequences  $g_1,\dots,g_n$ and $h_1,\dots,h_m$ and probability distribution $p, 1-p$ is a
  a solution to the magnitude mixing approximation~\problem{magnitude-mixing}  with $n=2$, accuracy $\varepsilon$ and target angle $\theta$.
  For a geometric interpretation of the constraints on $|u_k|$ see \fig{magnitude-mixing-condition}.
\end{prop}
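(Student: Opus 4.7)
The plan is to reduce the $X$-rotation mixing bound to the corresponding $Z$-rotation mixing bound, which is already handled by \theo{fallback-mixture-diamond-distance} in the special case of unit success probabilities $q_1 = q_2 = 1$. The reduction uses the identity $e^{i\phi X} = H e^{i\phi Z} H$, so that $\mathcal{X}_\phi = \mathcal{H}\mathcal{Z}_\phi \mathcal{H}$, together with unitary invariance of the diamond norm (\cref{prop:diamond-norm-unitary-invariance}).

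First I would set $\delta_k := \arccos(|u_k|) - \theta$ for $k=1,2$, and verify from the hypotheses that $\delta_1 \leq 0 \leq \delta_2$: indeed, the constraint on $|u_1|$ says $|u_1| = \cos(\theta_1'')$ with $0 \le \theta - \theta_1'' \le \arcsin\sqrt{\varepsilon/2}$, so $\arccos(|u_1|) = \theta_1'' \le \theta$ and $|\delta_1| \le \arcsin\sqrt{\varepsilon/2}$; the analogous inequalities hold for $\delta_2$. In particular $|\sin(\delta_k)| \le \sqrt{\varepsilon/2}$, hence $\sin^2(\delta_k) \le \varepsilon/2$. I would also observe that the probability $p$ defined in the statement is precisely the probability appearing in \theo{fallback-mixture-diamond-distance} specialized to $q_1 = q_2 = 1$.

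Next I would rewrite the quantity to be bounded as
\[
\sum_{k=1}^{2} p_k \mathcal{X}_{\arccos(|u_k|)} - \mathcal{X}_\theta
= p(\mathcal{X}_{\theta+\delta_1} - \mathcal{X}_\theta) + (1-p)(\mathcal{X}_{\theta+\delta_2} - \mathcal{X}_\theta),
\]
conjugate by $\mathcal{H}$ on both sides to replace every $\mathcal{X}$ by $\mathcal{Z}$, and invoke unitary invariance of the diamond norm to conclude that the diamond norm of this expression equals
\[
\bigl\| p(\mathcal{Z}_{\theta+\delta_1} - \mathcal{Z}_\theta) + (1-p)(\mathcal{Z}_{\theta+\delta_2} - \mathcal{Z}_\theta) \bigr\|_\diamond.
\]
Applying \theo{fallback-mixture-diamond-distance} with $q_1 = q_2 = 1$ evaluates this to $2(p\sin^2(\delta_1) + (1-p)\sin^2(\delta_2))$.

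The final step is a one-line convexity bound: since $\sin^2(\delta_k) \le \varepsilon/2$ for both $k$, the expression is at most $2(p\cdot\varepsilon/2 + (1-p)\cdot\varepsilon/2) = \varepsilon$, giving the desired diamond-norm bound. I do not anticipate a real obstacle here: the only subtle point is checking that the hypotheses on $|u_k|$ translate exactly into the sign conditions $\delta_1 \le 0 \le \delta_2$ and the magnitude bound $|\sin(\delta_k)| \le \sqrt{\varepsilon/2}$ needed to invoke \theo{fallback-mixture-diamond-distance}; everything else is unitary invariance and convex combination.
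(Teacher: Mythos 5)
Your proof is correct and follows essentially the same route as the paper's: both reduce the $X$-rotation mixing bound to the $Z$-rotation case via $e^{i\phi X} = H e^{i\phi Z}H$ and unitary invariance of the diamond norm, derive $\sin^2(\delta_k) \le \varepsilon/2$ and the sign conditions from the hypotheses on $|u_k|$, and conclude by a convexity argument. The only cosmetic difference is that you invoke \cref{thm:fallback-mixture-diamond-distance} with $q_1=q_2=1$ whereas the paper invokes \cref{thm:unitary-mixture-diamond-distance} directly (with $r_1=r_2=1$); since the former is proved from the latter, the two invocations are equivalent here.
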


\begin{proof}
Let $g_1\dots g_n= e^{i\phi_1Z}e^{i\theta_u X}e^{i\phi_2Z}$ with $\theta_u = \theta +\delta_1$ and
$h_1\dots h_m= e^{i\psi_1Z}e^{i\theta_oX}e^{i\psi_2Z}$ with $\theta_u = \theta +\delta_2$. Then 
\begin{equation}\begin{aligned}
\nrm{p\mathcal{X}_{\theta_u} +(1-p)\mathcal{X}_{\theta_o} - \mathcal{X}_\theta}_\diamond = \nrm{p\mathcal{Z}_{\theta_{u}}+(1-p)\mathcal{Z}_{\theta_{o}}-\mathcal{Z}_{\theta}}_\diamond.
\end{aligned}
\end{equation}
using the identity $He^{i\theta Z}H = e^{i\theta X}$ and unitary invariance of the diamond norm~(\cref{prop:diamond-norm-unitary-invariance}). 
The norm on the right hand side and the expression for $p$ are of the form required for \theo{unitary-mixture-diamond-distance}, so we can conclude
$\nrm{p\mathcal{X}_{\theta_u} +(1-p)\mathcal{X}_{\theta_o} - \mathcal{X}_\theta}_\diamond \le \varepsilon$
when $2p(\sin^2(\delta_1))+2(1-p)(\sin^2(\delta_2)) \le \varepsilon$. 
It remain to show that $\sin^2(\delta_k) \le \varepsilon/2$.

Consider the under-rotated case. By definition $\abs{u_1}=\cos(\theta_u)$ and so $\theta-\theta_u\le\arcsin\sqrt{\varepsilon/2}$. So we have
$
\sin^2(\delta_1) \le \varepsilon/2
$
as required, and analogously for $\abs{u_2}$.
\end{proof}

\begin{figure}
    \centering
    
    \includegraphics[scale=0.4]{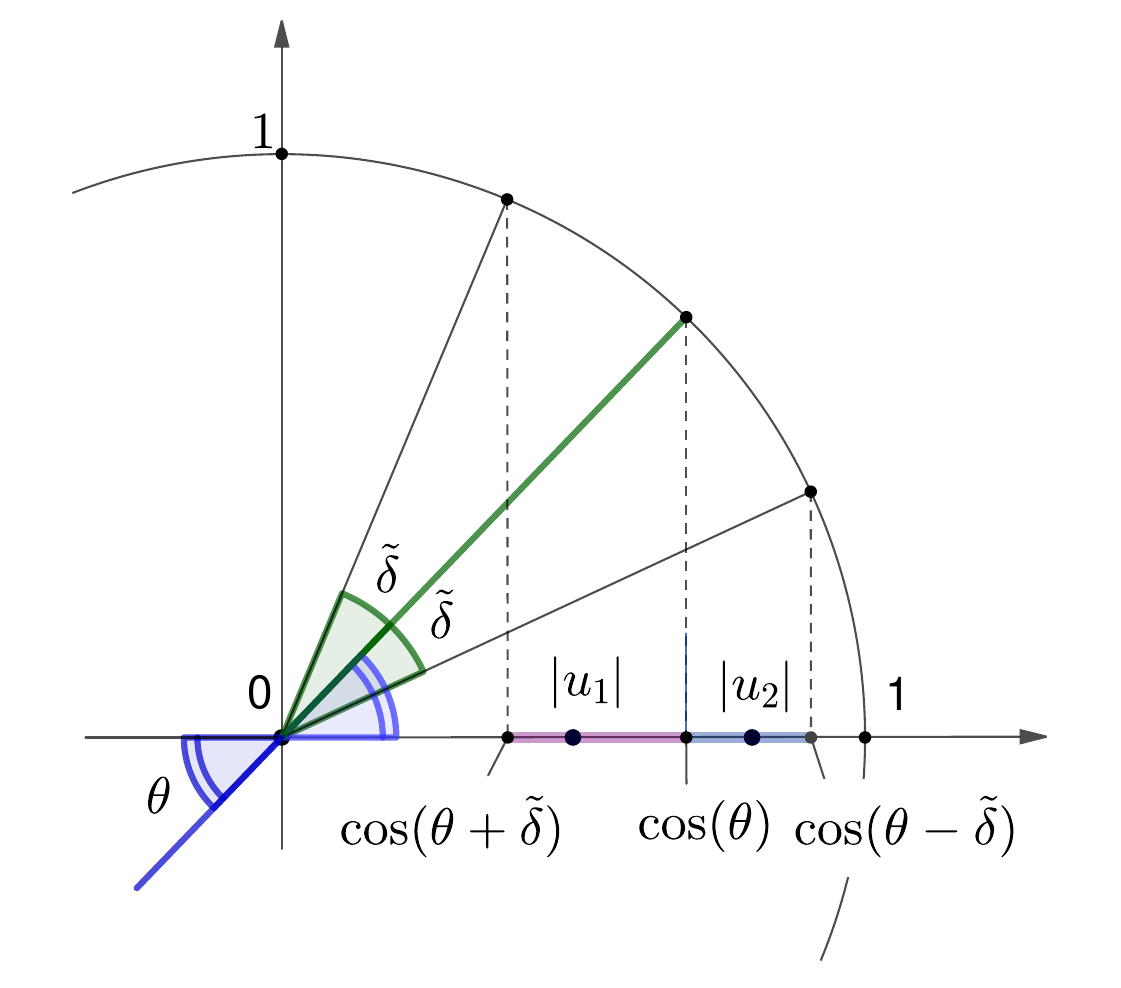}
    \caption{
        \label{fig:magnitude-mixing-condition}
        A geometric interpretation of constraints given by~\propos{magnitude-mixing-condition}.
       Absolute values $\abs{u}$ belonging to the interval $\{\cos(\delta''):\delta'' \in [0,\frac{\pi}{2}], \abs{\delta'' - \theta}\le \sqrt{\tilde\delta}\}$, for $\tilde\delta = \arcsin\sqrt{\varepsilon/2}$ are shown on the vertical axis.  This interval is split into two parts, an over-rotation interval (blue) and an under-rotation interval (purple).
    }
\end{figure}

In comparison to \propos{absolute-value-constraint}, the constraint on the top-left matrix entries in \propos{magnitude-mixing-condition} is quadratically looser, thus admitting a greater possible number of candidate values. Analogously to the unmixed case, mixed magnitude approximations can be extended to approximations of arbitrary unitaries.

\begin{prop}[General unitary approximation with mixing]\label{prop:general-mixing-condition}  
Let $U=e^{i\alpha Z}e^{i\theta X}e^{i\beta Z}$ be an arbitrary unitary in $SU(2)$. 
Let  $g_1,\dots,g_n$ and $h_1,\dots,h_m$ be sequences from a gate set $G$ that satisfy the constraints of \propos{magnitude-mixing-condition}.
Let probability $p$ be defined as in \propos{magnitude-mixing-condition}.
Define angles $\phi_k, \psi_k$ from equations  $g_1\dots g_n= e^{i\phi_1Z}e^{i\theta_u X}e^{i\phi_2Z}$ and
$h_1\dots h_m= e^{i\psi_1Z}e^{i\theta_oX}e^{i\psi_2Z}$.
Suppose $\Phi_1,\Phi_2,\Psi_1$ and $\Psi_2$ are $\varepsilon$-approximations of $\mathcal{Z}_{\alpha-\phi_1},\mathcal{Z}_{\beta-\phi_2}, \mathcal{Z}_{\alpha-\psi_1}$ and $\mathcal{Z}_{\beta-\psi_2}$, respectively.
  Then
  \[ \nrm{p\Phi_1\mathcal{G}\Phi_2 + (1-p)\Psi_1\mathcal{H}\Psi_2 - \mathcal{U}}_\diamond \le 3\varepsilon,
    \]
    where $\mathcal{G}$ and $\mathcal{H}$ are channels induced by products $g_1\cdots g_n$ and $h_1\cdots h_m$.
\end{prop}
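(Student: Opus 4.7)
The plan is to mimic the chain-rule argument from \cref{prop:general-unitary-approximation}, but now carry the convex combination through. I would introduce the ``exact-$Z$, approximate-$X$'' intermediate channels
\[
\mathcal{U}_G := \mathcal{Z}_{\alpha-\phi_1}\,\mathcal{G}\,\mathcal{Z}_{\beta-\phi_2}, \qquad \mathcal{U}_H := \mathcal{Z}_{\alpha-\psi_1}\,\mathcal{H}\,\mathcal{Z}_{\beta-\psi_2}.
\]
Since $g_1\cdots g_n = e^{i\phi_1 Z} e^{i\theta_u X} e^{i\phi_2 Z}$ and $h_1\cdots h_m = e^{i\psi_1 Z} e^{i\theta_o X} e^{i\psi_2 Z}$, the leftover phases collapse: $\mathcal{U}_G = \mathcal{Z}_\alpha\,\mathcal{X}_{\theta_u}\,\mathcal{Z}_\beta$ and $\mathcal{U}_H = \mathcal{Z}_\alpha\,\mathcal{X}_{\theta_o}\,\mathcal{Z}_\beta$, while $\mathcal{U} = \mathcal{Z}_\alpha\,\mathcal{X}_\theta\,\mathcal{Z}_\beta$.

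Next, I would split the total error into three manageable pieces,
\[
p\Phi_1\mathcal{G}\Phi_2 + (1-p)\Psi_1\mathcal{H}\Psi_2 - \mathcal{U} = p\bigl(\Phi_1\mathcal{G}\Phi_2 - \mathcal{U}_G\bigr) + (1-p)\bigl(\Psi_1\mathcal{H}\Psi_2 - \mathcal{U}_H\bigr) + \bigl(p\,\mathcal{U}_G + (1-p)\,\mathcal{U}_H - \mathcal{U}\bigr),
\]
and invoke the triangle inequality together with convexity of the diamond norm.

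For the first two pieces, the chain rule for the diamond norm (as in \cref{eq:su-2-chain-rule}) together with the hypothesis that $\Phi_i$ and $\Psi_i$ are $\varepsilon$-approximations of the corresponding $Z$-rotations yields
\[
\nrm{\Phi_1\mathcal{G}\Phi_2 - \mathcal{U}_G}_\diamond \le \nrm{\Phi_1 - \mathcal{Z}_{\alpha-\phi_1}}_\diamond + \nrm{\Phi_2 - \mathcal{Z}_{\beta-\phi_2}}_\diamond \le 2\varepsilon,
\]
and analogously $\nrm{\Psi_1\mathcal{H}\Psi_2 - \mathcal{U}_H}_\diamond \le 2\varepsilon$. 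For the third piece, using linearity and then unitary invariance of the diamond norm to strip off the outer $\mathcal{Z}_\alpha$ and $\mathcal{Z}_\beta$,
\[
\nrm{p\,\mathcal{U}_G + (1-p)\,\mathcal{U}_H - \mathcal{U}}_\diamond = \nrm{p\,\mathcal{X}_{\theta_u} + (1-p)\,\mathcal{X}_{\theta_o} - \mathcal{X}_\theta}_\diamond \le \varepsilon,
\]
where the final bound is exactly \cref{prop:magnitude-mixing-condition}. Summing the three contributions gives $2p\varepsilon + 2(1-p)\varepsilon + \varepsilon = 3\varepsilon$.

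The only nontrivial step is recognising that the mismatched phases $\alpha-\phi_k$ and $\beta-\phi_k$ (respectively $\alpha-\psi_k$, $\beta-\psi_k$) are exactly what is needed so that, after composing with the $Z$-factors hidden inside $\mathcal{G}$ and $\mathcal{H}$, the intermediate channels $\mathcal{U}_G$ and $\mathcal{U}_H$ take the clean form $\mathcal{Z}_\alpha\,\mathcal{X}_{\theta_{u/o}}\,\mathcal{Z}_\beta$; this is what lets the mixed magnitude bound from \cref{prop:magnitude-mixing-condition} enter directly. Everything else is a routine triangle-inequality bookkeeping exercise, and there is no genuine obstacle beyond this decomposition.
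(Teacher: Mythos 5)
Your proposal is correct and essentially reproduces the paper's own proof: the same three-piece decomposition via the triangle inequality, with the first two pieces bounded by the diamond-norm chain rule (each $\le 2\varepsilon$ because the middle $\mathcal{G}$ resp.~$\mathcal{H}$ factor cancels) and the third piece bounded by~\cref{prop:magnitude-mixing-condition} after stripping the outer $\mathcal{Z}_\alpha,\mathcal{Z}_\beta$ by unitary invariance. Naming the intermediate channels $\mathcal{U}_G,\mathcal{U}_H$ makes the $\mathcal{G}$-cancellation in the chain rule a bit more transparent than the paper's inline rewriting, but it is the same argument.
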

\begin{proof}
 Using the Euler decomposition of $U$ we have $\mathcal{U} = \mathcal{Z}_{\alpha}\mathcal{X}_\theta\mathcal{Z}_{\beta}$ and so	 
\[ 
\nrm{p\Phi_1\mathcal{G}\Phi_2 + (1-p)\Psi_1\mathcal{H}\Psi_2 - \mathcal{U}}_\diamond = 
\nrm{p\Phi_1\mathcal{G}\Phi_2 + (1-p)\Psi_1\mathcal{H}\Psi_2 - \mathcal{Z}_{\alpha}\mathcal{X}_\theta\mathcal{Z}_{\beta}}_\diamond.
\] 
Applying the triangle inequality, we bound this norm from above by 
\[
p\nrm{\Phi_1\mathcal{G}\Phi_2-\mathcal{Z}_{\alpha}\mathcal{X}_{\theta_u}\mathcal{Z}_{\beta}}_\diamond
+ 
(1-p)\nrm{\Psi_1\mathcal{H}\Psi_2-\mathcal{Z}_{\alpha}\mathcal{X}_{\theta_o}\mathcal{Z}_{\beta}}_\diamond 
+ \nrm{\mathcal{Z}_\alpha(p\mathcal{X}_{\theta_u}+(1-p)\mathcal{X}_{\theta_o} - \mathcal{X}_\theta)\mathcal{Z}_\beta}_\diamond.\]

Now, using unitary invariance of the diamond norm with $e^{-i\alpha Z}$ and $e^{-i\beta Z}$, we can simplify the three terms in the expression above and bound each by an accuracy measure. Concretely, we obtain the following set of equations
\begin{equation*}
\begin{aligned}
p\nrm{\Phi_1\mathcal{G}\Phi_2-\mathcal{Z}_{\alpha}\mathcal{X}_{\theta_u}\mathcal{Z}_{\beta}}_\diamond 
=
p\nrm{\Phi_1\underline{\mathcal{G}}\Phi_2-\mathcal{Z}_{\alpha}\mathcal{Z}_{-\phi_1}\underline{\mathcal{Z}_{\phi_1}\mathcal{X}_{\theta_u}\mathcal{Z}_{\phi_2}}\mathcal{Z}_{-\phi_2}\mathcal{Z}_{\beta}}_\diamond
= \varepsilon_1
\\
(1-p)\nrm{\Psi_1\mathcal{H}\Psi_2-\mathcal{Z}_{\alpha}\mathcal{X}_{\theta_o}\mathcal{Z}_{\beta}}_\diamond
=
(1-p)\nrm{\Psi_1\underline{\mathcal{H}}\Psi_2-\mathcal{Z}_{\alpha}\mathcal{Z}_{-\psi_1}\underline{\mathcal{Z}_{\psi_1}\mathcal{X}_{\theta_o}\mathcal{Z}_{\psi_2}}\mathcal{Z}_{-\psi_2}\mathcal{Z}_{\beta}}_\diamond 
= \varepsilon_2
\\
\nrm{\mathcal{Z}_\alpha(p\mathcal{X}_{\theta_u}+(1-p)\mathcal{X}_{\theta_o} - \mathcal{X}_\theta)\mathcal{Z}_\beta}_\diamond 
= \varepsilon_3.
\\
\end{aligned}
\end{equation*}
such that the claim holds if $\varepsilon_1+\varepsilon_2+\varepsilon_3\le 3\varepsilon.$
Consider the first norm $\nrm{\Phi_1\mathcal{G}\Phi_2-\mathcal{Z}_{\alpha}\mathcal{X}_{\theta_u}\mathcal{Z}_{\beta}}_\diamond$ and apply the chain rule
\begin{equation}
\begin{aligned}
\nrm{\Phi_1\underline{\mathcal{G}}\Phi_2-\mathcal{Z}_{\alpha}\mathcal{Z}_{-\phi_1}\underline{\mathcal{Z}_{\phi_1}\mathcal{X}_{\theta_u}\mathcal{Z}_{\phi_2}}\mathcal{Z}_{-\phi_2}\mathcal{Z}_{\beta}}_\diamond
\le 
\nrm{\Phi_1 - \mathcal{Z}_{\alpha-\phi_1}}_\diamond 
+ \\
\nrm{\mathcal{G} - \mathcal{Z}_{\phi_1}\mathcal{X}_{\theta_u}\mathcal{Z}_{\phi_2}}_\diamond 
+ 
\nrm{\Phi_2 - \mathcal{Z}_{\beta-\phi_2}}_\diamond
\le
2\varepsilon
\end{aligned}
\end{equation}
The same argument applies \emph{mutatis mutandis} to $\nrm{\Psi_1\mathcal{H}\Psi_2-\mathcal{Z}_{\alpha}\mathcal{X}_{\theta_o}\mathcal{Z}_{\beta}}_\diamond$.
Since the sequences $g_1,\dots,g_n$ and $h_1,\dots,h_m$ and $p$ satisfy \propos{magnitude-mixing-condition} we also have 
\[
\nrm{\mathcal{Z}_\alpha(p\mathcal{X}_{\theta_u}+(1-p)\mathcal{X}_{\theta_o} - \mathcal{X}_\theta)\mathcal{Z}_\beta}_\diamond = \nrm{p\mathcal{X}_{\theta_u}+(1-p)\mathcal{X}_{\theta_o} - \mathcal{X}_\theta)} \le \varepsilon.
\]
Therefore $\varepsilon_1 + \varepsilon_2 + \varepsilon_3 \leq 2\varepsilon p + 2\varepsilon(1-p) + \varepsilon = 3\varepsilon$.

\end{proof}

\subsection{Geometric interpretations}
\label{sec:geometric-interpretations}
In the sections above, we defined two methods for approximating diagonal unitaries: by direct unitary sequences (\problem{diagonal-approximation}), or by fallback protocols (\cref{prob:projective-approximation}).  Both of these methods can be extended by using probabilistic mixtures (\problem{diagonal-approximation-by-unitary-mixing} and \problem{projective-rotation-mixing}).  
Each of these problems involves finding one or more sequences of gates that induce  two-by-two unitary matrices of the form 
$\left(\begin{smallmatrix}u &-v^*\\ v & u^* \end{smallmatrix}\right)$.
In each case, solutions can be described by conditions on the top-left entry $u$.  See~\propos{segment-condition},~\propos{segment},~\propos{diagonal-mixing-condition}, and~\propos{fallback-mixing-condition}.
Those conditions can be illustrated geometrically by regions on the complex plane:~\fig{diagonal-condition} and~\fig{fallback-condition}.

\fig{diagonal-approximation-regions} shows these regions overlayed one on top of another.
This geometric illustration makes clear the progressive increase in solution space going from unitary approximation, to fallback approximation and then to probabilistic mixtures. The region areas for each approximation problem can be quickly computed using basic formulas for the areas of sectors and triangles. For instance, for diagonal approximation without mixing the region area is given by $$\delta - \frac{1}{2}\sin(2\delta)$$ where $\delta$ is the angle subtending the minimal sector containing the region. The region areas are given to leading order of $\varepsilon$ in \tab{approx-region-areas}.

The projective approximation region encloses the unitary approximation region, provided that the probability of success $q$ satisfies 
a modest $q \leq 1 - \varepsilon^2/4$.  
Loosely speaking, the condition $q = 1-\varepsilon^2/4$ can be interpreted as the point at which the projection failure may be treated deterministically as an approximation error and no longer needs a fallback step.

Except for large values of $\varepsilon$, the unitary mixture region also encloses the (non-mixing) unitary approximation region.  Finally, the projective mixing approximation region encloses all of the other regions, provided that $q \leq 1-\varepsilon/2$.

Indeed, for the chosen value of $\varepsilon=0.1$, the illustration in~\fig{diagonal-approximation-regions} under-represents the relative difference in region sizes.  Practical values of $\varepsilon$ are typically several orders of magnitude smaller, for which the relative difference in region sizes is dramatically larger.  
\fig{diagonal-approximation-region-areas} shows the areas of each of the approximation regions as a function of $\varepsilon$ and $q$.

\begin{figure}
  \begin{subfigure}{0.49\textwidth}
    \includegraphics[scale=0.25]{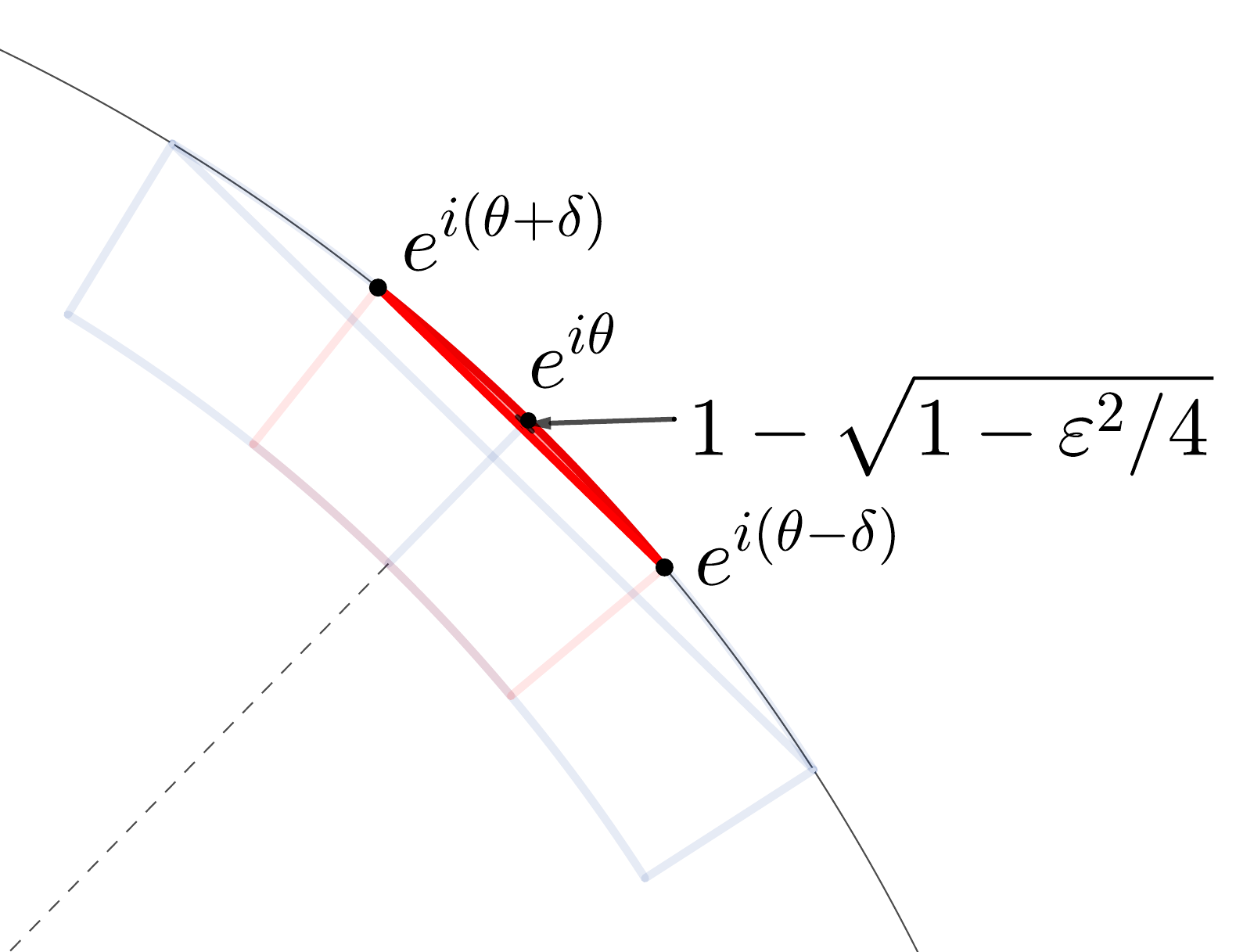}
    \caption{Unitary region (\propos{segment-condition}) with area in $O(\varepsilon^3)$}
  \end{subfigure}
  \hfill
  \begin{subfigure}{0.49\textwidth}
    \includegraphics[scale=0.25]{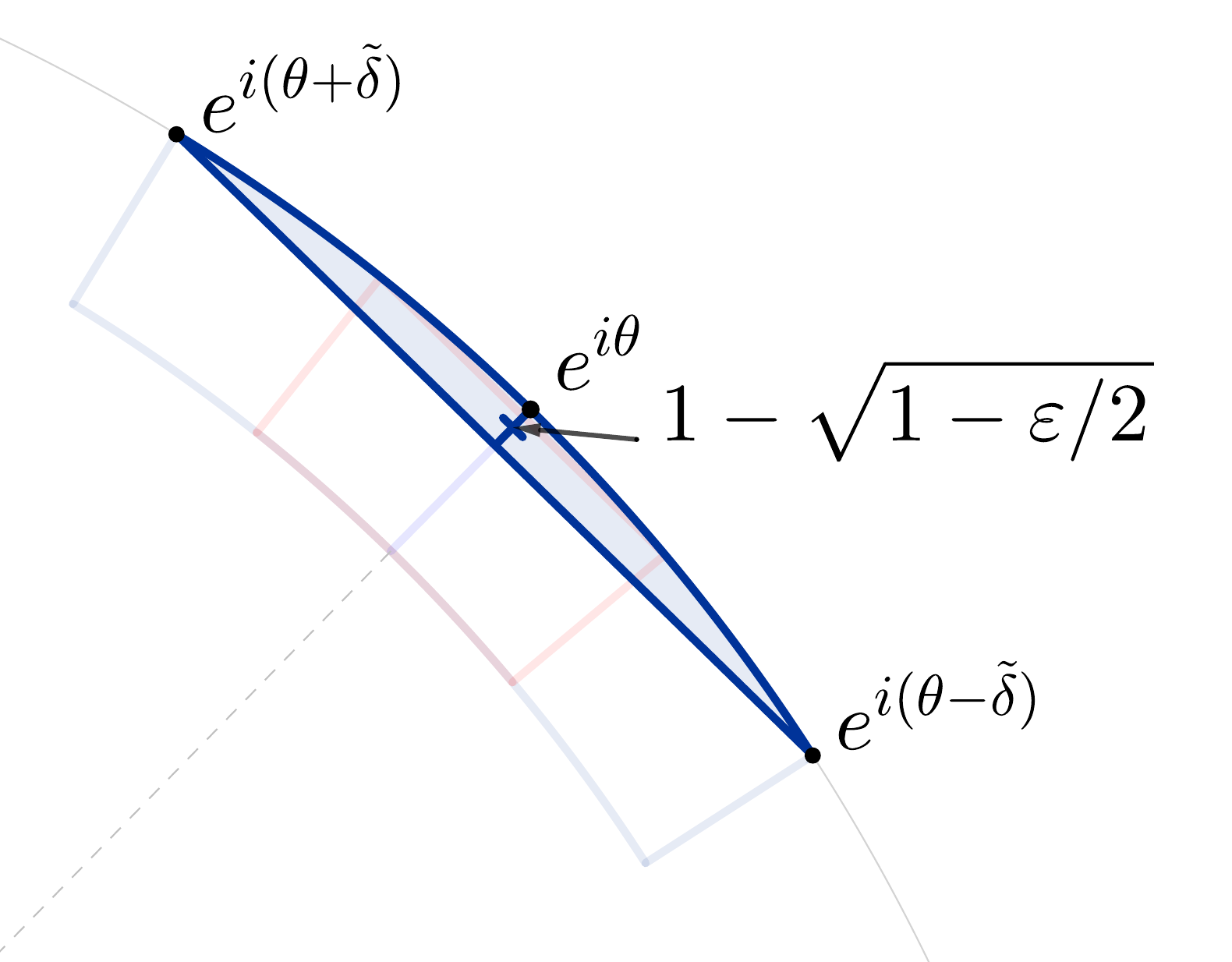}
    \caption{Unitary mixing regions (\propos{diagonal-mixing-condition}) with area in $O(\varepsilon^{3/2})$}
  \end{subfigure}
  \begin{subfigure}{0.49\textwidth}
    \includegraphics[scale=0.25]{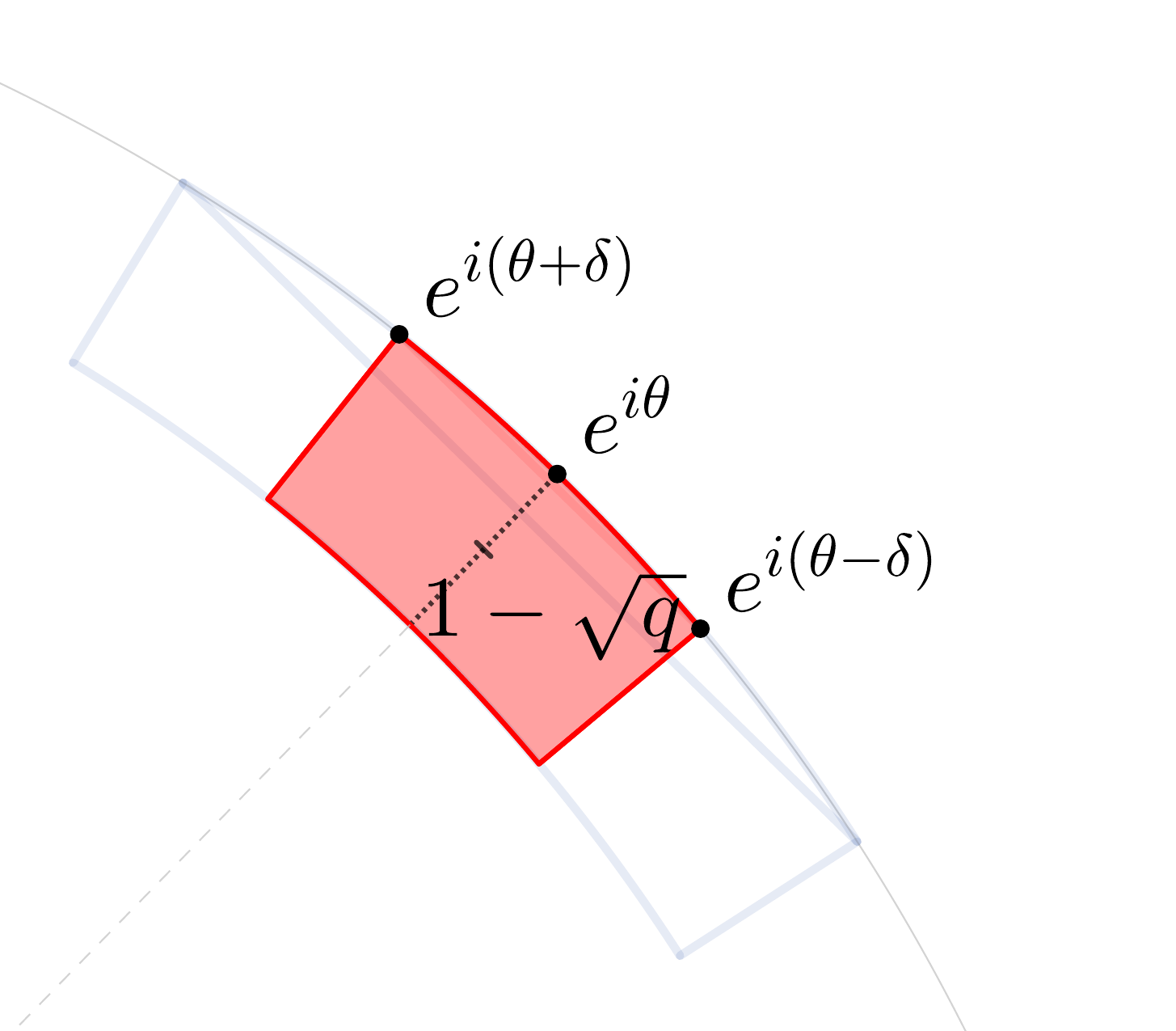}
    \caption{Projective rotation region (\propos{segment}) with area in $O(\varepsilon)$}
  \end{subfigure}
  \hfill
  \begin{subfigure}{0.49\textwidth}
    \includegraphics[scale=0.25]{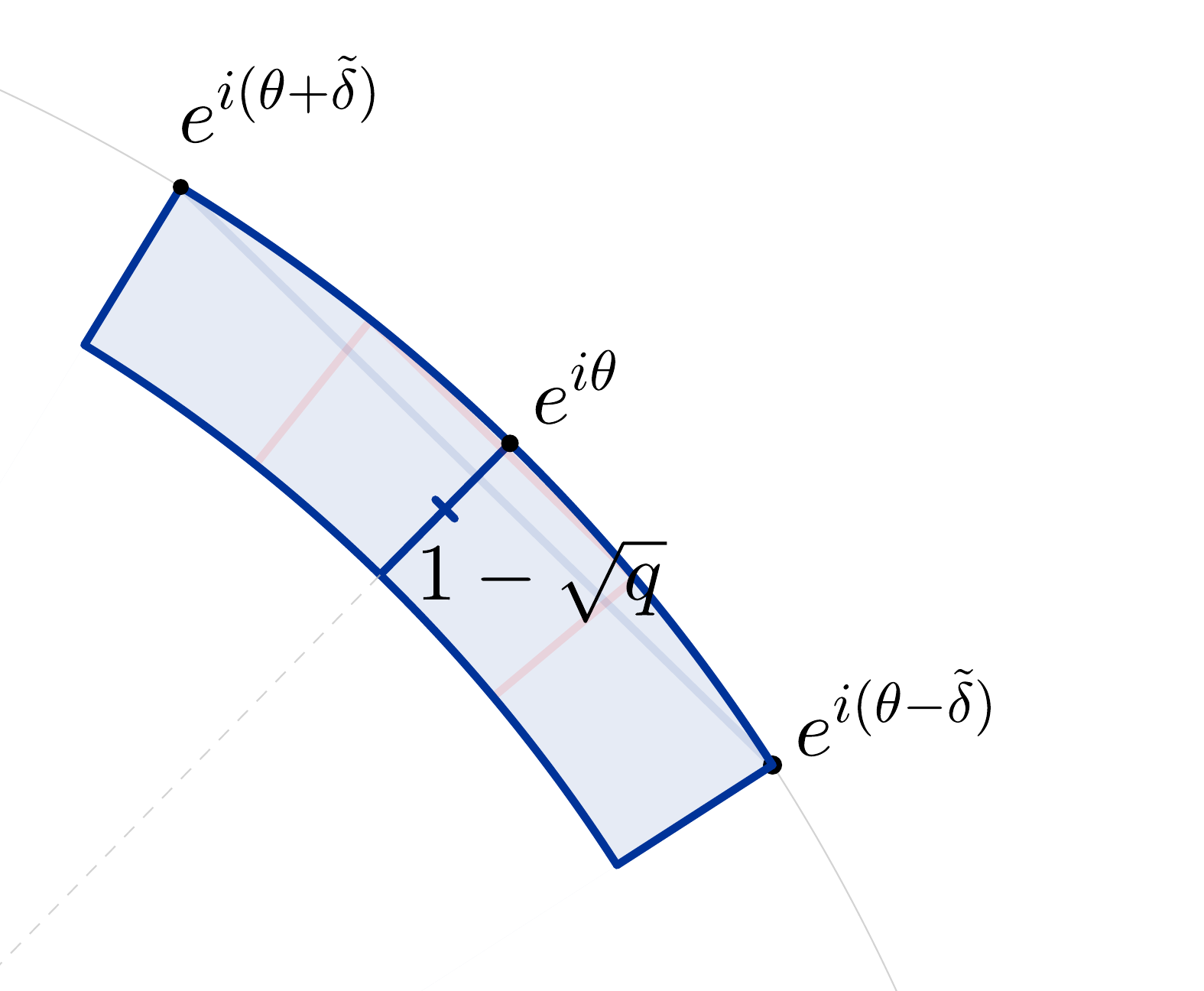}
    \caption{Projective rotation mixing regions (\propos{fallback-mixing-condition}) with area in $O(\varepsilon^{1/2})$}
  \end{subfigure}
  \caption{
    \label{fig:diagonal-approximation-regions}
    Approximation regions for target diagonal unitary $e^{i\theta Z}$. The figures above show close-ups of a section of the unit circle on the complex plane. Each colored highlight indicates a region of valid solutions for the corresponding approximation problem.  For illustration, we have used (an impractical) approximation accuracy $\varepsilon=0.1$ and projective rotation success probability $q=0.9$.  
    The unitary (a) and projective rotation (c) regions (without mixing) shown in red each subtend an angle of $2\delta = 2\arcsin(\varepsilon/2)$.
    The unitary mixing (b) and projective rotation mixing (d) regions shown in blue each subtend an angle of $2\tilde\delta = 2\arcsin(\sqrt{\varepsilon/2})$.  
    The unitary mixing regions (b) fully encompasses the unitary region (a).  Likewise, the projective rotation mixing region (d) fully encompasses the projective rotation region (c).
    For $q \leq 1 - \varepsilon^2/4$, the projective rotation region (c) encompasses the unitary region (a).  For $q \leq 1-\varepsilon/2$ the mixed projective regions encompasses all other regions.
  }
\end{figure}

\begin{figure}
  \centering
  \includegraphics[width=\textwidth]{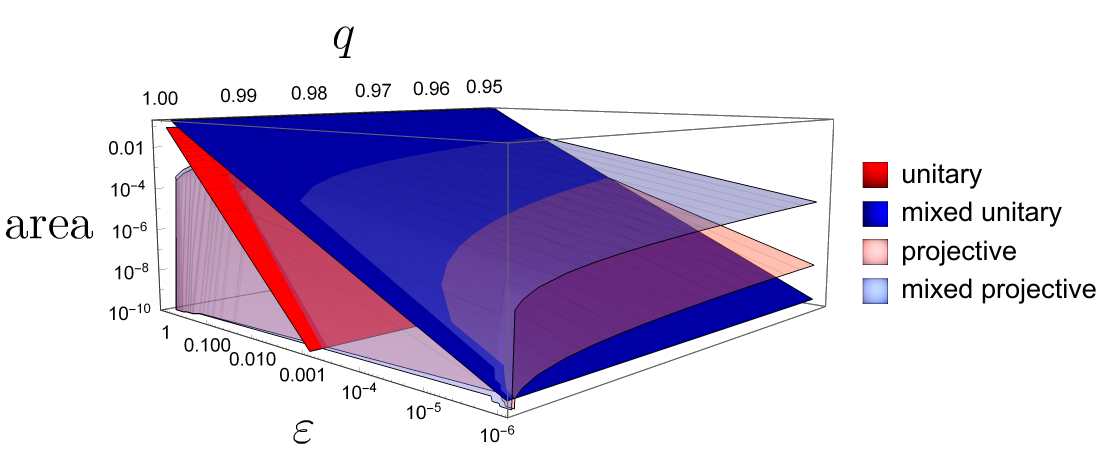}
  \caption{\label{fig:diagonal-approximation-region-areas}
    Areas of the solution regions prescribed by~\propos{segment-condition} (unitary),~\propos{diagonal-mixing-condition} (mixed unitary),~\propos{segment} (projective) and~\propos{fallback-mixing-condition} (mixed projective).  Each curve shows the area of the region in the complex plane as a function of approximation accuracy $\varepsilon$ and success probability $q$. 
    The unitary and projective areas, highlighted in red, scale quadratically with $\varepsilon$. The mixed unitary and mixed projective areas, highlighted in blue, scale linearly with $\varepsilon$.  
    For all values of $\varepsilon$ and $q$ the mixed unitary (resp. projective) region has larger area than the corresponding non-mixed unitary (resp. projective) region, thereby admitting more candidate solutions.
    Except for large values of $\varepsilon$ and $q$, the projective and mixed projective regions have larger area then their respective unitary and mixed unitary regions.
  }
\end{figure}

\begin{table}
\caption[Approximation region areas with and without mixing]{
\label{tab:approx-region-areas}
Region areas for unitary approximation problems with and without mixing. 
The geometric regions associated with each problem are illustrated in \cref{fig:diagonal-approximation-regions}. The diagonal and projective approximation problems result in two-dimensional regions, while the magnitude approximation problem results in a one-dimensional interval. In the latter case, \emph{Region area} refers to the length of the interval. In all cases, it can be seen that the mixed version of the problem corresponds to a larger approximation region.
}
\begin{center}
{
\scriptsize
\setlength{\tabcolsep}{0.5em}

\begin{tabular}{|c|c|c|}
\hline 
Approximation Problem & Region area (big-O) & Region area (with pre-factors) \tabularnewline

\hline 
\hline 
Diagonal unitary approximation&$O(\varepsilon^3)$&$\varepsilon^3/12+O(\varepsilon^5)$\tabularnewline
Mixed diagonal approximation&$O(\varepsilon^{3/2})$&$(2/3)(\varepsilon/2)^{3/2} + O(\varepsilon^{5/2})$\tabularnewline
Projective rotation approximation&$O(\varepsilon)$&$(1-q)\varepsilon/2 +O(\varepsilon^3)$\tabularnewline
Mixed projective approximation&$O(\varepsilon^{1/2})$&$(1-q)(\varepsilon/2)^{1/2} +O(\varepsilon^{3/2})$\tabularnewline
Magnitude approximation &$O(\varepsilon)$ &$ \varepsilon/\sqrt{2}+O(\varepsilon^3)$
\tabularnewline
Mixed magnitude approximation&$O(\varepsilon^{1/2})$&$\varepsilon^{1/2}+O(\varepsilon^{3/2})$\tabularnewline\hline

\end{tabular}

}
\end{center}
\end{table}

\section{Solutions to approximation problems for common gate sets}\label{sec:approx-solutions}

In Section~\ref{sec:approximation-problems}, we related solutions to various approximation problems to specific geometric regions.  
In this section, we specialize these results to a few specific gate sets. Unitaries that can be synthesized exactly with these gate sets define discrete subsets within the above convex bodies, and this naturally leads to an enumeration strategy to solve approximation problems.
We first describe this approach for the V basis, the Clifford$+T$ basis and the Clifford$+\sqrt{T}$ basis, and we then generalize it to a family of gate sets introduced by \cite{Kliuchnikov2015b} for diagonal approximation problems. %in Section~\ref{sec:approximation-problems:general}

Throughout this section, we introduce the following notation to highlight common methodology across our three illustrative examples. We use $L$ to denote the field in which the entries of the unitaries defining a gate set lie, and $O_L$ to denote the integer ring of $L$.
We associate a \emph{gate set determinant} $\ell \in K$ %\CP{Maybe $d$ (for determinant) is a better notation than $\ell$ here? or would that clash with another notation?}\AP{+1}\RM{currently, $d$ is used for the degree of $K$} 
to each gate set, such that any element generated by the gate set can be written as $\frac{1}{\sqrt{\ell^N}}\left(\begin{smallmatrix}u &-v^*\\
     v & u^*
 \end{smallmatrix}\right)$ with $u,v\in O_L$ and $N\in\z$. 
The determinant of a unitary with elements in $L$ lies in the maximal totally real subfield, which we denote by $K$. We have $L=K(i)$, where $i^2=-1$. The norm of an element in $L$ is a mapping from $L$ to $K$ defined by taking the product of an element of $L$ with its complex conjugate.  We denote the ring of integers of $K$ by $O_K$. \change{Finally, we make reference to objects known as orders. An order in a finite-dimensional algebra $M$ over $\mathbb{Q}$ is a subring of $M$ that is also a $\z$-lattice, and which spans $M$ over $\mathbb{Q}$.}

\subsection{V basis\label{sec:approximation-problems:Vbasis}}

\subsubsection{Quaternion maximal order}

We recall that the V basis consists of the following six  matrices:
\begin{align*}
V_{\pm Z} &= \frac{1}{\sqrt{\ell}}\begin{pmatrix}
1\pm2i & 0\\
0 & 1\mp 2i
\end{pmatrix},
&V_{\pm Y} &= \frac{1}{\sqrt{\ell}}\begin{pmatrix}
1 & \mp 2\\
\pm 2 & 1
\end{pmatrix},
&V_{\pm X} &= \frac{1}{\sqrt{\ell}}\begin{pmatrix}
1 & \pm 2i\\
\pm 2i & 1
\end{pmatrix},
\end{align*}
where $\ell= 5$. 
Let $K = \q$ and let $L=\q(i)=\{a_0+ia_1:a_0,a_1\in\q\}$%\AP{In other sections we have used the form for complex numbers $a+ib$. Is this opposite convention intentional, or do we need to decide on a convention?}
, where $i^2 = -1$. 
Let $O_K = \z$ and $O_L=\z[i]=\{a_0+ia_1:a_0,a_1\in\z\}$ be the rings of integers of $K$ and $L$ respectively. Any element $t=a_0+ia_1\in O_L$ can be written as a 2-dimensional vector over $O_K$, namely $(a_0,a_1)$. There are two homomorphisms of $L$ into $\c$ related by complex conjugation. Denote by $\sigma$ the homomorphism such that $\sigma(i) = i$.  

Let $M_2(L)$ be the algebra of all $2\times2$ matrices with entries in $L$, and let $\mathcal{O}$ be an order in $M_2(L)$ that contains all the $V$ basis elements scaled by $\sqrt{\ell}$. For concreteness we will set 
\begin{equation}
\mathcal{O} := \mathbb{Z}\cdot I+\mathbb{Z}\cdot iX+\mathbb{Z}\cdot iY+\mathbb{Z}\cdot iZ.
\end{equation}
We extend $\sigma$ over $\mathcal{O}$ in a natural way, namely for $M\in \mathcal{O}$ we define $\sigma(M)$ as the matrix whose elements are the images of the elements of $M$ under $\sigma$. 
As observed in~\cite{BGS, Kliuchnikov2015b}, elements of $\mathcal{O}$
%, namely
%$$m=\begin{pmatrix}
%m_1 & -m_2^\ast\\
%m_2 & m_1^\ast
%\end{pmatrix}$$
%with $m_1,m_2\in L$ and
with determinant $\ell^N$ correspond to unitaries that can be expressed as a product of $N$ matrices from the V gate set via the map $\sigma'(M) = \frac{1}{\sqrt{\ell^N}}\sigma(M)$.

\ex{Let $V = V_Z\cdot V_{X}=\frac{1}{5}\left(\begin{smallmatrix}
	1 + 2i & 2i-4\\2i+4&1-2i
	\end{smallmatrix}\right)$. Then, $M_V=\left(\begin{smallmatrix}
	1 + 2i & 2i-4\\2i+4&1-2i
	\end{smallmatrix}\right)= I + 2\cdot iX - 4\cdot iY + 2\cdot iZ \in\mathcal{O}$ and $\sigma'(M_V)=V.$ Since $\det{(M_V)} = 5^2,$ we have $N=2$ as expected, as $V$ is the product of two $V$ basis matrices. Note that the sequence $V_ZV_{X}$ cannot be simplified (over the V basis) so $N$ is minimal.}

\ex{Let $V = V_Z V_{X}V_{-X}V_YV_{-Z}=\frac{1}{\sqrt{3125}}\left(\begin{smallmatrix}
	25 & 30-40i\\-30-40i&25
	\end{smallmatrix}\right)$. Then, 
	$$
	M_V=\left(
	\begin{smallmatrix}
	25 & 30-40i\\-30-40i&25
	\end{smallmatrix}
	\right)
	= 25\cdot I -40\cdot iX + 30\cdot iY \in\mathcal{O}
	$$ 
	and $\sigma'(M_V)=V.$ Then $\det{(M_V)}=3125=5^5$ so $V$ can be expressed as the product of five $V$ basis elements.
	However, $M'_V=\left(\begin{smallmatrix}
	5 & 6-8i\\-6-8i&5
	\end{smallmatrix}\right)= 5\cdot I -8\cdot iX + 6\cdot iY \in\mathcal{O}$, is also such that $\sigma'(M'_V)=V.$ Here, $\det{(M'_V)} = 125 = 5^3,$ giving $N=3$. Since $V_{P}V_{-P} = V_{-P}V_P=I$, for $P \in \{X,Y,Z\}$, the sequence $V_Z V_{X}V_{-X}V_YV_{-Z}$ simplifies to $V_ZV_YV_{-Z}$, so $V$ can in fact be expressed as a product of \emph{three} $V$ basis elements. The sequence cannot be simplified further, so this $N$ is minimal.}\label{exm:minN}

%\AP{These examples are great! Definitely helpful.}

\subsubsection{Solving approximation problems}
 Finding a solution to any approximation problem over the V basis involves finding a matrix $M=\left(\begin{smallmatrix}m_1 & -m_2^\ast\\m_2&m_1^\ast\end{smallmatrix}\right)$ with additional constraints on $m_1$ depending on the approximation problem,  such that $\det(M)=\ell^N$. In essence, we seek matrices which can be achieved by the V basis, with elements falling in a particular region. Our approach is to first determine candidate values for $m_1$ via a specific enumeration problem, then to deduce the $m_2$ values  that satisfy the determinant constraint  by solving a norm equation. These two steps are repeated while iterating over $N$, beginning with $N=1$, until a valid $M$ is found. In the following, the point enumeration and norm equation steps are described for fixed $N.$

For the diagonal~(\problem{diagonal-approximation}, \problem{diagonal-approximation-by-unitary-mixing}) and fallback~(\cref{prop:fallback-approximation}, \problem{projective-rotation-mixing}) approximation problems, $M$ is such that
%\begin{align*}
$\sigma_1(m_1)/\sqrt{\sigma_1(\ell^N)} \in R_{\mathrm{approx}} $,
%\end{align*}
where $R_{\mathrm{approx}}$ is a specific region of $\c$ depending on the problem. Namely, we consider $R_{\mathrm{approx}}$ as one of the regions defined in \propos{segment-condition}, \propos{segment}, \propos{diagonal-mixing-condition} and \propos{fallback-mixing-condition}.
For magnitude approximation (\cref{prop:magnitude-approximation-condition}, \cref{prob:magnitude-mixing}) with our new decomposition, $M$ must be such that 
%\begin{align*}
$\sigma_1(m_1m_1^\ast)/\sigma_1(\ell^N) \in I_{\mathrm{approx}}$,
%\end{align*}
where $I_{\mathrm{approx}}\subset [0,1]$ where $I_{\mathrm{approx}}$ is an interval of $\mathbb{R}$ as defined in \propos{absolute-value-constraint}, \cref{prop:magnitude-mixing-condition}.  
%
%Candidate values for $m_1$ are obtained by solving one of the following point enumeration problems.
%
Formally, we solve the  following point enumeration problems.
\begin{prob}[2D point enumeration (V basis)]
	Let $R_{\mathrm{approx}}$ be a 2D region corresponding to a particular approximation problem and fix $N\in\n$. 
	$$\textrm{Find all } (a_0,a_1) \in \mathbb{Z}^{2} \textrm{ such that } \frac{1}{\sqrt{\ell^N}}(a_0,a_1)
	\in R_{\mathrm{approx}}.$$
\end{prob}

\begin{prob}[1D point enumeration (V basis)]
	Let $I_{\mathrm{approx}} \subset [0,1]$ be a real interval corresponding to a particular approximation problem and fix $N\in\n.$
	$$
	\text{Find all } n \in \z \text{ such that }
	\frac{n}{{\ell^N}}
	\in I_{\mathrm{approx}}.
	$$
\end{prob}
\noindent In the first case we set $m_1=a_0+ia_1$ for every solution $(a_0,a_1)$. In the second case we first solve the {norm equation}  $n=a_0^2+a_1^2$, and for every solution we obtain a candidate value $m_1=a_0+ia_1$. 

To satisfy the determinant condition, solving the approximation problems requires that we keep only those $m_1$ for which the following problem is solvable. 
\begin{prob}[Norm equation (V basis)\label{prob:V-norm}]
	Given $m_1 \in \z[i]$ and integer $N$, find $m_2\in\z[i]$ such that $$m_2m_2^\ast = \ell^N - m_1m_1^\ast\in\z.$$
\end{prob}

For every pair of solutions $(m_1,m_2)$ we then deduce a matrix $M=\left(\begin{smallmatrix}m_1 & -m_2^\ast\\
m_2 & m_1^\ast\end{smallmatrix}\right)$. Since $m_2$ is a solution to \problem{V-norm} we have $\det(M)=\ell^N$ and the matrix $\sigma'(M)=\frac{1}{\sqrt{\ell^N}}\sigma_1(M)=\frac{1}{\sqrt{\ell^N}}\left(\begin{smallmatrix}m_1 & -m_2^\ast\\
m_2 & m_1^\ast\end{smallmatrix}\right)$ is unitary.
%\AP{I think its worth explicitly noting the connection between this constraint and the unitarity constraint.}

In summary, given a target unitary and associated region or interval, the following procedure finds an approximation over the $V$ basis. For a fixed value of $N$, an element $m_1\in\z[i]$ is obtained by solving an integer point enumeration problem defined by the target region. Together with $N$, $m_1$ defines a norm equation, which is solved to obtain an element $m_2\in\z[i].$ If no solution to either problem is found, the value of $N$ is increased. The point enumeration and norm equation steps are repeated for each value of $N$ until a valid pair $(m_1,m_2)$ is obtained. Each pair defines a matrix $M\in\mathcal{O}$ as above with determinant $\ell^N$. Then, the unitary $\sigma^\prime(M)$ is factorized over the V basis using an existing exact synthesis algorithm (\change{for example, the algorithm outlined in \sec{exact-synthesis-v-basis}}) to obtain a solution to the approximation problem. 

\subsection{Clifford+\texorpdfstring{$T$}{T} basis}\label{sec:approximation-solutions-clifford-t}
\subsubsection{Gate set}

The single-qubit Clifford group is defined as the set of unitaries that preserve the Pauli matrices under conjugation. That is, $\mathcal{C}$ is in the single-qubit Clifford group if and only if for any Pauli matrix $P$, the matrix $\mathcal{C}^\ast P \mathcal{C} $
is also a Pauli matrix.

We recall that the $S$, $H$ and $T$ gates are defined as follows:

$$ S=e^{-i\pi/4 Z} = 
\begin{pmatrix}e^{-i\pi/4}&0\\0&e^{i\pi/4}\end{pmatrix},
\qquad 
H  =\frac{1}{\sqrt{2}}\begin{pmatrix}1&1\\
1& -1
\end{pmatrix},
\qquad
T=e^{-i\pi/8Z} = \begin{pmatrix}e^{-i\pi/8}&0\\0&e^{i\pi/8}\end{pmatrix}.$$
The single-qubit Clifford group is generated by the $H$ and $S$ gates,  and the Clifford$+T$ group is generated by the single-qubit Clifford group and the $T$ gate.
%
%Note that the Pauli matrices are in the single-qubit Clifford group since $S^2 = Z$, $HZH = X$ and $ZX = iY$. \CP{maybe add some hint to check $S^*XS$ and/or $S^*YS$, or remove explanation altogether}
%
Moreover we have $T^2 = S$, so the Clifford$+T$ group is generated by $H$ and $T$. 
We also recall the matrices $T_x,T_y$ defining rotations by $\frac{\pi}{4}$ about the $x$ and $y$ axes, namely
\begin{align*}
T_x&:=\begin{pmatrix}\cos(\frac{\pi}{8})&-i\sin(\frac{\pi}{8})\\-i\sin(\frac{\pi}{8})&\cos(\frac{\pi}{8})
\end{pmatrix}=\frac{1}{\sqrt{\ell}}\left(I + \frac{I-iX}{\sqrt{2}}\right),\\
T_y&:=\begin{pmatrix}\cos(\frac{\pi}{8})&-\sin(\frac{\pi}{8})\\\sin(\frac{\pi}{8})&\cos(\frac{\pi}{8})
\end{pmatrix}=\frac{1}{\sqrt{\ell}}\left(I + \frac{I-iY}{\sqrt{2}}\right)
\end{align*}
where $\ell = 2 + \sqrt{2}$.
Note that $T$ similarly defines the rotation of $\frac{\pi}{4}$ about the $z$ axis and we can write $T =\frac{1}{\sqrt{\ell}}\left(I + \frac{I-iZ}{\sqrt{2}}\right)$. We can obtain $T_x$ and $T_y$ from $T$, and vice versa, by conjugation with single-qubit Clifford unitaries. Synthesis via a circuit of $T_x,T_y, T$ and Hadamard gates therefore corresponds to synthesis in the Clifford$+T$ basis, up to a global phase.

In evaluating the cost of approximate synthesis with Clifford$+T$ gates, we assume that Clifford gates are low cost, and only count $T$ gates, or equivalently the total number of $T_x$, $T_y$ and $T$ matrices. 
See~\sec{fault-tolerant-gate-sets} for a justification of this assumption.

\subsubsection{Quaternion maximal order}\label{sec:CTqmo}

Let $K = \q(\sqrt{2})$ and let $L=\q(\zeta_8)$, where $\zeta_8 = e^{2\pi i/8}$.
The ring of integers of $L$ is $$O_L = \z[\zeta_8] = \set{a_0 + a_1 \zeta_8 + a_2 \zeta_8^2 + a_3 \zeta_8^3 : a_k \in \z } = \z[\sqrt{2}] + \frac{1+i}{\sqrt{2}} \cdot \z[\sqrt{2}].$$ The ring of integers of $K$ is the real subring $O_K=\z[\sqrt{2}]=\{ b_0 + b_1 \sqrt{2} : b_0,b_1 \in\z \}\subset O_L$. 
We can identify any element $m$ in $O_L$ with a 4-dimensional vector $\boldsymbol{m}=(a_0,a_1,a_2,a_3)\in\z^4$ using the integral basis above.
There are four distinct injective field homomorphisms that embed $L$ into $\c$, related to one another by complex conjugation and $\sqrt{2}$-conjugation. Define two such homomorphisms $\sigma_1$, $\sigma_2$ by
\begin{equation}
\sigma_1(\zeta_8) = \frac{1+i}{\sqrt{2}},\quad \sigma_2(\zeta_8) = \frac{-(1+i)}{\sqrt{2}}.
\end{equation}

We represent $\sigma_1,\sigma_2$ by the matrix 

$$\Sigma := \left(
\begin{array}{cccc}
1 & 1/\sqrt{2}  & 0 & -1/\sqrt{2}  \\
0 & 1/\sqrt{2} & 1& 1/\sqrt{2} \\
1 & -1/\sqrt{2}  & 0 & 1/\sqrt{2}  \\
0 & -1/\sqrt{2} & 1& -1/\sqrt{2} 
\end{array}
\right) $$
where
$(\mathrm{Re}\,\sigma_1(m),\mathrm{Im}\,\sigma_1(m),\mathrm{Re}\,\sigma_2(m),\mathrm{Im}\,\sigma_2(m))^T=\Sigma\boldsymbol{m}^T$.

Let $n=mm^\ast$ and write $n = b_0+b_1\sqrt{2}$, $b_0,b_1\in\z$. We can identify $n$ with the 2-dimensional vector $\boldsymbol{n}=(b,b_1)$ or with  $(\sigma_1(n),\sigma_2(n))^T= \left(
\begin{smallmatrix}
1 & \sqrt{2}\\
1 & -\sqrt{2}
\end{smallmatrix}
\right)\boldsymbol{n}^T$ through the above homomorphisms. We choose one homomorphism arbitrarily, say $\sigma_1$, to embed elements into Euclidean space. Both $\sigma_1$ and $\sigma_2$ are necessary to express the solvability constraints imposed by the norm equation for elements in $L$.
%Consider the $O_K$-lattice $$\mathcal{O} = \z[\sqrt{2}]\cdot I+\z[\sqrt{2}]\cdot\frac{I+iX}{\sqrt{2}}+\z[\sqrt{2}]\cdot\frac{I+iY}{\sqrt{2}}+\z[\sqrt{2}]\cdot\frac{I+iZ+iX+iY}{2}.$$ 
%
Let $M_2(L)$ be the algebra of $2\times2$ matrices with entries in $L$, and let $\mathcal{O}$ be a maximal order in $M_2(L)$ which contains $T_x$, $T_y$ and $T$.
For concreteness we will set  $\mathcal{O}=\sum_{i=1}^4O_K\cdot\omega_i$ in what follows, where
$$\omega_1=I,\qquad
\omega_2=\frac{I+iX}{\sqrt{2}},\qquad
\omega_3=\frac{I+iY}{\sqrt{2}},\qquad
\omega_4=\omega_3\omega_2=\frac{I+iX+iY+iZ}{2}.$$ 

The homomorphisms $\sigma_1,\sigma_2$ extend  over $\mathcal{O}$ in a natural way. Elements of $\mathcal{O}$ correspond to $2\times 2$ unitaries via the map $\sigma'(M)=\frac{1}{\sqrt{\sigma_1(\det(M))}}\sigma_1(M)$. Elements of $\mathcal{O}$ with determinant equal to 1 correspond to Clifford gates, and elements of $\mathcal{O}$ with determinant $\ell^N$ correspond to unitaries that can be expressed as a product of $N$ gates $T_x$, $T_y$ and $T$ \cite{GossetKliuchnikovMoscaRusso2014}). \change{We give a method for exact synthesis of Clifford gates in \sec{exact-synthesis}.}

\subsubsection{Solving approximation problems}
Finding a solution to any approximation problem (as defined in \sec{approximation-problems})
over the Clifford$+T$ gate set 
involves finding a matrix
\begin{equation}\label{eq:targetM}M=\left(\begin{smallmatrix}m_1 & -m_2^\ast\\m_2&m_1^\ast\end{smallmatrix}\right)=X_1\omega_1+X_2\omega_2+X_3\omega_3+X_4\omega_4,\end{equation}
or equivalently finding $X_i\in O_K$, with additional constraints on $m_1$ depending on the approximation problem,  such that $\det(M)=\ell^N$. Recall that these matrices will correspond to unitaries which are products of gates from the Clifford$+T$ basis.

Let us first examine the sets $M_{\mathrm{diag}}$ and $M_{\mathrm{off-diag}}$, in which we will look for elements $m_1$ and $m_2$, respectively. From Equation \eq{targetM} we have
\begin{eqnarray*}
M_{\mathrm{diag}}&=& \left\{ m_1 : \left(\begin{smallmatrix}m_1 & -m_2^\ast\\m_2&m_1^\ast\end{smallmatrix}\right) \in \mathcal{O} \right\}= \left\{X_1+\frac{X_2+X_3}{\sqrt{2}}+\frac{X_4}{2}+\frac{X_4}{2}i:X_i\in O_K\right\}\\
%&=&\left\{\frac{a_1}{\sqrt{2}}+\frac{a_2(1+i)}{2}: a_i\in O_K\right\}\\
&=&\frac{1}{\sqrt{2}}O_K+\left(\frac{1+i}{2}\right)O_K\\
&=&\frac{1}{\sqrt{2}}O_L.
\end{eqnarray*}

Let $M_\mathcal{O}$ denote the elements of $L$ corresponding to diagonal elements of $\mathcal{O}$. That is elements $m_1$ such that $\left(\begin{smallmatrix} m_1 & 0 \\0 & m_1^*\end{smallmatrix}\right)\in\mathcal{O}$. By Equation \eq{targetM}, we can see $M_\mathcal{O} = O_L$.\\

Similarly, we have \begin{eqnarray*}
M_{\mathrm{off-diag}}&=& \left\{ m_2 : \left(\begin{smallmatrix}m_1 & -m_2^\ast\\m_2&m_1^\ast\end{smallmatrix}\right) \in \mathcal{O} \right\} = \left\{\frac{\sqrt{2}X_1-X_3}{2}+\frac{\sqrt{2}X_2+X_3}{2}i:X_i\in O_K\right\}\\
%&=&\left\{\frac{a_1}{\sqrt{2}}+\frac{a_2(1+i)}{2}: a_i\in O_K\right\}\\
&=&\frac{1}{\sqrt{2}}O_K+\left(\frac{1+i}{2}\right)O_K\\
&=&\frac{1}{\sqrt{2}}O_L.
\end{eqnarray*} Hence, for all $m_1\in M_{\mathrm{diag}}, m_2\in M_{\mathrm{off-diag}}$ there exist $\hat{m_1},\hat{m_2}\in O_L$, such that $m_1=\frac{\hat{m_1}}{\sqrt{2}}$ and $m_2=\frac{\hat{m_2}}{\sqrt{2}}.$ For fixed $m_1$, $M_\mathrm{off-diag}$ is restricted to the subset $$M^{m_1}_\mathrm{off-diag}=\set{m_2\in M_\mathrm{off-diag}:\left(\begin{smallmatrix}
m_1&-m_2^\ast\\m_2&m_1^\ast
\end{smallmatrix}\right)\in\mathcal{O}}.$$ %\CP{remove last sentence?}
Noticing that $iY$, $(iY)^{-1}\in\mathcal{O}$, we see that $m_2\in M^{0}_\mathrm{off-diag}\iff m_2\in M_\mathcal{O}$.

Our approach is to first determine candidate values for $m_1$ via a specific enumeration problem, then to deduce corresponding values for $m_2$ by solving a norm equation. These two steps are repeated while iterating over $N$, beginning with $N=1$, until a valid $M$ is found. This approach is analogous to that used in \sec{approximation-problems:Vbasis} for the $V$ basis. In the following sections, the point enumeration and norm equation steps are described for fixed $N.$
For every pair of solutions $(m_1,m_2)$ we deduce a matrix $M=\left(\begin{smallmatrix}m_1 & -m_2^\ast\\
m_2 & m_1^\ast\end{smallmatrix}\right)$. The unitary $\sigma'(M)$ is factorized over the Clifford$+T$ basis to obtain a solution to the approximation problem.

\subsubsection{Finding \texorpdfstring{$m_1$}{m1}: an enumeration problem}\label{sec:CTm1}

%\AP{The language and sequencing in this subsection are bit different from the corresponding V basis section.  I think it would help to make them as consistent as possible.}
For the diagonal~(\problem{diagonal-approximation}, \problem{diagonal-approximation-by-unitary-mixing}) and fallback~(\cref{prop:fallback-approximation}, \problem{projective-rotation-mixing}) approximation problems, we need 
%\begin{align*}
$\sigma_1(m_1)/\sqrt{\sigma_1(\ell^N)} \in R_{\mathrm{approx}} $,
%\end{align*}
where $R_{\mathrm{approx}}\subset D_1$ is a specific region of $\c$ depending on the problem, and $D_1$ denotes the disk of radius 1 about the origin. 
For magnitude approximation (\cref{prop:magnitude-approximation-condition}, \cref{prob:magnitude-mixing}), $m_1$ must be such that %
%\begin{align*}
$\sigma_1(m_1m_1^\ast)/\sigma_1(\ell^N) \in I_{\mathrm{approx}}$,
%\end{align*}
where $I_{\mathrm{approx}}\subset [0,1].$

%\VK{It looks like here we implicitly assume that $m_1$ must belong to a subset of $\mathbb{Q}(\zeta_8)$ implicitly defined by equation~(\ref{eq:targetM}). Would it be a good idea to write down this set more explicitly?}
%
In order to satisfy the determinant condition we then naturally consider the following norm equation,
  \begin{equation}m_2m_2^\ast = \ell^N-m_1m_1^\ast,\label{eq:gennorm}\end{equation}
 which we would a priori need to solve for every candidate value of $m_1$ satisfying the previous constraints.
\change{We observe, however, that this problem can only have solutions if the right-hand side of the equation is totally positive, that is, $\sigma_k(\ell^N-m_1m_1^\ast)>0$ for all $k$}. This means  that we only need to consider values of $m_1$ which additionally satisfy
$\sigma_2(m_1)/\sqrt{\sigma_2(\ell^N)}\in D_1$ or, equivalently, $\sigma_2(m_1m_1^\ast)/\sigma_2(\ell^N)\in [0,1]$. 
Since $m_1=\hat{m_1}/\sqrt{2}, m_2=\hat{m_2}/\sqrt{2}$, there is an equivalent norm equation for a given $\hat{m_1}$:

\begin{equation}
	\hat{m_2}\hat{m_2}^\ast = 2\ell^N - \hat{m_1}\hat{m_1}^\ast.
\end{equation}
The conditions on $\hat{m_1}$ are scaled accordingly: $\hat{m_1}$ must satisfy $\sigma_2(\hat{m}_1)/\sqrt{\sigma_2(2\ell^N)}\in D_1$ or, equivalently, $\sigma_2(\hat{m}_1\hat{m}_1^\ast)/\sigma_2(2\ell^N)\in [0,1]$. 

We write $\hat{m_1} = a_0 + a_1\zeta_8 + a_2\zeta_8^2+a_3\zeta_8^3$ and $\hat{n} = \hat{m_1}\hat{m_1}^\ast = b_0 + b_1\sqrt{2}$, with all coefficients in $\z.$ %For a vector $v$ let $\left(v\right)_i$ denote the projection onto the first $i$ coordinates of a vector. 
Let $\Sigma$ be as defined in \sec{CTqmo} and let $\Sigma'=\left(
\begin{smallmatrix}
1 & \sqrt{2}\\
1 & -\sqrt{2}
\end{smallmatrix}
\right)$. The operation $\Sigma$ (respectively $\Sigma'$) embeds $\hat{m_1}$ (respectively $\hat{n}$) into the Euclidean space of the approximation regions. In order to satisfy the constraints imposed by both the approximation regions and the norm equation, we define normalization matrices $\Lambda$ and $\Lambda'$ for $\Sigma$ and $\Sigma'$, respectively. Let $\Lambda$ and $\Lambda'$ be the diagonal matrices with $\left(\sqrt{\sigma_1(2\ell^N)},\sqrt{\sigma_1(2\ell^N)},\sqrt{\sigma_2(2\ell^N)},\sqrt{\sigma_2(2\ell^N)}\right)$ and $\left(\sigma_1(2\ell^N),\sigma_2(2\ell^N)\right)$ on their respective diagonals.
Candidate values for $\hat{m_1}$ are obtained by solving the point enumeration problems below.

\begin{prob}[2D point enumeration (Clifford$+T$ basis)]
	Let $R_{\mathrm{approx}}$ be a two-dimensional region corresponding to a particular approximation problem. Find $(a_0,a_1,a_2,a_3) \in \mathbb{Z}^{4}$ such that $\Lambda^{-1}\Sigma\cdot(a_0,a_1,a_2,a_3)^T
	\in R_{\mathrm{approx}}\times D_1.$ 
\end{prob}

\begin{prob}[1D point enumeration (Clifford$+T$ basis)]
	Let $I_{\mathrm{approx}} \subset [0,1]$ be a real interval corresponding to a particular approximation problem. Find $(b_0,b_1) \in \z^2$ such that $\Lambda'^{-1}\Sigma'\cdot(b_0,b_1)^T
	\in I_{\mathrm{approx}}\times [0,1]$.
\end{prob}

%\CP{It looks suboptimal to use $D_1$ instead of $[0,1]$ for enumeration space in solution to 5.6. Perhaps we can do better here? enumeration space would be more complicated than a lattice, but still not hugely complicated so hopefully manageable}

In the first case, we immediately recover a candidate value for $\hat{m_1}$. In the second case, we recover a candidate value for $\hat{n}$, then solve the norm equation $\hat{m_1}\hat{m_1}^\ast= \hat{n}$ and for every solution we obtain a candidate value $\hat{m_1}$. Then we set $m_1=\frac{\hat{m_1}}{\sqrt{2}}$.

\subsubsection{Finding \texorpdfstring{$m_2$}{m2}: solving a norm equation}
Given a candidate value for $m_1$, we proceed to solve a norm equation problem (or determine there is no solution), restricting $m_2$ to $M^{m_1}_\mathrm{off-diag}$:
\begin{prob}\label{prob:T-norm}
	Given $m_1\in\frac{1}{\sqrt{2}}O_L$ and integer $N$, find $m_2\in M^{m_1}_\mathrm{off-diag}$ such that $$m_2m_2^\ast = \ell^N - m_1m_1^\ast\in\frac{1}{2}O_K.$$

\end{prob}
Fixing an arbitrary $m\in M^{m_1}_\mathrm{off-diag}$, we have $M^{m_1}_\mathrm{off-diag}=m + O_L$, since for any two $m,m'\in M^{m_1}_\mathrm{off-diag}$ we have $m-m'\in M^0_\mathrm{off-diag}=O_L.$
Since $M_\mathrm{off-diag}=M_\mathrm{diag}=\frac{1}{\sqrt{2}}O_L$, \problem{T-norm} can then be reformulated as

\begin{prob}\label{prob:T-norm-quotient}
	Given $\hat{m_1}\in\z[\zeta_8]$, integer $N$, and $m\in \sqrt{2}M^{m_1}_\mathrm{off-diag}$ find $\hat{m_2}\in m+\sqrt{2}\z[\zeta_8]$ such that $$\hat{m_2}\hat{m_2}^* = 2\ell^N - \hat{m_1}\hat{m_1}^\ast\in\z[\sqrt{2}].$$
\end{prob}
Solving \problem{T-norm-quotient} for $\hat{m}_2$ then yields a solution to \problem{T-norm}: $m_2=\hat{m_2}/\sqrt{2}$.

\subsection{Clifford\texorpdfstring{$+\sqrt{T}$}{+Sqrt(T)} basis} \label{sec:approximation-solutions-clifford-root-t}

We now demonstrate how the solution framework applies to the Clifford$+\sqrt{T}$ basis. Note that the Clifford$+T$ group is contained within the Clifford$+\sqrt{T}$ group and unitaries in the latter are defined over the complex field $\q(\zeta_{16}) \supseteq \q(\zeta_8)$. Clearly, the fields $L$ and $K$ defined for Clifford$+\sqrt{T}$ are of higher degree over $\q$ than the respective Clifford$+T$ fields. Accordingly, in this section we work with larger matrices for $\Sigma, \Lambda$ and $\Sigma', \Lambda'$. The point enumeration problems are also higher dimensional.
The framework otherwise proceeds as for the Clifford$+T$ basis.

%\AP{This subsection is very similar to the Clifford+T subsection. 
%	Even more so than V-basis is similar to Clifford+T.  
%	From what I can tell, they are basically the same, but fields and corresponding matrix definitions are bigger. 
%	I suggest giving a few sentences here to point out the qualitative similarities and differences to Clifford+T.}

\subsubsection{Gate set}
Let $\ell = 2 + 2\cos(\frac{\pi}{8})=2+(\zeta_{16}+\zeta_{16}^{-1})$, where  $\zeta_{16} = e^{2\pi i/16}$. Let also \change{$\eta=2\cos(\frac{\pi}{8})$, $\beta = \eta^3+3\eta$ and $\mu = \eta^2-3.$} We recall that the $\sqrt{T}$ gate is defined as follows:
$$\sqrt{T} = \left(\begin{array}{cc}
e^{-i\pi/16}&0\\0&e^{i\pi/16}
\end{array}
\right) = \frac{1}{\sqrt{2 + 2 \cos(\frac{\pi}{8})}}
\left(\begin{array}{cc}
1 + e^{-i\pi/8}&0\\0&1 + e^{i\pi/8}
\end{array}
\right).
$$
The $\sqrt{T}$ gate defines a rotation about the $z$ axis by $\frac{\pi}{8}$. The Clifford$+\sqrt{T}$ group is generated by the single qubit Clifford group and the $\sqrt{T}$ gate. Note that we will use the notation $T^{1/2}$ interchangeably with $\sqrt{T}$ in the following discussion.
We also recall the matrices $T_x^{1/2},T_y^{1/2}$ defining rotations by $\frac{\pi}{8}$ about the $x$ and $y$ axes, namely
\begin{align*}
T_x^{1/2} &=\begin{pmatrix}\cos(\frac{\pi}{16})&-i\sin(\frac{\pi}{16})\\-i\sin(\frac{\pi}{16})&\cos(\frac{\pi}{16}),
\end{pmatrix}= \frac{1}{\sqrt{\ell}}\left(I + \frac{\eta(I-i\mu X)}{2}\right)\\
T_y^{1/2} &=\begin{pmatrix}\cos(\frac{\pi}{16})&-\sin(\frac{\pi}{16})\\\sin(\frac{\pi}{16})&\cos(\frac{\pi}{16})
\end{pmatrix}= \frac{1}{\sqrt{\ell}}\left(I + \frac{\eta(I-i\mu Y)}{2}\right).
\end{align*}
We can additionally write $\sqrt{T} =\frac{1}{\sqrt{\ell}}\left(I + \frac{\eta(I-\mu iZ)}{2}\right)$.
Observe that $\sqrt{T}^2 = T$ and $\left(T_a^{1/2}\right)^2=T_a$ with $a=x,y$, as suggested by the notation. We can obtain the unitaries $\textrm{T}_x^{k/2}$ and $\textrm{T}_y^{k/2}$ from $T^{k/2}$, for $k=1,2,3$, and vice versa, by conjugation with single-qubit Clifford unitaries. Here $T_a^{3/2}=\left(T_a^{1/2}\right)^3.$ Synthesis via a circuit of unitaries in $\{T^{k/2},\textrm{T}_a^{k/2} : a=x,y \quad k=1,2,3\}$ and Clifford gates therefore corresponds to synthesis in the Clifford $+\sqrt{T}$ basis, up to a global phase.

\subsubsection{Quaternion maximal order}\label{sec:CSTqmo}
Let $K$ be the totally real number field $K = \q(\zeta_{16}+\zeta_{16}^{-1})$, and let $L$ be the field $L = \q(\zeta_{16})$. The ring of integers of $L$ is $$O_L = \z[\zeta_{16}] =\left\{ \sum\limits_{i=0}^7a_k\zeta_{16}^k: a_k\in\z\right\} =  \z\left[2\cos\left(\frac{\pi}{8}\right)\right] + \zeta_{16}\z\left[2\cos\left(\frac{\pi}{8}\right)\right]$$
and the ring of integers of $K$ is the real subring $$O_K = \z\left[2\cos\left(\frac{\pi}{8}\right)\right]=\left\{b_0+b_1\cdot2\cos\left(\frac{\pi}{8}\right) + b_2\sqrt{2} + b_3\cdot 2\cos\left(\frac{3\pi}{8}\right) : b_k\in\z\right\}\subset O_L.$$ 
We can identify any element $m$ in $O_L$ with an 8-dimensional vector $\boldsymbol{m}=(a_0,a_1,\dots,a_7)\in\z^8$ using the integral basis above.
There are 8 distinct injective field homomorphisms that embed $L$ into $\c$, which can be grouped into pairs depending on their images when restricted to $K$. Define four such homomorphisms $\sigma_1,\sigma_2, \sigma_3,\sigma_4$ by
\begin{eqnarray*}
\sigma_1(\zeta_{16}) = \cos(\frac{\pi}{8}) +i\cos(\frac{3\pi}{8}),\quad 
 \sigma_2(\zeta_{16}) = \cos(\frac{3\pi}{8})+i\cos(\frac{\pi}{8}),\\
\sigma_3(\zeta_{16}) = -\cos(\frac{3\pi}{8})+i\cos(\frac{\pi}{8}),\quad
\sigma_4(\zeta_{16}) = \cos(\frac{\pi}{8})+i\cos(\frac{3\pi}{8}).
\end{eqnarray*}

We represent $\sigma_1,\sigma_2, \sigma_3,\sigma_4$ by the matrix 

$$\Sigma :=  \left(
\begin{array}{cccccccc}
1 & \cos(\frac{\pi}{8})&\frac{1}{\sqrt{2}}&\cos(\frac{3\pi}{8})&0&-\cos(\frac{3\pi}{8})&-\frac{1}{\sqrt{2}}&-\cos(\frac{\pi}{8}) \\
0 &\cos(\frac{3\pi}{8})&\frac{1}{\sqrt{2}}&\cos(\frac{\pi}{8})&1&\cos(\frac{\pi}{8})&\frac{1}{\sqrt{2}}& \cos(\frac{3\pi}{8})\\
1&\cos(\frac{3\pi}{8})&-\frac{1}{\sqrt{2}}&-\cos(\frac{\pi}{8})&0&\cos(\frac{\pi}{8})&\frac{1}{\sqrt{2}}&-\cos(\frac{3\pi}{8})\\
0&\cos(\frac{\pi}{8})&\frac{1}{\sqrt{2}}&-\cos(\frac{3\pi}{8})&-1&-\cos(\frac{3\pi}{8})&\frac{1}{\sqrt{2}}&\cos(\frac{\pi}{8})\\
1 &-\cos(\frac{3\pi}{8})&-\frac{1}{\sqrt{2}}&\cos(\frac{\pi}{8})&0&-\cos(\frac{\pi}{8})&\frac{1}{\sqrt{2}}&\cos(\frac{3\pi}{8})\\
0&\cos(\frac{\pi}{8})&-\frac{1}{\sqrt{2}}&-\cos(\frac{3\pi}{8})&1&-\cos(\frac{3\pi}{8})&-\frac{1}{\sqrt{2}}&\cos(\frac{\pi}{8})\\
1 &-\cos(\frac{\pi}{8}) &\frac{1}{\sqrt{2}}&-\cos(\frac{3\pi}{8})&0&\cos(\frac{3\pi}{8})&-\frac{1}{\sqrt{2}}&  \cos(\frac{\pi}{8}) \\
0 &\cos(\frac{3\pi}{8}) &-\frac{1}{\sqrt{2}}&\cos(\frac{\pi}{8})&-1&\cos(\frac{\pi}{8})&-\frac{1}{\sqrt{2}}& \cos(\frac{3\pi}{8})\\
\end{array}
\right)  $$
where
$$ (\mathrm{Re}\,\sigma_1(m),\mathrm{Im}\,\sigma_1(m),\mathrm{Re}\,\sigma_2(m),\mathrm{Im}\,\sigma_2(m),\mathrm{Re}\,\sigma_3(m),\mathrm{Im}\,\sigma_3(m),\mathrm{Re}\,\sigma_4(m),\mathrm{Im}\,\sigma_4(m))^T=\Sigma\boldsymbol{m}^T.$$

Let $n = mm^\ast$ and write $n = b_0 + b_1\cdot 2\cos(\frac{\pi}{8})+b_2\sqrt{2}+b_3\cdot2\cos(\frac{3\pi}{8}).$ We can identify $n$ with the 4-dimensional vector $\boldsymbol{n}=(b_0,b_1,b_2,b_3)$, or with $(\sigma_1(n),\sigma_2(n),\sigma_3(n),\sigma_4(n))^T= \Sigma'\boldsymbol{n}^T$ where
$$\Sigma' := \left(
\begin{array}{cccc}
1 & 2\cos(\frac{\pi}{8})&\sqrt{2} & 2\cos(\frac{3\pi}{8})\\
1 &-2\cos(\frac{3\pi}{8})&-\sqrt{2}&-2\cos(\frac{\pi}{8})\\
1&-2\cos(\frac{3\pi}{8})&\sqrt{2}&2\cos(\frac{\pi}{8})\\
1 &-2\cos(\frac{\pi}{8})&\sqrt{2}& -2\cos(\frac{3\pi}{8})
\end{array}
\right)$$ through the above homomorphisms. As for the Clifford$+T$ basis, we choose a homomorphism arbitrarily, for example $\sigma_1$, to embed elements into Euclidean space. 
%All four homomorphisms are necessary to express the solvability constraints imposed by the norm equation for elements in $L$. 

%
Let $M_2(L)$ be the algebra of all $2\times2$ matrices with entries in $L$. 
Let $\mathcal{O}$ be a maximal order in $M_2(L)$ which contains $T_x^{1/2}$, $T_y^{1/2}$ and $T^{1/2}$, namely $\mathcal{O}=\sum_{i=1}^4O_K\cdot\omega_i$, where
$$\omega_1=I,\qquad
\omega_2=\frac{I+iX}{\sqrt{2}},\qquad
\omega_3=\frac{I+iY}{\sqrt{2}},\qquad
\omega_4=\omega_3\omega_2=\frac{I+iX+iY+iZ}{2}.$$
The homomorphisms $\sigma_1,\sigma_2,\sigma_3,\sigma_4$ extend  over $\mathcal{O}$ in a natural way. Elements of $\mathcal{O}$ correspond to $2\times 2$ unitaries via the map $\sigma'(M)=\frac{1}{\sqrt{\sigma_1(\det(M))}}\sigma_1(M)$. Elements of $\mathcal{O}$ with determinant $\ell^N$ correspond to unitaries that can be expressed as a product of $N$ gates T$^{k/2}_x$, T$^{k/2}_y$ and T$^{k/2}$ with $k=1,2,3$ (see~\sec{exact-synthesis}, \cite{GossetKliuchnikovMoscaRusso2014}), hence in the Clifford + $\sqrt{T}$ gates.

\subsubsection{Solving approximation problems}
Finding a solution to any approximation problem over the Clifford+$\sqrt{T}$ gate set involves finding a matrix
\begin{equation}\label{eq:sqtargetT}M=\left(\begin{smallmatrix}m_1 & -m_2^\ast\\m_2&m_1^\ast\end{smallmatrix}\right)=X_1\omega_1+X_2\omega_2+X_3\omega_3+X_4\omega_4\in\mathcal{O},\end{equation}
or equivalently finding $X_i\in O_K$, with additional constraints on $m_1$ depending on the approximation problem, such that $\det(M)=\ell^N$. 

Let us first examine the sets $M_{\mathrm{diag}}$ and $M_{\mathrm{off-diag}}$, in which we will look for elements $m_1$ and $m_2$, respectively. From Equation \eq{sqtargetT} we have
\begin{eqnarray*}
M_{\mathrm{diag}}&=&\left\{ m_1 : \left(\begin{smallmatrix}m_1 & -m_2^\ast\\m_2&m_1^\ast\end{smallmatrix}\right) \in \mathcal{O} \right\}=\left\{X_1+\frac{X_2+X_3}{\sqrt{2}}+\frac{X_4}{2}+\frac{X_4}{2}i:X_i\in O_K\right\}\\
&=&\frac{1}{\sqrt{2}}O_K+\frac{1+i}{2}O_K.
\end{eqnarray*}
As before, let $M_\mathcal{O}$ denote the set of elements $m_1\in L$ such that $\left(\begin{smallmatrix} m_1&0\\0&m_1^*\end{smallmatrix}\right)\in\mathcal{O}$. From Equation \eq{sqtargetT}, we have $M_\mathcal{O} = O_K + \frac{1+i}{\sqrt{2}}O_K$ and so clearly $M_\mathrm{diag} = \frac{1}{\sqrt{2}}M_\mathcal{O}$. Similarly, we have $M_{\mathrm{off-diag}}=\frac{1}{\sqrt{2}}M_\mathcal{O}$. Note that $O_L\subsetneq M_{\mathcal{O}}$, since $\zeta_{16}$ is in $O_L$ but not in $M_{\mathcal{O}}$.

As with the Clifford$+T$ basis, our approach is to iterate over $N$, beginning with $N=1$, and for each $N$ to first determine candidate values for $m_1$ via a specific enumeration problem, then to deduce corresponding values for $m_2$ by solving a norm equation, until a valid $M$ is found. In the following sections, the point enumeration and norm equation steps are described for fixed $N.$
For every pair of solutions $(m_1,m_2)$ we deduce a matrix $M=\left(\begin{smallmatrix}m_1 & -m_2^\ast\\
m_2 & m_1^\ast\end{smallmatrix}\right)$. The unitary $\sigma'(M)$ is factorized over the Clifford$+\sqrt{\textrm{T}}$ basis. %\CP{check references to problems}

\subsubsection{Finding \texorpdfstring{$m_1$}{m1}: an enumeration problem}\label{sec:CSTm1}
For both the diagonal and fallback approximation problems, we need $\sigma_1(m_1)/\sqrt{\sigma_1(\ell^N)} \in R_{\mathrm{approx}}$, where $R_{\mathrm{approx}}\subset D_1$ is a specific region of $\c$ defined by the problem. %,and $D_1$ denote the disk of radius 1 about the origin. RM{Do we need to define D_1 again?}

For the general approximation problem, $m_1$ must be such that $\sigma_1(m_1m_1^\ast)/\sigma_1(\ell^N) \in I_{\mathrm{approx}}$, where $I_{\mathrm{approx}}\subset [0,1].$
In order to satisfy the determinant condition we then naturally consider the following norm equation,
  \begin{equation}m_2m_2^\ast = \ell^N-m_1m_1^\ast,\label{eq:gennorm2}\end{equation}
 which we would a priori need to solve for every candidate value of $m_1$ satisfying the previous constraints.
Again, we observe that this norm equation only has solutions if its right-hand side is totally positive. This means that we only need to consider those candidates $m_1$ that additionally satisfy $\sigma_k(m_1)/\sqrt{\sigma_k(\ell^N)}\in D_1$or, equivalently, $\sigma_k(m_1m_1^\ast)/\sigma_k(\ell^N)\in [0,1]$, for $k=2,3,4$. 

Writing any $m_1=a_0+a_1i$ with $a_0,a_1\in K$, we see that $M_{\mathrm{diag}}$ can be considered as a full rank $O_K$ lattice in $K^2$. We therefore have a $\z$-basis, $\{y_0,\dots,y_{7}\}$, for $M_{\mathrm{diag}}$ and can write any element $m_1\in M_{\mathrm{diag}}$ as $m_1=\sum\limits_{i=0}^{7}a_0y_0, a_0\in\z$.

Since $M_\mathrm{diag}=\frac{1}{\sqrt{2}}O_K+\frac{1+i}{2}O_K$, we also have $n := m_1m_1^\ast \in \frac{1}{2}O_K$. Since $m_1\in\frac{1}{\sqrt{2}}M_\mathcal{O}$, there exists $\hat{m_1}\in M_\mathcal{O}$ such that $m_1=\frac{\hat{m_1}}{\sqrt{2}}$ and furthermore, $\hat{m_1}\hat{m_1}^\ast=2n:=\hat{n}\in O_K$. We write $\hat{n}=b_0+b_1\cdot2\cos(\frac{\pi}{8})+b_2\sqrt{2}+b_3\cdot2\cos(\frac{3\pi}{8})$ with all coefficients in $\z$.

Let $\Sigma_\mathcal{O}$ be defined as the matrix with rows:
\begin{eqnarray*}
\Sigma_\mathcal{O}^{(2j)} &=& (\mathrm{Re}(\sigma_j(y_0)),\dots,\mathrm{Re}(\sigma_j(y_{7}))\\
\Sigma_\mathcal{O}^{(2j+1)} &=& (\mathrm{Im}(\sigma_j(y_0)),\dots,\mathrm{Im}(\sigma_j(y_{7})),
\end{eqnarray*}
for $1\le j\le 7$, where the $\sigma_j$ are defined in \sec{CSTqmo}. Additionally, take $\Sigma^\prime$ as defined in \sec{CSTqmo}, and define normalization matrices $\Lambda$ and $\Lambda^\prime$. That is, $\Lambda$ and $\Lambda^\prime$ are diagonal matrices with entries $\big(\sqrt{\sigma_1(\ell^N)},\sqrt{\sigma_1(\ell^N)},\dots,\sqrt{\sigma_4(\ell^N)},\sqrt{\sigma_4(\ell^N)}\big)$ and  $(\sigma_1(2\ell^N),(\sigma_2(2\ell^N),(\sigma_3(2\ell^N),\sigma_4(2\ell^N))$ on the main diagonal, respectively. Hence the operations $\Lambda\Sigma_\mathcal{O}$ and $\Lambda'\Sigma'$ first embed an element $m_1$ or $\hat{n}$ into the Euclidean space of our approximation regions, then normalizes it to satisfy the constraints.

Candidate values for $m_1$ are then obtained by solving point enumeration problems below.

\begin{prob}[2D point enumeration (Clifford$+\sqrt{T}$ basis)]
	Let $R_{\mathrm{approx}}$ be a 2D region corresponding to a particular approximation problem. Find $(a_0,a_1,a_2,a_3,a_4,a_5,a_6,a_7) \in \mathbb{Z}^{8}$ such that 
	$$\Lambda^{-1}\Sigma_\mathcal{O}\cdot(a_0,a_1,a_2,a_3,a_4,a_5,a_6,a_7)^T
	\in R_{\mathrm{approx}}\times D_1\times D_1 \times D_1.$$ 
\end{prob}

\begin{prob}[1D point enumeration (Clifford$+\sqrt{T}$ basis)]
	Let $I_{\mathrm{approx}} \subset [0,1]$ be a real interval corresponding to a particular approximation problem.  Find $(a^\prime_0,a^\prime_1,a^\prime_2,a^\prime_3) \in \z^4$ such that 
	
	$$\Lambda^{\prime^{-1}}\Sigma^\prime\cdot(b_0,b_1,b_2,b_3)^T
	\in I_{\mathrm{approx}}\times[0,1]\times[0,1]\times[0,1].$$ 
\end{prob}

In the first case, we immediately recover a candidate value for $m_1$. In the second case, we recover a candidate value for $\hat{n}$, solve the norm equation $\hat{m_1}\hat{m_1}^\ast= \hat{n}$ for $\hat{m_1} \in M_{\mathcal{O}}$  and for every solution $\hat{m_1}$  we obtain a candidate value $m_1$ by setting $m_1=\frac{\hat{m_1}}{\sqrt{2}}$.

\subsubsection{Finding \texorpdfstring{$m_2$}{m2}: solving a norm equation}

Given a candidate value for $m_1$, we proceed to solve a norm equation problem (or determine there is no solution), restricting $m_2$ to $M^{m_1}_\mathrm{off-diag}$:
\begin{prob}\label{prob:sqrtT-norm}
	Given $m_1\in\frac{1}{\sqrt{2}}M_\mathcal{O}$ and integer $N$, find $m_2\in M^{m_1}_\mathrm{off-diag}$ such that $$m_2m_2^\ast = \ell^N - m_1m_1^\ast\in\frac{1}{2}O_K.$$

\end{prob}
Fixing an arbitrary $m\in M^{m_1}_\mathrm{off-diag}$, we have $M^{m_1}_\mathrm{off-diag}=m + M_\mathcal{O}$, since for any two $m,m'\in M^{m_1}_\mathrm{off-diag}$ we have $m-m'\in M^0_\mathrm{off-diag}=M_\mathcal{O}.$
Since $M_\mathrm{off-diag}=M_\mathrm{diag}=\frac{1}{\sqrt{2}}M_\mathcal{O}$, \problem{sqrtT-norm} can then be reformulated as

\begin{prob}\label{prob:sqrtT-norm-quotient}
	Given $\hat{m_1}\in M_\mathcal{O}$, integer $N$, and $m/\sqrt{2}\in M^{m_1}_\mathrm{off-diag}$ find $\hat{m_2}\in m+\sqrt{2}M_\mathcal{O}$ such that $$\hat{m_2}\hat{m_2}^* = 2\ell^N - \hat{m_1}\hat{m_1}^\ast\in O_K.$$

\end{prob}
Solving \problem{sqrtT-norm-quotient} for $\hat{m}_2$ then yields a solution to \problem{sqrtT-norm}: $m_2=\hat{m_2}/\sqrt{2}$.
\subsection{General case}\label{sec:approximation-problems:general}

In this section, we extrapolate from  the three preceding examples to outline a general method for solving approximate synthesis properties, and describe the properties required by gate sets to which this method applies.

\subsubsection{Gate sets}
We consider quaternion gate sets as defined by Kliuchnikov \emph{et al.} in \cite{Kliuchnikov2015b}. Informally, these are gate sets which are described by \emph{totally definite quaternion algebras}. 

Let $K$ be a totally real number field and take totally positive elements $a,b\in K$. Define $L$ to be the extension $L:=K(\sqrt{-a})$ and let $i\in L$ be such that $i^2=-a$. There are $2d$ distinct injective field homomorphisms embedding $L$ into $\c$, where $d = [K:\q]$. Fix $\sigma_1,\ldots,\sigma_d$ as any $d$ homomorphisms from $L$ that are pairwise distinct when restricted to $K$. 

A quaternion algebra $(\frac{-a,-b}{K}):=Q$ over the field $K$ is an algebra of the form $K + K\textbf{i} + K\textbf{j}+K\textbf{k}$ where $ \textbf{i}^2 = -a,\textbf{j}^2=-b$ and $\textbf{ij}=-\textbf{ji}=\textbf{k}$. A totally definite quaternion algebra has $a,b>$ totally positive. An element in $Q$ is written $q = q_0+q_1\textbf{i}+q_2\textbf{j}+q_3\textbf{k}$, $q_0,q_1,q_2,q_3\in K$, with conjugate $\bar{q}=q_0-q_1\textbf{i}-q_2\textbf{j}-q_3\textbf{k}$. The reduced norm of $q$ is $\mathrm{nrd}(q) = q\bar{q}$.

Let $M_2(L)$ be the set of $2\times 2$ matrices with elements in $L$. Define the $K$-linear map $\kappa: Q\rightarrow M_2(L)$ by
\begin{eqnarray}
\kappa(1) = I,\quad \kappa(\textbf{i}) = \sqrt{-a}Z,\quad \kappa(\textbf{j}) = -\sqrt{-b}Y,\quad \kappa(\textbf{k}) = \sqrt{-ab}X,
\end{eqnarray}
where $X,Y,Z$ are the Pauli matrices. Notice that $\kappa$ defines an isomorphism of quaternion algebras, with $\kappa(\textbf{k})=\kappa(\textbf{i})\kappa(\textbf{j})$. Concretely, we have a correspondence between elements in $Q$ and matrices in $M_2(L)$ of the form $M = \left( \begin{smallmatrix}q_0 +q_1\sqrt{-a} & -q_2\sqrt{b}+q_3\sqrt{-ab}\\q_2\sqrt{b}+q_3\sqrt{-ab}&q_0-q_1\sqrt{-a}\end{smallmatrix}\right)$, where the corresponding quaternion is $q:=q_0+q_1\textbf{i}+q_2\textbf{j}+q_3\textbf{k}$, such that $\kappa(q)=M$. Observe that $\det(M) = \mathrm{nrd}(q)=q_0-aq_1^2-bq_2^2+abq_3^2.$
The set of matrices of this form corresponds to $\mathrm{SU}(2)$ via the map 
\begin{equation}
\sigma'(M) = \frac{1}{\sqrt{\sigma_1(\det(M))}}\cdot\sigma_1(M),\end{equation}
where $\sigma_1$ is the natural extension over matrices of the homomorphism from $L$ into $\c$. Let $S$ be a set of elements from $K$. Consider  the gate set to be those matrices with determinant in $S.$

For the V, Clifford$+\textnormal{T}$ and Clifford$+\sqrt{\textnormal{T}}$ bases, the corresponding fields and integer rings are given in \tab{gates}.
\begin{table}[h!]
	\begin{center}
		{ %\scriptsize
			\caption{Number field correspondences for the V, Clifford$+T$ and Clifford$+\sqrt{T}$ gate sets.}\label{tab:gates}
			%\vspace{1em}
			\begin{tabular}{|c|c|c|c|c|}
				\hline 
				Gate set & $K$ & $L$ & $O_{K}$ & $O_{L}$\tabularnewline
				\hline 
				\hline 
				V basis  & $\mathbb{Q}$ & $\mathbb{Q}(i)$ & $\mathbb{Z}$ & $\mathbb{Z}[i]$\tabularnewline
				\hline 
				Clifford+$\mathrm{T}$ & $\mathbb{Q}(\sqrt{2})$ & $\mathbb{Q}(\zeta_{8})$ & $\mathbb{Z}[\sqrt{2}]$ & $\mathbb{Z}[\zeta_{8}]$\tabularnewline
				\hline 
				Clifford+$\sqrt{\mathrm{T}}$ & $\mathbb{Q}(\zeta_{16}+\zeta_{16}^{-1})$ & $\mathbb{Q}(\zeta_{16})$ & $\mathbb{Z}[\zeta_{16}+\zeta_{16}^{-1}]$ & $\mathbb{Z}[\zeta_{16}]$\tabularnewline
				\hline 
			\end{tabular}
			%\vspace{1em}
		}
	\end{center}
\end{table}

\subsubsection{Quaternion maximal order}

For a given gate set, $K$ and $L$, there exists $\mathcal{O}$, an order of $M_2(L)$, containing the preimages of the gate set unitaries under $\sigma'$. We note here that while this order does not  need to be maximal, maximal orders have several properties which allow for efficient factorization of elements \cite{Kliuchnikov2015b}. For a thorough background on quaternion orders, we direct the reader to \cite{voight2005quadratic}. 

The order $\mathcal{O}$ is constructed as follows. The gate set elements are mapped to matrices in $M_2(L)$. Let $\mathcal{L_K}$ be the $O_K$-lattice obtained by taking an $O_K$ linear combination of the elements of the ring generated by these matrices. Then, $\mathcal{O}$ can be taken as any order containing this lattice. Note that, due to the multiplicative properties of the determinant, elements in $\mathcal{O}$ with determinant equal to $\ell$ for some $\ell\in\langle S\rangle$ will correspond to gate set elements. Then $\ell=\prod s_i^{N_i}$ for some set of elements $s_i\in S$. Let $N:=\sum N_i$, which gives the length of a sequence of basis elements that produces the corresponding gate set element (when the class number of the quaternion algebra is one). Observe that for the gate sets we consider, we have $S = {\ell}$ and so $\det(M)=\ell^N$ for some $N\in\z^\geq$. Clifford+$\sqrt{T}$ is an example of a gate set for which the corresponding quaternion algebra has class number two. Recall that in \exm{minN}, two distinct elements in $\mathcal{O}$ corresponded to the same gate set element, each with a distinct $N$ value. We look for minimal $N$, as this will correspond to the shortest possible basis sequence. This will be the $N$ for which the entries of $M\in\mathcal{O}$ are integral and not all divisible by $s_i\in S$, for all $i$. Since the approximation method outlined here iterates over increasing $N$, the sequence obtained will be optimal.

\begin{rem}In addition, we look for orders $\mathcal{O}$ in which gates that are considered `low-cost' in the gate set behave as units. 
%\CP{I wonder if this ``should'' is a bit ambiguous i.e. is this an additional condition on how to build the order, how to choose gate sets fitting  into the framework, how to define low cost gates}
This forces the determinant of matrices corresponding to low-cost gates to be 1, ensuring that $N$ is a count of `expensive' gates in a sequence. In essence, this makes the determinant a useful cost measure for approximation. For the $V$-basis, these low-cost gates are the Pauli matrices; for Clifford$+\mathrm{T}$ and Clifford$+\sqrt{\mathrm{T}}$ these are the Clifford unitaries.\end{rem}

The definitions for $\mathcal{O}$ and $\ell$ corresponding to the V, Clifford$+\textnormal{T}$ and Clifford$+\sqrt{\textnormal{T}}$ bases are given in the Table \ref{tab2}.
\begin{table}[h]
	\caption{Maximal orders for V, Clifford$+T$ and Clifford$+\sqrt{T}$ gate sets.}\label{tab2}
	\begin{center}
		{ %\scriptsize
			\begin{tabular}{|c|c|c|}
				\hline 
				Gate set & $\ell$ & $\mathcal{O}$\tabularnewline
				\hline 
				\hline 
				V basis  &  $5$ & $O_K\cdot I+O_K\cdot iX+O_K\cdot iY+O_K\cdot iZ$\tabularnewline
				\hline 
				Clifford+$\mathrm{T}$ &  $2+\sqrt{2}$ & $O_{K}\cdot I+O_{K}\cdot\frac{I+iX}{\sqrt{2}}+O_{K}\cdot\frac{I+iY}{\sqrt{2}}+O_{K}\cdot\frac{I+iZ+iX+iY}{2}$\tabularnewline
				\hline 
				Clifford+$\sqrt{\mathrm{T}}$ &  $2+2\cos\frac{\pi}{8}$ & $O_{K}\cdot I+O_{K}\cdot\frac{I+iX}{\sqrt{2}}+O_{K}\cdot\frac{I+iY}{\sqrt{2}}+O_{K}\cdot\frac{I+iZ+iX+iY}{2}$\tabularnewline
				\hline 
			\end{tabular}
		}
	\end{center}
\end{table}

\subsubsection{Solving approximation problems}

%Let $\{\omega_i\}$ be an $O_K$ basis for $\mathcal{O}$.
For fixed $N\in\n$, finding a solution to any approximation problem over a gate set involves finding a matrix \begin{equation*}M=\left(\begin{smallmatrix}m_1 & -m_2^\ast\\m_2&m_1^\ast\end{smallmatrix}\right)\in\mathcal{O},\label{eq:general-m-basis}\end{equation*}
%=X_1\omega_1+X_2\omega_2+X_3\omega_3+X_4\omega_4
%or equivalently finding $X_i\in O_K,$
 with additional constraints on $m_1$ depending on the approximation problem, such that $\det(m)=\ell^N$.
Our approach to finding $M$ can be summarized in two steps:\\

\begin{enumerate}
	\item point enumeration in a target region to find $m_1$ (\sec{general-point-enumeration}), followed by
	\item solving a relative norm equation to recover $m_2$ (\sec{general-norm-equation-solution}).\\
\end{enumerate}

For the diagonal and fallback approximation problems, with and without mixing, we look for elements ${M=\left(\begin{smallmatrix}m_1 & - m_2^\ast\\m_2&m_1^\ast\end{smallmatrix}\right)}$
of $\mathcal{O}$, such that
\begin{equation*}
\sigma_1(m_1)/\sqrt{\sigma_1(\ell^N)} \in R_{\mathrm{approx}} \subset D_1,\end{equation*}
where $R_{\mathrm{approx}}$ is the region defined by the problem.
%\AP{Need to update this discussion to include mixing versions of the problems, too.}
For the general unitary approximation problem, $m_1$ is required to satisfy
\begin{equation*}\sigma_1(m_1m_1^\ast)/\sigma_1(\ell^N) \in I_{\mathrm{approx}}  \subset [0,1],\end{equation*}
where $I_{\mathrm{approx}}$ is the real interval defined by the parameters of the problem. We observe that for the relative norm equation \begin{equation*}m_2m_2^\ast = \ell^N-m_1m_1^\ast.\end{equation*}  to have a solution, it is necessary that, for all $k$, $\sigma_k(\ell^N-m_1m_1^\ast)>0$. This means we only need to consider those candidates $m_1$ that satisfy either
\begin{equation*}
\sigma_k(m_1)/\sqrt{\sigma_k(\ell^N)} \in  D_1
\textnormal{ or, equivalently, }
\sigma_k(m_1m_1^\ast)/\sigma_k(\ell^N) \in  [0,1]\end{equation*}
for all $k>1$.

From each pair $(m_1,m_2)$ we can deduce a matrix $M= \left(\begin{smallmatrix}m_1 & - m_2^\ast\\m_2&m_1^\ast\end{smallmatrix}\right)$. The unitary $\sigma'(M)$ is factorized over the desired gate set to obtain a solution to the approximation problem. If no solution exists for the given $N$, set $N:= N+1$ and repeat the process. Thus, iterating over $N$, initialized at 1, will give the solution corresponding to the shortest gate sequence.

\subsubsection{Finding \texorpdfstring{$m_1$}{m1}: an enumeration problem}\label{sec:general-point-enumeration}
The problem of finding candidates $m_1\in L$ satisfying the conditions of an approximation problem can be reduced to an integer point enumeration problem. To better understand the set to which the main diagonal entries, $m_1$, belong, we define the map $h$ from $L$ to $Q$ by
\begin{equation*}
    h(a_0+ia_1) = a_0+a_1\textbf{i}.
\end{equation*}

 We see that $\kappa(h(m))$ sends an element from $L$ to the diagonal matrix $\left(\begin{smallmatrix}
     m & 0\\0&m^\ast
 \end{smallmatrix}\right)\in M_2(L).$

 Hence we are enumerating elements $m_1\in L$ from the set
 
\begin{equation*}
    M_\mathrm{diag}=\set{m_1:\exists m_2\in L\textnormal{ s.t. } \kappa(h(m_1)) + h(m_2)\textbf{j})\in \mathcal{O}}.
\end{equation*}

We additionally define the set of diagonal matrices in $\mathcal{O}$:
\begin{equation*}
M_\mathcal{O}:=\set{m\in L: \kappa(h(m))\in \mathcal{O}}.
\end{equation*}

Observe that enumerating $m_1$ from $M_\mathrm{diag}$ is equivalent to enumerating $a_0,a_1\in K$  from the set
\begin{equation*}
   \mathcal{L}_\mathcal{O}=\{(a_0,a_1): \exists a_2,a_3\in K \textnormal{ s.t. } a_0I+a_1\sqrt{-a}Z - a_2\sqrt{-b}Y + a_3\sqrt{-ab}X\in\mathcal{O}\}.
\end{equation*}
We make use of the following lemma to find a $\z$-basis for $M_\mathrm{diag}$.
\begin{lem}
  $\mathcal{L}_\mathcal{O}$ is a full rank $O_K$-lattice in $K^2$.
\end{lem}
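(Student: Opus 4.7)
The plan is to exhibit $\mathcal{L}_\mathcal{O}$ as the image of $\mathcal{O}$ under an $O_K$-linear projection, and then invoke the defining properties of an order in a quaternion algebra.

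First, define $\pi : Q \to K^2$ by $\pi(q_0 + q_1 \mathbf{i} + q_2 \mathbf{j} + q_3 \mathbf{k}) = (q_0, q_1)$; this map is $O_K$-linear, and its $K$-linear extension is surjective since $\pi(1) = (1,0)$ and $\pi(\mathbf{i}) = (0,1)$. Under the embedding $\kappa : Q \hookrightarrow M_2(L)$, the element $q_0 + q_1 \mathbf{i} + q_2 \mathbf{j} + q_3 \mathbf{k}$ is sent to precisely $q_0 I + q_1 \sqrt{-a}\, Z - q_2 \sqrt{-b}\, Y + q_3 \sqrt{-ab}\, X$, so the defining condition of $\mathcal{L}_\mathcal{O}$ is nothing but the identity $\mathcal{L}_\mathcal{O} = \pi(\kappa^{-1}(\mathcal{O}))$.

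Next, by the construction in the preceding paragraphs, $\mathcal{O}$ contains the $O_K$-lattice $\mathcal{L}_K$ generated by the gate-set matrices, all of which lie in $\kappa(Q)$, and is taken to be an $O_K$-order of the quaternion algebra $\kappa(Q) \simeq Q$. Hence $\mathcal{O}_Q := \kappa^{-1}(\mathcal{O})$ is a full-rank $O_K$-lattice in $Q$: it is finitely generated as an $O_K$-module and satisfies $\mathcal{O}_Q \otimes_{O_K} K = Q$. Applying $\pi$, the image $\mathcal{L}_\mathcal{O} = \pi(\mathcal{O}_Q)$ is a finitely generated $O_K$-submodule of $K^2$, and its $K$-span in $K^2$ equals $\pi(K \cdot \mathcal{O}_Q) = \pi(Q) = K^2$, so $\mathcal{L}_\mathcal{O}$ has full rank $2$. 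Torsion-freeness is automatic since $\mathcal{L}_\mathcal{O} \subseteq K^2$, completing the argument.

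The only mild technical point is to justify regarding $\mathcal{O}$ as an order of the $4$-dimensional $K$-algebra $\kappa(Q)$ rather than of the larger $M_2(L)$; this is immediate from the construction, which produces $\mathcal{O}$ from $O_K$-linear combinations of matrices already lying in $\kappa(Q)$, as can be verified directly in each of the three running examples (V, Clifford$+T$, Clifford$+\sqrt{T}$). No real computational obstacle arises beyond unpacking the definition of a quaternion order.
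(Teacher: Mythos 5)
Your proof is correct and takes a cleaner, more conceptual route than the paper's. The paper argues directly: it bounds the $K$-rank $r$ of $\mathcal{L}_\mathcal{O}$ by $2$, observes $(1,0)\in\mathcal{L}_\mathcal{O}$ via $I\in\mathcal{O}$, and then produces an element with nonzero second coordinate by a dimension-counting contradiction (if all four $O_K$-basis elements of $\mathcal{O}$ had zero $\mathbf{i}$-component they would lie in a $3$-dimensional $K$-subspace of $Q$ and could not be $K$-linearly independent). You instead identify $\mathcal{L}_\mathcal{O}$ as the image $\pi(\kappa^{-1}(\mathcal{O}))$ of the order under a surjective $O_K$-linear projection $\pi:Q\to K^2$, from which full rank follows at once, since a $K$-linear surjection carries a full-rank lattice to a full-rank lattice (and finite generation and torsion-freeness are trivially preserved). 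This is more robust and generalizes verbatim to any coordinate projection, whereas the paper's argument relies on a special feature of the $\{1,\mathbf{i}\}$ coordinates. Both proofs rest on the same implicit hypothesis, which you correctly flag: $\mathcal{O}$ must be understood as a quaternion $O_K$-order contained in $\kappa(Q)$, of $O_K$-rank $4$, not a full order of the $8$-dimensional $K$-algebra $M_2(L)$; this matches the explicit constructions $\mathcal{O}=\sum_{i=1}^4 O_K\omega_i$ in the paper, so the hypothesis is satisfied. The paper's phrase "maximal order in $M_2(L)$" is a mild abuse of terminology for the same object, so your observation, while correct, does not uncover a genuine gap in the paper.
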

\begin{proof}Since $\mathcal{O}$ is closed under addition and scalar multiplication over $O_K$, so is $\mathcal{L}_\mathcal{O}$. Consider an $O_K$-linearly independent generating set $G$ of $\mathcal{L}_\mathcal{O}$ and let $g_1,\dots,g_r$ be the subset of these that are $K$-linearly independent. Then $r\le 2.$ We have $I\in\mathcal{O}$, so $(1,0)\in\mathcal{L}_\mathcal{O}$. Suppose for a contradiction that $\mathcal{L}_\mathcal{O}$ contains no elements of the form $(a_0,a_1),a_1\neq 0$ in $\mathcal{L}_\mathcal{O}$. Let $\{\omega_i\}_{i=1,\dots,4}$ be a basis for $\mathcal{O}$, with corresponding elements in $\mathcal{L}_\mathcal{O}$ denoted by $(\omega_{i,0},\omega_{i,1})$. By assumption, $\omega_{i,1}=0 \forall i$. Since $\kappa$ is an isomorphism of quaternion algebras, we can write each basis element in the form $\omega_{i,0}I-\omega_{i,2}\sqrt{-b}Y+\omega_{i,3}\sqrt{-ab}X$. Then, we can see that at least two of the basis elements must be $K$-linearly dependent. Hence, we have a contradiction and so $r=2$. So $\mathcal{L}_\mathcal{O}$ spans $K^2$ as a $K$ vector space and clearly $\mathrm{rank}(\mathcal{L}_\mathcal{O})=2d$.
\end{proof}
Hence, we can conclude that there exists a $\z$-basis for $\mathcal{L}_\mathcal{O}$ and so also for $M_\mathrm{diag}$, which we denote $\{y_i\}$, for $i=0,\dots,2d-1$.

For the remainder of this section, let us consider orders of the form $\mathcal{O}=\sum\limits_{i=1}^4 O_K\omega_i$, with
\begin{equation}\label{eq:O-basis}
 \omega_1=I,\quad
\omega_2=\frac{I+iZ}{\sqrt{2}},\quad
\omega_3=\frac{I+iY}{\sqrt{2}},\quad
\omega_4=\omega_3\omega_2=\frac{I+iX+iY+iZ}{2},
\end{equation}
   %
%\CP{not clear how restrictive this is, and which parts of the section are affected by this restriction}

 as for the Clifford$+$T and Clifford$+\sqrt{\textrm{T}}$ bases. In this case, we have \begin{equation*}
    M_\mathcal{O} = O_K +\frac{1+i}{\sqrt{2}}O_K,
\end{equation*}

which allows us to establish $M_\mathrm{diag}$ as a fractional $M_\mathcal{O}$ ideal. Moreover, for the bases considered in this paper, we have that $M_\mathrm{diag}$ is also principal, %\CP{again, it is confusing to have a new restriction in the middle of the section; if it cannot be stated at the start then I would recap all conditions at the end} 
so
\begin{equation}\label{eq:fractional-ideal}
    M_\mathrm{diag} = \frac{1}{\xi}M_\mathcal{O},\quad \xi\in L.
\end{equation}

The definitions for $M_\mathcal{O}$ and $\xi$ corresponding to the V, Clifford$+\textnormal{T}$ and Clifford$+\sqrt{\textnormal{T}}$ bases are given in the Table \ref{tab3}.
\begin{table}[!h]
	\caption{$\xi, M_\mathcal{O}$ for V, Clifford$+T$ and Clifford$+\sqrt{T}$ gate sets.}\label{tab3}
	\begin{center}
		{ %\scriptsize
			\begin{tabular}{|c|c|c|}
				\hline 
				Gate set & $\xi$ & $M_\mathcal{O}$\tabularnewline
				\hline 
				\hline 
				V basis  &  1 & $O_L$\tabularnewline
				\hline 
				Clifford+$\mathrm{T}$ &  $\sqrt{2}$ & $O_{L}$\tabularnewline
				\hline 
				Clifford+$\sqrt{\mathrm{T}}$ &  $\sqrt{2}$ & $O_{K}+\frac{1+i}{\sqrt{2}}O_{K}$\tabularnewline
				\hline 
			\end{tabular}
		}
	\end{center}
\end{table}

\paragraph{Case 1: Diagonal Approximation}
For diagonal approximation (with and without fallback and mixing) the first normalized embedding $\sigma_1(m_1)/\sigma_1(\ell^N)$ falls in a two dimensional region, $R_\mathrm{approx}.$
Define the $2d\times 2d$ matrix $\Sigma_\mathcal{O}$ with rows:
\begin{eqnarray*}
\Sigma_\mathcal{O}^{(2j)} &=& \left(\mathrm{Re}(\sigma_j(y_0)),\dots,\mathrm{Re}(\sigma_j(y_{2d-1}))\right)\\
\Sigma_\mathcal{O}^{(2j+1)} &=& \left(\mathrm{Im}(\sigma_j(y_0)),\dots,\mathrm{Im}(\sigma_j(y_{2d-1})\right).
\end{eqnarray*}

So $\Sigma_\mathcal{O}$ is the matrix with entries corresponding to the real and imaginary components of the images of the $l_i$ under each of the $d$ homomorphisms. Let $\Lambda$ be the diagonal matrix with $\left(\sqrt{\sigma_1(\ell^N)},\sqrt{\sigma_1(\ell^N)} \dots,\sqrt{\sigma_d(\ell^N)},\sqrt{\sigma_d(\ell^N)} \right)$ on the diagonal. Then the operation $\Lambda\Sigma_\mathcal{O}z$ first embeds $z$ into the Euclidean space corresponding to $M_\mathrm{diag}$, then normalizes the result with respect to the norm $\ell^N$. Finding $m_1$ is now an integer point enumeration problem:

\begin{prob}
    Find $z\in\z^{2d}$ such that $\Lambda^{-1}\Sigma_\mathcal{O} z \in R_\mathrm{approx}\times D_1^{d-1}$.
\end{prob}
Each solution $z=(z_0,\dots,z_{2d-1})$ yields a candidate for $m_1$ by setting $m_1=z_0y_0+\dots+z_{2d-1}y_{2d-1}.$

\paragraph{Case 2: Magnitude Approximation}
For general unitary approximation the first normalized embedding $\sigma_1(m_1m_1^\ast)/\sigma_1(\ell^N)$ belongs to the interval $I_\mathrm{approx}$ and the remaining $d-1$ embeddings satisfy $\sigma_k(m_1m_1^\ast)/\sigma_k(\ell^N) \in  [0,1].$ 

We are looking for values $n=m_1m_1^\ast$ satisfying the above conditions, such that $m_1\in M_\mathrm{diag}$. Consider the set
\begin{equation*}
\{n:\exists m_1\in M_\mathrm{diag}\textnormal{ such that }m_1m_1^\ast=n\}
\end{equation*}
 and let $M_\mathrm{norm}$ be the set generated multiplicatively by the above set. %\CP{do you mean generated multiplicatively by elements of the above set? or Mnorm is the above set? } 
  From Equation \eq{fractional-ideal}, we see that
\begin{equation*}
    M_\mathrm{norm} \subseteq \frac{1}{\xi\xi^\ast}O_K,
\end{equation*}
%\CP{not obvious to me}
a fractional $O_K$ ideal. For this reason we can enumerate points $\hat{n}=\xi\xi^\ast n\in O_K$. 
Let $k_0,\dots,k_{d-1}$ be an integral basis for $K$ and define $\Sigma'$ as the $d\times d$ matrix with rows:
$$\Sigma'_j=\left(\sigma_j(k_0),\dots,\sigma_j(k_{d-1})\right).$$ Define $\Lambda'$ as the diagonal normalization matrix with $\left(\sigma_1(\xi\xi^\ast)\cdot\sigma_1(\ell^N),\dots,\sigma_d(\xi\xi^\ast)\cdot\sigma_d(\ell^N)\right)$ on the diagonal. Finding $\hat{n}$ is now an integer point enumeration problem in a parallepiped:
\begin{prob}
    Find $z\in\z^{d}$ such that $\Lambda'^{-1}\Sigma' z \in I_\mathrm{approx}\times [0,1]^{d-1}$.
\end{prob}

Each solution $z=(z_0,\dots,z_{d-1})$ yields a candidate for $\hat{n}$ by setting $\hat{n}=z_0k_0+\dots+z_{d-1}k_{d-1}.$ Recovery of $m_1$ requires a solution to the norm equation
\begin{equation*}
    \hat{m_1}\hat{m_1}^\ast = \hat{n},\quad \hat{m_1}\in M_\mathcal{O}.
\end{equation*}
Finally the candidate $m_1$ is defined as $m_1=\hat{m_1}/\xi$.

\subsubsection{Finding \texorpdfstring{$m_2$}{m2}: solving a norm equation}\label{sec:general-norm-equation-solution}
Finding a candidate for $m_2$ amounts to solving a norm equation, with the added constraint that the pair $(m_1,m_2)$ corresponds to a matrix in the order $\mathcal{O}.$
Given a candidate $m_1\in M_\mathrm{diag}$, we define the set containing valid candidates for $m_2$ as
\begin{equation*}M_{\mathrm{off-diag}}^{m_1} = \{m_2: \kappa(h(m_1)+h(m_2)\textbf{j})\in\mathcal{O}\}.
\end{equation*}
 To satisfy the determinant condition, we require 
 \begin{equation}\label{eq:initial-norm-equation} m_2m_2^\ast=\ell^N-m_1m_1^\ast,\quad m_2\in M^{m_1}_\mathrm{off-diag}.\end{equation}

 Any element $m_2\in M^{m_1}_\mathrm{off-diag}$ belongs to the larger set \begin{equation*}
    M_\mathrm{off-diag} = \set{m_2:\exists m_1\textnormal{ s.t. }\kappa(h(m_1)+h(m_2)\textbf{j})\in\mathcal{O}},
\end{equation*}
which, as shown for $M_\mathrm{diag}$, is a fractional $M_\mathcal{O}$ ideal.  In the following discussion, we show that a solution for $m_2$ can be recovered from a related norm equation, in which we solve for elements in $M_\mathcal{O}$, under the assumption that $M_\mathrm{off-diag}$ is moreover a principal fractional ideal. That is, 
\begin{equation*}M_\mathrm{off-diag} = \frac{1}{\xi'}M_\mathcal{O},\quad\xi'\in L.\end{equation*}

Fix $m\in M_{\mathrm{off-diag}}^{m_1}$. For any other $m'\in M^{m_1}_\mathrm{diag}$ we have $
    \kappa(h(m)\textbf{j}-h(m')\textbf{j})\in\mathcal{O}
$. Therefore, we write $M_{\mathrm{off-diag}}^{m_1}=m + M^{0}_\mathrm{off-diag}$, where $M^{0}_\mathrm{off-diag}$ is the principal fractional $M_\mathcal{O}$ ideal $M^{0}_\mathrm{off-diag}=\set{m':\kappa(h(m')\textbf{j})\in\mathcal{O}}.$
%To see that $N$ is a fraction $M_\mathcal{O}$ ideal, consider the set under the image of $s$, 
%\begin{equation}
 %\{(a_2,a_3):-a_2\sqrt{-a}Y+a_3\sqrt{-ab}\in\mathcal{O}\}.
%\end{equation}
 We take
\begin{equation*}
   M^{0}_\mathrm{off-diag} = \frac{1}{\chi}M_\mathcal{O},\quad\chi\in L.
\end{equation*}
Note that a representative $m$ is found by considering the quotient lattice $M_\mathrm{off-diag}/M^0_\mathrm{off-diag}$.
The definitions for $\xi,\xi'$ and $\chi$ corresponding to the V, Clifford$+\textnormal{T}$ and Clifford$+\sqrt{\textnormal{T}}$ bases are given in the Table \ref{tab4}.
\begin{table}[!h]
	\caption{Fractional ideal representatives for V, Clifford$+T$ and Clifford$+\sqrt{T}$ gate sets.}\label{tab4}
	\begin{center}
		{ %\scriptsize
			\begin{tabular}{|c|c|c|c|c|}
				\hline 
				Gate set & $\xi$ & $\xi'$ &$\chi$&$M_\mathcal{O}$ \tabularnewline
				\hline 
				\hline 
				V basis  &  1 & 1&1&$O_L$\tabularnewline
				\hline 
				Clifford+$\mathrm{T}$ &  $\sqrt{2}$ & $\sqrt{2}$&1&$O_L$\tabularnewline
				\hline 
				Clifford+$\sqrt{\mathrm{T}}$ &  $\sqrt{2}$ & $\sqrt{2}$&1&$O_K+\frac{1+i}{\sqrt{2}}O_K$\tabularnewline
				\hline 
			\end{tabular}
		}
	\end{center}
\end{table}

The norm equation in Equation \eq{initial-norm-equation} can now be reformulated to look for a solution in $M_\mathcal{O}$.

\begin{prob}\label{prob:general-norm-quotient}
    Given $\hat{z}/\xi'\in M_\mathrm{off-diag}, m_1 \in M_\mathrm{diag}$, find $z\in M_\mathcal{O}$ such that $$\abs{\frac{\hat{z}}{\xi'}+\frac{z}{\chi}}^2=\ell^N-m_1m_1^\ast.$$
\end{prob}
A solution $z$ yields a candidate for $m_2$ by setting $m_2=\hat{z}/\xi'+z/\chi$.
Since $m_1=\hat{m_1}/\xi$ for some $m_1\in M_\mathcal{O}$, if $\xi=\xi'$ and $\chi=1$, then \problem{general-norm-quotient} is simplified to:
\begin{prob}\label{prob:general-norm-quotient-simplified}
    Find $z\in M_\mathcal{O}$ such that $\abs{\hat{z}+\xi z}^2=\xi\xi^\ast\ell^N-\hat{m_1}\hat{m_1}^\ast$, where $\hat{z},\hat{m_1}\in M_\mathcal{O}.$
\end{prob}
Clearly the V, Clifford$+$T and Clifford$+\sqrt{\textnormal{T}}$ bases admit this simplified case. Of course, solving the norm equation for the V basis is already straightforward, but is included here for completeness.
\begin{rem}\label{rem:norm-eq-quotient}
By applying the variable substitution $z'=\hat{z}+\xi z$, we see that \problem{general-norm-quotient} is equivalent to solving \begin{equation}\label{eq:norm-eq-quotient}\abs{z'}^2=r\in O_K,\quad z'\in \hat{z} + \xi M_\mathcal{O},\end{equation} where $r = \xi\xi^\ast\ell^N-\hat{m_1}\hat{m_1}^\ast$. In other words, $z'$ must lie in the same quotient in $M_\mathcal{O}/\xi M_\mathcal{O}$ as $\hat{z}.$ 
\end{rem}

We discuss solving these norm equations in \sec{normeqsolv}. In particular, for the special case of fields with class number equal to 1, we suggest a simplified solution for Equation \eq{norm-eq-quotient}.

\subsection{Heuristic approximation cost scaling with accuracy}\label{sec:cost-scaling}

\change{Having presented a method for synthesising arbitrary unitaries, we now discuss the cost of our algorithm. We establish a heuristic scaling of the power cost function with the area of the 2D or 1D regions related to the six approximation problems considered in \cref{sec:approximation-problems}.}
All gate sets we consider are related to integral quaternion with norm $\ell^N$ for some $\ell$ fixed by the gate set and $N$ being a power cost of the given approximating quaternion.
Let $R_{\varepsilon,q}$ be a 2D or 1D region with $\varepsilon$ being diamond norm accuracy and $q$ being success probability, 
During the point enumeration step of our algorithms for 2D problems, we are looking for integer points of dimension $2d$ in 
the bounded subset of $\mathbb{R}^{2d}$ given by equation below: 
$$
 (a_0,\ldots,a_{2d-1}) \in \Lambda \Sigma^{-1}_\mathcal{O} \left( R_{\varepsilon,q} \times D_1 \times \ldots D_{d-1} \right)
$$
Now, applying the Gaussian heuristic, we assume that there exist an integer point in the subset of $\mathbb{R}^{2d}$ when the volume of the subset is $1$.
Taking into account that $\det{\Lambda} = \mathrm{Nrm}(\ell)^N$ we get the following condition for the existence of integer points:
$$
N \log(\mathrm{Nrm}(\ell)) + \log(\mathrm{Area}(R_{\varepsilon,q})) + \log(\pi^{d-1}/\det(\Sigma_\mathcal{O})) = 0
$$ 
Define $b = \mathrm{Nrm}(\ell)$, then the Gaussian heuristic implies power cost scaling: 

$$
N = -\log_b(\mathrm{Area}(R_{\varepsilon,q})) + \log_b(\det(\Sigma_\mathcal{O})/\pi^{d-1})
$$
The relation between power cost and area for 1D problems is similarly
$$
N = -\log_b(\mathrm{Length}(I_{\varepsilon})) + O(1).
$$

Now we specialize above calculation to Clifford+$T$ and Clifford+$\sqrt{T}$ gate sets and specific approximation problems.
Recalling that for Clifford+$T$ and Clifford+$\sqrt{T}$ logarithm base $b =2$ and using expression for the 
regions areas in \cref{tab:approx-region-areas}, we derive heuristic power cost scaling expression in the top half of \cref{tab:approximation-cost-scaling}.
We note that for projective (fallback) rotation approximation $N$ is $\log_b(1/(1-q)\cdot1/\varepsilon) + O(1)$,
and for magnitude approximation $N$ is $\log_b(1/\varepsilon) + O(1)$.
We expect that magnitude approximations are shorter by the additive constant $\log_b(1/(1-q))$, where $1-q$ is the fall-back step probability of the fall-back protocol.
Heuristically, we assume that increasing volume by a constant factor or $\log(1/\varepsilon)$ factor ensures that we can find integer points 
for which the corresponding norm equations are solvable. This does not affect constant in front of $\log_b(1/\varepsilon)$.

In applications we are interested in two other cost metrics for our gate sequences: non-Clifford gate count ( that we simply call gate count) and 
T-count, that is the number of T states needed to executed given sequence.
The gate count is a good proxy for how fast we can execute the sequence, where each non-Clifford 
gate is executed using circuit from \href{https://arxiv.org/pdf/1808.02892.pdf#figure.33}{Figure~33} in \cite{GameOfSurfaceCodes}.
The T-count is a good proxy for space-time volume needed to execute the sequence on a fault-tolerant quantum computer,
because the space-time volume required is typically dominated by the space-time volume needed to distill $T$ states.
For Clifford+$T$ approximations these two other cost metrics are equal to the power cost.
It remains to estimate them for Clifford+$\sqrt{T}$ approximations.
We assume that number of $\sqrt{T},\sqrt{T}^3$ gates denoted by $N_{\sqrt T}$ in our sequences is the same 
as number of $T$ gates. This is justified by our numerical results.
Recall that $\sqrt{T},\sqrt{T}^3$ contribute three to the power cost and $T$ contributes $2$. 
For this reason we have $N_{\sqrt T} = 0.2 N$ and gate count is $0.4 N$.
To estimate $T$-count we assume that every $\sqrt{T}$ and $\sqrt{T}^3$ gate can be execute using four $T$ states. 
This is because the circuit from \href{https://arxiv.org/pdf/1808.02892.pdf#figure.33}{Figure~33} in \cite{GameOfSurfaceCodes}.
consumes one $\sqrt{T}$ state and one $T$ state.
Producing one $\sqrt{T}$ state requires 3 $T$ states in the worst case using catalysis protocol described in \href{https://arxiv.org/pdf/1904.01124.pdf#figure.caption.7}{Figure 6a} in \cite{LowerBounds2020}.
We see that T-count for Clifford+$\sqrt{T}$ approximations heuristically scales the same way as power cost.
Above implies heuristic cost scaling expressions in \cref{tab:approximation-cost-scaling}.

\section{Integer point enumeration problems}\label{sec:point-enum}

In \sec{approx-solutions}, we described a general method for solving approximate synthesis problems on quaternion gate sets, with three examples from commonly used gate sets. In this section, we focus on the first step in that method: integer point enumeration in a convex region. Where relevant, we re-use the notations introduced in previous sections.

\change{To provide context for this section, we will review integer point enumeration methods previously used in circuit synthesis literature. The simplest approach to finding an integer point inside a convex body is to fit a box of size one into the convex body and round each of the coordinates of the center of the box to the nearest integers. This approach only works for approximating with $V$ basis and is similar to the randomized algorithm in \cite{BGS}. However, even for $V$ basis, this approach produces sub-optimal scaling of the sequence length with the accuracy $\varepsilon$. For Clifford+${T}$, the convex bodies that appear in the approximation problem do not typically contain a size one box.}

\change{In~\cite{Selinger}, Selinger found a way to use the unit group of $\mathbb{Z}[\sqrt{2}]$ to reshape the convex bodies so that they contain a size one box. Later, this approach was generalized to work for other gate sets by relying on the approximate closest vector problem in a unit group lattice of the rings of integers related to the gate-set in \cite{Kliuchnikov2015b}. When applied to the diagonal approximation problem, these approaches produce sub-optimal scaling of the sequence length with the accuracy $\varepsilon$ (for the most target angles) because they do not find all the points inside the convex body. It turns out that this approach is more useful for magnitude approximation problems for any gate-set, and we discuss it in detail in \cref{sec:parallelotope-enumeration}.}

\change{Ross and Selinger~\cite{RossSelinger2014} designed an efficient integer point enumeration algorithm for problems related to the Clifford+${T}$ gate-set that finds all the integer points. In~\cite{Parzanchevski}, the authors point out that one should use the polynomial integer point enumeration algorithm \cite{lenstra1983integer} for all gate-sets. In \cref{sec:general-enum}, we review the algorithm from \cite{lenstra1983integer} and highlight the modification to include convex bodies with quadratic constraints. We use this modification in our implementation and are the first to report on the use of generic integer point enumeration for circuit synthesis problems in practice.}

Recall that the conditions on $m_1$, as defined in \sec{approx-solutions}, define a target region in which we want to enumerate integer vectors. \sec{general-enum} outlines an algorithm for integer point enumeration in convex bodies of a particular form and shows how this can be applied to the target regions prescribed by approximate synthesis. 
In the case of magnitude approximation, the target region is a parallelotope.  \sec{parallelotope-enumeration} gives an alternative method in that case, making use of the number-theoretic structure arising from the quaternion gate sets. 

\subsection{General point enumeration}\label{sec:general-enum}
In this section we describe an approach to solving integer point enumeration problems in a subset of convex bodies, and we apply this method to the regions arising from approximate synthesis. The algorithm is from Lenstra \cite{lenstra1983integer}, and it applies to  problems with the following general form. 

\begin{prob}[Integer point enumeration in a convex body]\label{prob:convex-enum}
Let $R$ be a bounded convex body of positive volume satisfying $R=\{x\in\r^d: Ax \le b\}$,
where $A$ is an invertible $d\times d$ matrix and $b$ is a $d$-dimensional column vector. Find all $x\in  R\cap\z^d$.
\end{prob}
%\AP{How is $\leq$ defined for vectors in $\r^d$ in this context?}
The inequality in \problem{convex-enum} denotes an element-wise comparison between two vectors.

\begin{rem}[Target regions defined by approximate synthesis problems]
The target regions defined by the approximation problems are not necessarily of the form in \problem{convex-enum}. For instance, the target region defined by \cref{prop:fallback-approximation}, illustrated in \cref{fig:fallback-condition}, is \textit{not} convex. In these cases, we can take the convex hull of $R$ and apply Lenstra's algorithm, discarding any solutions not in $R$.
\end{rem}
%\begin{rem}[Conditions on the convex body $R$]
%\problem{convex-enum} requires that $R$ is bounded and has positive volume. The first of these criteria is addressed by restricting $R$ to a bounded polytope $R'$, such that $R$ contains an integer point if and only if $R'$ contains an integer point. We refer the reader to $\S9.1.3$ of \cite{conforti2014integer} for further details. Lenstra gives an algorithm addressing the latter criterion in \S2 of \cite{lenstra1983integer}. The algorithm constructs a set of vertices in $R$ whose convex hull has positive volume. 
%\end{rem}
%\CP{is  remark 6.3 relevant for our purposes?}\RM{No, it was kept for completeness' sake. Have commented it out now.}

For the remainder of this section, we assume that $R$ is of the form required by~\problem{convex-enum}.
In theory, we could find maximum and minimum bounds for $R$ in each dimension, thus defining a $d$-dimensional box (\href{https:\/\/en.wikipedia.org\/wiki\/Hyperrectangle}{hyperrectangle}) $C_R$ such that $R\subset C_R$. A solution to \problem{convex-enum} is then found by enumerating all integer points in the box and checking each point to see if it lies in $R$. In practice, this strategy is less than optimal, as there may be numerous points in the box, but the set $R$ might contain no integer points. For an example, see~\cref{fig:flat-parallelogram}. Lenstra's algorithm circumvents this problem by using a constructive version of Khinchin's Flatness Theorem described below.

To state the Khinchin's Flatness Theorem we define the width of a convex body.
For a non-empty convex body $R$, the \emph{width} of $R$ along a vector $r$ is defined as 
$w_r(R)=\max\limits_{x\in R}\{r^Tx\} - \min\limits_{x\in R}\{r^Tx\}$. 
The width of $R$ is  $w(R):=\min\limits_{r\in\z^d, r\ne 0}\{w_r(R)\}$, the vector $r_{\min}(R)$ where the minimum is achieved is the flat direction of $R$.
The Flatness Theorem (\cref{flatness}) guarantees that the solution to \problem{convex-enum} is non-empty if $w(R)$ is above some constant $\omega(d)$. Banaszczyk proved that $\omega(d) =O(d)$ \cite{banaszczyk1995inequalities}.

\begin{thm}[Khinchin's Flatness Theorem, attributed to \cite{khinchin1948quantitative}]\label{flatness}
Let \change{$R \subseteq \r^d$} be a full-dimensional non-empty convex body. 
Either $R$ contains an integer point, or $w(R) \le \omega(d)$, where $\omega(d)$ is a constant depending on the dimension only.
    \end{thm}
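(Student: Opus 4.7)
The plan is to reduce the general convex-body statement to the case of an ellipsoid via John's theorem, and then to deduce ellipsoidal flatness from a transference inequality in the geometry of numbers.

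First, John's theorem produces an ellipsoid $E$ with center $c$ satisfying $E \subseteq R \subseteq c + d(E-c)$. Taking widths (which are translation invariant and scale linearly under dilations) yields $w_r(R)\leq d\cdot w_r(E)$ for every $r\in\mathbb{Z}^d\setminus\{0\}$, hence $w(R)\leq d\cdot w(E)$. It therefore suffices to prove the theorem for $E$, at the cost of a factor $d$ in the final constant.

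Second, apply the affine map $T$ taking $E$ to a Euclidean ball $B=B_r(p)$. Under $T$, $\mathbb{Z}^d$ is sent to a full-rank lattice $L:=T(\mathbb{Z}^d)$ with dual $L^{*} = T^{-\top}(\mathbb{Z}^d)$; the identity $v^{\top}x = (T^{-\top}v)^{\top}(Tx)$ gives $w_v(E) = 2r\lVert T^{-\top}v\rVert$ for every nonzero $v\in\mathbb{Z}^d$, whence $w(E) = 2r\cdot\lambda_1(L^{*})$, where $\lambda_1(L^{*})$ is the length of a shortest nonzero vector of $L^{*}$. If $R$ contains no integer point, then $B$ contains no point of $L$, so $\mathrm{dist}(p,L)\geq r$ and the covering radius $\mu(L):=\sup_{x\in\mathbb{R}^d}\mathrm{dist}(x,L)$ satisfies $\mu(L)\geq r$.

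Finally, invoke the transference inequality $\mu(L)\cdot\lambda_1(L^{*})\leq C(d)$, valid for every full-rank lattice with a dimension-only constant $C(d)$. This gives $\lambda_1(L^{*})\leq C(d)/\mu(L)\leq C(d)/r$, so $w(E)\leq 2C(d)$, and combining with the John step $w(R)\leq 2d\cdot C(d)=:\omega(d)$.

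The hard part will be this last transference inequality. Banaszczyk's sharp bound $C(d)=O(d)$, which is what the statement cites, rests on subtle Fourier-analytic properties of Gaussian measures supported on a lattice; this is the heart of the argument, not a routine calculation. If only the existence of some $\omega(d)$ is required, a weaker polynomial bound $C(d)=O(d^{3/2})$ can instead be extracted from Mahler's or Jarník's inequality together with Minkowski's second theorem, and still yields the claim as stated.
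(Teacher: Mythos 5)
The paper does not prove this theorem; it states it as a citation (attributed to Khinchin, with the sharp $\omega(d)=O(d)$ bound attributed to Banaszczyk). So there is no ``paper's approach'' to compare against, and the evaluation has to be of your argument on its own terms.

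Your John's-theorem-plus-transference route is a standard and correct way to establish the qualitative statement, and the reductions are set up properly. Two small points worth flagging. First, the identity $v^{\top}x=(T^{-\top}v)^{\top}(Tx)$ is stated as if $T$ were linear. When $T(x)=Ax+b$ is genuinely affine, you instead get $(A^{-\top}v)^{\top}(Tx)=v^{\top}x+v^{\top}A^{-1}b$; the extra constant cancels in the width $\max - \min$, so the conclusion $w_v(E)=2r\lVert A^{-\top}v\rVert$ is still correct, but as written the identity is false. Second, the chain of inequalities gives $\omega(d)\le 2d\cdot C(d)$, so even plugging in Banaszczyk's sharp transference constant $C(d)=O(d)$ you obtain $\omega(d)=O(d^2)$, not the $O(d)$ bound the paper mentions. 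That is perfectly fine for proving the statement as posed (only finiteness of $\omega(d)$ is asserted in the theorem), but it is worth being clear that your argument does not recover the sharper flatness constant; the $O(d)$-type bounds for $\omega(d)$ require working with Banaszczyk's generalization of the transference theorem to convex bodies directly, rather than passing through John's ellipsoid, which unavoidably costs a factor of $d$.

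One more presentational nit: the phrase ``so $\mathrm{dist}(p,L)\ge r$'' should be strict when $B$ is a closed ball with no lattice point, and you then want to feed a $\ge r$ into the final bound; the strict/non-strict bookkeeping is harmless but should be stated consistently. None of these affect the validity of the proof.
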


Above theorem shows that in the case of no integer points in $R$, convex body $R$ must have small width. Suppose now that the flat direction is known $r_{\min}(R)$ and we width is small.
Next we show how $r_{\min}(R)$ is used to find an integer in $R$.

Integer point enumeration in a d-dimensional convex body $R$ reduces to several instances of integer point enumeration in a $(d-1)$-dimensional convex body.
It is convenient to represent $\z^d$ as disjoint union of $d-1$ dimensional lattices in $\r^d$
$$
 \z^d = \bigcup_{k \in \z} \{ z : z^T r_{\min}(R')=k,\, z \in \z^d \} 
$$
with each lattice $L_k = \{ z : z^T r_{\min}(R')=k,\, z \in \z^d \}$ contained in the hyperplane $H_k = \{x\in\r^d: r_{\min}(R')^Tx = k\}$.
The proposition below bounds the number of such hyperplanes intersecting $R$.

\begin{prop}\label{flat-cor}
Let \change{$R\subseteq\r^d$} be a full-dimensional non-empty convex body. The number of hyperplanes of the form $H_k = \{x\in\r^d: r^Tx = k\},\, k\in\z$ intersecting  $R$ is bounded by $w_r(R)+1$.
\end{prop}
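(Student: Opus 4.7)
The plan is to reduce the counting problem to counting integers in a real interval. The key observation is that the map $\varphi:\r^d\to\r$ defined by $\varphi(x)=r^Tx$ is linear and therefore continuous, so its image $\varphi(R)$ on the convex body $R$ is a connected subset of $\r$, i.e.\ an interval. If $R$ is compact (which we may assume, since any bounded convex body can be replaced by its closure without changing which hyperplanes it meets, or at most adding boundary cases), this image is a closed interval $[a,b]$ with $a=\min_{x\in R} r^Tx$ and $b=\max_{x\in R} r^Tx$. By the very definition of $w_r(R)$, we have $b-a=w_r(R)$.

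Next I would observe that a hyperplane $H_k$ meets $R$ if and only if $k\in\varphi(R)=[a,b]$, so the quantity to bound is $\#\bigl([a,b]\cap\z\bigr)$. This is an elementary counting exercise: the integers in $[a,b]$ are exactly $\lceil a\rceil,\lceil a\rceil+1,\ldots,\lfloor b\rfloor$, so their number is $\lfloor b\rfloor-\lceil a\rceil+1$ when $\lceil a\rceil\le\lfloor b\rfloor$ and zero otherwise. Using $\lfloor b\rfloor\le b$ and $\lceil a\rceil\ge a$, we get $\lfloor b\rfloor-\lceil a\rceil+1\le b-a+1=w_r(R)+1$, which is the claimed bound.

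There is essentially no obstacle here — the argument is three lines once one recognizes that the image of a convex set under a linear functional is an interval. The only mild subtlety is handling the case where $R$ is not closed (an open convex body), but then $\varphi(R)$ is an interval with endpoints $a,b$ (possibly open), and the same bound on the number of contained integers applies verbatim since adding or removing endpoints can only affect the count when $a$ or $b$ is itself an integer, and in that case the weaker bound $b-a+1$ still dominates. Thus the proof is a one-paragraph argument combining convexity of $R$, linearity of $x\mapsto r^Tx$, and the trivial estimate for the number of integers in an interval of length $w_r(R)$.
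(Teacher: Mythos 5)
Your proof is correct, and the paper actually does not supply a proof for this proposition at all, so there is no competing argument to contrast with. Your argument is the natural one: convexity of $R$ makes the image of $x\mapsto r^Tx$ a sub-interval of $\mathbb{R}$, its length is $w_r(R)$ by definition, a hyperplane $H_k$ meets $R$ iff $k$ lies in that interval, and an interval of length $L$ contains at most $\lfloor L\rfloor+1\le L+1$ integers. One small note: the paper's definition of $w_r(R)$ already uses $\max$ and $\min$ over $R$, implicitly assuming these are attained (e.g.\ $R$ compact), so the caveat you raise about open $R$ is even less of an issue than you suggest — in the context in which the proposition is actually applied (Lenstra's algorithm on the convex body cut out by finitely many linear or quadratic constraints), $R$ is closed and bounded, and your clean closed-interval case is exactly what is needed.
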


Using the flat direction of $R$, we have reduced finding an integer point in $d$-dimensional convex body $R$ to finding an integer point in at most $w(R)+1$ $(d-1)$-dimensional convex bodies $R \cap H_k$. Using this approach the algorithm for finding an integer point in convex body $R$ (or determining that $R$ has no integer points) terminates 
in polynomial time when $d$ is fixed. It remains to discuss an algorithm for finding the flat direction of $R$.

We discuss an efficient approach finding an approximation to the flat direction of $R$ instead of the flat direction of $R$. More precisely, we will compute $r$ from $\z^d$ such that ratio $w_r(R) / w(R)$ is not too big. Such $r$ can be used instead of $r_{\min}(R)$ for the reduction to $(d-1)$-dimensional integer point enumeration problems discussed above.
Let us transform $R$ so that it is roughly spherical in shape, via an invertible $d\times d$ matrix $\tau$. Each point $y$ in $\tau R\cap\tau\z^d$ corresponds to a point \change{$x\in R\cap\z^d$} by $x = \tau^{-1}y$. More precisely, we can find $\tau$ such that $\tau R$ satisfies 
\begin{equation}
\label{eq:spherical-condition}B(p,\delta)\subset \tau R \subset B(p, \Delta)
\end{equation}
 for balls with center point $p$ and radii $\delta$ and $\Delta$, such that the ratio $\frac{\Delta}{\delta}$ is bounded by a constant dependent only on $d$, $c_1(d)$; Lenstra takes $c_1(d)= 2d^{3/2}$. See \fig{balls} for an illustration. 
 For the regions considered in this work it is easy to find $\tau$ with a better ratio $\Delta/\delta$. 
Using above we see that $R$ contains an ellipsoid $R^\prime = \tau^{-1}B(p,\delta)$, for which width can be calculated more efficiently:
$$
 w(R')=w(\tau^{-1}B(0,\delta)) = \min_{r \in \z^d, r\ne 0} 2\delta \cdot \max_{\nrm{x} \le 1} r^T \tau^{-1} x = 2\delta \cdot \min_{r \in \z^d, r\ne 0} \nrm{ (\tau^{-1})^T r} 
$$
In other words, finding the flat direction $r'_{\min}$ of $\tau^{-1}B(p,\delta)$ is equivalent to finding the shortest vector of 
lattice $(\tau^{-1})^T \z^d$ which is the dual lattice of $\tau \z^d$.
Then applying \cref{flatness} to $R^\prime$ determines whether an integer point in $R$, if it exists, lies in $R^\prime$ or the width of $R'$ is bounded by $\omega(d)$.  In the latter case,
using \cref{eq:spherical-condition}, we see that $w(R)$ is bounded by $\Delta/\delta\cdot w(R') \le c_1(d)\omega(d)$, that is a constant dependent on the dimension only. 
Lenstra's algorithm (\algo{gen-enum}) uses approximation to the shortest vector of $(\tau^{-1})^T \z^d$.
More precisely, let $b_1,\ldots ,b_d$ being an LLL-reduced basis\footnote{\change{So named for the authors Lenstra, Lenstra and Lov{\'a}sz for their basis reduction algorithm \cite{lenstra1982factoring}.}} of lattice $\tau \z^d$,
and let $b^\ast_1,\ldots ,b^\ast_d$ be Gram-Schmidt orthogonalization of $b_1,\ldots ,b_d$, 
then vector $b^\ast_d / \nrm{b^\ast_d}^2$ belongs to the dual lattice  $(\tau^{-1})^T \z^d$
and is an approximation to the shortest vector of the dual lattice.

\begin{figure}[h!]
  \begin{subfigure}{0.49\textwidth}
    \centering
    \includegraphics[height=0.7\textwidth]{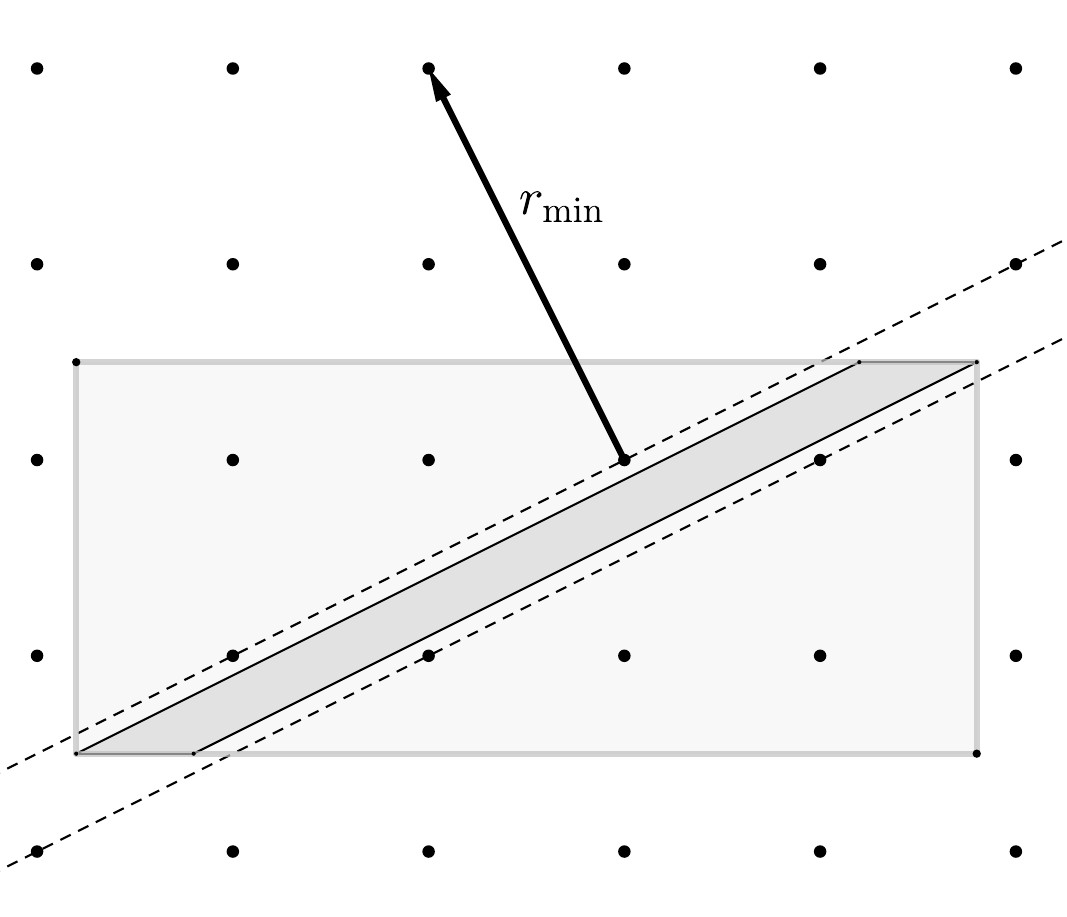}
    \caption{\label{fig:flat-parallelogram}Parallelogram $R$ with no integer points and bounding box $C_R$ with many integer points. Vector $r_{\min}(R)$ is the flat direction of the parallelogram.}
  \end{subfigure}
  \hfill
  \begin{subfigure}{0.49\textwidth}
    \centering
    \includegraphics[height=0.7\textwidth]{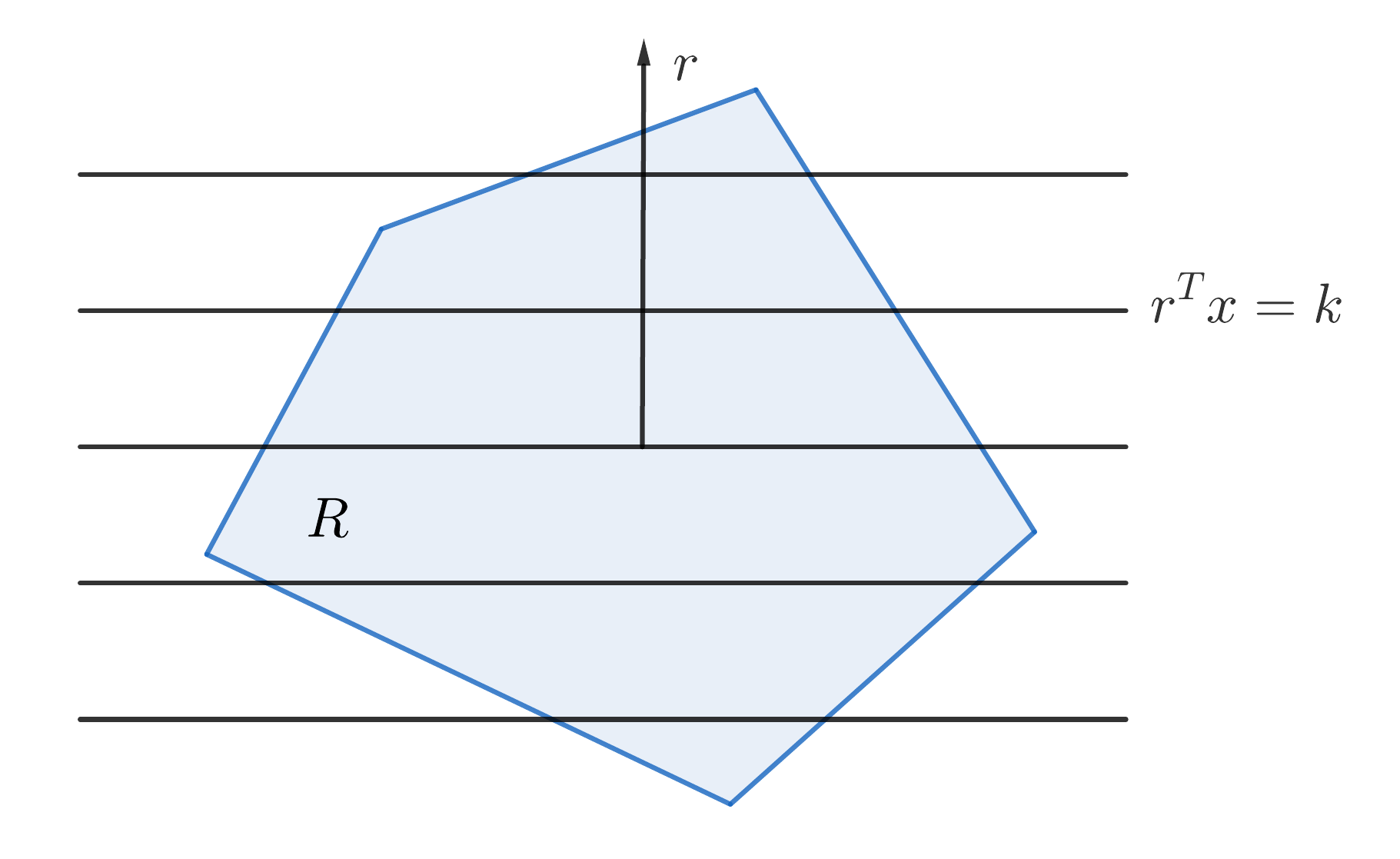}
    \caption{\label{fig:hyperplanes} The hyperplanes $\{x\in R:r^Tx=k\}$ for $k \in \z$ intersecting $R$.}
  \end{subfigure}
  \caption{Integer point enumeration in convex bodies and the flat direction.}
\end{figure}

 %
%\algo{gen-enum} runs in polynomial time in the length of the input for any fixed dimension $d$~\cite{lenstra1983integer}. %\CP{polytime in what? at first sight there could be an exponential number of points in $X$}

\begin{algorithm}[!h]
\SetKwData{Left}{left}\SetKwData{This}{this}\SetKwData{Up}{up}
\SetKwFunction{Union}{Union}\SetKwFunction{FindCompress}{FindCompress}
\SetKwInOut{Input}{Input}\SetKwInOut{Output}{Output}
\LinesNumbered
\Input{A $n\times d$ real-valued matrix $A$ and a $d$-dimensional column vector $b$ defining the convex body $R: = \{x\in\r^d: Ax\le b\}$.}
\Output{A subset $X \subset \{x:x\in R\cap\z^d\}$}
\BlankLine
$X\leftarrow\emptyset$\;
Compute $\tau$ such that $\tau R$ is ``roughly spherical'' with center $p$ as in \cref{eq:spherical-condition}\label{step:tau}\;
$\mathcal{L}\leftarrow \tau\z^d$, with LLL reduced basis $b_1,\dots,b_d$ such that $\abs{b_1}\le\dots\le\abs{b_d}$\label{step:basis}\;
$b_1^\ast, \ldots, b_d^\ast$ is a Gram Schmidt Orthogonalised basis  $b_1,\dots,b_d$ \;
$r_{min} \leftarrow$ coordinates of $b^\ast_d / \abs{b^\ast_d}^2$ in basis $(\tau^{-1})^T$, an approximation to $r_{\min}(R)$ \;
$k_{min}\leftarrow\ceil{\min\limits_{y\in R}\{(r_{\min})^T y\}}$, using linear programming\;
$k_{max}\leftarrow\floor{\max\limits_{y\in R}\{(r_{\min})^Ty \}}$, using linear programming\;
Let $T$ be invertible integer matrix such that $\tau T$ is matrix with columns $b_1,\ldots,b_d$, let $\tilde{A} = A T$ \;
\For{$k_{min}\le k \le k_{max}$\label{step:hyper}}{
$\tilde{x}_{d} \leftarrow k$\;
$A'\leftarrow \tilde{A}_{[1:d-1]}$, the $n\times (d-1)$ matrix consisting of the first $d-1$ columns of $\tilde{A}$\;
$b'\leftarrow b-\tilde{A}_{[d]}k$, where $\tilde{A}_{[d]}$ denotes the $d^\text{th}$ column of $\tilde{A}$\;
Run \algo{gen-enum} on inputs $A'$ and $b'$ to obtain $X' \subset \{\tilde{x}:=(\tilde{x}_1,\dots, \tilde{x}_{d-1})\in \z^{d-1}: A'x'\le b'\}$\;
%$x\leftarrow (x_1,\dots,x_d)^T$\;
$X\leftarrow X\cup T \{(\tilde{x}_1,\dots,\tilde{x}_d): (\tilde{x}_0,\dots,\tilde{x}_{d-1})\in X'\}$\;}
Output $X$\label{step:outenum}\;
\caption{Integer point enumeration in a bounded, positive-volume convex body satisfying $R=\{x\in\r^d: Ax \le b\}$. \label{alg:gen-enum}}
\end{algorithm}

\begin{figure}[h!]
    \centering
    \includegraphics[width=0.35\textwidth]{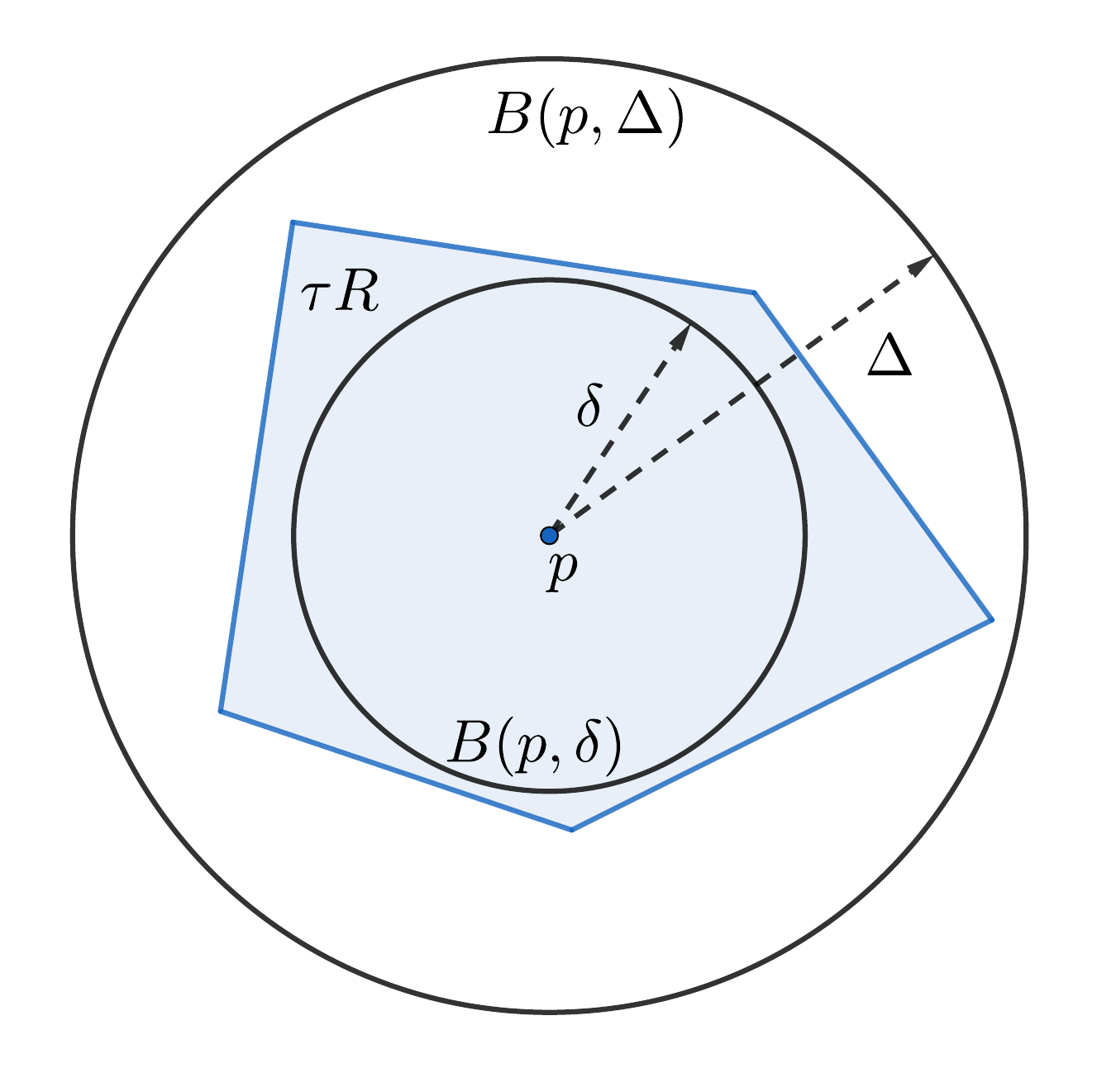}
    \caption{The concentric balls $B(p,\Delta)$ and $B(p,\delta)$, centered at $p$ with radii $\Delta$ and $\delta$, respectively, such that $B(p,\delta)\subset \tau R \subset B(p, \Delta)$ and $\Delta/\delta\leq c_1(d)$.}
    \label{fig:balls}
\end{figure}

% To construct $\tau$, the ellipsoid algorithm \cite{gacs1981khachiyan} is applied iteratively to find a sequence of $d$-simplices of increasing volume contained in $R$. 
%Each repetition finds a $d$-simplex of successively larger volume by replacing an extremal point of the previous simplex with an extremal point of $R$. Altering the choice for $c_1(d)$ \CP{sounded constant above, and arbitrary here. Truth in between, determined by underlying LLL constant choice?}\RM{Yes, probably a detail that can be omitted.} affects the ratio of volumes between successive simplices by Lemma 1, \cite{lenstra1983integer}. 
%The extrema of the largest simplex still contained in $R$ are mapped to the unit vectors of $\r^{d+1}$, which determine the transformation $\tau$.

%Step \ref{step:basis} of \algo{gen-enum} computes a `reduced' basis, $b_1,\dots,b_d$, for the lattice $\mathcal{L}.$ Applying the LLL reduction algorithm (\cite{lenstra1982factoring}, \S1) to any basis for $\mathcal{L}$ ensure that the $b_i$ are `nearly orthogonal'. This means $\prod_{i=1}^d\abs{b_i}\le c_2(d)\cdot\det(\mathcal{L})$, where $\det{\mathcal{L}}$ is the determinant of $\mathcal{L}$ and $c_2(d)=2^{d(d-1)/4}$. It follows that there are only polynomially many hyperplanes to be considered in Step \ref{step:hyper}. 

For fixed dimension $d$, Lenstra's algorithm (\algo{gen-enum}) runs in polynomial time in the length of the input \cite{lenstra1983integer}. This algorithm can be modified 
to enumerate integer points in $R$ and require polynomial time per point.
In practice, we use quadratic constraints $(x-p)^T Q(x-p) \le 1$ for point $p$ and symmetric matrix $Q$ in \cref{prob:convex-enum} in addition to the linear constraints. 
The approach in this section can be modified to handle quadratic constraints. 
In this case, in \algo{gen-enum}, $k_{\min}, k_{\max}$ are found by solving quadratic optimization 
problems as opposed to using linear programming.

%\CP{more details than necessary, in the sense that a reference and complexity claim would suffice for our purposes?}

%\begin{prop}[Correctness]
%\algo{gen-enum} is correct, %finds a solution to \problem{convex-enum}, if one exists, 
%and for fixed dimension $d$ the algorithm runs in polynomial time \cite{lenstra1983integer}.
%\end{prop}
%\AP{This proposition feels awkward since it just makes a statement about correctness of an existing algorithm.}

\subsection{Point enumeration in a parallelotope}
\label{sec:parallelotope-enumeration}
%For problems related to 1D region we can have a special approach. We can use CVP in unit-group lattice to reshape the parallelogram so all of its dimensions are almost equal. 
In this section, we will show how to exploit the number-theoretic structure present in the integer ring $O_K$ for the instances of point enumeration relating to the magnitude approximation problems. 
These have a special shape: as described in \sec{approx-solutions}, point enumeration occurs in a $d$-dimensional box $I\times[0,1]^{d-1}$ where $d=[K:\q]$. 
% Recall that $I$ constrains the multiplicative norm $n$ of the top-left entry in an approximating unitary.
%
Specifically, we are interested in the following problem.

\begin{prob}[Box Enumeration Problem]\label{prob:box-enumeration}
Let $K$ be a totally real number field of degree $d$ and let $O_K$ be its ring of integers. Given real numbers $\{g_j,h_j \ |\ g_j< h_j, j =1,\ldots,d\}$,
find $n$ in $O_K$ contained in the box, that is $n$ such that $\sigma_j\left(n\right) \in [g_j,h_j]$ or determine that no such $n$ exists.
\end{prob}

We rely on the same point enumeration approach as in the previous section, however in this special case it is easier to predict the number of integer points in the box.
Interestingly the number of elements $n$ from $O_K$ in the box is proportional to the volume of the box $\prod_j (h_j-g_j)$.
This observation was first made in \cite{Selinger} and later generalized in \cite{Kliuchnikov2015b}.
The lower-bound on the number of integer points is summarized by the following proposition.

\begin{prop}
Let $K$ be a totally real number field, 
there exists a constant $V_0$ dependent only on $K$ such that every box enumeration problem~(\cref{prob:box-enumeration}) with box volume at least $V_0$ has 
at least one solution.
\end{prop}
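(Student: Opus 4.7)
The plan is to combine Minkowski's lattice-point theorem with Dirichlet's unit theorem in a two-step reduction: first establish that any axis-aligned box with all sides sufficiently large contains an element of $O_K$, then use units to reshape an arbitrary box of large volume into one satisfying this side-length condition.

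First I would embed $O_K$ into $\mathbb{R}^d$ via $\sigma := (\sigma_1, \ldots, \sigma_d)$; the image $\Lambda := \sigma(O_K)$ is a full-rank lattice with covolume $\sqrt{|\mathrm{disc}(K)|}$. Fix any fundamental parallelepiped $F$ for $\Lambda$ and let $s_K > 0$ be the largest edge length of the smallest axis-aligned bounding box enclosing $F$. If a box $B = \prod_j [g_j, h_j]$ satisfies $h_j - g_j \geq s_K$ for every $j$, then $B$ contains a translate $p + F$ of the fundamental parallelepiped; taking $\lambda$ to be the unique lattice point with $-p + \lambda \in F$ (which exists by definition of a fundamental domain) yields $\lambda \in p + F \subset B$, so $B$ meets $\Lambda$.

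Second, I would use the unit group to reshape an arbitrary box into one meeting the side-length condition. For $u \in O_K^\times$, the map $T_u : (x_j) \mapsto (\sigma_j(u) x_j)$ preserves $\Lambda$ (because $uO_K = O_K$) and preserves Lebesgue volume (because $\prod_j \sigma_j(u) = N_{K/\mathbb{Q}}(u) = \pm 1$), so $B \cap \Lambda \neq \emptyset$ if and only if $T_u(B) \cap \Lambda \neq \emptyset$. By Dirichlet's unit theorem, the log-embedding $u \mapsto (\log|\sigma_j(u)|)_{j=1}^d$ sends $O_K^\times$ onto a full-rank lattice $\Lambda_{\log}$ inside the trace-zero hyperplane $H := \{y \in \mathbb{R}^d : \sum_j y_j = 0\}$; let $D_K$ be the $\ell^\infty$-diameter of a fundamental domain of $\Lambda_{\log}$ in $H$.

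To conclude, given any box with side lengths $L_j := h_j - g_j$ and volume $V := \prod_j L_j$, the vector $(\log L_j - \tfrac{1}{d}\log V)_j$ lies in $H$, so I can pick $u \in O_K^\times$ whose log-embedding is within $\ell^\infty$-distance $D_K$ of the negative of this vector. Then each side of $T_u(B)$ has length $|\sigma_j(u)| L_j$ lying in $[V^{1/d} e^{-D_K},\, V^{1/d} e^{D_K}]$. Setting $V_0 := (s_K e^{D_K})^d$ therefore guarantees every side of $T_u(B)$ exceeds $s_K$, so the first step furnishes an element of $\Lambda$ in $T_u(B)$, and applying $T_{u^{-1}}$ returns an element of $O_K$ whose image lies in $B$. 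The main technical point is the bookkeeping of $s_K$ and $D_K$ purely in terms of $K$: both follow from standard material (Minkowski applied to $\Lambda$ for $s_K$, compactness of a fundamental domain of $\Lambda_{\log}$ for $D_K$), after which $V_0$ depends only on the discriminant and regulator of $K$ and not on the particular box.
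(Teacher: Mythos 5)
Your proof is correct and follows essentially the same two-step strategy as the paper: reduce to boxes whose sides all exceed a fixed threshold via the fundamental-parallelepiped tiling of $\sigma(O_K)$, then use Dirichlet's unit theorem (the log-embedding lattice) to reshape an arbitrary box of large enough volume into one satisfying that side-length condition, with $V_0$ controlled by the geometry of the two lattices. One small slip worth flagging: the quantity $s_K$ comes from the tiling argument you actually give (which needs neither symmetry nor the origin in the box), not from Minkowski's theorem as your closing parenthetical suggests; the argument in the body is fine, only the attribution is off.
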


A corollary of the above proposition, is that any box contains at least $\lfloor \prod (h_j-g_j)/V_0 \rfloor $ elements of $O_K$.
In the rest of this section we sketch the proof of the proposition. 
The proof proceeds in two steps.
First we observe that if the box contains a certain parallelotope $P$ centered at zero shifted by any vector $t$, then it must contain an element of $O_K$.
Second we show that when the volume of the box is at least $V_0$, then the point enumeration problem is equivalent 
to point enumeration in another box $[g'_1,h'_1]\times \ldots \times [g'_d,h'_d]$ such that this box contains the parallelotope $P$ translated by the center of the box.

Recall that the ring of integers $O_K$ corresponds a $d$-dimensional lattice $L$ in $\r^d$.
Let $k_i$ be an integral basis for $O_K$ and $\sigma_i$ be embedding of $K$ into $\r$, the lattice basis is given by 
$$
 b_j = (\sigma_1(k_j),\ldots,\sigma_d(k_j))^T, \text{ for } j=1,\ldots,d
$$
and the corresponding basis matrix $B$ has columns $b_j$. Recall that the parallelotope 
$$
 C(B) = \left\{ Bx : x_k \in [-1/2,1/2) \right\}
$$
translated by lattice points from $L$ defines partition of $\r^d$, that is for any point $x$ in $\r^d$, 
there is a unique lattice point $n$, such that $x \in n + C(B)$. 
Equivalently, there is always a lattice point in $x - C(B)$.
For this reason, if a box contains $P = -C(B)$, it must 
contain at least one lattice point. It is easy to find such a point by computing $B^{-1} x$ 
and rounding all the coordinates to the nearest integer.
For the box to contain the shifted parallelotope $P= -C(B)$ the following constraints should hold:
$$
 h_j - g_j \ge \max_{x \in C(B)} x_j - \min_{x \in C(B)} x_j \text{ for all } j = 1,\ldots,d
$$
This completes the first step of the proof.

For the second step of the proof we use units and the unit group of $O_K$.
Let $u$ be a unit of $O_K$, that is $u^{-1}$ is also in $O_K$.
First note that $z$ from $O_K$ is contained in the box $[g_1,h_1]\times \ldots \times [g_d,h_d]$, 
if and only if $u z$ is contained in the transformed box $(\sigma_1(u) [g_1, h_1])\times \ldots \times (\sigma_d(u)[g_d,h_d])$. 
Integers $u z$ belong to the box with dimensions $|\sigma_j(u)|(h_j - g_j)$.
Note that the overall volume of the box is the same because $\prod_j |\sigma_j(u)| = 1$.

The next step is to show that if the box has sufficiently big volume, we can always 
find a unit $u$, such that the transformed box contains shifted copy of $C(B)$, that is:
\begin{equation} \label{eq:unit-conditions}
|\sigma_j(u)|(h_j - g_j) \ge \max_{x \in C(B)} x_j - \min_{x \in C(B)} x_j \text{ for all } j = 1,\ldots,d
\end{equation}
For this we use the unit group of $O_K$.
Recall that according to Dirichlet's Unit Group theorem, any unit of $O_K$ can be written as: 
$$
u = \pm u_1^{m_1} \ldots u_{m_1}^{m_{d-1}}, \text{ for } m_j \in \z
$$
Substituting this expression for $u$ in terms of $u_k$ above into \cref{eq:unit-conditions}
and taking $\log$ of both sides of the inequality gives us:
\begin{equation} \label{eq:shifted-lattice}
\sum_{i=1}^{d-1} m_i \log|\sigma_j(u_i)| + \log(h_j - g_j) \ge \log(\max_{x \in C(B)} x_j - \min_{x \in C(B)} x_j) \text{ for all } j = 1,\ldots,d
\end{equation}
Let us discuss the geometric interpretation of the above inequality.
Vectors 
$$
b'_i = (\log|\sigma_1(u_i)|,\ldots,\log|\sigma_d(u_i)|)^T, \text{ for } i = 1,\ldots,d-1
$$
define a $d-1$-dimensional lattice $L'$ in $\r^d$ with basis matrix $B'$ (known a the unit lattice of $O_K$), contained in sub-space $x_1 + \ldots + x_d = 0$.
For the inequality to be true, the intersection between the shifted lattice 
$$
(\log(h_1 - g_1),\ldots,\log(h_1 - g_1))^T + L'
$$
and the direct product of half-open intervals 
$$
 R =  [\log(\max_{x \in C(B)} x_1 - \min_{x \in C(B)} x_1),+\infty)\times \ldots \times  [\log(\max_{x \in C(B)} x_d - \min_{x \in C(B)} x_d),+\infty)
$$
must be non-empty.
Note that the shifted lattice is contained in the subspace $x_1 + \ldots + x_d = \log \prod_j(h_j - g_j) $ which is determined by the box volume $ V = \prod_j(h_j - g_j)$.
To ensure that the inequalities in \cref{eq:shifted-lattice} have a solution, 
it is sufficient to ensure that the intersection of $R$ and the subspace $x_1 + \ldots + x_d = \log  V$ contains a shifted parallelotope $t + C(B')$ for some shift vector $t$.
The bigger the volume $V = \prod_j(h_j - g_j)$, the bigger intersection between $R$ and subspace $x_1 + \ldots + x_d = \log V$. 
In other words, there exists $V_0$, such for all $V \ge V_0$ intersection between $R$ and subspace $x_1 + \ldots + x_d = \log V$ contains $t + C(B')$ for some shift vector $t$.
This completes the proof.

In the above proof, the value $V_0$ depends on the choice of integral basis of $O_K$ and fundamental units $u_1,\ldots,u_{d-1}$.
Using reduced bases for $O_K$ and the unit lattice can improve $V_0$. 
Further improvements can be achieve by using different fundamental domains for the lattices.
One can replace $C(B)$ with $C(B^*)$, where $B^*$ is the matrix corresponding to Gram-Schmidt orthogonalisation of basis $b_1,\ldots,b_d$. 
In this case rounding coordinates of $B^{-1} x$ is replaced by the Nearest-Plane algorithm. 
For further improvement, one can replace $C(B)$ with lattice's Voronoi cell and rounding with solving the Closest Vector Problem.
Using an approach similar to the one described above one can also show the following:

\begin{prop}
Let $K$ be a totally real number field, 
there exists a constant $V'_0$ dependent only on $K$ such that every box enumeration problem~(\cref{prob:box-enumeration}) with box volume at most $V'_0$ has 
at most one solution.
\end{prop}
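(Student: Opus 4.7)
The plan is to prove the contrapositive: if a box $[g_1,h_1]\times\cdots\times[g_d,h_d]$ contains two distinct elements of $O_K$, then its volume is at least $1$. Any choice $V'_0 < 1$ (for example $V'_0 = 1/2$, which depends only on $K$ in a vacuous sense) then suffices.

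First, suppose $n_1 \ne n_2$ are two elements of $O_K$ with $\sigma_j(n_k) \in [g_j,h_j]$ for all $j$ and $k=1,2$. Set $n := n_1 - n_2 \in O_K \setminus \{0\}$. Applying the triangle inequality in each real embedding gives $|\sigma_j(n)| \le h_j - g_j$ for every $j$.

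Next, take absolute values on both sides of the norm identity to get
\[
  |\mathrm{Nrm}_{K/\q}(n)| = \prod_{j=1}^d |\sigma_j(n)| \le \prod_{j=1}^d (h_j - g_j) = V.
\]
Since $n$ is a nonzero algebraic integer, $\mathrm{Nrm}_{K/\q}(n)$ is a nonzero rational integer, so $|\mathrm{Nrm}_{K/\q}(n)| \ge 1$. Combining yields $V \ge 1$, and the contrapositive is the claim.

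There is no real technical obstacle; the argument is a direct use of the integrality of the norm on $O_K$, and it mirrors, in spirit, the geometric-of-numbers setup from the preceding proposition, but it replaces the Dirichlet-unit/parallelotope covering argument by the elementary bound $|\mathrm{Nrm}_{K/\q}(n)|\geq 1$. The only conceptual subtlety is that this bound is essentially tight: the unit group of $O_K$ contains elements $u$ with $|\mathrm{Nrm}_{K/\q}(u)| = 1$, and by Dirichlet's unit theorem one can realize prescribed sign patterns for the $\sigma_j(u)$, so boxes of volume exactly $1$ can indeed contain multiple points; hence one cannot take $V'_0 \ge 1$ in general, and any constant strictly below $1$ is optimal up to that threshold.
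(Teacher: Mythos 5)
Your proof is correct, and it takes a genuinely different and substantially more elementary route than the one the paper gestures at. The paper (which gives no details, saying only ``using an approach similar to the one described above'') would presumably replay the unit-lattice/parallelotope covering argument from the existence proposition, producing a constant that depends on the fundamental domain of $O_K$ and its unit lattice. You instead bypass all of that: if $n_1 \neq n_2$ both lie in the box, then $n = n_1 - n_2 \in O_K \setminus \{0\}$ has $|\sigma_j(n)| \le h_j - g_j$ for each real embedding, so $1 \le |\mathrm{Nrm}_{K/\mathbb{Q}}(n)| = \prod_j |\sigma_j(n)| \le V$. This uses nothing beyond the fact that a nonzero algebraic integer has nonzero rational-integer norm, and it yields a constant that is uniform over all totally real $K$ — any $V'_0 < 1$ works — rather than one that must be extracted from lattice geometry. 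The explicit example $\{0,1\}$ inside $[0,1]^d$ (volume exactly $1$) shows $V'_0 = 1$ fails, so the threshold is sharp. One small caveat in your closing commentary: the claim that Dirichlet's unit theorem lets you realize \emph{prescribed} sign patterns for $\sigma_j(u)$ is imprecise — the achievable sign patterns of units form a subgroup of $\{\pm1\}^d$ that need not be the whole group — but this remark is tangential and does not affect the proof, since the elementary example $\{0,1\}$ already establishes tightness.
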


\section{Relative norm equations}\label{sec:normeqsolv}

\change{In Section \ref{sec:approx-solutions}, we explain how each solution to point enumeration gives rise to a relative norm equation, which must be solved to complete a solution to some approximation problem. In this section, we provide a general approach for solving such relative norm equations.}

\change{We begin by giving a brief overview of the approaches to the relative norm equation problem stated below, which has been studied in the circuit synthesis literature:}

\begin{prob}\label{prob:rel-norm}
Let $K$ be a totally real number field with extension $L=K(i)$, such that $i^2=-1$. Let $O_K$ and $O_L$ be the respective integer rings. Given $r$ from $O_K$, compute $m$ from $O_L$ such that $mm^\ast = r$, or determine that no such $m$ exists.
\end{prob}

\change{In \cite{Selinger,RossSelinger2014}, the above problem is solved for $K=\mathbb{Q}(\sqrt{2})$ by relying on the Euclidean algorithm in $\mathbb{Z}[e^{i\pi/4}]$. The algorithm is polynomial time, assuming access to a factoring oracle. In \cite{KliuchnikovMaslovMosca2013}, the authors rely on a more general approach to solving the relative norm equation problem, which is available in the PARI/GP software package and is based on S-units. In \cite{Kliuchnikov2015b}, an algorithm that works for a wide range of number fields is described. The algorithm is shown to be polynomial time, assuming access to a factoring oracle, and works for number fields with a non-trivial class group. Although the exposition in \cite{Kliuchnikov2015b} is very technical, we provide here a simplified version of a similar algorithm that works for the special cases common in practice.}

\change{In our analysis of Clifford+T and Clifford+$\mathrm{\sqrt{T}}$ gate sets, we find that we need to solve a more general problem than Problem \ref{prob:rel-norm}. Otherwise, we would miss some of the solutions to the circuit synthesis problems we consider. This subtlety is addressed in a gate-set specific way for Clifford+T in \cite{Ross2015,KliuchnikovMaslovMosca2013}. Here, we provide an approach that generalizes to many gate sets and also discuss the reduction of the relevant more general problem to Problem \ref{prob:rel-norm} in Section \ref{sec:rel-norm-shortcut}. The reduction holds for Clifford+T and Clifford+$\mathrm{\sqrt{T}}$.}

Next we review the solution to \cref{prob:rel-norm} 
and some simplifying assumptions.
Note that for this problem to have a solution, it is necessary (but not sufficient) that $r$ is totally positive in $O_K$ i.e. $\sigma_k(r)>0$ for all homomorphisms from $K$ into $\r$ \cite{cohen1993course}. From now on we assume this is the case.

The algorithm for solving norm equations given here (\algo{rel-norm}) takes advantage of a number of properties specific to the examples of $L$ and $K$ that we consider. In particular, we look at fields $L$ and $K$ which are Galois fields, and whose rings of integers are principle ideal domains. 
For more general gate sets, Algorithm 7.5.15 of Cohen \cite{Cohen2012} gives a method for solving relative norm equations, covering both Galois and non-Galois extensions. 
%
%The method presented here borrows heavily from \cite{Kliuchnikov2015a}.
We use the method from \cite{Kliuchnikov2015a} and justify that for the fields we consider a solution to a relative norm equation can be found in polynomial time given factoring oracle.
%\AP{Let's add a bit more here to motivate why we bother to present the method at all, rather than simply referring to \cite{Kliuchnikov2015a}.}
%
Examples of fields $K$ and $L$ of interest to us are provided in~\tab{gates} (\sec{approx-solutions}).
Our approach to solve \problem{rel-norm} uses common  properties of these fields, which are captured in the following definition.
\begin{dfn}[Common properties of relevant fields]\label{defn:comprops}
 Let $i$ be such that $i^2 = -1$. We define $K$ and $L=K(i)$ as fields having the following properties:
\begin{enumerate}
    \item $L/K$ is a cyclic Galois extension,
    \item $O_K$ and $O_L$ are principal ideal domains,
    \item $[O_L^\times :  WO^{\times}_K]=1$ or $2$, where $W$ is the group of roots of unity in $L$,
    \item $O_L$ is Euclidean with respect to the field norm of $L$ ($O_L$ is \emph{norm-Euclidean}), 
\item The generators of the unit group of $O_L$ are known,
    \item Given $u$ a totally positive unit in $O_K$, there exists a unit $w\in O_L$ such that $ww^\ast = u$.
\end{enumerate}
\end{dfn}

%\CP{Maybe identify a minimal set of properties from which the other ones can be derived? For example, last property would follow from previous ones?}

The fields which we consider in \sec{approx-solutions}, (see~\tab{gates}), are cyclotomic fields, and satisfy \defin{comprops} \cite{washington1997introduction}. For the interested reader, Lemmermeyer provides a survey of cyclotomic fields known to be norm-Euclidean in \cite{lemmermeyer1995euclidean}. 

%\AP{Since we are just referring the reader to other sources, I suggest removing this Lemma, and just saying the same thing as a paragraph.}
%

\subsection{Solution overview}

We will describe how to determine the existence of a solution and how to compute one when it exists. We require the following definition.
\begin{dfn}\change{ Let $K/\mathbb{Q}$ be an extension of the rational numbers. Let $p$ be an ideal and let $p = \prod_iq_i^{e_i}$ be the prime decomposition of $p$ in $K$. We say
	\begin{itemize}
	\item $p$ \emph{splits} in $K$ if $i>1$
	\item $p$ is ramified in $K$ if $e_i >1$ for some $i$, and
	\item $p$ is \emph{inert} in $K$, otherwise.
	\end{itemize}}
\end{dfn}

\algo{rel-norm} summarizes the method for solving relative norm equations in $O_K$. First, we compute the prime factorization of the absolute norm $R = N(r) \in \mathbb{Z}$. Suppose $R = \prod_kp_k^{v_k}.$ Each prime $p_k$ is factored into prime $O_L$ ideals $\mathfrak{p}^{(k)}_i$, for which we compute the ideal generators $\xi^{(k)}_i$ such that $\xi^{(k)}_i O_L = \mathfrak{p}^{(k)}_i$. The $\xi^{(k)}_i$ have corresponding primes $\eta_i^{(k)}$ in $O_K$. Through trial division of $r$ by each prime $\eta_i^{(k)}\in O_K$, we compute a representation of $r$ as a product of primes in $O_K$, $r = u \prod{_{j}} \eta_j^{e_j}$ (up to multiplication by a unit $u$). If $e_j$ is even for all relatively inert $\eta_j$, the relative norm equation is solvable (\lemm{even powers}). Supposing a solution $m$ exists, we compute $w$, a unit of $O_L$, such that $ww^\ast=u$ then write $m$ in the form above.

\begin{algorithm}[!hp]
\SetKwData{Left}{left}\SetKwData{This}{this}\SetKwData{Up}{up}
\SetKwFunction{Union}{Union}\SetKwFunction{FindCompress}{FindCompress}
\SetKwInOut{Input}{Input}\SetKwInOut{Output}{Output}
\LinesNumbered
\Input{$r$ in $O_K$}
\Output{$m$ in $O_L$ such that $mm^\ast = r$, or \textit{No solutions.}}
\BlankLine
 Compute $R\leftarrow N(r)\in\z$\label{step:Rnorm}\;
Compute prime factorization $R=\prod_kp_k^{v_k}$\label{step:rfac}\;
For each $p_k$, apply \algo{ok-primes} to find all prime $O_L$ ideals $\mathfrak{p}_i$ such that $p_k O_L \subset \mathfrak{p}_i$, corresponding ideal generators $\xi_i$ and  primes in $O_K$, $\eta_i$.\label{step:forins}\;
%\For{$p_k$ \textnormal{in} $\prod_kp_k^{v_k}$}{
%\textnormal{Factor $p_k$ into prime $O_L$ ideals $\mathfrak{p}_i$}\label{step:primefac}\;
%\textnormal{Compute ideal generators $\xi_i$ such that %$\xi_iO_L=\mathfrak{p}_i$}\label{step:primegens}\;
%Find $\eta_i$, prime in $O_K$, corresponding to each $\xi_i$ (see~\algo{ok-primes}) %\label{step:smallprime}\;
%}
Find integers $\{e_i\}$ and $u$, a unit of $O_K$, such that $r= u\prod_j\eta_i^{e_i}$\label{step:rinprime}\; 
\eIf{all $e_i$\textnormal{ even for all relatively inert }$\eta_i$\label{step:existcheck}}{
Compute $w\in O_L$ such that $ww^\ast = u$\label{step:w}\;
Output solution $m=w\prod\limits_{\substack{i:\,\eta_i\\\textnormal{ inert}}}\eta_i^{e_i/2}\prod\limits_{\substack{i:\,\eta_i\textnormal{ split/}\\\textnormal{ramified}}}\xi_i^{e_i}$\label{step:output} \;}{\textnormal{Output }\textit{No solution.}}
\caption{Algorithm for solving relative norm equations.\label{alg:rel-norm}}
\end{algorithm}\DecMargin{1em}

\begin{algorithm}[!hp]
\SetKwData{Left}{left}\SetKwData{This}{this}\SetKwData{Up}{up}
\SetKwFunction{Union}{Union}\SetKwFunction{FindCompress}{FindCompress}
\SetKwInOut{Input}{Input}\SetKwInOut{Output}{Output}
\LinesNumbered
\Input{$p$, prime}
\Output{The set $\{(\eta_i,\xi_i,\mathfrak{p}_i)\}$, where $p=\prod\mathfrak{p}_i$, $\xi_iO_L=\mathfrak{p}_i$ and $\eta_i\in O_K$ prime above $p$. }
\BlankLine
Factor $p$ into prime $O_L$ ideals $\mathfrak{p}_i$\label{step:primefac}\;
Compute ideal generators $\xi_i$ such that $\xi_iO_L=\mathfrak{p}_i$\label{step:primegens}\;
\eIf{\textnormal{there exists }$v$\textnormal{, a unit in }$O_L^\times/O_K^\times$\textnormal{, such that }$v\xi_i$\textnormal{ prime in }$O_K$ \label{step:smallprime}}{Set $\eta_i:=v\xi_i$\;}{Set $\eta_i:=\xi_i\xi_i^\ast$\;}
Output $(\eta_i,\xi_i,\mathfrak{p}_i)$ for each $i$.

\caption{Subroutine of \algo{rel-norm} for computing primes above $p$ in $O_K$.\label{alg:ok-primes}}
\end{algorithm}

 The correctness of \algo{rel-norm} is proved in \propos{alg-sol}. Moreover, \propos{alg-sol} shows that when a solution to the relative norm equation problem (\problem{rel-norm}) exists, we can write it as $$m = w\prod\limits_{\substack{j:\,\eta_j\\
\textnormal{ inert}}}\eta_j^{e_j/2}\prod\limits_{\substack{j:\,\eta_j\textnormal{ split/}\\\textnormal{ramified}}}\xi_j^{e_j},$$
where $w$ is a unit of $O_L$, $\xi_j$ are generators of prime ideals in $O_L$ and $\eta_j$ are primes in $O_K$ and $u$ is a unit in $O_K$. \change{The desired norm, $r$, can hence be written as $\prod_ju\eta_j^{e_j}.$}

\begin{prop}\label{prop:alg-sol}
\algo{rel-norm} returns, in polynomial time, a solution to \problem{rel-norm}, if one exists, and returns \emph{No solution} otherwise.
\end{prop}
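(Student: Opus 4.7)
The plan is to split the correctness argument into three pieces—necessity of the existence check, sufficiency (i.e.\ the formula in \step{output} really satisfies $mm^\ast = r$), and a polynomial-time complexity bound for each step—leveraging the six properties listed in \defin{comprops}.

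For \emph{necessity}, I will pass to ideals. Suppose $mm^\ast = r$ admits a solution and let $r = u\prod_i \eta_i^{e_i}$ be the $O_K$-factorization recovered in \step{rinprime}. Taking ideals gives $(m)(m^\ast) = (r)$ in $O_L$. For each inert $\eta_i$, the ideal $\eta_i O_L$ is prime and equal to its own complex conjugate, so the $\eta_i O_L$-adic valuations of $(m)$ and $(m^\ast)$ coincide; hence their sum $e_i$ must be even. This matches the test in \step{existcheck}. I will also note that $r$ (and hence $u$) must be totally positive in $O_K$, which is consistent with property~6.

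For \emph{sufficiency}, substitute the proposed $m$ from \step{output} into $mm^\ast$. The unit factor contributes $ww^\ast = u$ (obtained in \step{w} via property~6 applied to the totally positive unit $u$). For inert primes $\eta_i \in O_K$, we have $\eta_i^\ast = \eta_i$, so $\eta_i^{e_i/2}(\eta_i^{e_i/2})^\ast = \eta_i^{e_i}$. For split and ramified primes, by the construction in \algo{ok-primes} the generator $\xi_i$ satisfies $\xi_i \xi_i^\ast = \eta_i$ up to a unit of $O_K$, and I will absorb any such unit into $w$ (this is where property~3, giving $[O_L^\times : WO_K^\times]\le 2$, keeps the bookkeeping finite and tractable). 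Multiplying out recovers $u\prod_i \eta_i^{e_i} = r$. The converse—that the output is \emph{No solution} iff no $m$ exists—follows because the necessity argument above shows any obstruction forces odd $e_i$ on an inert prime.

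For the \emph{complexity bound}, each step is polynomial in the bit-length of the input, given a factoring oracle: \step{Rnorm} is a standard norm computation; \step{rfac} is the factoring oracle; in \algo{ok-primes} the ideal factorization of a rational prime in $O_L$ is classical (Kummer--Dedekind style) and the ideal-generator extraction uses that $O_L$ is a PID (property~2) with a known principal-ideal witness algorithm; \step{rinprime} reduces to repeated exact division in $O_K$, efficient because $O_L$ (hence $O_K$) is norm-Euclidean (property~4); and \step{w} is solvable in polynomial time using the known generators of $O_L^\times$ (property~5) together with property~6.

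The main obstacle I anticipate is the careful unit bookkeeping between $\xi_i$, $\xi_i^\ast$ and $\eta_i$ across the split, inert and ramified cases: in the ramified case $\xi_i^\ast = u\xi_i$ for some $u \in O_L^\times$, so $\xi_i\xi_i^\ast$ differs from $\eta_i$ by a nontrivial unit that must be tracked and absorbed into the final $w$. Property~3 controls these units modulo roots of unity, and property~6 then guarantees that the leftover unit is itself a norm, closing the argument without spoiling either correctness or polynomial runtime.
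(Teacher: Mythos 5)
Your proposal is correct and follows essentially the same plan as the paper: verify by direct substitution that the output formula satisfies $mm^\ast = r$, justify the existence test via the even-exponent condition on inert primes, and argue each subroutine is polynomial time given a factoring oracle. The one place you anticipate a difficulty that does not actually arise is the unit bookkeeping for split and ramified primes: \algo{ok-primes} deliberately sets $\eta_i := \xi_i\xi_i^\ast$ exactly (not merely up to a unit of $O_K$) in that branch, so no unit needs to be absorbed into $w$ and the identity $mm^\ast = ww^\ast\prod\eta_i^{e_i} = u\prod\eta_i^{e_i} = r$ closes with no extra work. Your ideal-theoretic proof of necessity, using that an inert prime ideal $\eta_i O_L$ is fixed by conjugation so the valuations of $(m)$ and $(m^\ast)$ there agree, is a clean self-contained replacement for the paper's appeal to \lemm{even powers}.
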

\begin{rem}
\algo{rel-norm} implicitly gives a description of all solutions, if more than one is needed. If $m_2 =  w\prod\limits_{\substack{i:\,\eta_i\\\textnormal{ inert}}}\eta_i^{r_i/2}\prod\limits_{\substack{i:\,\eta_i\textnormal{ split/}\\\textnormal{ramified}}}\xi_i^{e_i}$ is a solution, then so is $$w\prod\limits_{\substack{i:\,\eta_i\\\textnormal{ inert}}}\eta_i^{r_i/2}\prod\limits_{\substack{i:\,\eta_i\textnormal{ split/}\\\textnormal{ramified}}}(\xi_i^{e_i - e} (\xi^{\ast}_i)^e)$$ for each $e$ between $0$ and $e_i$. 
\end{rem}

\subsection{Subroutines of \algo{rel-norm}}
We now look at the subroutines of \algo{rel-norm} in more detail and prove \propos{alg-sol}. 
\paragraph{Step \ref{step:Rnorm}: Computing \texorpdfstring{$R=N(r)$}{R=N(r)}}
Representing $r$ in integer coordinates with respect to an integral basis of $O_K$ gives a closed form multivariate polynomial expression for the absolute norm function.

\paragraph{Step \ref{step:rfac}: Computing the prime factorization of \texorpdfstring{$R=N(r)$}{R=N(r)}}

Recall that in the application to approximate synthesis, the possible values for $R=N(r)$ are bounded above by $\ell^N$. Using the Prime Number Theorem, we heuristically expect $R$ to be prime with probability approximately $1/\log(\ell^N)$. Primality testing can be done with any polynomial time algorithm, such as Miller-Rabin or AKS. For composite $R$, variants of Miller-Rabin can be used to return factors under some conditions, for example when $R$ is coprime to a Miller-Rabin witness.

If a candidate value for $m_1$ results in an $R$ that is inefficient to factorize, it can be discarded, although one should note that this may result in a non-optimal unitary approximation. We certainly expect $R$ to be easier to factorize than RSA integers, in general, recalling the relation between $N$ and area given in \sec{cost-scaling}.

\paragraph{Step \ref{step:forins}: Computing a set of primes in \texorpdfstring{$O_K$}{OK} dividing \texorpdfstring{$r$}{r}}
For each prime factor $p_k$ of $R$, \algo{ok-primes} is used to factorise $p_k$ into a product of prime ideals $\mathfrak{p}_i$, compute corresponding ideal generators $\xi_i$, and, ultimately, find primes in $O_K$, $\eta_i$, which divide $r$.

 A detailed description of \algo{ok-primes} is given in \sec{ok-prime-subroutine}.

\paragraph{Step \ref{step:rinprime}: Representing \texorpdfstring{$r$}{r} as a product of primes in $O_K$}
\algo{ok-primes} produces the set $\{\eta_i,\xi_i,\mathfrak{p}_i\}$ for each prime factor $p_k$ of $R$. We have $N(\mathfrak{p}_i)=p_k^{f_i}$, where $f_i=[O_L/\mathfrak{p}_i:\z/p_k]$ (Thm. 4.8.5, \cite{cohen1993course}). Trial division of $r$ by each $\eta_i$ will determine the $e_i$'s, using the exponents $v_k$ in the prime factorization of $R$ to determine when all prime ideals above each $p_k$ are covered. The representation of $r$ as a product of primes in $O_K$ is then $u\prod_i\eta_i^{e_i}$, where $u$ is a unit in $O_K$.

\paragraph{Step \ref{step:existcheck}: Determining the existence of a solution}
Step \ref{step:existcheck} in the algorithm determines whether a solution to the relative norm equation exists. The existence criterion is captured in the following Lemma.

\begin{lem}\label{lem:even powers}
Given $r\in O_K$, where $r=u\prod \eta_i^{e_i}$ with $\eta_i$ prime in $O_K$ and $u$ a unit in $O_K$, the relative norm equation $m_2m_2^\ast = r, m_2\in O_L$ is solvable if and only if $e_i$ is even, for all relatively inert $\eta_i$ \cite{Kliuchnikov2015b}.
\end{lem}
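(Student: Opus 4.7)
\begin{proof}[Proof plan]
The plan is to prove both directions using the prime ideal factorization in $O_L$, then handle the unit obstruction using property (6) of \cref{defn:comprops}.

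\emph{Necessity.} Suppose there exists $m_2 \in O_L$ with $m_2 m_2^\ast = r$. Passing to principal ideals in $O_L$ gives $(m_2)(m_2^\ast) = (r) = \prod_i (\eta_i O_L)^{e_i}$ (the unit $u$ drops out). For an inert prime $\eta_i$, the ideal $\mathfrak{P}_i := \eta_i O_L$ is a prime of $O_L$ fixed by the nontrivial Galois automorphism of $L/K$, so $\mathfrak{P}_i^\ast = \mathfrak{P}_i$. Therefore the $\mathfrak{P}_i$-adic valuation of the right-hand side is $e_i$, while the valuation of the left-hand side is $v_{\mathfrak{P}_i}(m_2) + v_{\mathfrak{P}_i}(m_2^\ast) = 2\, v_{\mathfrak{P}_i}(m_2)$. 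Hence $e_i$ must be even.

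\emph{Sufficiency.} Assume $e_i$ is even whenever $\eta_i$ is inert. Define
\[
    m_0 = \prod_{\substack{i:\,\eta_i \\ \text{inert}}} \eta_i^{e_i/2} \; \prod_{\substack{i:\,\eta_i \text{ split/} \\ \text{ramified}}} \xi_i^{e_i},
\]
where $\xi_i$ are the prime generators produced by \algo{ok-primes}. Computing ideals, for each non-inert $\eta_i$ we have $(\xi_i)(\xi_i^\ast) = \mathfrak{p}_i \mathfrak{p}_i^\ast = \eta_i O_L$, and for each inert $\eta_i$ we have $(\eta_i^{e_i/2})(\eta_i^{e_i/2}) = (\eta_i)^{e_i}$. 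Multiplying everything,
\[
    (m_0)(m_0^\ast) = \prod_i (\eta_i O_L)^{e_i} = (r).
\]
Hence there is a unit $u' \in O_L^\times$ such that $r = u' \, m_0 m_0^\ast$.

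It remains to show that $u'$ can be absorbed into an element of $O_L$. Applying $\ast$ and using $r^\ast = r$ gives $u' = (u')^\ast$, so $u' \in K \cap O_L^\times = O_K^\times$. Since $r$ is totally positive (a necessary condition for solvability that we may assume throughout) and $m_0 m_0^\ast$ is totally positive under every embedding, $u'$ is a totally positive unit of $O_K$. By property (6) of \cref{defn:comprops}, there exists a unit $w \in O_L^\times$ with $w w^\ast = u'$. Setting $m_2 := w\, m_0$ yields
\[
    m_2 m_2^\ast = w w^\ast \, m_0 m_0^\ast = u' \, m_0 m_0^\ast = r,
\]
which is the claimed formula in \sec{normeqsolv} and completes the proof.
\end{proof}

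The main subtle point is the step from $(m_0)(m_0^\ast) = (r)$ as ideals to the equality $m_0 m_0^\ast = r$ as elements, since these differ by a unit that must be shown to be (i) in $O_K$, (ii) totally positive, and (iii) realizable as a relative norm of a unit in $O_L$. Items (i) and (ii) follow quickly from $r \in O_K$ being totally positive, while (iii) is precisely what property (6) of \cref{defn:comprops} guarantees for the field pairs we consider; this is the structural assumption making the construction effective.
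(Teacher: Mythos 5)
Your proof is correct, and it fills in a gap that the paper itself leaves open: \lemm{even powers} is stated with a citation to \cite{Kliuchnikov2015b}, and the surrounding text only \emph{uses} the lemma (to justify that Step~\ref*{step:existcheck} of \algo{rel-norm} correctly reports \textit{No solution}) rather than proving it. Your sufficiency construction of $m_0$ coincides with the output formula of \algo{rel-norm} (and the verification in Equation~\eq{output}), but you go further in pinning down the unit obstruction: you show the unit $u'$ lands in $O_K^\times$ via conjugation-invariance of $r$, and that it is totally positive because both $r$ and $m_0 m_0^\ast$ are, which is exactly the hypothesis needed to invoke property~(6) of \defin{comprops}. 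The paper implicitly assumes this but never spells it out. Your necessity direction — a $\mathfrak{P}_i$-adic valuation argument at inert primes using $\mathfrak{P}_i^\ast=\mathfrak{P}_i$ — is standard and clean, and the paper does not give it at all. One small point worth flagging: the total positivity of $r$ that your sufficiency argument requires is not part of the lemma's stated hypothesis, but the paper does declare this standing assumption a few paragraphs earlier ("From now on we assume this is the case"), so you are consistent with the paper's framing; it would be slightly cleaner to restate that hypothesis inside the lemma rather than relying on context.
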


Clearly, Step \ref{step:existcheck} ensures that \algo{rel-norm} correctly returns \textit{No solution} if no solution exists for input $r.$ 
% Note that for our application, it could be useful  to determine whether the norm equation is solvable at an earlier stage of the algorithm, but it is not clear how this can be achieved. 

\paragraph{Step \ref{step:w}: Finding a unit $w$}

%\CP{This paragraph feels like everything is there, but the arguments need to be organized in a logical order and the transitions improved}

The existence of $w$, a unit in $O_L$, such that $ww^\ast = u$ is guaranteed by \defin{comprops}. Let us describe how to compute $w$ in Step \ref{step:w}. We can take advantage of the fact that we are in the unit group of $O_L$ with the theorem below.

\begin{thm}[Dirichlet's Unit Theorem, 1846]\label{dirichlet}
Let $K$ be a number field with $r_1$ real homomorphisms and $2r_2$ pairs of conjugate homomorphisms. Let $r = r_1+r_2-1.$ Each order $\mathcal{O}$ of $K$ contains multiplicatively independent units $u_1,\dots,u_r$ of infinite order such that every unit in $\mathcal{O}$ can be written explicitly in the form $$\omega^k u_1^{k_1}\cdots u_r^{k_r},$$
where $\omega$ is a root of unity in $\mathcal{O}.$
\end{thm}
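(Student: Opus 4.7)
The plan is to follow the classical geometry-of-numbers proof, working with the logarithmic embedding of the unit group. Write $r_1$ for the number of real embeddings of $K$ and $2r_2$ for the number of complex embeddings, so that $r = r_1 + r_2 - 1$. Define the logarithmic embedding
\[
  \Lambda : \mathcal{O}^{\times} \to \mathbb{R}^{r_1+r_2}, \quad u \mapsto (\log|\sigma_1(u)|, \ldots, \log|\sigma_{r_1}(u)|, 2\log|\sigma_{r_1+1}(u)|, \ldots, 2\log|\sigma_{r_1+r_2}(u)|).
\]
First I would observe that for any unit $u \in \mathcal{O}^{\times}$, the absolute norm $|N_{K/\mathbb{Q}}(u)| = 1$, so $\sum_i \Lambda(u)_i = \log|N(u)| = 0$. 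Hence $\Lambda(\mathcal{O}^\times)$ lies in the trace-zero hyperplane $H \subset \mathbb{R}^{r_1+r_2}$, which has dimension $r$.

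Next I would show $\ker \Lambda$ is finite. An element of the kernel is a unit whose every archimedean absolute value equals one; the set of such algebraic integers is bounded under every embedding, hence discrete and bounded in the Minkowski embedding of $\mathcal{O}$, hence finite. A finite multiplicative subgroup of a field consists of roots of unity, giving the factor $\omega^k$ in the claimed decomposition. The discreteness of $\Lambda(\mathcal{O}^{\times})$ in $H$ follows from the same argument applied to any bounded region of $H$: its preimage under $\Lambda$ gives algebraic integers with bounded archimedean components, a finite set. Thus $\Lambda(\mathcal{O}^{\times})$ is a discrete subgroup, hence a lattice in some subspace of $H$, of rank at most $r$.

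The main obstacle, and the heart of the proof, is showing that this lattice has full rank $r$. I would do this via Minkowski's theorem on lattice points in convex symmetric bodies applied to the Minkowski embedding of $\mathcal{O}$. The strategy: for any index $k \in \{1,\ldots,r_1+r_2\}$, construct units $u$ whose logarithmic embedding has a strictly negative $k$-th coordinate but non-negative coordinates elsewhere. This is done by choosing a box in $\mathbb{R}^{r_1+r_2}$ of fixed volume (large enough to guarantee a nonzero lattice point by Minkowski's bound) but with the $k$-th side very small and the others compensating, then producing a nonzero algebraic integer $\alpha$ of bounded norm lying in it. Pigeonhole over the finitely many ideal classes of elements of bounded norm yields two such $\alpha_1, \alpha_2$ generating the same ideal, so $u = \alpha_1/\alpha_2$ is a unit with the prescribed sign pattern on its logarithmic embedding. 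Producing $r_1+r_2$ such units $u_1, \ldots, u_{r_1+r_2}$ gives a matrix whose rows sum to zero but which has strictly negative diagonal and non-negative off-diagonal entries; a standard lemma shows any such matrix has rank $r$, so any $r$ of these units are multiplicatively independent.

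Combining these pieces: $\Lambda(\mathcal{O}^\times)$ is a lattice of rank $r$ in $H$, so it has a $\mathbb{Z}$-basis $\Lambda(u_1),\ldots,\Lambda(u_r)$, and every unit $u \in \mathcal{O}^{\times}$ satisfies $\Lambda(u) = \sum_i k_i \Lambda(u_i)$ for unique $k_i \in \mathbb{Z}$, whence $u \cdot u_1^{-k_1}\cdots u_r^{-k_r} \in \ker\Lambda$ is a root of unity, yielding the claimed representation. The multiplicative independence of $u_1,\ldots,u_r$ follows from the linear independence of their images in $H$.
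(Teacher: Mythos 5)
The paper does not give a proof of this theorem at all: it is Dirichlet's unit theorem, a classical result from 1846, and the paper simply states and invokes it as a known fact when analyzing the norm-equation algorithm (\cref{sec:normeqsolv}). There is therefore no ``paper's own proof'' to compare against.

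Your proposal is a correct sketch of the standard geometry-of-numbers proof via the logarithmic embedding, and the architecture is right: log map into the trace-zero hyperplane $H$, finiteness of the kernel giving the roots of unity, discreteness of the image giving a lattice, and a Minkowski-plus-pigeonhole argument for full rank $r$. Two small cautions on the full-rank step as you have written it. First, the sign pattern you describe (one strictly negative coordinate, the rest merely non-negative) is not quite what the box construction delivers or what the rank lemma needs; the usual argument is an iterative shrinking of one coordinate while keeping the others bounded, producing a sequence of elements of bounded norm, then pigeonholing on the principal ideals they generate to extract a unit whose log vector has the single coordinate strictly dominated. Phrasing it as ``rank of a matrix with negative diagonal and non-negative off-diagonal'' needs the off-diagonals to be strictly positive (equivalently, strictly dominant diagonal with the opposite sign), which in turn needs the iteration, not a single application of Minkowski's theorem. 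Second, the paper states the theorem for an arbitrary order $\mathcal{O}$, not only the maximal order $O_K$; your pigeonhole over ``ideal classes of elements of bounded norm'' is phrased for the Dedekind case. For a non-maximal order one can either pigeonhole directly over the finitely many $\mathcal{O}$-ideals of bounded index, or prove the theorem for $O_K$ and then use that $\mathcal{O}^\times$ has finite index in $O_K^\times$ (since $\mathcal{O}$ has finite index in $O_K$ as abelian groups), so its log-image is a finite-index sublattice of the same rank $r$. Either repair closes the gap.
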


By Theorem \ref{dirichlet}, $w$ can be written as $\omega^k u_1 ^{k_1} \ldots u_{d-1}^{k_{d-1}}$ for some $k,k_1,\dots,k_{d-1}\in\z$.  It follows that $u=ww^\ast = (\omega\omega^\ast)^k (u_1u_1^\ast) ^{k_1} \ldots (u_{d-1}u_{d-1}^\ast)^{k_{d-1}}$. Since the unit group generators are known, by \defin{comprops}, the following lemma asserts that we can generate the entire group of totally positive units of $O_K.$

\begin{lem}
Let $d$ be the degree of $K$. Let $u_1,\dots, u_{d-1}$ be infinite order units of $O_L$, the ring of integers of $L$. Then the multiplicative group generated by $u_1u_1^\ast,\dots, u_{d-1}u_{d-1}^\ast$ is equal to the group of totally positive units of $O_K.$
\end{lem}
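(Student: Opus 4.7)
The plan is to show two inclusions: that $\langle u_1 u_1^\ast, \dots, u_{d-1} u_{d-1}^\ast \rangle$ is contained in, and contains, the group of totally positive units of $O_K$.

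For the first inclusion, I would observe that each $u_i u_i^\ast$ lies in $O_K$ because $^\ast$ is the non-trivial $K$-automorphism of $L = K(i)$, and is a unit since $u_i$ is. Totally positivity is immediate: for any real embedding $\sigma$ of $K$, extending to a complex embedding of $L$ gives $\sigma(u_i u_i^\ast) = \sigma(u_i)\overline{\sigma(u_i)} = |\sigma(u_i)|^2 > 0$. Hence the generated subgroup lies in $O_K^{\times,+}$.

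For the reverse inclusion, let $u$ be a totally positive unit of $O_K$. By property 6 of \defin{comprops}, there is a unit $w \in O_L^\times$ with $w w^\ast = u$. Now since $K$ is totally real of degree $d$ and $L = K(i)$ is a CM field of degree $2d$ with $r_1 = 0$, $r_2 = d$, Dirichlet's unit theorem gives that $O_L^\times$ has free rank $d-1$. Given that $u_1, \dots, u_{d-1}$ are fundamental units (by the assumption made in Step~\ref{step:w} via \defin{comprops}, property 5), there exists a root of unity $\omega \in O_L$ and integers $k_1, \dots, k_{d-1}$ such that $w = \omega \cdot u_1^{k_1} \cdots u_{d-1}^{k_{d-1}}$. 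Applying $^\ast$ and multiplying,
\[
u = w w^\ast = (\omega \omega^\ast)(u_1 u_1^\ast)^{k_1}\cdots (u_{d-1}u_{d-1}^\ast)^{k_{d-1}}.
\]

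It remains to show $\omega \omega^\ast = 1$. Since $\omega$ is a root of unity, under every complex embedding $\sigma$ of $L$ extending a real embedding of $K$, $\sigma(\omega)$ is a root of unity in $\mathbb{C}$, so $\sigma(\omega\omega^\ast) = |\sigma(\omega)|^2 = 1$. Because $\omega \omega^\ast \in K$ is totally real and every real embedding sends it to $1$, we conclude $\omega \omega^\ast = 1$. Therefore $u$ is a product of the $u_i u_i^\ast$, completing the equality of groups.

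There is no serious obstacle: the argument is essentially an application of properties 3, 5, and 6 of \defin{comprops} together with a rank count via Dirichlet. The only subtlety worth highlighting is that both $O_L^\times$ and $O_K^{\times,+}$ have the same free rank $d-1$ (the totally positive units form a finite-index subgroup of $O_K^\times$, which also has rank $d-1$), so the $d-1$ generators $u_i u_i^\ast$ have precisely the right ``size'' to span the target group — the root-of-unity factor $\omega \omega^\ast$ vanishing is what glues this together.
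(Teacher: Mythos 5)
Your proof is correct and follows the same route as the paper's: write the totally positive unit as $u = ww^\ast$ via property 6 of \defin{comprops}, expand $w$ by Dirichlet's unit theorem, and observe the root-of-unity factor contributes trivially. You spell out explicitly why $\omega\omega^\ast=1$, a detail the paper passes over as ``clearly,'' but the argument is otherwise identical.
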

\begin{proof}
Clearly, $u_iu_i^\ast$ are totally positive units of $O_K.$

Let $u$ be some totally positive unit of $O_K$ not equal to $u_iu_i^\ast$ for all $i$. By \defin{comprops}, there exists a unit $w\in O_L$ such that $u=ww^\ast$. Then, by Dirichlet we have $w=\omega^k u_1 ^{k_1} \ldots u_d^{k_{d-1}}$, with $k_i\in\z$ and $\omega$ a root of unity. Clearly, $u$ is a product of integer powers of $u_1u_1^\ast,\dots, u_{d-1}u_{d-1}^\ast$.
\end{proof}

This representation of $u$ as a product of unit group generators motivates the reduction of finding $w$ to an instance of finding an integer vector in the lattice induced by the unit group $O_L^\times$. Let the lattice be denoted $\mathcal{L}_S$, generated by basis vectors $$(\log \sigma_1(u_j u_j^\ast), \ldots, \log \sigma_d(u_j u_j^\ast)),\quad j = 1,\ldots,{d-1}.$$ Let $v$ be the vector $(\log \sigma_1(u), \ldots, \log \sigma_d(u))^T$.

We can find an integer vector $(x_1, \ldots x_{d-1})$ such that 
$$
 u = (u_1 u_1^\ast)^{x_1} \cdots (u_{d-1} u_{d-1}^\ast)^{x_{d-1}}
$$
by solving

$$A(x_1,\dots,x_{d-1})^T = v,$$
where $A$ is the $d\times (d-1)$ matrix whose rows correspond to the basis vectors of $\mathcal{L}_S$. Setting $w = u_1^{x_1}\cdots u_{d-1}^{x_{d-1}}$ completes Step \ref{step:w}.

\paragraph{Step \ref{step:output}: Computing a solution \texorpdfstring{$m$}{m} to the relative norm equation problem}
Setting \begin{equation*}m=w\prod\limits_{\substack{j:\,\eta_i\\
\textnormal{ inert}}}\eta_i^{e_i/2}\prod\limits_{\substack{i:\,\eta_i\textnormal{ split/}\\\textnormal{ramified}}}(\xi_i)^{e_i}\end{equation*}
yields
\begin{equation}\label{eq:output}mm^\ast=ww^\ast\prod\limits_{\substack{i:\,\eta_i\\
\textnormal{ inert}}}\eta_i^{e_i}\prod\limits_{\substack{i:\,\eta_i\textnormal{ split/}\\\textnormal{ramified}}}(\xi_i\xi_i^\ast)^{e_i} = u\prod\eta_i^{e_i}=r,\end{equation}

as required.

Equation \eq{output} shows that the output of \algo{rel-norm} is a solution to \problem{rel-norm}. Since each subroutine of the algorithm runs in polynomial time, \algo{rel-norm} runs in polynomial time. This proves \propos{alg-sol}.

\subsection{Subroutines of \algo{ok-primes}}\label{sec:ok-prime-subroutine}
The subroutine, \algo{ok-primes}, called at Step \ref{step:forins} of \algo{rel-norm} describes a process for lifting primes $p_k$ to prime ideals in $O_L$ and finding corresponding primes in $O_K$.
\paragraph{Step \ref{step:primefac}: Factorizing primes into prime ideals}
By Theorem \ref{ideal-factors} and Theorem 14.14 of \cite{von2013modern} there exists a polynomial time algorithm to factor each $p_k$ into ideals $\mathfrak{p}_i$. The following theorem provides a reduction of factoring rational primes $p$ into prime ideals  to factoring a polynomial $\mathrm{mod}\,p$.
\begin{thm}[Cohen, Thm.4.8.13 \cite{cohen1993course}]\label{ideal-factors}
Let $L = \q(\theta)$, where $\theta$ is an algebraic integer with minimal polynomial $T(X)$. Let $f = [O_L:\z[\theta]]$ and let $p$ be a prime not dividing $f$. Suppose
$$
T(X) = \sum\limits T_i(X)^{e_i} \mod p.
$$
Then, the prime decomposition of $pO_L$ is given by
$$pO_L = \prod\limits\mathfrak{p}_i^{e_i},$$
where $\mathfrak{p}_i = pO_L + T_i(\theta)O_L$.
\end{thm}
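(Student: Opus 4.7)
The plan is to reduce the factorization of $pO_L$ into primes to a polynomial factorization problem over the finite field $\mathbb{F}_p$, exploiting the isomorphism between the quotient $O_L/pO_L$ and $\mathbb{F}_p[X]/T(X)$ that holds precisely because $p \nmid f$. (I also note the formula in the statement should read $T(X) \equiv \prod_i T_i(X)^{e_i} \pmod p$ rather than a sum; I will treat it as the standard Kummer--Dedekind factorization.)

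First, I would establish the key ring isomorphism
\[
  O_L / pO_L \;\cong\; \mathbb{Z}[\theta]/p\mathbb{Z}[\theta] \;\cong\; \mathbb{F}_p[X]/(T(X)).
\]
The second isomorphism is immediate from $\mathbb{Z}[\theta] \cong \mathbb{Z}[X]/(T(X))$ and base change. The first is the crucial step and uses $p \nmid f = [O_L : \mathbb{Z}[\theta]]$: because multiplication by $f$ on the finite abelian group $O_L/\mathbb{Z}[\theta]$ is zero but multiplication by $p$ is invertible (as $\gcd(p,f)=1$), the inclusion $\mathbb{Z}[\theta] \hookrightarrow O_L$ becomes an isomorphism after reducing modulo $p$. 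This is the main technical obstacle.

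Next, I would factor $T(X) \equiv \prod_i T_i(X)^{e_i} \pmod p$ with the $T_i$ distinct monic irreducibles in $\mathbb{F}_p[X]$, and apply the Chinese Remainder Theorem to write
\[
  \mathbb{F}_p[X]/(T(X)) \;\cong\; \prod_i \mathbb{F}_p[X]/(T_i(X)^{e_i}).
\]
Each factor is a local Artinian ring with unique maximal ideal $(T_i(X))$ and nilpotency index $e_i$. Pulling back these maximal ideals through the isomorphism of the previous paragraph produces the maximal ideals of $O_L$ above $p$; explicit computation shows the preimage of the maximal ideal in the $i$-th factor is exactly $\mathfrak{p}_i := pO_L + T_i(\theta)O_L$, and that these are pairwise distinct and coprime.

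Finally, to identify the exponents, I would combine two observations. On the one hand, the factorization $pO_L = \prod_i \mathfrak{p}_i^{a_i}$ into prime powers is unique in the Dedekind domain $O_L$, and by the CRT the residue rings decompose accordingly. On the other hand, the local factor $\mathbb{F}_p[X]/(T_i(X)^{e_i})$ has length $e_i$ as a module over itself, while $O_{L,\mathfrak{p}_i}/pO_{L,\mathfrak{p}_i}$ has length $a_i$. Matching lengths in the CRT decomposition forces $a_i = e_i$, giving the desired $pO_L = \prod_i \mathfrak{p}_i^{e_i}$. A brief verification that $\mathfrak{p}_i$ is a prime ideal (its residue field is $\mathbb{F}_p[X]/(T_i(X))$, which is a field since $T_i$ is irreducible) completes the proof.
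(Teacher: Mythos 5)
The paper itself does not prove this theorem: it quotes it verbatim as Cohen's Theorem~4.8.13 and uses it as a black box, so there is no internal proof to compare against. Your argument is a correct, complete, and standard proof of the Dedekind--Kummer factorization theorem. The three ingredients you use are the right ones: the isomorphism $O_L/pO_L \cong \mathbb{Z}[\theta]/p\mathbb{Z}[\theta] \cong \mathbb{F}_p[X]/(T(X))$, which you correctly justify from $p\nmid f$ via a B\'ezout relation $1=ap+bf$ (this gives both surjectivity, since $x\equiv bfx \pmod{pO_L}$ with $bfx\in\mathbb{Z}[\theta]$, and injectivity); the Chinese Remainder Theorem decomposition coming from the factorization $T\equiv\prod_i T_i^{e_i}\pmod p$; and the identification of the exponents by comparing lengths of the local Artinian factors $\mathbb{F}_p[X]/(T_i^{e_i})$ and $O_L/\mathfrak{p}_i^{a_i}$, which are isomorphic as rings and so have equal length, forcing $a_i=e_i$. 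You are also right that the displayed formula in the statement contains a typo -- the sum $\sum T_i(X)^{e_i}$ should be the product $\prod_i T_i(X)^{e_i}$ -- and your proof correctly treats it as a product.
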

When $L$ is a cyclotomic field, $L=\q(\zeta_n)$ for some $n$ where $\zeta_n$ is a primitive $n^{th}$ root of unity and $O_L=\z[\zeta_n]$ \cite{ireland1990classical}. Hence, $f=[O_L:\z[\zeta_n]]=1$ and there is no prime $p$ such that $p\mid f$. So $\mathfrak{p}_i = p_kO_L +\alpha_i O_L$, for each prime in the prime factorization of $R$, where $\alpha_i\in O_L$. Factoring $T(X)$, the minimal polynomial of $\zeta_n$, over the finite field $\f_p$ can be done in polynomial time in the degree of $T(X)$ and $\log(p)$ (Thm. 14.14, \cite{von2013modern}). This completes Step \ref{step:primefac} of the algorithm.  %\CP{$f$ and $T(X)$ not defined except in the theorem}

\paragraph{Step \ref{step:primegens}: Computing ideal generators}
Since $O_L$ is a principal ideal domain, computing the generator $\xi_i$ of a prime ideal, as in Step \ref{step:primegens}, is an instance of the Principal Ideal Problem. Recall that $O_L$ is norm-Euclidean, so the generator of $p_k O_L + \alpha_i O_L$ is computed using the Euclidean algorithm. Set $\xi_i=\mathrm{GCD}(p_k, \alpha_i)$ where $\mathrm{GCD}(a, b)\in O_L$ is the greatest common divisor computed by the Euclidean algorithm, using the absolute norm. Then, since $L$ is Galois, the generators of all prime ideals above $p_k$ can be recovered using Galois automorphisms of $L$, as the Galois group acts transitively on prime ideals $\mathfrak{p}_i$.

\paragraph{Step \ref{step:smallprime}: Finding \texorpdfstring{$\eta$}{eta}, prime in $O_K$}%\CP{specify what sort of primes you are talking about here}
For the computation at Step \ref{step:smallprime}, recall that the unit group $O^{\times}_K$
of $O_K$ is a finite index subgroup of the unit group of $O^{\times}_L$. Our aim is to find a prime factorization of $p\in O_K$, up to multiplication by a unit. To that end, the following lemma is used to find primes $\eta_i$ in $O_K$ corresponding to certain prime ideal generators $\xi_i$.

\begin{lem}\label{lem:inert}
Let $\xi\in O_L$ be the generator of a prime ideal. If there exists $v$ a unit from the finite quotient $O_L^\times/O_K^\times$ such that $v\xi\in O_K$ then $v\xi$ is relatively inert in $O_L$.
\end{lem}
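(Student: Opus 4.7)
The plan is to show two things: (i) that $v\xi$ is indeed a prime of $O_K$, and (ii) that the ideal it generates in $O_L$ remains prime. The second part will be essentially immediate because $v$ is a unit in $O_L$, so the ideal $(v\xi)O_L$ coincides with $\xi O_L$, which is prime by hypothesis. The real work lies in (i), and it will rely crucially on the fact (from \defin{comprops}) that both $O_K$ and $O_L$ are PIDs and that $[O_L^\times : W O_K^\times] \in \{1,2\}$, so a unit of $O_L$ restricted to $O_K$ is essentially a unit of $O_K$.

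For step (i), I would contract the prime ideal $\mathfrak{P} = \xi O_L$ to $O_K$. A standard fact is that $\mathfrak{p} := \mathfrak{P} \cap O_K$ is a prime ideal of $O_K$, and since $O_K$ is a PID, $\mathfrak{p} = \pi O_K$ for some prime $\pi \in O_K$. Because $v\xi \in O_K \cap \xi O_L = \mathfrak{p}$, one may write $v\xi = \pi \alpha$ for some $\alpha \in O_K$. Taking ideals in $O_L$, this yields
\[
 \pi O_L \cdot \alpha O_L \;=\; (v\xi) O_L \;=\; \xi O_L,
\]
which is a prime ideal. In a Dedekind domain, a product of two ideals can be prime only if one of the factors is the unit ideal, so either $\pi O_L = O_L$ or $\alpha O_L = O_L$. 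But $\pi$ cannot be a unit in $O_L$: if it were, then $\pi \in O_L^\times \cap O_K = O_K^\times$, contradicting that $\pi$ is a prime of $O_K$. Hence $\alpha \in O_L^\times \cap O_K = O_K^\times$, so $v\xi = \pi \alpha$ is an associate of $\pi$ in $O_K$ and is therefore a prime of $O_K$.

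For step (ii), since $v \in O_L^\times$, we have $(v\xi) O_L = \xi O_L$, which is prime by hypothesis. Combining (i) and (ii), $\eta := v\xi$ is a prime of $O_K$ whose extension $\eta O_L$ is again prime in $O_L$; this is by definition relatively inert in the extension $L/K$.

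The main obstacle I anticipate is verifying the claim $O_L^\times \cap O_K = O_K^\times$ cleanly; this is where condition (3) of \defin{comprops} enters, as it ensures that any $O_L$-unit lying in $O_K$ is, up to a root of unity already in $K$ or a fixed $\z/2$-factor, an $O_K$-unit. One should check that the possible index-two case does not introduce a spurious $O_L$-unit in $O_K$ that is not an $O_K$-unit; in the cyclotomic examples relevant to this paper (\tab{gates}), this is routine since the roots of unity $W$ are absorbed into $O_K^\times$ appropriately.
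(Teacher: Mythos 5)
Your proof is correct, and it takes a genuinely different route from the paper's. The paper argues by contradiction with the irreducibility of $\xi$: if $v\xi$ were not relatively inert, then $v\xi = ab$ for non-units $a,b\in O_L$, whence $\xi = (v^{-1}a)b$ would factor non-trivially, contradicting that $\xi$ generates a prime ideal. This is a short element-theoretic argument that establishes irreducibility of $v\xi$ in $O_L$ (and, implicitly, irreducibility and hence primality in $O_K$). You instead proceed ideal-theoretically: contract $\mathfrak{P}=\xi O_L$ to a prime $\mathfrak{p}=\pi O_K$, write $v\xi=\pi\alpha$, push the factorization up to $O_L$ where $(\pi O_L)(\alpha O_L)=\xi O_L$ is prime, and conclude $\alpha$ is a unit. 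Your route is somewhat longer but makes explicit the fact that $v\xi$ lies above the contraction $\mathfrak{p}$ and is an associate in $O_K$ of the generator $\pi$; the paper's route is terser but leaves the step ``$v\xi$ irreducible in $O_L$ $\Rightarrow$ $v\xi$ prime in $O_K$'' unstated.

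One correction to the worry you raise at the end: the identity $O_L^\times \cap O_K = O_K^\times$ does \emph{not} depend on condition (3) of \defin{comprops}, nor on any cyclotomic structure. It holds for every extension of number rings by integral closure: if $u\in O_L^\times\cap O_K$, then $u^{-1}\in O_L\cap K = O_K$ (because $O_K$ is integrally closed in $K$ and $O_L$ consists of algebraic integers), so $u\in O_K^\times$. You may safely drop the appeal to the index condition; it plays no role in this lemma, and the parts of \defin{comprops} you actually use are that $O_K$ and $O_L$ are PIDs (to write $\mathfrak{p}=\pi O_K$ and to invoke unique factorization of ideals in a Dedekind domain).
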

\begin{proof}
Let $v$ be such a unit from the finite quotient $O_L^\times/O_K^\times$ and suppose $v\xi$ not relatively inert in $O_L$. Then there exist $a,b\in O_L$ such that $v\xi=ab$. Then $\xi=v^{-1}(v\xi)=v^{-1}ab$, a contradiction.
\end{proof}

%\CP{notations of this paragraph inconsistent with alg 4 but also alg 3, so hard to follow the argument}
Each generator $\xi_i$ is multiplied by a representative $v$ of each element in the quotient $O^{\times}_L / O^{\times}_K$. If $v\xi_i\in O_K$, \lemm{inert} asserts that $\eta_i := v\xi_i$ is prime in $O_K$. By Property (3) of \defin{comprops}, iterating through each element in the quotient to find a valid $v$ is efficient. If this process fails, $\eta_i$ is set to $\xi_i\xi_i^{\ast}$. Then $\eta_i$ is prime in $O_K$ and relatively split or ramified in $O_L.$

\subsection{A shortcut property for solving the norm equation}
\label{sec:rel-norm-shortcut}

Recall that in the general solution outline of \sec{approximation-problems:general}, the norm equation problem was simplified to the following problem.

\begin{prob}\label{prob:quotient-norm-simplified}
Given, $\hat{z}$ in $M_\mathcal{O}$, find $z'\in \hat{z} + \xi M_\mathcal{O}$ such that \begin{equation*}\abs{z'}^2=r\in O_K,\end{equation*} where $r = \xi\xi^\ast\ell^N-\hat{m_1}\hat{m_1}^\ast$.
\end{prob}
We consider fields with class number equal to 1, and demonstrate a `shortcut' for solving this problem. The following lemma identifies a property of fields $K,L$ and ideals $I$ sufficient to guarantee this simplification.

\begin{lem}\label{lem:shortcut-property} Let $I$ be an integral ideal of $O_L$ fixed by conjugation, so $x\in I\implies x^\ast\in I$.  Let $U$ be the group of torsion units of $O_L$ modulo $I$. If \begin{equation*}
    \forall q \in O_L/I,\quad \exists u \in U\text{ such that } q^\ast = uq
\end{equation*}
then
\begin{equation*}
    \forall z\in O_L \text{ such that }\abs{z}^2=r\in O_K,\quad \exists u' \text{ such that } \abs{u'z}^2=r\text{ and }u'z-\hat{z}\in I.
\end{equation*}
\end{lem}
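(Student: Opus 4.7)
The plan is to exploit the fact that torsion units of $O_L$ have complex norm one, so multiplying $z$ by any torsion unit $u'$ preserves the identity $|u'z|^2 = |z|^2 = r$. The lemma therefore reduces to the purely congruential task of finding $u' \in U$ such that $u'z \equiv \hat{z} \pmod{I}$. Note that this only stands a chance when $r \equiv \hat{z}\hat{z}^{\ast} \pmod{I}$, a compatibility that is automatic in the setting of \problem{quotient-norm-simplified} (since both $u'z$ and $\hat{z}$ must lie in the same coset mod $I$ while having complex norm $r$, using that $I = I^{\ast}$).

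The key idea is to pass to $O_L/I$ and show that the classes $\bar{z}$ and $\bar{\hat{z}}$ already differ by a torsion unit. First I would invoke the hypothesis on the classes $\bar{z}$ and $\bar{\hat{z}}$ individually, producing $u_1, u_2 \in U$ with $\bar{z}^{\ast} = u_1\bar{z}$ and $\bar{\hat{z}}^{\ast} = u_2\bar{\hat{z}}$ in $O_L/I$. Combining this with the two norm identities reduced modulo $I$,
\[
u_1\bar{z}^{\,2} \;\equiv\; \bar{z}\bar{z}^{\ast} \;\equiv\; \bar{r} \;\equiv\; \bar{\hat{z}}\bar{\hat{z}}^{\ast} \;\equiv\; u_2\bar{\hat{z}}^{\,2} \pmod{I},
\]
one obtains $(\bar{\hat{z}}/\bar{z})^{2} = u_1 u_2^{-1} \in U$ in $(O_L/I)^{\times}$. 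Because $U$ is a group of torsion elements in a characteristic-zero setting, taking a square root keeps us inside the torsion subgroup, so $\bar{\hat{z}}/\bar{z}$ is itself a torsion unit $\bar{u}'$. Lifting $\bar{u}'$ to a torsion unit $u' \in O_L^{\times}$ (possible by the very definition of $U$ as the image of such units) gives $u'z \equiv \hat{z} \pmod{I}$, and $|u'z|^{2}=r$ comes for free.

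The main obstacles I anticipate are two-fold. The first is invertibility: the division $\bar{\hat{z}}/\bar{z}$ requires $\bar{z}$ to be a unit in $O_L/I$, which amounts to $z$ and $I$ being coprime. In the intended application this follows from the control we have on the prime factorization of $r = zz^{\ast}$ relative to the ideal $I = \xi O_L \cap M_{\mathcal{O}}$, but it must be stated as a side hypothesis (or restricted to the coprime case and handled separately at ramified/shared primes). The second is the square-root step, where one must verify that the torsion subgroup of $(O_L/I)^{\times}$ is closed under taking square roots and that the sign ambiguity $\pm\bar{u}'$ can be absorbed either via $-1 \in U$ or via the freedom to replace $z$ by $-z$. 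Both of these are routine to check for the concrete ideals $I$ and fields $L$ that arise for the V, Clifford$+T$, and Clifford$+\sqrt{T}$ gate sets, but the general statement does require the torsion hypothesis on $U$ to be truly used.
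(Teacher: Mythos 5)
Your reduction to a congruence problem is the right instinct, and the chain of congruences you write down is correct under the two side assumptions you flag (existence of a solution $z'$ in the coset $\hat z + I$, and invertibility of $\bar z$ in $O_L/I$). However, the step ``$(\bar{\hat z}/\bar z)^2 \in U$ and therefore $\bar{\hat z}/\bar z$ is torsion, hence in $U$'' has a genuine gap. Since $I$ is a nonzero ideal in the ring of integers of a number field, $(O_L/I)^\times$ is a \emph{finite} abelian group, so \emph{every} unit of $O_L/I$ is torsion; ``torsion'' gives you nothing. What you actually need is membership in $U$, the image of the roots of unity of $O_L$, which is typically a small proper subgroup of $(O_L/I)^\times$. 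And ``$v^2 \in U \Rightarrow v \in U$'' is false in general for a proper subgroup $U$ of a finite abelian group, so the square-root step does not close. Your final sentence — lifting $\bar u'$ to a torsion unit of $O_L$ ``by the very definition of $U$'' — is circular: that lift exists precisely when $\bar u' \in U$, which is what you have failed to establish.

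The paper's argument is structurally different and sidesteps both of your obstacles. It does not try to manipulate $\bar z$ and $\bar{\hat z}$ as single units of $O_L/I$; instead it factors $z$ and the assumed coset solution $z'$ into primes of $O_L$ using unique factorization (the class-number-one hypothesis), noting that since $zz^\ast = z'z'^{\ast} = r$, both factorizations use the same primes $z_0, z_1, z_1^\ast, \ldots$ with the same total multiplicities but possibly different splits $e_i$ versus $c_i$ between each $z_i$ and its conjugate. The hypothesis is then applied to each \emph{individual prime} $\gamma(z_i)$, producing a root of unity $x_i'$ in $O_L$ with $z_i^\ast \equiv x_i' z_i \pmod I$. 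The correcting unit $u' = uw^{-1}(x_1')^{c_1 - e_1}\cdots(x_m')^{c_m - e_m}$ is thus constructed directly as a product of roots of unity in $O_L$ — it is a torsion unit by fiat, with no lifting step and no requirement that $\bar z$ be invertible modulo $I$. To repair your proof you would either have to prove that, under the lemma's hypothesis, the full torsion subgroup of $(O_L/I)^\times$ equals $U$ (false in general and not assumed here), or switch to the factorization-level argument as the paper does.
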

\begin{proof}
Suppose $z'$ is a solution to the norm equation with quotient constraint,
\begin{equation}\label{eq:quotient-norm-equation}
    \abs{z'}^2=r,\quad z'\in \hat{z} + I, \hat{z}\in O_L. 
\end{equation} Then $z'$ is also a solution to the general norm equation
\begin{equation}\label{eq:general-norm-equation}
    \abs{z'}^2=r,\quad z'\in O_L
\end{equation}
and hence can be written as $z'=uz_0^2z_1^{e_1}(z_1^\ast)^{n_1-e_1}\cdots z_m^{e_m}(z_m^\ast)^{n_m-e_m}$ for integers $e_i,n_i$, where $u$ is a torsion unit of $O_L$, $z_0O_L$ is a product of relatively inert prime ideals, and the $z_i$ are such that $z_iO_L$ is a prime $O_L$ ideal and $z_iz_i^\ast O_K$ is a prime $O_K$ ideal.  We can similarly write $z$ as $wz_0^2z_1^{c_1}(z_1^\ast)^{n_1-c_1}\cdots z_m^{c_m}(z_m^\ast)^{n_m-c_m}$, for integers $n_i,c_i$, where $w$ is a torsion unit of $O_L$.% since $z$ is a solution to the general norm equation by definition.

Now consider a ring homomorphism $\gamma$ defined by $\gamma(z) = z + I$. By assumption on $O_L/I$, there exist $x_i\in U$ such that $x_i\gamma(z_i)^\ast = \gamma(z_i)$.
In other words, there exists a torsion unit $x'_i$ such that $z_i^\ast + I = x'_i z_i + I$, using that $I$ is fixed by conjugation. Observe that $\gamma(z')=\gamma(\hat{z})$ since $z'$ is a solution to Equation \eq{quotient-norm-equation}. However, we also have
\begin{equation}
\begin{aligned}
\gamma(z') &= \gamma(u)\cdot \gamma(z_0) \gamma(z_1)^{n_1} \gamma((x'_1)^{n_1 - e_1}) \ldots \gamma(z_m)^{n_m} \gamma((x'_m)^{n_m - e_m}) \\
\gamma(z) &= \gamma(w) \cdot \gamma(z_0) \gamma(z_1)^{n_1} \gamma((x'_1)^{n_1 - c_1}) \ldots \gamma(z_1)^{n_m} \gamma((x'_m)^{n_m - c_m}) \\
\end{aligned}
\end{equation}
Based on the above we set $u' = uw^{-1}(x'_1)^{c_1 - e_1} \ldots (x'_m)^{c_m - e_m}$ and see that $u'z$ is such that $\abs{u'z}^2=r$ and $\gamma(u'z)=\gamma(z')=\gamma(\hat{z})$, as required.

\end{proof}

In essence, for any solution $z$ to Equation \eq{general-norm-equation}, there exists a torsion unit such that $u'z$ is a solution to Equation \eq{quotient-norm-equation}. We call this property the `Shortcut Property'. There exist ideals in which the shortcut property holds for $K,L$ corresponding to the Clifford$+$T and Clifford$+\sqrt{\text{T}}$ bases.

Let $I$ be an integral ideal of $O_L$ fixed by conjugation such that $I\subseteq \xi M_\mathcal{O}\subseteq O_L$.
Then, $
\xi M_\mathcal{O}$ is equal to the finite disjoint union $\xi M_\mathcal{O}=\bigsqcup_k(z_k+I)$. Then, \lemm{shortcut-property} shows that a solution to \problem{quotient-norm-simplified} can be found by solving the general norm equation $\abs{z}^2=r$, $z\in O_L$, then checking whether $u'z-(\hat{z}+z_k)\in I$ for some $z_k$ and unit $u$. This requires only finitely many checks. 
Finally, a solution $u'z$ yields a candidate for $m_2$ by setting $m_2 = u'z/\xi$.

\section{Exact synthesis}\label{sec:exact-synthesis}

%!TEX root = ./shorter-quantum-circuits-paper.tex

This section covers the details of exact synthesis algorithms for the three example gate sets considered in~\sec{approx-solutions}: $V$ basis, Clifford+$T$ and Clifford+$\sqrt{T}$.
Given a gate set $G$ and a matrix $U$ from a certain set $\mathcal{U}$ uniquely determined by $G$, the goal of exact synthesis is to produce a sequence $g_1,\dots, g_n$ of gates from $G$
such that $U$ is equal to the product of those gates, $U = g_1\dots g_n$.
The set $\mathcal{U}$ is closed under left and right multiplication by elements of $g$.

The algorithm is similar for each gate set.  Roughly, given a matrix $U$ 
\begin{enumerate}
    \item select a gate $g$ from the gate set such that $g^\dagger U$ has a lower cost than $U$,
    \item set $U \leftarrow g^\dagger M$
    \item repeat until cost of $U$ is zero.
\end{enumerate}
The gate sequence is recovered by collecting the gate $g$ selected at each iteration.
Intuitively, the algorithm works by picking off each gate of the product $g_1\dots g_n$ one at a time. Multiplication by $g^\dagger$ cancels the left most gate in $M$.

Importantly, it is possible to efficiently select a gate $g$ so that the cost decreases monotonically.  When the cost is zero, this \change{means} the remaining gate is a Clifford gate.

Algorithms for exact synthesis has been proposed previously by~\cite{BGS} for the $V$ basis and by~\cite{KliuchnikovMaslovMosca2013,Forest2015} for Clifford+$T$ and Clifford+$\sqrt{T}$.  We present the algorithms here for completeness.  We provide a modified version of the algorithm with several optimizations for computational performance.

We begin with the simplest case, the $V$ basis, and then address the progressively more complicated Clifford+$T$ and Clifford+$\sqrt{T}$ gate sets.

\subsection{V basis}\label{sec:exact-synthesis-v-basis}
Recall from~\sec{approximation-problems:Vbasis} the six $V$ basis matrices
$V_{\pm X}$, $V_{\pm Y}$, $V_{\pm Z}$ and
order 
\begin{equation}
    \label{eq:vbasis-order}
    \mathcal{O} := \mathbb{Z}\cdot I+\mathbb{Z}\cdot iX+\mathbb{Z}\cdot iY+\mathbb{Z}\cdot iZ.
\end{equation}
This order contains the $V$ basis matrices each scaled by $\sqrt{5}$.
For notational convenience we use $V_x = \sqrt{5}V_{+X}$, $V_y = \sqrt{5}V_{+Y}$ and $V_z = \sqrt{5} V_{+Z}$ to refer to the scaled $V$ basis matrices. 
Note that $V_{-P} = V_{+P}^\dagger$ for $P\in \{X,Y,Z\}$ and $V_P V^\dagger_P = \mathrm{Det}(V_P) I = 5 I$.

Define $M_V$ as the function that, according to~\eq{vbasis-order}, maps integers $a,b,c,d$ to a matrix in the natural way:
\begin{equation}
    M_V(a,b,c,d) := aI + ibX + icY + idZ.
\end{equation}
Any matrix $M_V(a,b,c,d)$ with determinant $5^n$ can be decomposed (exactly) into a length-$n$ sequence of (scaled) $V$ gates~\cite{BGS,Kliuchnikov2015b}. 
Note that $\mathrm{Det}(M_V(a,b,c,d))=1$ if and only if $M_V(a,b,c,d)$ is a Pauli matrix.
\begin{thm}[V basis exact decomposition]\label{thm:v-basis-exact-decomposition}
Let $a,b,c,d\in\mathbb{Z}$ such that $\mathrm{Det}(M_V(a,b,c,d)) = 5^n$ for integer $n \ge 1$, and such that at least one of $a,b,c,d$ is not divisible by $5$.  Then there exists a sequence $V_1, V_2, \dots, V_n$, $V_k\in\{V_x, V_y, V_z, V^\dagger_x, V^\dagger_y, V^\dagger_z\}$ and Pauli matrix $V_0$ such that
\begin{equation}
    M_V(a,b,c,d) = V_0 \prod_{k=1}^n V_k .
\end{equation}
\end{thm}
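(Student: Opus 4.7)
\medskip

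My plan is to proceed by induction on $n$. For the base case $n=1$ I would observe that integer solutions of $a^2+b^2+c^2+d^2=5$ are exactly the 48 permuted-sign tuples $(\pm 2, \pm 1, 0, 0)$, and each corresponding $M_V(a,b,c,d)$ is checked by hand to factor as $V_0 V_k$ for some Pauli $V_0 \in \{\pm I, \pm iX, \pm iY, \pm iZ\}$ and $V_k \in \{V_x^{\pm 1}, V_y^{\pm 1}, V_z^{\pm 1}\}$; for example, $M_V(1,2,0,0)=V_x$ and $M_V(2,1,0,0) = (iX) V_x^\dagger$.

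For the inductive step, assume the statement at $n-1$ and let $M = M_V(a,b,c,d)$ with $\det M = 5^n$, $n\ge 2$, and at least one coefficient not divisible by $5$. The key idea is to peel off a V gate on the right: find some $V_k$ in the generating set such that $M V_k^\dagger \in 5\mathcal{O}$, then set $M' := M V_k^\dagger / 5$. Since $V_k V_k^\dagger = 5I$, we get $M = M' V_k$ with $M' \in \mathcal{O}$ and $\det M' = 5^{n-1}$. Then I must verify that $M'$ still satisfies the induction hypothesis (not all coefficients divisible by $5$): if $M' \in 5\mathcal{O}$ then $M = M'V_k \in 5\mathcal{O}$, contradicting the hypothesis on $M$. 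Invoking the induction hypothesis on $M'$ gives $M' = V_0 V_1 \cdots V_{n-1}$, and hence $M = V_0 V_1 \cdots V_{n-1} V_k$ as required.

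The heart of the argument, and the main obstacle, is establishing the existence of a suitable $V_k$. Expanding the six products $M V_k^\dagger$ using $(iX)(iY)=iZ$ and cyclic relations, I obtain six pairs of congruence conditions on $(a,b,c,d) \bmod 5$; e.g. $V_x^\dagger$ works iff $a \equiv -2b$ and $c \equiv -2d \pmod 5$, and similarly for the other five generators, each pairing two of the four coordinates. The claim to prove is then purely combinatorial: for every nonzero $(a,b,c,d) \in \mathbb{F}_5^4$ with $a^2+b^2+c^2+d^2 \equiv 0 \pmod 5$, at least one of these six pair-conditions holds. Using that $-1$ is a square mod $5$, the identity $a^2+b^2 \equiv (a+2b)(a-2b) \pmod 5$ factors each two-variable quadratic, reducing the claim to a short case analysis on whether $a^2+b^2 \equiv 0$ and $c^2+d^2 \equiv 0$ separately. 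In the ``both isotropic'' case the conditions for $V_x^{\pm 1}$ cover the matched-sign situation, and in the mixed-sign subcase a direct check shows that the ratio of the two nonzero coefficients lies in $\mathbb{F}_5^\ast = \{\pm 1, \pm 2\}$, each value handled by exactly one of $V_y^{\pm 1}$ or $V_z^{\pm 1}$. The remaining case $a^2+b^2 \not\equiv 0$ and $c^2+d^2 \not\equiv 0$ is analogous after relabeling.

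Conceptually, this combinatorial step is the statement that the prime $5$ splits in the quaternion order $\mathcal{O}$: $\mathcal{O}/5\mathcal{O} \cong M_2(\mathbb{F}_5)$, and an element of norm $5^n$ which is not $5$-divisible has a left divisor of reduced norm $5$. One could alternatively appeal to this isomorphism and Smith normal form over the local ring $\mathbb{Z}_5$ to bypass the elementary case analysis; I would mention this as a structural remark but carry out the elementary version for concreteness, since it yields the explicit selection rule used by the algorithm.
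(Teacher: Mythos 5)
Your proof uses the same inductive skeleton as the paper's: peel off one $V$ gate per step via a factorization lemma, preserving the not-all-divisible-by-$5$ hypothesis while the determinant drops from $5^n$ to $5^{n-1}$. Where you genuinely differ is in how the factorization lemma is verified. The paper reduces to residues $(a,b,c,d)\bmod 5$ and then establishes the existence of a suitable $V$ by exhaustive numeric enumeration over all residue tuples, producing a $12$-bit lookup table. You instead derive the six congruence pairs by hand (e.g.\ $V_x^\dagger$ peels off iff $a\equiv -2b$ and $c\equiv -2d\pmod 5$), exploit the factorization $a^2+b^2\equiv(a+2b)(a-2b)\pmod 5$ that holds because $-1$ is a square mod $5$, and finish with a short case analysis. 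This is more illuminating: it makes the selection rule explicit rather than tabulated, and your closing remark that the structural reason is $\mathcal{O}/5\mathcal{O}\cong M_2(\mathbb{F}_5)$ (so a noncentral element of reduced norm $5^n$ has a reduced-norm-$5$ one-sided divisor) names exactly what the brute-force check obscures. Two small tightenings worth making: in the ``mixed-sign'' subcase, after substituting $a\equiv 2b$ and $c\equiv -2d$, the relevant ratio is $b/d\in\mathbb{F}_5^\ast$ and each of $\{1,-1,2,-2\}$ is matched to exactly one of $V_z,V_z^\dagger,V_y,V_y^\dagger$ --- you should spell that out; and your ``analogous after relabeling'' step silently uses that if $a^2+b^2\not\equiv 0$ and $c^2+d^2\not\equiv 0$ then one of $a^2+c^2$, $a^2+d^2$ must vanish mod $5$, which is true and follows in three lines from squares mod $5$ lying in $\{0,1,4\}$, but it should be stated rather than left implicit. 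You also peel gates off on the right ($M = M'V_k$) whereas the paper peels on the left ($M = VM'$); this is a cosmetic transposition and does not affect correctness.
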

The requirement that one of $a,b,c,d$ is not divisible by $5$ avoids artificially scaled inputs (e.g., $5I$).  Scalars can be removed by simply dividing out the factors of $5$. The proof follows by induction on the following Lemma.

\begin{lem}[V basis factorization] \label{lem:vbasis-factorization}
  Let $a,b,c,d\in\mathbb{Z}$ such that $\mathrm{Det}(M_V(a,b,c,d)) = 5^n$ for integer $n \ge 1$, and such that at least one of $a,b,c,d$ is not divisible by $5$.  Then there exists $V\in\{V_x, V_y, V_z, V^\dagger_x, V^\dagger_y, V^\dagger_z\}$ and $a',b',c',d'\in\mathbb{Z}$ such that
  \begin{equation} \label{eq:vbasis-factorization}
      M_V(a,b,c,d) = V M_V(a',b',c',d'),
  \end{equation}
and $\mathrm{Det}(M_V(a',b',c',d')) = 5^{n-1}$.
\end{lem}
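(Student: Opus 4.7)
The plan is to reduce the factorization to a congruence statement modulo $5$ and then verify it by a counting argument. Since $VV^\dagger = 5I$ for each $V\in\{V_{\pm X},V_{\pm Y},V_{\pm Z}\}$, equation \eq{vbasis-factorization} is equivalent to asking that $V^\dagger M_V(a,b,c,d)/5$ lie in $\mathcal{O}$. First I would compute each of the six products $V^\dagger M_V(a,b,c,d)$ directly from $X^2=Y^2=Z^2=I$ and $XY=iZ$, $YZ=iX$, $ZX=iY$. For example, $V_z^\dagger M_V(a,b,c,d) = M_V(a+2d,\,b-2c,\,c+2b,\,d-2a)$, and all four coordinates are divisible by $5$ exactly when the two independent congruences $a+2d\equiv 0$ and $b-2c\equiv 0 \pmod 5$ hold (the other two follow automatically, since e.g.\ $a+2d\equiv 0$ forces $d\equiv 2a$). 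The analogous calculation for the five remaining $V$'s produces six such pairs of linear congruences on $(a,b,c,d)$ modulo $5$, one per $V$.

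The heart of the argument is then to show that every nonzero $(a,b,c,d)\in(\mathbb{Z}/5)^4$ with $a^2+b^2+c^2+d^2\equiv 0 \pmod 5$ satisfies at least one of these six pairs. I would do this by counting. Each pair cuts out a two-dimensional subspace of $(\mathbb{Z}/5)^4$, hence of size $25$; a direct check (for example, combining the congruences for $V_z$ and $V_z^\dagger$ gives $4a\equiv 4b\equiv 4c\equiv 4d\equiv 0 \pmod 5$, and the other ten pairings are analogous) shows that any two of these subspaces meet only at the origin. Their union therefore has exactly $1+6\cdot 24 = 145$ elements. On the other hand, since $-1\equiv 2^2$ is a square in $\mathbb{F}_5$, the form $a^2+b^2+c^2+d^2$ is hyperbolic over $\mathbb{F}_5$ and its number of isotropic vectors is $5^3+5(5-1)=145$. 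The two counts agree, so the six subspaces exhaust the isotropic locus, which is precisely what is needed.

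With the covering in hand, pick $V$ so that the corresponding pair of congruences is satisfied and set $M_V(a',b',c',d'):=V^\dagger M_V(a,b,c,d)/5$. The divisibility just verified makes $a',b',c',d'$ integers; multiplicativity of the determinant together with $\det V = 5$ gives $\det M_V(a',b',c',d')=5\cdot 5^n/25=5^{n-1}$; and the identity $M_V(a,b,c,d)=VM_V(a',b',c',d')$ follows from $VV^\dagger = 5I$. The hypothesis that not every $a,b,c,d$ is divisible by $5$ is used only to ensure $(a,b,c,d)\not\equiv 0\pmod 5$ so that the covering claim applies. The main obstacle is establishing the covering; structurally this reflects the fact that $\mathcal{O}$ is maximal at $5$, so $\mathcal{O}/5\mathcal{O}\cong M_2(\mathbb{F}_5)$ and the six $V_{\pm P}$ generate the $|\mathbb{P}^1(\mathbb{F}_5)|=6$ distinct left ideals of reduced norm $5$, but the direct count sketched above seems the cleanest way to write it without invoking that machinery.
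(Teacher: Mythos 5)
Your proof is correct, and it is genuinely different in spirit from the paper's. The paper reduces the claim to the same observation — that whether a given $V$ factors out depends only on $(a,b,c,d)\bmod 5$ — and then verifies the covering statement by \emph{exhaustive numeric calculation} over the $144$ nonzero isotropic residue tuples, which simultaneously produces the lookup table used by the synthesis algorithm. You instead replace the exhaustion with an a priori counting argument: each $V$ imposes a pair of linear congruences cutting out a $2$-dimensional subspace of $(\mathbb{Z}/5)^4$ contained in the isotropic locus, any two of these six subspaces meet trivially, and $1 + 6\cdot 24 = 145$ matches the number of isotropic vectors of the (hyperbolic, since $-1$ is a square mod $5$) form $a^2+b^2+c^2+d^2$ over $\mathbb{F}_5$, so the union must exhaust the locus. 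Your computation $V_z^\dagger M_V(a,b,c,d) = M_V(a+2d,\,b-2c,\,c+2b,\,d-2a)$ and the deduction that the second pair of congruences follows from the first two are right, as is the isotropic-vector count $5^3 + 5\cdot 4 = 145$. One trivial slip: there are $\binom{6}{2}=15$ pairs of subspaces, so ``the other ten pairings'' should be ``the other fourteen,'' but this doesn't affect anything. Overall, your route is cleaner conceptually and avoids a $144$-case check, while the paper's route has the practical virtue that the exhaustive check is needed anyway to populate the $12$-bit lookup table for the exact-synthesis algorithm; the two proofs buy different things for roughly equal effort.
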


In other words, the matrix $M_V(a,b,c,d)$ can be factored into two parts: a $V$ matrix and another matrix of the form $M_V$.
Multiplication on the left by $V^\dagger$ yields
\begin{equation} \label{eq:vbasis-alternate-factorization}
  V^\dagger M_V(a,b,c,d) = \mathrm{Det}(V) M_V(a',b',c',d') = M_V(5a', 5b', 5c', 5d').
\end{equation}
and therefore
\begin{equation}
    \mathrm{Det}(M_V(a',b',c',d')) 
    = \mathrm{Det}(V^\dagger M_V(a',b',c',d') / 5) 
    = \mathrm{Det}(M_V(a,b,c,d)) / 5
    = 5^{n-1}.
\end{equation}

If we define the entrywise modulus
\begin{equation}
    M_V(a,b,c,d) \mathrm{\,mod\,} 5 := M_V(a\mathrm{\,mod\,} 5, b\mathrm{\,mod\,} 5, c\mathrm{\,mod\,} 5, d\mathrm{\,mod\,} 5),
\end{equation}
then
\begin{equation}
    V^\dagger M_V(a,b,c,d)\mathrm{\,mod\,} 5 = V^\dagger \at{M_V(a,b,c,d)\mathrm{\,mod\,}5}\mathrm{\,mod\,}5,
\end{equation}
by linearity.  Equation \ref{eq:vbasis-alternate-factorization} then implies that
\begin{equation} \label{eq:vdagger-solution-mod5}
    V^\dagger \at{M_V(a,b,c,d)\mathrm{\,mod\,}5}\mathrm{\,mod\,}5 = M_V(0,0,0,0).
\end{equation}
Solutions for $V$ therefore depend only on values of $a,b,c,d$ modulo $5$. The proof proceeds by exhaustive numeric calculation over all tuples $a,b,c,d \mod 5$, $(a,b,c,d) \ne (0,0,0,0)$, such that $\mathrm{Det}(M_V(a,b,c,d)) = 0 \mod 5$. %\CP{this suggests a computation bigger than just considering all $V$ candidates in turn?}

Exhausting over all tuples $a,b,c,d\mod 5$ produces a $12$-bit indexed table %\CP{looks like an overkill to me as you really only look for equivalence classes modulo scalar multiplications. You should only consider 125+25+5+1 tuples i.e. 8 bits} 
that can be used to lookup an appropriate $V$ for any $M_V(a,b,c,d)$ with determinant $5^n$.  This lookup can be used to construct an efficient algorithm for exact synthesis.

\begin{algorithm}
\SetAlgoLined
\KwIn{Elements $a,b,c,d$ from $\mathbb{Z}$ such that $\mathrm{Det}(M_V(a,b,c,d)) = 5^n$ for integer $n \ge 0$}
\KwOut{Sequence of matrices in $\{V_x, V_y, V_z,V^\dagger_x, V^\dagger_y, V^\dagger_z\}$ and a Pauli gate}
 $gates \leftarrow$ empty list\;
 \While{$\mathrm{Det}(M_V(a,b,c,d)) = 0\mod 5$}{
  $V \leftarrow$ Lookup$_V$($a\,\mathrm{mod}\,5,b\,\mathrm{mod}\,5,c\,\mathrm{mod}\,5,d\,\mathrm{mod}\,5$)\;
  $M_V(a,b,c,d) \leftarrow V^\dagger M_V(a,b,c,d)/\mathrm{Det}(V)$\;
  prepend $V$ to $gates$\;
 }
 \KwRet{gates, $M_V(a,b,c,d)$}
 \caption{V basis exact synthesis.}
\end{algorithm}

The algorithm "picks off" each $V$ gate sequentially.  At each step, the leading factor of $V$ is removed from $M_V(a,b,c,d)$ by multiplying on the left by $V^\dagger$.  The resulting tuple $a'',b'',c'',d''$ is then divided by $5$ yielding a new $M_V(a',b',c',d')$ with determinant $5^{n-1}$.  The output is the sequence of picked-off $V$ gates, in reverse order.

\subsection{Clifford + \texorpdfstring{$T$}{T}}
Recall from~\sec{approximation-solutions-clifford-t} the $T$ matrices
$T_P := \frac{1}{\sqrt{2+\sqrt{2}}}\at{I + \frac{I-iP}{\sqrt{2}}}$ for $P\in\{X,Y,Z\}$ and corresponding quaternion order
\begin{equation}
    \label{eq:t-basis-order}
    \mathcal{O} 
    = \mathbb{Z}[\sqrt{2}]\cdot I 
    + \mathbb{Z}[\sqrt{2}]\cdot\frac{I+iX}{\sqrt{2}}
    + \mathbb{Z}[\sqrt{2}]\cdot\frac{I+iY}{\sqrt{2}}
    + \mathbb{Z}[\sqrt{2}]\cdot\frac{I+iX+iY+iZ}{2}.
\end{equation}
\begin{equation}
\mathbb{Z}[\sqrt{2}] = \{ a + b \sqrt 2 : a,b \in \mathbb{Z} \}
\end{equation}
This order contains the $T$ matrices each scaled by $\sqrt{2+\sqrt{2}}$.
For notational convenience we use
$T_x = \left(\sqrt{2+\sqrt{2}}\right)T_X$,
$T_y = \left(\sqrt{2+\sqrt{2}}\right)T_Y$, and
$T_z = \left(\sqrt{2+\sqrt{2}}\right)T_Z$
to refer to the scaled $T$ matrices. 
Note that $T_p T_p^\dagger = \mathrm{Det}(T_p) I = (2+\sqrt 2)I$ for $p = x,y,z$.

Define $M_T$ as a function that, according to~\eq{t-basis-order}, maps elements $a,b,c,d$ of $\mathbb{Z}[\sqrt{2}]$ to a matrix:
\begin{equation}
    M_T(a,b,c,d) := a\cdot I + b\cdot \frac{I + iX}{\sqrt{2}} + c\cdot \frac{I+iY}{\sqrt{2}} + d\cdot \frac{I+iX+iY+iZ}{2}.
\end{equation}

Any matrix $M_T(a,b,c,d)$ with determinant $(2+\sqrt{2})^n$ can be decomposed into a length-$n$ sequence of $T$ gates~\cite{Kliuchnikov2015b}. 
We omit the formal theorem because the situation is analogous to that of~\theo{v-basis-exact-decomposition}.
Note that $\mathrm{Det}(M_T(a,b,c,d))=1$ if and only if $M_T(a,b,c,d)$ is a Clifford unitary.
For reference, we provide a standard $T$ factorization.

\begin{lem}[Clifford+$T$ factorization]\label{lem:t-factorization}
  Let $a,b,c,d\in\mathbb{Z}[\sqrt{2}]$ such that $\mathrm{Det}(M_T(a,b,c,d)) = (2+\sqrt{2})^n$ for integer $n \ge 1$,
  and at least one of $a,b,c,d$ is not divisible by $2+\sqrt{2}$. 
  Then there exists $g\in\{T_x, T_y, T_z\}$ and $a',b',c',d'\in\mathbb{Z}[\sqrt{2}]$ such that
  \begin{equation}\label{eq:t-factorization}
      M_T(a,b,c,d) =  g M_T(a',b',c',d').
  \end{equation}
and $\mathrm{Det}(M_T(a',b',c',d')) = (2+\sqrt{2})^{n-1}$.
\end{lem}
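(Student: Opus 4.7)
The plan is to mirror the structure of the $V$ basis proof of~\lemm{vbasis-factorization}. The key identities are $T_p T_p^\dagger = T_p^\dagger T_p = (2+\sqrt{2}) I$ for $p \in \{x,y,z\}$, and the fact that the order $\mathcal{O}$ is closed under multiplication by $T_x, T_y, T_z$. First I would observe that the claimed decomposition $M_T(a,b,c,d) = g M_T(a',b',c',d')$ is equivalent, after left multiplication by $g^\dagger$, to the equation
\[
  g^\dagger M_T(a,b,c,d) = (2+\sqrt{2}) M_T(a',b',c',d').
\]
Since $g^\dagger M_T(a,b,c,d) \in \mathcal{O}$ always, finding $M_T(a',b',c',d')$ with entries in $\mathbb{Z}[\sqrt{2}]$ amounts to checking that $g^\dagger M_T(a,b,c,d)$ is divisible by $2+\sqrt{2}$ when expressed in the $\omega_i$-basis; when such $g$ exists, setting $M_T(a',b',c',d') := g^\dagger M_T(a,b,c,d)/(2+\sqrt{2})$ automatically yields determinant $(2+\sqrt{2})^{n-1}$ by multiplicativity, since $\mathrm{Det}(g) = 2+\sqrt{2}$.

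Next I would reduce the problem to a finite check modulo the prime ideal $(2+\sqrt{2}) \subset \mathbb{Z}[\sqrt{2}]$. Because $(2+\sqrt{2})(2-\sqrt{2}) = 2$, both $2$ and $\sqrt{2}$ lie in this ideal, so the quotient $\mathbb{Z}[\sqrt{2}]/(2+\sqrt{2}) \cong \mathbb{F}_2$ via $a + b\sqrt{2} \mapsto a \bmod 2$. Consequently the residue of $M_T(a,b,c,d)$ in $\mathcal{O}/(2+\sqrt{2})\mathcal{O}$ depends only on the four parities $(a \bmod 2, b \bmod 2, c \bmod 2, d \bmod 2)$, leaving $2^4 = 16$ cases. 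Linearity gives
\[
  g^\dagger \bigl( M_T(a,b,c,d) \bmod (2+\sqrt{2}) \bigr) \bmod (2+\sqrt{2}) = 0
\]
as the exact condition analogous to \eq{vdagger-solution-mod5} for the $V$ basis. The hypothesis that not all of $a,b,c,d$ are divisible by $2+\sqrt{2}$ rules out the zero residue, and the hypothesis $\mathrm{Det}(M_T(a,b,c,d)) = (2+\sqrt{2})^n$ with $n \geq 1$ forces the residue to satisfy $\mathrm{Det} \equiv 0 \pmod{2+\sqrt{2}}$, which further prunes the 15 nonzero cases down to the residues that can actually arise.

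Finally I would verify by a finite case-by-case calculation that for each surviving residue class, at least one of $g \in \{T_x, T_y, T_z\}$ satisfies the divisibility condition above. Symmetry under conjugation by Pauli matrices (which permutes the three $T$-gates and acts on the residue tuples) should cut the effective number of independent cases down considerably, so the check can be organized as a small table lookup in analogy with the $V$ basis case. The main obstacle is this exhaustive verification: unlike the $V$ basis, here one must simultaneously track the $\mathbb{Z}[\sqrt{2}]$ structure and the four-component $\omega_i$-basis, so the calculation of $g^\dagger \omega_i \bmod (2+\sqrt{2})$ must be done carefully for each of $\omega_1, \omega_2, \omega_3, \omega_4$ and each of the three gates. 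Once that table is established, completeness of the covering follows from a straightforward enumeration, and \theo{v-basis-exact-decomposition}'s analogue for Clifford+$T$ then follows by induction on $n$, using the factorization at each step and terminating when $n=0$ (i.e., when $M_T$ reduces to a Clifford matrix, which is the determinant-one case).
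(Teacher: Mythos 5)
Your proposal matches the paper's proof: both reduce to checking divisibility modulo the prime ideal $(2+\sqrt{2})\mathbb{Z}[\sqrt{2}] = \sqrt{2}\mathbb{Z}[\sqrt{2}]$, identify the residue field as $\mathbb{F}_2$ (so only the parities of $a,b,c,d$ matter), and finish by an exhaustive verification over the $2^4$ residue tuples in direct analogy with the $V$-basis case. Your additional remark about using Pauli-conjugation symmetry to shrink the case table is a sensible practical optimization but not part of the paper's (brief) proof.
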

The proof of this Lemma is analogous to that of~\lemm{vbasis-factorization}, except that we exhaust over tuples $a,b,c,d$ modulo $2+\sqrt{2}$ instead of tuples modulo $5$. 
More precisely, we consider elements of $\mathbb{Z}[\sqrt{2}]$ modulo prime ideal $(2+\sqrt{2})\mathbb{Z}[\sqrt{2}] = \sqrt{2}\mathbb{Z}[\sqrt{2}]$.
Considering values modulo $2 + \sqrt 2$ is the same as considering values modulo $\sqrt 2$.
Every element of $\mathbb{Z}[\sqrt{2}]$ is of the from $\sqrt 2 z$ or $\sqrt 2 z + 1$ for some $z$ from $\mathbb{Z}[\sqrt{2}]$,
that is there are two possible values modulo $\sqrt{2}\mathbb{Z}[\sqrt{2}]$.

The Clifford+$T$ exact synthesis algorithm is shown below.
\begin{algorithm}
\SetAlgoLined
\KwIn{Elements $a,b,c,d$ from $\mathbb{Z}[\sqrt 2]$ such that $\mathrm{Det}(M(a,b,c,d)) = \ell^n$ for $\ell = 2+\sqrt{2}, n \in \mathbb{Z}, n \ge 0$}
\KwOut{Sequence of matrices in $\{T_x, T_y, T_z\}$ and a Clifford unitary.}
 $gates \leftarrow$ empty list\;
 \While{$\mathrm{Det}(M_T(a,b,c,d)) = 0 \mathrm{ mod } \ell$}{
  $g \leftarrow$ Lookup$_T$($a \text{ mod }\ell,b \text{ mod } \ell, c \text{ mod }\ell, d \text{ mod }\ell$)\;
  $M_T(a,b,c,d) \leftarrow g^\dagger M_T(a,b,c,d) / \mathrm{Det}(g)$\;
  prepend $g$ to $gates$\;
 }
 \KwRet{gates, $M_T(a,b,c,d)$}
 \caption{Clifford+$T$ exact synthesis.}
\end{algorithm}
The table Lookup$_T$($a,b,c,d$) is pre-calculated by enumerating over all tuples $a,b,c,d$ modulo $2+\sqrt{2}$, in analogy to the $V$ basis case.
There are only $2^4$ different options to consider in this case.
Once factorizaton is complete, the \change{remaining} matrix that determinant one and therefore is a Clifford unitary.

\subsection{Clifford + \texorpdfstring{$\sqrt{T}$}{RootT}}
Recall from~\sec{approximation-solutions-clifford-root-t} the $\sqrt{T}^k$ matrices

$$\sqrt{T}_Z^k = \frac{1}{\sqrt{2 + 2 \cos\frac{\pi k}{8}}}
\left(\begin{array}{cc}
1 + e^{-i\pi k/8}&0\\0&1 + e^{i\pi k/8}
\end{array}
\right) = \frac{I\cdot(1+\cos\frac{\pi k}{8}) - iZ\cdot\sin\frac{\pi k}{8}}{\sqrt{2 + 2 \cos\frac{\pi k}{8}}}
$$
For $\sqrt{T}^k_X, \sqrt{T}^k_Y$ we replace $Z$ by $X,Y$ in the equation above.

The corresponding quaternion order is
\begin{equation}
    \label{eq:sqrtt-basis-order}
    \mathcal{O} 
    = \mathbb{Z}[2\cos\frac{\pi}{8}]\cdot I 
    + \mathbb{Z}[2\cos\frac{\pi}{8}]\cdot\frac{I+iX}{\sqrt{2}}
    + \mathbb{Z}[2\cos\frac{\pi}{8}]\cdot\frac{I+iY}{\sqrt{2}}
    + \mathbb{Z}[2\cos\frac{\pi}{8}]\cdot\frac{I+iX+iY+iZ}{2}.
\end{equation}
\begin{equation}
\mathbb{Z}[2\cos\frac{\pi}{8}] = \{a + b\cdot 2\cos\frac{\pi}{8} + c\sqrt{2} + d\cdot2\cos\frac{3\pi}{8}: a,b,c,d \in\mathbb{Z}\}
\end{equation}
This order contains the $\sqrt{T}_P, T_P, \sqrt{T}^3_P$ matrices 
when rescaled appropriately. We define such rescaled versions below: 
$$
\sqrt{T}_p =  \ell ( I\cdot(1 + \cos\frac{\pi}{8})-iP\cdot\cos\frac{3\pi}{8} ), \, \ell = (2 + 2\cos\frac{\pi}{8})
$$
$$
\sqrt{T}^3_p = u_1 \ell ( I\cdot(1 + \cos\frac{3\pi}{8})-iP\cdot\cos\frac{\pi}{8} ), u_1 = 1 - 2\cos\frac{3\pi}{8}-\sqrt 2
$$
$$
T_p = u_2 ( I\cdot(1 + \cos\frac{\pi}{4})-iP\cdot\cos\frac{\pi}{4} ), u_2 = 1 + 2\cos\frac{\pi}{8} - 2\cos\frac{3\pi}{8}
$$
With the above rescaling we have 
$$
\sqrt{T}_p(\sqrt{T}_p)^\dagger = \mathrm{Det}(\sqrt{T}^3_p) \cdot I =\ell^3 \cdot I
$$
$$
\sqrt{T}^3_p(\sqrt{T}^3_p)^\dagger = \mathrm{Det}(\sqrt{T}^3_p) \cdot I =\ell^3 \cdot I
$$
$$
T_p(T_p)^\dagger = \mathrm{Det}(T_p)\cdot I =\ell^2 \cdot I
$$

Define $M_{\sqrt{T}}$ as a function that, according to~\eq{sqrtt-basis-order}, maps elements $a,b,c,d$ of $\mathbb{Z}[2\cos\frac{\pi}{8}]$ to a matrix:
\begin{equation}
    M_{\sqrt{T}}(a,b,c,d) := a\cdot I + b\cdot \frac{I + iX}{\sqrt{2}} + c\cdot \frac{I+iY}{\sqrt{2}} + d\cdot \frac{I+iX+iY+IZ}{2}.
\end{equation}

Any matrix $M_{\sqrt{T}}(a,b,c,d)$ with determinant $\ell^n = (2+2\cos(\pi/8))^n$ can be decomposed into a sequence of $\sqrt{T}$ gates~\cite{Forest2015, Kliuchnikov2015b}.  
Unlike the $V$ and $T$ gate sets, the length of the sequence cannot be deduced exactly from the integer power $n$.

We again omit a formal decomposition theorem, which would be analogous to~\theo{v-basis-exact-decomposition}.
The remainder of this section describes $\sqrt{T}$ factorization algorithm.

\begin{lem}[Clifford + $\sqrt{T}$ factorization]
  Let $a,b,c,d\in\mathbb{Z}[2\cos\frac{\pi}{8}]$ such that $\mathrm{Det}(M_{\sqrt{T}}(a,b,c,d)) = \ell^n$ for integer $n \ge 3$, and at least one of $a,b,c,d$ is not divisible by $\ell$.  Then there exists $g\in\{T_p^{k/2}: p\in\{x,y,z\}, k\in\{1,2,3\}\}$ and $a',b',c',d'\in\mathbb{Z}[2\cos\frac{\pi}{8}]$ such that
  \begin{equation}
      g^\dagger M_{\sqrt{T}}(a,b,c,d) = \mathrm{Det}(g) M_{\sqrt{T}}(a',b',c',d').
  \end{equation}
  and $\mathrm{Det}(M_{\sqrt{T}}(a',b',c',d')) = \ell^{n-k}$ where $ \mathrm{Det}(g) = \ell^k$
\end{lem}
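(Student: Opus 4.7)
The plan is to mirror the proofs of \cref{lem:vbasis-factorization} (V basis) and \cref{lem:t-factorization} (Clifford$+T$), reducing the claim to an exhaustive finite check in a small residue ring. As in those cases, the goal is to find a gate $g$ such that left-multiplication by $g^\dagger$ cancels a factor of $\det(g) = \ell^k$ from every entry of $M_{\sqrt{T}}(a,b,c,d)$, producing a new matrix in the order $\mathcal{O}$ with determinant $\ell^{n-k}$.

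First I would analyze the residue ring $O_K/\ell O_K$ where $\ell = 2 + 2\cos(\pi/8) = 2 + \theta$ for $\theta = 2\cos(\pi/8)$. A direct computation of the absolute norm
\[
N_{K/\mathbb{Q}}(\ell) = \prod_{j=1}^4 \sigma_j(\ell) = (2-\sqrt{2})(2+\sqrt{2}) = 2
\]
shows that $\ell$ generates a prime ideal of residue degree one, so $O_K/\ell O_K \cong \mathbb{F}_2$. Consequently each of $a,b,c,d$ reduces to either $0$ or $1$ modulo $\ell$, yielding only $2^4 - 1 = 15$ non-trivial residue patterns to analyze (the tuple $(0,0,0,0) \bmod \ell$ is ruled out by the hypothesis that at least one of $a,b,c,d$ is not divisible by $\ell$).

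Next I would build the analogue of the lookup tables used for the V and $T$ bases. For each of the $15$ residue patterns, I would verify by direct computation over $\mathbb{F}_2$ that there exists at least one $g \in \{T^{k/2}_p : p \in \{x,y,z\},\ k \in \{1,2,3\}\}$ such that all four coordinates of $g^\dagger M_{\sqrt{T}}(a,b,c,d)$, expressed in the $\mathcal{O}$-basis $\{I,\, (I+iX)/\sqrt{2},\, (I+iY)/\sqrt{2},\, (I+iX+iY+iZ)/2\}$, lie in $\ell^k O_K$. This divisibility is equivalent to the statement that $g^\dagger M_{\sqrt{T}}(a,b,c,d)/\det(g) \in \mathcal{O}$, which then defines the required $(a',b',c',d') \in O_K^4$. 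The determinant equality $\det(M_{\sqrt{T}}(a',b',c',d')) = \ell^{n-k}$ follows immediately from multiplicativity of the determinant, and the hypothesis $n \geq 3$ guarantees this remains a non-negative power (critical for the worst case $k=3$ coming from $\sqrt{T}$ and $\sqrt{T}^3$).

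The main obstacle is the existence step: showing that every one of the $15$ non-trivial residue classes admits at least one valid gate choice from the enlarged set of $9$ candidates. In contrast to the V and $T$ cases, where all candidate gates have the same determinant and each residue class admits a unique valid gate, here the candidates have two distinct denominator-power values ($k=2$ for $T_p$ and $k=3$ for $T_p^{1/2}, T_p^{3/2}$), so different residue classes may only be reducible by gates with a particular $k$. The verification is still a finite mechanical check, but one must also confirm that the tabulated choice produces output whose coordinates indeed lie in $O_K$ after dividing by $\ell^k$, rather than only in the fraction field $K$. Once this table is tabulated, the lemma follows by reading off the appropriate entry, exactly as in Lemmas~\ref{lem:vbasis-factorization} and~\ref{lem:t-factorization}.
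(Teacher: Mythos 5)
Your reduction to a finite residue check is the right general strategy, and your computation $N_{K/\mathbb{Q}}(\ell) = (2-\sqrt{2})(2+\sqrt{2}) = 2$, hence $O_K/\ell O_K \cong \mathbb{F}_2$, is correct. However, working modulo $\ell$ is insufficient here, and this is the key difference from the $V$ and Clifford$+T$ cases. In those cases every candidate gate has determinant exactly equal to the base prime ($5$ and $2+\sqrt{2}$ respectively), so the divisibility condition ``$g^\dagger M/\det(g) \in \mathcal{O}$'' is a condition on the coordinates of $g^\dagger M$ modulo the first power of the prime, and is therefore a function of $(a,b,c,d) \bmod \ell$. For Clifford$+\sqrt{T}$ the candidate gates have $\det(T_p) = \ell^2$ and $\det(T_p^{1/2}) = \det(T_p^{3/2}) = \ell^3$ --- never $\ell^1$ --- so one must decide whether the coordinates of $g^\dagger M$ lie in $\ell^2 O_K$ or $\ell^3 O_K$. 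That divisibility is \emph{not} determined by $(a,b,c,d) \bmod \ell$: e.g.\ $a=1$ and $a=1+\ell$ agree in $\mathbb{F}_2$ but have different residues modulo $\ell^2$ and $\ell^3$. Your proposed $2^4 - 1 = 15$ cases therefore do not suffice; two tuples in the same class modulo $\ell$ may require different gates, or one may admit a factorization while the other does not at the current step.

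The paper's actual construction enumerates $(a,b,c,d)$ modulo $\ell^3$, noting that every element of $O_K$ can be written as $a_0 + a_1 \ell + a_2 \ell^2 + z \ell^3$ with $a_k \in \{0,1\}$, giving $2^{12} = 4096$ residue tuples to check, and the loop in the synthesis algorithm correspondingly terminates when $\det(M_{\sqrt{T}}(a,b,c,d))$ is no longer divisible by $\ell^3$ (the remaining matrix with determinant $1$, $\ell$ or $\ell^2$ being a Clifford or a single $T_p$ times a Clifford). Your proof sketch gestures at this difficulty in the final paragraph --- ``one must also confirm that the tabulated choice produces output whose coordinates indeed lie in $O_K$'' --- but then incorrectly asserts that the verification remains a check over the $15$ classes in $\mathbb{F}_2^4$. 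To repair the argument, replace the residue ring $O_K/\ell O_K$ by $O_K/\ell^3 O_K$ throughout, and exhaust over the resulting $4096$ tuples (restricted to those with $\det \equiv 0 \bmod \ell^3$ and at least one coordinate a unit mod $\ell$).
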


One can check via brute-force search that the only matrices with determinants $1,\ell,\ell^2$ are Clifford gates and a $T_p$ times a Clifford gate correspondingly. 

The asymmetry between determinants for $T_p$ and $T_p^{1/2}$ and $T_p^{3/2}$ means that the gate sequence length cannot be calculated from $n$ alone. 
Given a sequence with $N_1$ elements $T_p^{1/2}$, $N_2$ elements $T_p$ and $N_3$ elements $T_p^{3/2}$ the total power $n$ must be $3N_1+2N_2+3N_1$.

\begin{algorithm}
\SetAlgoLined
\KwIn{Elements $a,b,c,d$ from $\mathbb{Z}[2\cos\frac{\pi}{8}]$ such that $\mathrm{Det}(M_{\sqrt{T}}(a,b,c,d)) = \ell^n$ for $ \ell = 2+2\cos\frac{\pi}{8}, n \in \mathbb{Z}, n \ge 0$}
\KwOut{Sequence of matrices in $\{T_x^{1/2}, T_y^{1/2}, T_z^{1/2}, T_x, T_y, T_z, T_x^{3/2}, T_y^{3/2}, T_z^{3/2}\}$ and a Clifford unitary}
 $gates \leftarrow$ empty list\;
 \While{$\mathrm{Det}(M_{\sqrt{T}}(a,b,c,d))\,\mathrm{ mod }\,\ell^3 = 0$}{
  $g \leftarrow$ Lookup$_{\sqrt{T}}$($a\text{ mod }\ell^3,b\text{ mod }\ell^3,c\text{ mod }\ell^3,d\text{ mod }\ell^3$)\;
  $M_{\sqrt{T}}(a,b,c,d) \leftarrow  g^\dagger M_{\sqrt{T}}(a,b,c,d)/\mathrm{Det}(g)$\;
  prepend $g$ to $gates$\;
 }
 Rescale $M_{\sqrt{T}}(a,b,c,d)$ so it has determinant $(2 + \sqrt 2)^n$ for $n=0,1$\;
 Apply Clifford+$T$ synthesis to $M_{\sqrt{T}}(a,b,c,d)$\;
 \KwRet{gates, $M_{\sqrt{T}}(a,b,c,d)$}
 \caption{Clifford + $\sqrt{T}$ exact synthesis.}
\end{algorithm}

Like the Clifford+$T$ case, the table Lookup$_{\sqrt{T}}$($a,b,c,d$) is pre-calculated by enumerating over all tuples $a,b,c,d$ modulo $\ell^3$.
Note, that every element of $\mathbb{Z}[\cos\frac{\pi}{8}]$ can be written as $a_0 + a_1 \ell + a_2 \ell^2 + z \ell^3$
for $a_k \in \{0,1\}$ and $z$ from $\mathbb{Z}[\cos\frac{\pi}{8}]$. 
For this reason, there are only $2^{3\cdot4} = 4096$ options to consider when building the lookup table. 
Using lookup table reduces the number of matrix multiplications needed in the exact synthesis algorithm by factor of nine.

\section{Applications}
\subsection{Shorter quantum circuits for single qubit unitaries (Numerical Results)}\label{sec:numerical-results}

We have implemented algorithms described in the paper in Magma. 
For the numerical results we focus on four approximation protocols for diagonal unitaries~(diagonal, mixed diagonal, fallback and mixed fallback) and two gate sets~(Clifford+$T$ and Clifford+$\sqrt{T}$). 
We target rotations by random angles and by Fourier angles $\pi/2^k$.
The data-sets of angles for which we computed solutions numerically are summarized in \cref{tab:data-sets}.

\change{We have made available the circuits for the approximations we have found in a publicly accessible data-set \cite{CircuitsDataset}. Our results are reproducible using a Python notebook provided on GitHub \cite{CircuitsNotebook}, which uses the circuits as a starting point. The data-set structure is thoroughly documented in the Python notebook, making it easy to compare our results with any future work.
}

\begin{table}[h]
    \caption[Data sets]{
    \label{tab:data-sets}
    Sets of angles for which we computed solutions numerically.
    We approximate diagonal rotations using diagonal, mixed diagonal, fallback and mixed fallback protocols.}
\begin{center}
{
\scriptsize
\setlength{\tabcolsep}{0.5em}
\begin{tabular}{|c|c|c|c||c|c|}
\hline 
Gate set & Cost & \multicolumn{4}{c|}{Data sets and corresponding figures}\\
\cline{3-6} \cline{4-6} \cline{5-6} \cline{6-6} 
 & function & Random angles & Figure & Fourier angles & Figure\\
\hline 
\hline 
\multirow{2}{*}{Clifford+$T$} & \multirow{2}{*}{T-count} & $1358$ uniformly random angles & \multirow{2}{*}{\cref{fig:clifford-t-random}} & $\pi/2^{n}$ & \multirow{2}{*}{\cref{fig:clifford-t-fourier}}\\
 &  & from interval $[0,2\pi]$ &  & $n\in\{3,\ldots,36\}$ & \\
\hline 
\multirow{3}{*}{Clifford+$\sqrt{T}$} & Power  & $1221$ uniformly random angles &  \cref{fig:clifford-root-t-random-power} & $\pi/2^{n}$ & \cref{fig:clifford-root-t-fourier-power} \\
\cline{2-2} \cline{4-4} \cline{6-6} 
 & Gate count & from interval $[0,2\pi]$ & \cref{fig:clifford-root-t-random-rcount} & $n\in\{3,\ldots,45\}$ & \cref{fig:clifford-root-t-fourier-rcount} \\
\cline{2-2} \cline{4-4} \cline{6-6} 
 & T-count &  & \cref{fig:clifford-root-t-random-tcount} &  & \cref{fig:clifford-root-t-fourier-tcount} \\
\hline 
\end{tabular}
}
\end{center}
\end{table}

Numerical results for random angles agree with the heuristic cost scaling derived in~\cref{sec:cost-scaling} as we can see from~\cref{tab:approximation-cost-scaling} and 
from~\cref{fig:clifford-t-random,fig:clifford-root-t-random}.
Results for Fourier angles are a bit more complex.  For many choices of angle and target accuracy, the Identity is a sufficient approximation.
This is evident in the wide gap between minimum and maximum cost illustrated by the shaded regions of~\cref{fig:clifford-t-fourier} and~\cref{fig:clifford-root-t-fourier}.
These low-cost Identity approximations have the effect of pulling down the overall mean cost as compared to random angles.
However, once the Identity is no longer a viable option at high accuracy, the cost scaling is roughly the same as that of random angles.

\begin{figure}
\caption[Cost of approximating Fourier angles with Clifford+T gates]{
\label{fig:clifford-t-fourier}
Cost of approximating a set of Fourier angles rotations~(see~\cref{tab:data-sets}) with Clifford+$T$ gates using four approximation protocols.
We fix a set of approximation accuracy values. For each value in the set we compute mean cost over 
all target angles. 
Shaded regions indicate range of costs from min to max over all angles for given accuracy value. 
In all reported fallback protocols the probability of the fallback step $1-q$ is at most $0.01$.
}
\includegraphics{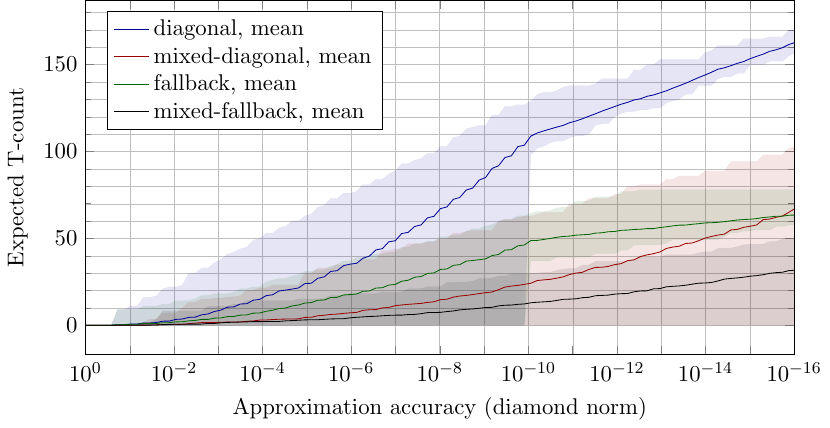}
\end{figure}

\begin{figure}
\caption[Cost of approximating random angles with Clifford+Root(T) gates]{
\label{fig:clifford-root-t-random}
Cost of approximating a set of random rotations~(see~\cref{tab:data-sets}) with Clifford+$\sqrt T$ gates using four approximation protocols.
We fix a set of approximation accuracy values. For each value in the set we compute mean cost over 
all target angles. Vertical bars show the cost standard deviation for given accuracy value. 
Shaded regions indicate range of costs from min to max over all angles for given accuracy value. 
In all reported fallback protocols the probability of the fallback step $1-q$ is at most $0.01$.
The linear fit results are in \cref{tab:approximation-cost-scaling}.
\change{When running approximation algorithms we limited the maximum cost of found approximations 
separately for each protocol. This is why the lines end at different accuracy.}
}
\begin{subfigure}[b]{\textwidth}
 \centering
 \caption{\label{fig:clifford-root-t-random-power} Scaling of denominator power with approximation accuracy. Denominator power of $\sqrt{T}$ and $T$ is $3$ and $2$.}

 \includegraphics[scale=0.8]{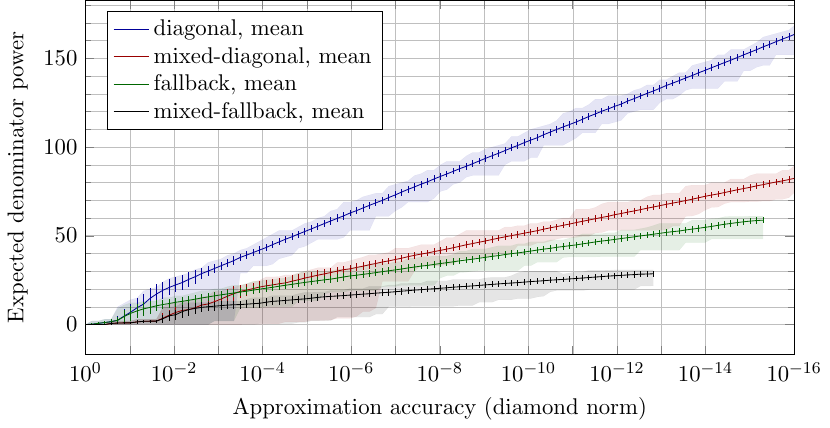}
\end{subfigure}

\vspace{1em}

\begin{subfigure}[b]{\textwidth}
 \centering
 \caption{\label{fig:clifford-root-t-random-rcount} Scaling of gate count with approximation accuracy.  $\sqrt{T}$ gates and $T$ gates contribute $1$ to the gate count.}

 \includegraphics[scale=0.8]{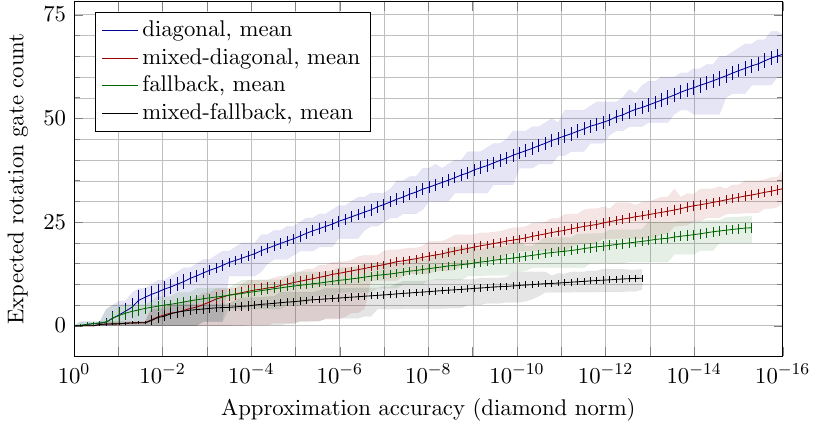}
\end{subfigure}

\vspace{1em}

\begin{subfigure}[b]{\textwidth}
 \centering
 \caption{\label{fig:clifford-root-t-random-tcount} Scaling of T-count with approximation accuracy. T-count of $\sqrt{T}$ gates is four.}
 \includegraphics[scale=0.8]{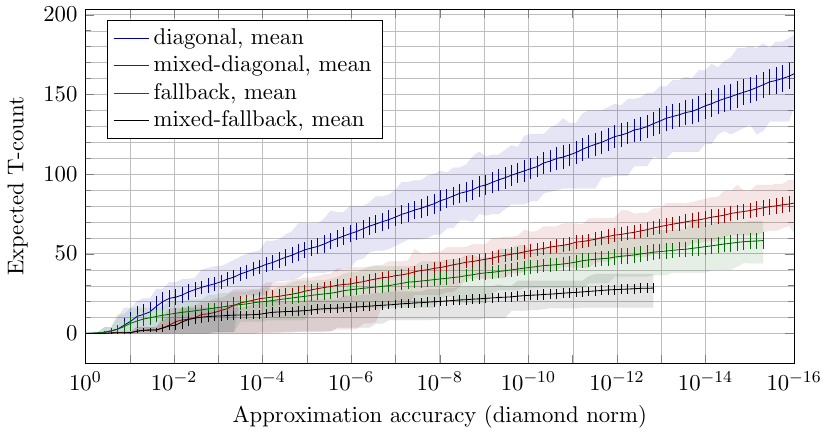}
\end{subfigure}
\end{figure}

\begin{figure}
\caption[Cost of approximating Fourier angles with Clifford+Root(T) gates]{
\label{fig:clifford-root-t-fourier}
Cost of approximating a set of Fourier angles rotations~(see~\cref{tab:data-sets}) with Clifford+$\sqrt{T}$ gates using four approximation protocols.
We fix a set of approximation accuracy values. For each value in the set we compute mean cost over 
all target angles.
Shaded regions indicate range of costs from min to max over all angles for given accuracy value. 
In all reported fallback protocols the probability of the fallback step $1-q$ is at most $0.01$.
\change{When running approximation algorithms we limited the maximum cost of found approximations 
separately for each protocol. This is why the lines end at different accuracy.}
}
\begin{subfigure}[b]{\textwidth}
 \centering
 \caption{Scaling of denominator power with approximation accuracy. Denominator power of $\sqrt{T}$ and $T$ is $3$ and $2$.}
 \label{fig:clifford-root-t-fourier-power}
 \includegraphics[scale=0.8]{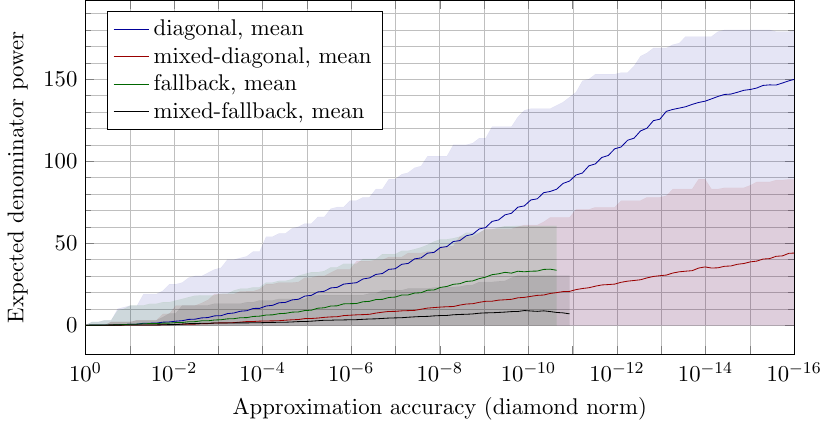}
\end{subfigure}

\vspace{1em}

\begin{subfigure}[b]{\textwidth}
 \centering
 \caption{Scaling of gate count with approximation accuracy.  $\sqrt{T}$ gates and $T$ gates contribute $1$ to the gate count.}
 \label{fig:clifford-root-t-fourier-rcount}
 \includegraphics[scale=0.8]{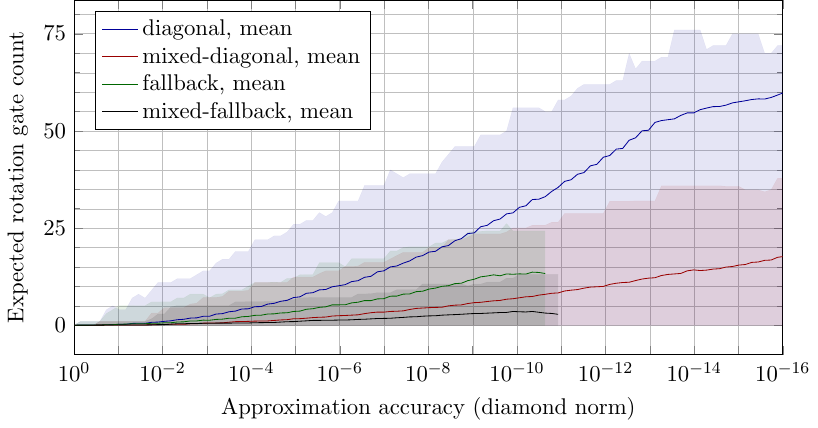}
\end{subfigure}

\vspace{1em}

\begin{subfigure}[b]{\textwidth}
 \centering
 \caption{Scaling of T-count with approximation accuracy. T-count of $\sqrt{T}$ gates is four.}
 \label{fig:clifford-root-t-fourier-tcount}
 \includegraphics[scale=0.8]{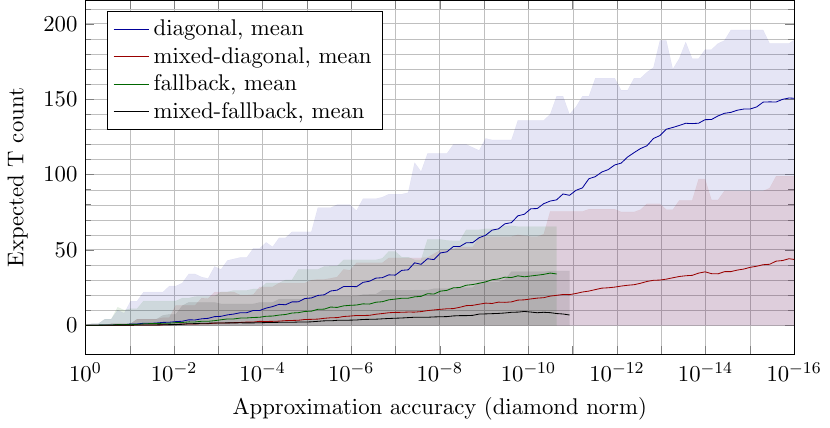}
\end{subfigure}
\end{figure}

There is more variation in gate count and T-count when approximating using the Clifford+$\sqrt{T}$ gate set than when approximating using Clifford+$T$,
 even for random rotations (\cref{fig:clifford-t-random} and \cref{fig:clifford-root-t-random}). This is because our algorithm finds optimal solutions to the sub-problems only with respect to 
denominator power cost function.
For Clifford+$T$  gate set, the power cost function coincides with T-count and non-Clifford gate count. 
For Clifford+$\sqrt{T}$  gate set, the power cost function can be related to T-count and non-Clifford gate count using an additional assumption 
that the number of $\sqrt{T}, \sqrt{T}^3$ gates is the sequence roughly the same as the number of $T$ gates. 
For this reason, we see that the variations in power cost function in \cref{fig:clifford-t-random} and \cref{fig:clifford-root-t-random-power} are similar. 
However, there is more variations in T-count and non-Clifford gate count in \cref{fig:clifford-root-t-random-tcount} and \cref{fig:clifford-root-t-random-power}.

\change{We conducted most of our numerical experiments using Intel Xeon Gold 6136. We did not collect detailed data on the runtime of the algorithms, nor did we attempt a high-performance implementation. Our goal was to create a large enough data-set to help us understand the advantages provided by new approximation protocols and the benefits of using the Clifford+$\sqrt{\text{T}}$ gate-set instead of the Clifford+${\text{T}}$ gate-set.}

\change{To achieve an accuracy of $10^{-15}$ with the Clifford+${\text{T}}$ diagonal approximation, our implementation required 11 seconds. However, achieving the same accuracy with Clifford+$\sqrt{\text{T}}$ required 259 seconds. In both cases, the bottleneck was the integer point enumeration sub-routine. The integer point enumeration sub-routine specially designed for Clifford+${\text{T}}$ in \cite{RossSelinger2014} is much faster, and all protocols discussed here can benefit from it when targeting the Clifford+${\text{T}}$ gate-set.}

\change{In our implementation of integer point enumeration we relied on rational arithmetic instead of floating-point arithmetic to avoid numerical stability issues. However, using floating-point arithmetic, combined with careful treatment of numerical stability, might significantly improve performance.
Integer point enumeration problems that arise require multi-precision arithmetic
which limits the applicability of the existing Integer Programming libraries.
}

\subsection{Further applications}
\label{sec:magnitude-approx-applications}

We show that magnitude approximation problem can provide resource savings not only when approximating 
general $\mathrm{SU}(2)$ unitaries, as discussed in \cref{sec:magnitude-approximation} and \cref{sec:mixed-magnitude-approximation},
but also for qubit state preparation and approximating general $\mathrm{SU}(4)$
unitaries. We believe that idea of approximating $X$ rotations up to $Z$ rotations will be fruitful beyond provided examples.
\newcommand{\appscale}{0.7}

\includegraphics[scale=\appscale]{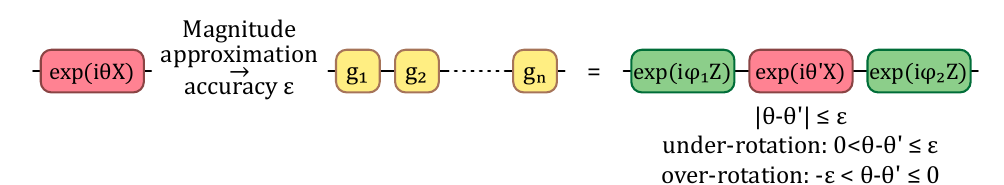}

The main idea is that when approximating X rotation within a quantum circuit, extra Z exponents can be absorbed into surrounding gates.
Similarly, if we are approximating Z rotations, extra X exponents can be absorbed into surrounding gates.

\includegraphics[scale=\appscale]{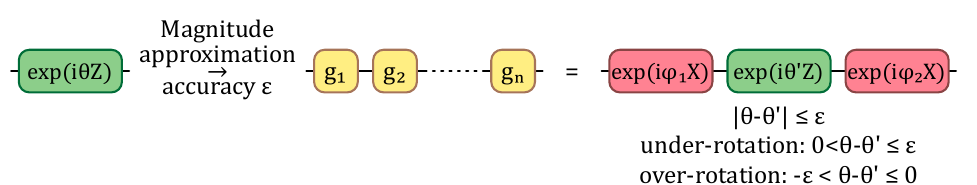}

It is easy to exchange X and Z in our circuits by using Hadamard gates and following circuit identities:
% \CP{is it clear that all $Hg_iH$ are also in the gate set? (yes because $H$ has associated 0 weight cost?)}

\includegraphics[scale=\appscale]{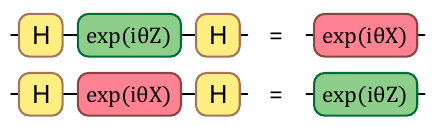}

For the above to hold we require that the gate set is fixed by Hadamard conjugation, that is for any gate $g$ from the gate set, $HgH$ is also in the gate set.
Luckily Clifford+$T$, Clifford+$\sqrt{T}$ and V basis all have this property.
The use of the magnitude approximation for approximating $\mathrm{SU}(2)$ unitaries discussed in \cref{sec:magnitude-approximation}
is summarized using circuit diagrams as follows:

\includegraphics[scale=\appscale]{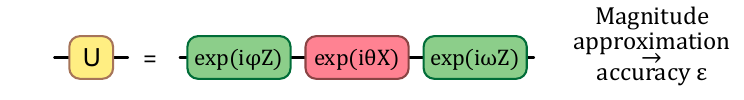}

\includegraphics[scale=\appscale]{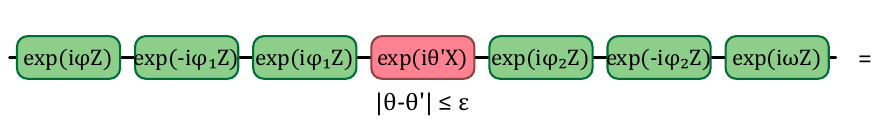}

\includegraphics[scale=\appscale]{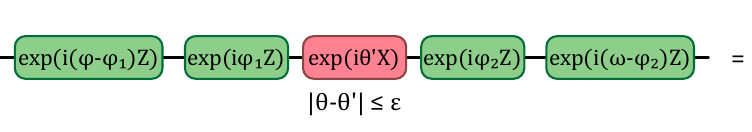}

\includegraphics[scale=\appscale]{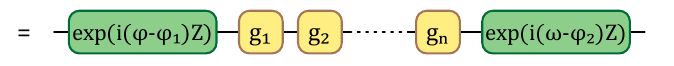}

Approximating $\mathrm{SU}(2)$ requires solving one magnitude approximation and two diagonal approximation problems.
Similarly we improve the preparation of an arbitrary one qubit state.

\includegraphics[scale=\appscale]{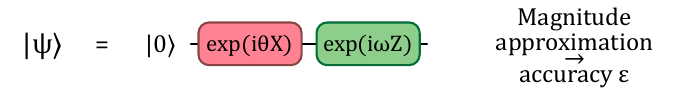}

\includegraphics[scale=\appscale]{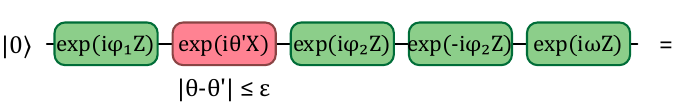}

\includegraphics[scale=\appscale]{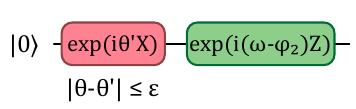}
 
Above we use the fact that $|0\rangle$ is an eigenstate of any Z rotation.
Approximating  qubit state requires solving one magnitude approximation and one diagonal approximation problem. 

Finally, we show that magnitude approximation can be used to find shorter approximations of two qubit unitaries.
We use a rotation and CNOT optimal circuit from \href{https://arxiv.org/pdf/quant-ph/0308033.pdf\#page=5}{arxiv:quant-ph/0308033}.

\includegraphics[scale=\appscale]{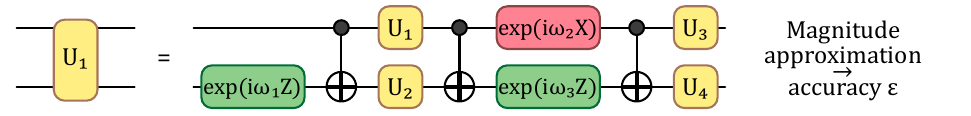}

\includegraphics[scale=\appscale]{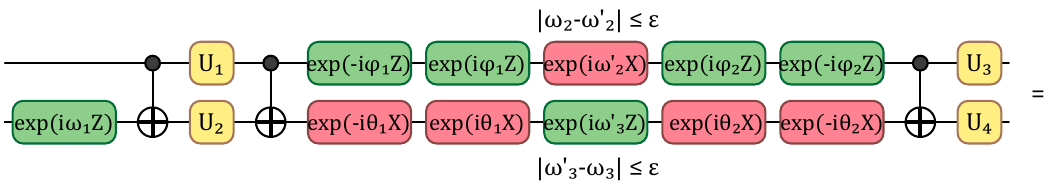}

\includegraphics[scale=\appscale]{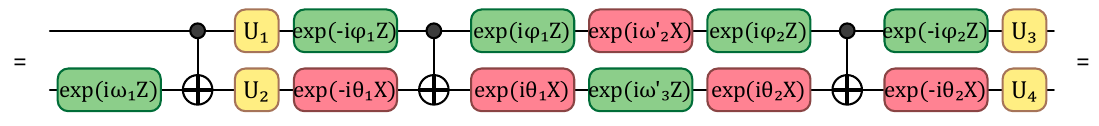}

\includegraphics[scale=\appscale]{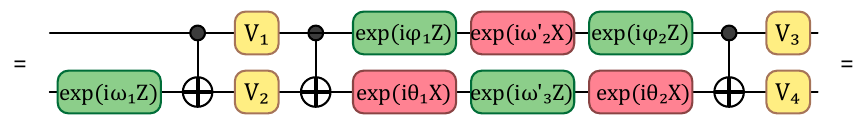}

\includegraphics[scale=\appscale]{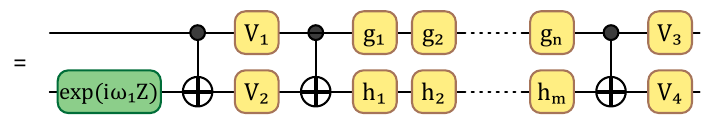}

Above we used the following circuit identities: 

\includegraphics[scale=\appscale]{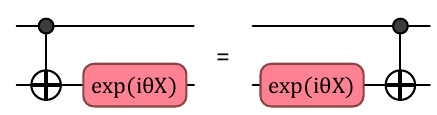}
\hspace{3em}
\includegraphics[scale=\appscale]{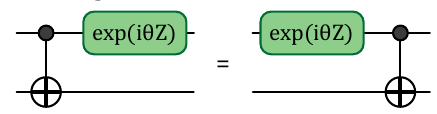}

which follow from representing CNOT matrix as: 

$$
 \mathrm{CNOT} = |0\rangle \langle 0| \otimes I + |1\rangle \langle 1| \otimes X = ((I+Z)\otimes I + (I-Z)\otimes X)/2
$$

We then apply the result for approximating arbitrary qubit unitaries to $V_1, V_2, V_3, V_4$. 
Approximating $\mathrm{SU}(4)$ requires solving six magnitude approximation and nine diagonal approximation problems.

% \CP{add comparison with state of the art} Vadym: it is \cref{tab:magnitude-approx-applications}; Added explicit mention in the last paragraph

% \CP{This is a $\pm$ straightforward application of the improvements for single qbit approximation, by using an existing reduction from 2qbit gate to 1qbit gate. More generally, we should maybe expect to be able to develop a trick similar to amplitude approximation to improve this reduction.}

Mixing of under-rotated
and over-rotated magnitude approximations applies in all of the above cases similarly to the general $\mathrm{SU}(2)$ case
considered in \cref{sec:magnitude-mixing}. Our improvements to approximating general SU(2), SU(4) unitaries and to 
qubit state preparation are summarized in~\cref{tab:magnitude-approx-applications}, along with comparison to the previous state of the art.

\section{Related problems and algorithms}
\label{sec:crypto-connection}
  In this section we will recall some definitions and results about cryptographic hash functions. In particular, we explain the connection between the Charles, Goren and Lauter hash construction \cite{charles2009cryptographic},  built from LPS graphs, to unitary synthesis problems.

A \emph{hash function} $h:\{0,1\}^\ast\rightarrow\{0,1\}^m$ is a function which takes bitstrings of arbitrary length as inputs,  and outputs bitstrings of fixed length. A hash function is required to be \textit{preimage resistant}; that is, given a value $y\in\{0,1\}^m$ in the image of $h$, it must be computationally infeasible to find a bitstring $x$ which hashes to that value. This is formalized in \problem{preimage}.

\begin{prob}[Preimage Finding Problem]\label{prob:preimage}
Given a hash function $h$ and a value $y\in\mathrm{Im}(h)$, find $x$ such that $h(x) = y$.
\end{prob}

  There are several constructions of hash functions built on Cayley graphs. Given a group $\mathcal{G}$ with generating set $S = \{s_0,\dots,s_k\}$, the corresponding Cayley graph has vertices associated with elements $g$ in $\mathcal{G}$ and directed edges $(g,h)$ if and only if $gh^{-1}\in S$. Writing a message $m=m_1m_2\dots m_k$ with $m_i\in\{0,\dots,k\}$, the hash function is defined by $H(m) = s_{m_1}s_{m_2}\dots s_{m_n}$. For such constructions, called Cayley hash functions, \problem{preimage} can be reformulated as the group theoretic problem below.

\begin{prob}[Constructive Membership Problem]\label{prob:factor}
Let $\mathcal{G}$ be a group with generating set $S = \{s_1,\dots,s_k\}$ and let $N\in\z$ be small. Given an element $g\in\mathcal{G}$, find a sequence $m_1,\dots,m_N$ such that $g = \prod_is_{m_i}.$
\end{prob}

In \cite{charles2009cryptographic}, Charles, Goren and Lauter (CGL) proposed a Cayley hash function based on LPS graphs. LPS graphs were introduced by Lubotsky, Phillips and Sarnak in \cite{lubotzky1988ramanujan}. Let $p,\ell$ be distinct primes congruent to $1\bmod 4$, \change{where $\ell$ is a quadratic residue modulo $p$}. Let $\f_p$ denote the finite field with $p$ elements and let $\iota$ such that $\iota^2 = -1\mod p$. An LPS graph $X_{p,\ell}$ is the Cayley graph with $\mathcal{G}=PSL(2,\f_p)$, the projective special linear group of $2\times2$ matrices over $\f_p$, and generating set $S=\left\{\left(\begin{smallmatrix}
    a+\iota b & c + \iota d\\ -c + \iota d & a-\iota b
\end{smallmatrix}\right): a^2 + b^2 + c^2 + d^2 = \ell\right\}$, where $a>0$ and $b,c,d$ even. We can write $g\in\mathcal{G}$ as $\left(\begin{smallmatrix}
    a+\iota b & c + \iota d\\-c+\iota d & a-\iota b
\end{smallmatrix}\right)$ with $a,b,c,d\in\f_p$ and define the norm function $n(g)=a^2 + b^2 + c^2 + d^2.$  The preimage problem for the CGL hash function amounts to path finding on an LPS graph. Since these are Cayley graphs, the preimage problem is equivalent to \problem{factor}.

Recall that the unitary synthesis problem is the search for a circuit, or sequence, of unitaries from a specified gate set that is equivalent to some target unitary. Clearly, this problem is highly related to \problem{factor}. Unitaries are represented by matrices over $\c$ and vertices in $X_{p,\ell}$ correspond to matrices over $\f_p$, and both problems look for `short' sequences in a subset of matrices. Remarkably, algorithms developed independently to solve the constructive membership problem for LPS graphs \cite{petit2008full,sardari2017complexity} and the quantum unitary synthesis problem \cite{Ross2015, BlassEtAl2015} have many similarities. Petit, Lauter and Quisquater \cite{petit2008full} proposed an algorithm for finding short paths in LPS graphs in which a matrix from the group $\mathcal{G}$ is decomposed into the product of four diagonal matrices with square determinant and graph generators, up to multiplication by a unit. Each diagonal matrix is factorized into elements from $S$ using an extension of the Tillich-Z\'emor algorithm \cite{tillich2008collisions} for collision finding in an LPS graph. This diagonal decomposition method is reminiscent of the Euler decomposition method for unitary synthesis, described in greater detail in \sec{approximation-problems}, in which the target unitary is decomposed into the product of $Z$-axis rotations. Notably, $Z$-axis rotations can be expressed as diagonal matrices. Carvalho Pinto and Petit \cite{pinto2018better} later improved upon the algorithm in \cite{petit2008full}, by decomposing the target matrix into the product of two diagonal matrices and a third non-diagonal, easily-factorizable matrix, resulting in path lengths of $7\log_\ell(p)$. In \sec{approximation-problems} we translate the algorithm to the continuous setting of general unitary approximation, achieving a similar improvement in sequence length. We obtain an additional constant factor improvement by implementing approximation via quantum channel mixing. 
 
We deal with the problem of approximating unitaries to some chosen accuracy $\varepsilon$. The algorithms described in \cite{petit2008full} and \cite{pinto2018better} both involve `lifting' a matrix $M\in PSL(2,\f_p)$ to a matrix $M'\in GL(\z[i])$, such that the corresponding entries of each matrix are congruent modulo $p$. In other words, for some well-defined p-adic norm the distance between $M$ and $M'$ is $O(p^{-1})$. The matrix $M'$ is then factorized over $GL(\z[i])$, with some conditions regarding the determinant size, with each factor mapped back to $PSL(2,\f_p)$  via a group homomorphism. The lifting step is analogous to approximation in the quantum setting, using $p^{-1}$ as a measure of accuracy. Clearly, $p^{-1}$ is analogous to $\varepsilon.$ Of course, in the LPS hash setting $p$ is fixed, whereas in the quantum setting we have some control over the value $\varepsilon$. The length of a sequence indicates the cost of approximating the target unitary in the context of gate synthesis. For the CGL hash function, the sequence length will equal the length of the corresponding path in the LPS graph, and is similarly used as measure of performance for path-finding algorithms. The length of a sequence is determined by taking the norm of the target matrix. For matrices over $\c$, we can use some some complex matrix norm, while matrices in $PSL(2,\f_p)$ use some $p$-adic norm. For instance, the six unitary approximation problems defined in \sec{approximation-problems} use the diamond norm to measure accuracy. Note, however, that despite the similarities just described, not all of these approximation problems have natural analogues in cryptography. In particular, those problems that utilize fall-back and channel mixing techniques do not translate to the classical setting. Moreover, the other properties required of cryptographic hash functions - collision resistance and second preimage resistance - do not yet have quantum approximation analogues, either.

\CP{One problem relevant for the hash function is a non-trivial factorization of the identity; I suppose this one will have absolutely zero interest for gate synthesis? (unless there is a good reason to introduce dummy computation at places?)}
\VK{We have studied approximation of identity, that is not identity for the sake of lower-bounds in 
other papers. For example, see \href{https://arxiv.org/pdf/1904.01124.pdf\#thm.5.2}{Theorem 5.2 in arXiv:1904.01124}.}

The gate sets considered in this paper are quaternion gate sets, so-called due to their relationship to quaternion algebras, \change{which we explain in greater detail in \sec{approximation-problems:general}.} Sarnak first observed the connection between LPS graphs, quaternion orders and quantum gate sets in his letter to Aaronson and  Pollington on the Solvay-Kitaev Theorem and golden gates \cite{sarnak2637letter}. For instance, synthesis over the $V$ basis is analogous to path finding in an LPS graph $X_{p, \ell}$, where $p\equiv 1\bmod 4$ and $\ell=5$. We return to the V basis in \sec{approx-solutions}, as an example of a quaternion gate set, along with the Clifford $+T$ basis and the Clifford$+\sqrt{\text{T}}$ basis.

\printbibliography

\appendix

\section{An example of V basis diagonal approximation of \texorpdfstring{$e^{i\frac{\pi}{4}Z}$}{eipi4Z}}\label{sec:V-basis-example}

We use the notation $I,X,Y,Z$ for Pauli matrices:
\[
I = \at{\begin{array}{cc}1 & 0 \\ 0 & 1\end{array}},
\,
X = \at{\begin{array}{cc}0 & 1 \\ 1 & 0\end{array}},
\,
Y = \at{\begin{array}{cc}0 & -i \\ i & 0\end{array}},
\,
Z = \at{\begin{array}{cc}1 & 0 \\ 0 & -1\end{array}}.
\]
Recall that the V basis consists of the following six  matrices:
\begin{align*}
V_{\pm Z} &= \frac{1}{\sqrt{\ell}}\left(I\pm 2iZ\right),
&V_{\pm Y} &= \frac{1}{\sqrt{\ell}}\left(I\pm 2iY\right),
&V_{\pm X} &= \frac{1}{\sqrt{\ell}}\left(I\pm 2iX\right),
\end{align*}
where $\ell= 5$. 
Let $\theta = \frac{\pi}{4}$ and suppose we want to approximate $U= e^{i\theta Z} = \left(\begin{smallmatrix}
e^{i\pi/4} & 0\\
0 & e^{-i\pi/4}
\end{smallmatrix}\right)$ using the V basis within accuracy $\varepsilon=0.1$ with respect to the diamond norm. In other words, we look for $W$, a product of unitaries from the V basis, which satisfies $\nrm{\mathcal{Z}_\theta - \mathcal{W}}_\diamond\le \varepsilon,$ where $\mathcal{Z}_\theta$ and $\mathcal{W}$ are the channels\footnote{The channel induced by a unitary $U$ is an action of $U$ on a density matrix $\rho$: $\mathcal{U}(\rho)=U\rho U^\dagger$. Channels and density matrices are defined fully in \sec{approximation-problems}.} induced by $e^{i\theta Z}$ and $W$, respectively. %\CP{for people with my background the meaning of this is unknown at this point}

Writing $W$ as $\left(\begin{smallmatrix}
u&-v^\ast\\v&u^\ast
\end{smallmatrix}\right)$, with $u,v\in\c$, we obtain the following:
\begin{equation}\label{eq:example-constraint}
\abs{\mathrm{Re}(ue^{-i\pi/4})} \ge 1 - \varepsilon^2/8 \implies \nrm{\mathcal{Z}_{\pi/4} - \mathcal{W}}_\diamond\le \varepsilon.
\end{equation}
\change{The full derivation of this constraint is provided in \sec{approximation-problems}.} The constraint on $u$ is represented geometrically by the region in \fig{region0}. 

\begin{figure}[!h]
    \centering
    
   \includegraphics[width=0.4\textwidth]{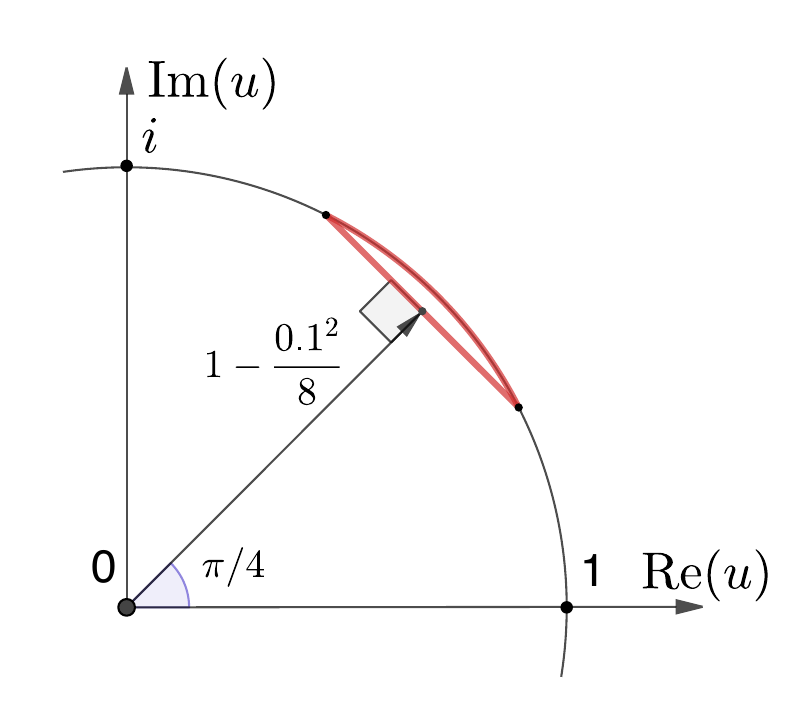}
    \caption{Geometric interpretation of constraint on complex number $u$ in Equation \eq{example-constraint}. The region with the red boundary contains candidate points $(a,b)\in\z^2$, such that $u=a+ib$ and $\abs{\mathrm{Re}(ue^{-i\pi/4})} \ge 1 - (0.1)^2/8$.
     \label{fig:region0}}
\end{figure}

Since $W$ is a product of V basis matrices, there exists $N\in\n$ such that $V=\frac{1}{\sqrt{5^N}}\left(\begin{smallmatrix}
u^\prime & -(v^\prime)^\ast\\v^\prime&(u^\prime)^\ast
\end{smallmatrix}\right)$, with $u^\prime,v^\prime\in\z[i]$. It follows that $u = u^\prime/\sqrt{5^N}$ and $v=v^\prime/\sqrt{5^N}$. Hence, we scale the region in \fig{region0} by $\sqrt{5^N}$ and look for integer points $(a,b)\in\z^2$, each corresponding to a candidate $u'=a+ib.$ We initialize $N:=1$, and iterate over $N$ until a solution is found.

 We find that there are no integer solutions for $N=1,2,3,4$. At $N=5$, there are four candidates for $u'$, namely $\{38+41i,39+40i,40+39i,41+38i\}$, shown in \fig{region5}. Since $V$ is unitary, we require $\det(V) = uu^\ast+vv^\ast = 1$ or, equivalently, $u^\prime(u^\prime)^\ast + v^\prime(v^\prime)^\ast = 5^5=3125.$ So we must have $0\le v^\prime(v^\prime)^\ast = 3125 - u^\prime(u^\prime)^\ast.$ Then,
\begin{eqnarray}
u^\prime=38+41i&\implies& u^\prime(u^\prime)^\ast = 38^2+41^2 = 3125\label{eq:candidate1}\\
u^\prime=39+40i&\implies& u^\prime(u^\prime)^\ast = 39^2 + 40^2 = 3121\label{eq:candidate2}\\
u^\prime=40+39i&\implies& u^\prime(u^\prime)^\ast = 3121\label{eq:candidate3}\\
u^\prime=41+38i&\implies& u^\prime(u^\prime)^\ast = 3125\label{eq:candidate4}.
\end{eqnarray}

  \begin{figure}[!h]
    \centering
    \includegraphics[width=0.7\textwidth]{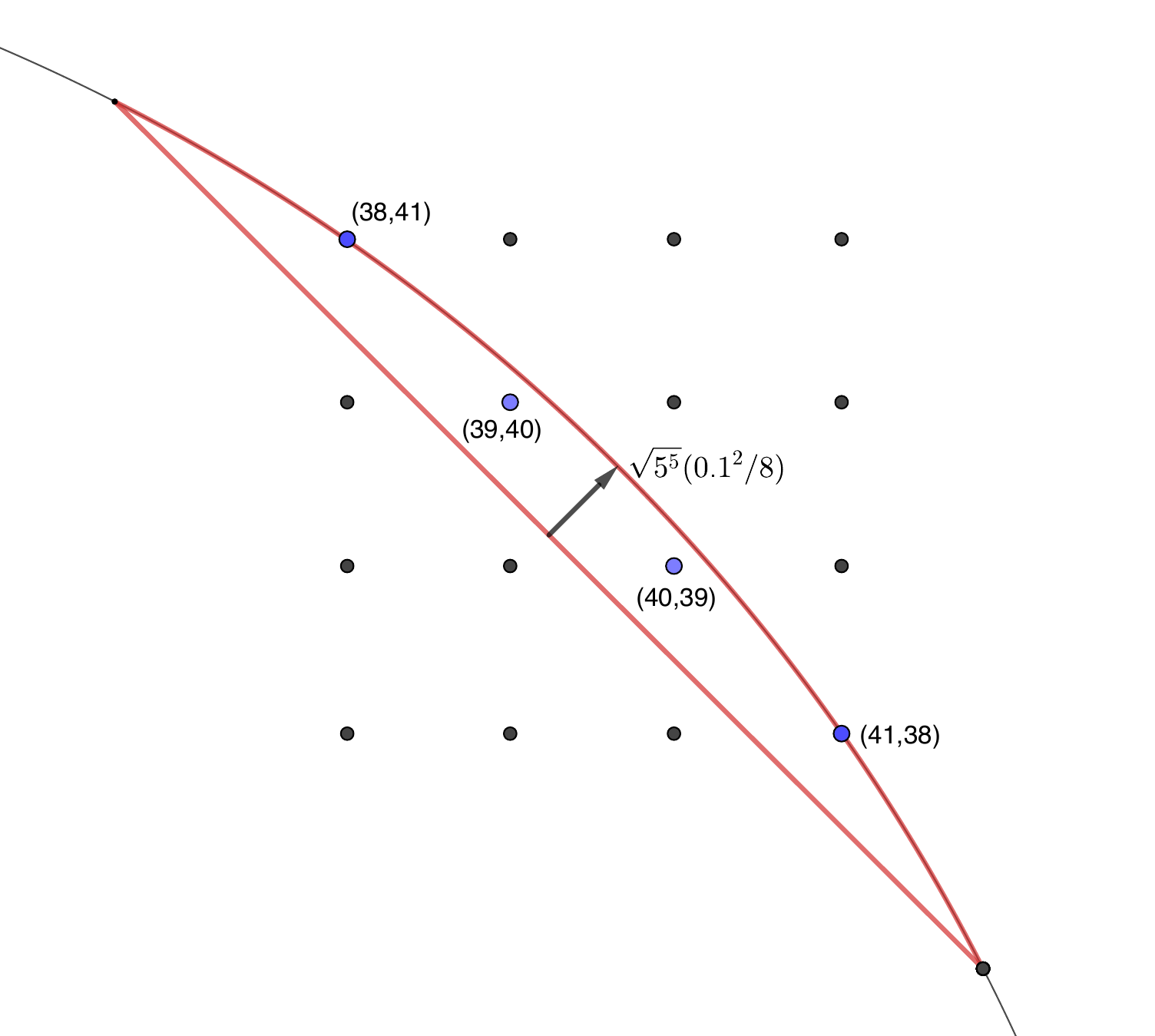}
    \caption{Geometric interpretation of the constraint on complex number $u'$, such that $W = \frac{1}{\sqrt{5^5}}\left(\begin{smallmatrix}u' & -(v')^\ast\\v' & (u')^\ast \end{smallmatrix}\right)$ approximates $e^{i\frac{\pi}{4}Z}$ to accuracy $\varepsilon=0.1$. The region with the red boundary contains four candidate complex numbers satisfying $\abs{\mathrm{Re}(u'e^{-i\pi/4})} \ge \sqrt{5^5}(1 - (0.1)^2/8)$. \label{fig:region5}}
    \end{figure}

Let $v^\prime = c + id,$ so \begin{equation}\label{eq:example-norm}v^\prime(v^\prime)^\ast = c^2 + d^2 = 5^5 - (a^2 + b^2).\end{equation} For Equations \eq{candidate1} and \eq{candidate4}, we have $v^\prime(v^\prime)^\ast = 0$ so $v=0$ is the only solution. Equations \eq{candidate2} and \eq{candidate3} yield $v^\prime(v^\prime)^\ast =4,$ so $c^2 + d^2 = 4 = 2^2$ then either $c=\pm2,d=0$  or $c=0,d=\pm2$. The two corresponding values for $v^\prime$ are $\pm2$ and $\pm2i$. In general, Equation \eq{example-norm} admits a solution for $v\in\z[i]$ if and only if all terms $p^k$ in the prime factorization of $5^5 - (a^2 + b^2)$, with $p\equiv 3\bmod{4}$, have even exponent $k$.
Each candidate pair $(u^\prime,v^\prime)$ defines an approximation unitary $W = \frac{1}{\sqrt{3125}}\left(\begin{smallmatrix}
u^\prime & -(v^\prime)^\ast\\v^\prime&(u^\prime)^\ast
\end{smallmatrix}\right)$, which is factorized over the V basis. \change{We have used the method for efficient factorization outlined in \sec{exact-synthesis}.} These factorizations are given in Table \ref{tab:V-basis-solutions}.

\begin{table}[!h]
    \centering
    \begin{tabular}{c|c|c}
        $u^\prime$ & $v^\prime$ & V basis factorization \\
        \hline
        $41+38i$&0 & $(V_{-Z})^5$\\
        &&\\
         $38+41i$&0& $iZ\cdot(V_{+Z})^5$\\
        &&\\
        $39+40i$&$2i$&$e^{i\pi}\cdot V_{-X}V_{-Y}V_{+X}V_{+Y}V_{-X}$\\
                 &$2$&$e^{i\pi}\cdot V_{+Y}V_{-X}V_{-Y}V_{+X}V_{+Y}$\\
       &$-2i$&$e^{i\pi}\cdot V_{+X}V_{+Y}V_{-X}V_{-Y}V_{+X}$\\
        
          &$-2$&$e^{i\pi}\cdot V_{-Y}V_{+X}V_{+Y}V_{-X}V_{-Y}$\\

        &&\\
        $40+39i$&$2i$&$-iZ\cdot V_{-Y}V_{-X}V_{+Y}V_{+X}V_{-Y}$\\
                &$2$&$-iZ\cdot V_{+X}V_{-Y}V_{-X}V_{+Y}V_{+X}$\\
          &$-2i$&$-iZ\cdot V_{+Y}V_{+X}V_{-Y}V_{-X}V_{+Y}$\\
        
        &$-2$&$-iZ\cdot V_{-X}V_{+Y}V_{+X}V_{-Y}V_{-X}$\\

    \end{tabular}
    \caption{V basis factorizations of unitaries $W=\frac{1}{\sqrt{5^5}}\left(\begin{smallmatrix}u'&-(v')^\ast\\v'&(u')^\ast\end{smallmatrix}\right)$, satisfying \change{$\nrm{\mathcal{Z}_{\pi/4}-\mathcal{W}}_\diamond\leq\varepsilon=0.1$.}}
    \label{tab:V-basis-solutions}
\end{table}

\section{Properties of the diamond norm}\label{app:diamond-norm-properties}

We use the diamond norm as the accuracy metric for all approximation problem definitions.
Let us recall why we can replace various parts of a quantum algorithm with their approximations and still get useful results.
The diamond norm is the key mathematical tool for understanding this.
The result of running any quantum algorithm is a sample from a probability distribution.
Let us call this distribution the answer distribution.
We then process the answer distribution to get the final answer.
This processing of the answer distribution is robust, that is if we are given a sample from 
a distribution that is within total variational distance $\varepsilon$ from the answer 
distribution we can still recover the final answer. 
The value $\varepsilon$ is different for different quantum algorithms.
Every quantum algorithm corresponds to a quantum channel.
Suppose that the diamond norm distance between the channels corresponding to 
the ideal quantum algorithm and its approximation is $\varepsilon$.
Then, the total variational distance between the ideal algorithm's answer distribution 
and the answer distribution produced by the approximation is also $\varepsilon$.
The proof of this fact follows from the definition of the diamond norm that is discussed below.

The diamond norm also has two key properties that let us estimate the distance 
between two algorithms, given the diamond norm distances between their parts.
The first property is the chain rule for the composition of channels, 
that is for channels $\Phi_1$, $\Phi_2$, $\Psi_1$, $\Psi_2$ we have 
$$
 \nrm{ \Phi_1 \Psi_1 - \Phi_2 \Phi_2 }_\diamond \le \nrm{\Phi_1-\Phi_2}_\diamond +  \nrm{\Psi_1-\Psi_2}_\diamond
$$
That is if we replaced $N$ parts of a quantum algorithm with their $\varepsilon$ approximations, 
the distance between the quantum algorithm and its approximation is at most $ N \cdot \varepsilon$.
The second property is stability with respect to the tensor product. 
When we write a quantum algorithm as a composition of channels $\Phi_1 \ldots \Phi_N$
each acting on $n$-qubits, it is frequently the case that each $\Phi_k$
acts non-trivially on one qubit, that is
$$
\Phi_k = \Phi'_k \otimes \mathcal{I} \text{ where } \mathcal{I}\text{ is the identity channel on } n-1 \text{ qubits.}
$$
When we replace $\Phi'_k$ with its approximation $\Psi'_k$ we can argue that $\Psi'_k \otimes \mathcal{I}$
is close to $\Phi_k$ by using the stability with respect to the tensor product:
$$
\nrm{ \Phi'_k \otimes  \mathcal{I} -  \Psi'_k \otimes  \mathcal{I} }_\diamond = \nrm{ (\Phi'_k  -  \Psi'_k) \otimes  \mathcal{I} }_\diamond = \nrm{ \Phi'_k -  \Psi'_k  }_\diamond
$$ 
For an explanation of this and many other useful properties of the diamond norm we refer the reader to \href{https://cs.uwaterloo.ca/\~watrous/TQI/TQI.pdf\#page=174}{Chapter~3.3.2} of~\cite{watrous2018}. 
For completeness we provide the basic definition and other important properties of the diamond norm below.

Let $\c^{d\times d}$ be the linear space over $\c$ of $d$ by $d$ matrices with entries in $\c$.
For an arbitrary element of $\c^{d\times d}$ the Schatten one norm (also known as trace norm) is defined as $\nrm{A}_1 = \tr \sqrt{A^\dagger A}$.
For an arbitrary linear transformation $\Phi$ from $\c^{d\times d}$ into $\c^{d\times d}$, the \emph{induced} one norm is defined as
\begin{equation}\label{eq:induced-trace-norm}
    \nrm{\Phi}_1 = \max \set{ \nrm{\Phi\at{X}}_1 : X \in \c^{d\times d}, \nrm{X}_1 \le 1}.
\end{equation}
The diamond norm (also known as the completely-bounded trace norm) is a ``stable'' version of the induced one norm
\begin{equation}
    \nrm{\Phi}_{\diamond} = \nrm{\Phi \otimes \mathcal{I}_d }_1,
\end{equation}
where $\mathcal{I}_d$ is the identity map from $\c^{d\times d}$ into $\c^{d\times d}$. %\CP{not clear here to me why this is referred to as ``stable''?}
We call diamond norm stable, because in general 
$
\nrm{ \Phi }_1 \le \nrm{ \Phi \otimes \mathcal{I}_k }_1.
$
There are examples where the inequality is strict.
However, for any $k \ge d$ we have equality $\nrm{ \Phi \otimes \mathcal{I}_k }_1 = \nrm{ \Phi \otimes \mathcal{I}_d }_1$.

Direct calculation of the diamond norm is tedious, in general.  
However, there are two cases useful for this paper when there is a simple way to calculate the diamond distance. 
The first case is the diamond distance between two channels $\mathcal{U}$ and $\mathcal{V}$ induced by unitaries $U$ and $V$.
Below is a re-statement of \href{https://arxiv.org/pdf/0711.3636.pdf#page=13}{Theorem 26} in \cite{ComputingStabilizedNorms}:
% See also https://arxiv.org/pdf/0711.3636.pdf#page=13, Theorem 26
\begin{thm}[Diamond distance between unitary channels]
\label{thm:diamond-distance-between-unitary-channels}
For any two unitary operators $U,V$, the diamond norm of the difference of the unitary channels 
$\mathcal{U}$, $\mathcal{V}$ induced by $U,V$ is equal to the diameter of the smallest disc (not necessarily centered at the origin)
containing all the eigenvalues of $U^\dagger V$.
\end{thm}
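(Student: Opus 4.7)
The plan is to reduce the statement to a calculation on the operator $W = U^\dagger V$, then relate the diamond norm $\|\mathcal{W} - \mathcal{I}\|_\diamond$ to the geometry of the spectrum of $W$ via the numerical range.

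First, I would use unitary invariance of the diamond norm (\cref{prop:diamond-norm-unitary-invariance}): writing $\mathcal{V} - \mathcal{U} = \mathcal{U}\circ(\mathcal{W} - \mathcal{I})$ with $W = U^\dagger V$ yields $\|\mathcal{U}-\mathcal{V}\|_\diamond = \|\mathcal{W} - \mathcal{I}\|_\diamond$. Since the eigenvalues of $U^\dagger V$ are unchanged, it suffices to prove the claim for an arbitrary unitary $W$ and the identity channel. By convexity of the trace norm in the input state, the max in the definition of $\|\mathcal{W}-\mathcal{I}\|_\diamond$ is attained on a pure state $|\psi\rangle$ in the doubled Hilbert space. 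For pure states $|\phi\rangle$ and $|\psi\rangle$ the identity $\||\phi\rangle\langle\phi|-|\psi\rangle\langle\psi|\|_1 = 2\sqrt{1-|\langle\phi|\psi\rangle|^2}$ then gives
\[
  \|\mathcal{W}-\mathcal{I}\|_\diamond \;=\; 2\sqrt{1-\min_{|\psi\rangle}|\langle\psi|(W\otimes I)|\psi\rangle|^2}.
\]

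Next I would translate the inner minimization into a statement about the numerical range. Viewing the components $\psi_{ij}$ of $|\psi\rangle$ as a matrix and setting $\rho = \psi\psi^\dagger$, one checks $\langle\psi|(W\otimes I)|\psi\rangle = \mathrm{Tr}(W\rho)$, and every density matrix $\rho$ arises from some purification. Hence the minimum over $|\psi\rangle$ equals $\min_\rho|\mathrm{Tr}(W\rho)|$, where the minimum ranges over all density matrices on the first factor. Because $W$ is normal, diagonalizing $W = \sum_k \lambda_k |e_k\rangle\langle e_k|$ shows that the set $\{\mathrm{Tr}(W\rho)\}$ is exactly the convex hull of $\{\lambda_1,\dots,\lambda_d\}$. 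Thus $\min_\rho|\mathrm{Tr}(W\rho)| = d(0,\mathrm{conv}\,\mathrm{spec}(W))$, and
\[
  \|\mathcal{W}-\mathcal{I}\|_\diamond \;=\; 2\sqrt{1 - d(0,\mathrm{conv}\,\mathrm{spec}(W))^2}.
\]

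Finally I would convert this convex-hull formula into the smallest-enclosing-disc formula stated in the theorem. All eigenvalues of $W$ lie on the unit circle, so there are two cases. If the eigenvalues are contained in some open half-plane through the origin, then the point of $\mathrm{conv}\,\mathrm{spec}(W)$ closest to $0$ lies on the chord joining the two extreme eigenvalues $e^{i\alpha}, e^{i\beta}$ (the ones subtending the whole arc), and a short geometric computation gives both that this chord is at distance $\cos((\beta-\alpha)/2)$ from the origin and that the smallest enclosing disc has this chord as a diameter, of length $2\sin((\beta-\alpha)/2)$; the identity $2\sqrt{1-\cos^2\phi}=2\sin\phi$ matches the two expressions. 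If instead the eigenvalues are not contained in any open half-plane through the origin, then $0\in\mathrm{conv}\,\mathrm{spec}(W)$, giving $\|\mathcal{W}-\mathcal{I}\|_\diamond = 2$, while simultaneously no disc of diameter less than $2$ can contain eigenvalues more than $\pi$ apart on the unit circle, so the smallest enclosing disc also has diameter $2$.

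The main obstacle I anticipate is the last geometric step, specifically verifying that the chord between the two extreme eigenvalues really realizes both the distance-to-convex-hull and the diameter-of-smallest-enclosing-disc: one must check that no interior eigenvalue pulls the closest point of the hull away from that chord, and that the smallest enclosing disc in the half-plane case is indeed determined by those two extreme points rather than by three eigenvalues lying on its boundary. Both follow from elementary planar geometry but deserve a careful case split. The other subtle point worth flagging explicitly is the reduction from arbitrary mixed-state inputs in the diamond-norm definition to pure-state inputs on a doubled system, and the characterization of $\{\mathrm{Tr}(W\rho)\}$ as $\mathrm{conv}\,\mathrm{spec}(W)$ for normal $W$; I would cite these standard facts rather than reprove them.
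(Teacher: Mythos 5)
Your proposal is correct; but note that the paper does not actually prove \cref{thm:diamond-distance-between-unitary-channels} --- it states it as a restatement of Theorem~26 of \cite{ComputingStabilizedNorms} and cites that reference. So rather than a comparison of proofs, what you've supplied is a self-contained elementary argument where the paper defers entirely to the literature. Your route is the standard "textbook" one: reduce to $\nrm{\mathcal{W}-\mathcal{I}}_\diamond$ with $W=U^\dagger V$ by unitary invariance, observe that on a pure doubled state $\nrm{(\mathcal{W}-\mathcal{I})\otimes\mathcal{I}(\ketsm{\psi}\!\bra{\psi})}_1 = 2\sqrt{1-\abs{\bra{\psi}(W\otimes I)\ket{\psi}}^2}$, rewrite $\bra{\psi}(W\otimes I)\ket{\psi} = \tr(W\rho)$ with $\rho$ the reduced state of $\ket{\psi}$, use that for normal $W$ the range of $\tr(W\rho)$ is exactly $\mathrm{conv}\,\mathrm{spec}(W)$, obtaining $\nrm{\mathcal{W}-\mathcal{I}}_\diamond = 2\sqrt{1-d(0,\mathrm{conv}\,\mathrm{spec}(W))^2}$, and finally check that $2\sqrt{1-d(0,\mathrm{conv}\,\mathrm{spec}(W))^2}$ equals the diameter of the smallest enclosing disc of $\mathrm{spec}(W)$ on the unit circle. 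In contrast, \cite{ComputingStabilizedNorms} reaches this via the theory of completely bounded maps (the Haagerup/Paulsen machinery for maps of the form $X\mapsto AXB$). Your route buys elementary self-containment and a clear geometric picture; the cited route buys generality (it handles broader classes of maps in one framework).

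Two small points worth tightening, both of which you already flagged. (1) For the arc case $\beta-\alpha<\pi$: to see that all other eigenvalues lie inside the disc with the chord $e^{i\alpha}e^{i\beta}$ as diameter, it is cleanest to invoke Thales/inscribed-angle: a point $e^{i\gamma}$ on the minor arc subtends an angle $\pi-(\beta-\alpha)/2>\pi/2$ at that chord, hence lies strictly inside; this also shows why three-point configurations cannot force a larger disc in this case. For the distance to the hull, observe directly that the closed half-plane $\set{z : \mathrm{Re}(e^{-i(\alpha+\beta)/2}z)\ge\cos((\beta-\alpha)/2)}$ contains every $e^{i\gamma}$ with $\gamma\in[\alpha,\beta]$, hence the whole hull, and the foot of perpendicular is the chord's midpoint, which is in the hull. (2) For the degenerate case: if $0\in\mathrm{conv}\,\mathrm{spec}(W)$, a disc $D(c,r)$ with $r<1$ containing the eigenvalues would force $\mathrm{Re}(\bar{c}z)\ge(1+\abs{c}^2-r^2)/2>0$ for every eigenvalue $z$, putting them in an open half-plane through $0$, a contradiction; hence diameter $\ge 2$, and the unit disc attains it. Neither of these is a gap --- they are exactly the "careful case split" you anticipated --- but writing them out removes the only two places a referee could push back.
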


We use above result when approximating unitaries by unitaries.
The second case is the diamond distance between two Pauli channels.
Below we restate the result \href{https://arxiv.org/pdf/1109.6887.pdf#page=14}{Section V.A} in \cite{Magesan2012}
and definition of a Pauli channel:
\begin{thm}[Diamond norm distance between Pauli channels]
\label{thm:app-diamond-distance-beween-pauli-channels}
Suppose $\mathcal{E}_1$,  $\mathcal{E}_2$ are $n$-qubit Pauli channels, that is 
$$
  \mathcal{E}_1(\rho) = \sum_{P \in \{I,X,Y,Z\}^{\otimes n}} q_P P \rho P^\dagger,\,  \mathcal{E}_2(\rho) = \sum_{P \in \{I,X,Y,Z\}^{\otimes n}} r_P P \rho P^\dagger, 
$$
then $\nrm{\mathcal{E}_1-\mathcal{E}_2}_\diamond = \sum_{P \in \{I,X,Y,Z\}^{\otimes n}} |q_P - r_P|$.
\end{thm}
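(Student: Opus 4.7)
The plan is to prove the equality by establishing matching upper and lower bounds on $\nrm{\mathcal{E}_1-\mathcal{E}_2}_\diamond$. Set $\Phi := \mathcal{E}_1 - \mathcal{E}_2 = \sum_P c_P \mathcal{P}$, where $c_P := q_P - r_P$ and $\mathcal{P}(X) := P X P^\dagger$.

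For the upper bound, I would invoke the triangle inequality for the diamond norm together with the fact that each Pauli conjugation channel $\mathcal{P}$ is unitary, hence $\nrm{\mathcal{P}}_\diamond = 1$. This immediately yields $\nrm{\Phi}_\diamond \leq \sum_P |c_P|\,\nrm{\mathcal{P}}_\diamond = \sum_P |c_P|$.

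For the lower bound, the natural test state is the maximally entangled state $|\Phi^+\rangle = \frac{1}{\sqrt{2^n}}\sum_i |i\rangle|i\rangle$ on $2n$ qubits. The key computation is
\[
  (\Phi \otimes \mathcal{I})(|\Phi^+\rangle\langle\Phi^+|) = \sum_P c_P\, |P\rangle\langle P|,
  \quad\text{where}\quad |P\rangle := (P \otimes I)|\Phi^+\rangle.
\]
The family $\{|P\rangle : P \in \{I,X,Y,Z\}^{\otimes n}\}$ is orthonormal: $\langle P | Q \rangle = \tfrac{1}{2^n}\tr(P^\dagger Q) = \delta_{PQ}$ by Pauli orthogonality. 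Consequently the output is a Hermitian operator diagonal in the generalized Bell basis with eigenvalues $c_P$, so its trace norm equals exactly $\sum_P |c_P|$. By the variational definition of the induced one norm (and hence the diamond norm, since $|\Phi^+\rangle\langle\Phi^+|$ has unit trace norm), this witnesses $\nrm{\Phi}_\diamond \geq \sum_P |c_P|$.

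Combining the two bounds gives the claimed equality. The only nontrivial step is the orthogonality calculation $\langle P|Q\rangle = \delta_{PQ}$, which I view as a routine consequence of the Pauli trace-inner-product identity; everything else is either the triangle inequality or direct evaluation of the Choi-like image of $|\Phi^+\rangle\langle\Phi^+|$. There is no real obstacle here — the proof is a standard application of the Choi-Jamiołkowski picture for Pauli (more generally, mixed-unitary with pairwise orthogonal unitaries) channels.
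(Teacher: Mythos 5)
Your proof is correct, and it is worth noting that the paper itself does not give a proof of this statement at all: both in the main text and in the appendix the theorem is simply quoted from Section~V.A of Magesan, Gambetta and Emerson~\cite{Magesan2012}. Your argument is the standard Choi--Jamio\l{}kowski route and fills that gap. The upper bound is the triangle inequality plus the fact that each unitary conjugation channel is CPTP, hence has diamond norm exactly one. The lower bound is exactly the right witness: $|\Phi^+\rangle\langle\Phi^+|$ is a valid input for the variational characterization of the diamond norm (it has unit trace norm), the identity $(\mathcal{P}\otimes\mathcal{I})(|\Phi^+\rangle\langle\Phi^+|) = |P\rangle\langle P|$ is a direct computation, and the orthonormality $\langle P|Q\rangle = 2^{-n}\mathrm{Tr}(P^\dagger Q) = \delta_{PQ}$ is the Pauli trace identity combined with the ricochet property of the maximally entangled state. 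Since the $|P\rangle$ form a complete orthonormal basis of the $4^n$-dimensional bipartite space, the Choi operator is exactly diagonal in this basis with eigenvalues $c_P = q_P - r_P$, so its trace norm is $\sum_P|c_P|$ and the bound is tight. One could add, as a cosmetic remark, that since the two bounds coincide, the maximally entangled state is an optimal witness for the diamond norm of any Hermiticity-preserving linear combination of Pauli conjugations, which is the essential content of the cited reference.
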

We use above result when approximating unitaries by probabilistic mixtures of unitaries. 
To take advantage of the above property we frequently use the unitary invariance of the diamond norm.
\begin{prop}[Unitary invariance of the diamond norm]
\label{prop:diamond-norm-unitary-invariance}
Let $\Phi$ be channel and $\mathcal{U},\mathcal{U}^\dagger$ be unitary channels induced by 
unitaries $U$,$U^\dagger$, then the following holds:
$$
\nrm{ \Phi - \mathcal{U}}_\diamond  = \nrm{ \mathcal{U}^\dagger \Phi - \mathcal{I} }_\diamond =  \nrm{ \Phi \mathcal{U}^\dagger - \mathcal{I} }_\diamond
$$
\end{prop}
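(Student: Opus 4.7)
The plan is to reduce both claimed equalities to the single fact that the induced Schatten $1$-norm, and hence the diamond norm, is unchanged under pre- or post-composition with a unitary channel. The key observation is that
\[
\mathcal{U}^\dagger \Phi - \mathcal{I} \;=\; \mathcal{U}^\dagger(\Phi - \mathcal{U})
\quad\text{and}\quad
\Phi\,\mathcal{U}^\dagger - \mathcal{I} \;=\; (\Phi - \mathcal{U})\,\mathcal{U}^\dagger,
\]
since $\mathcal{U}^\dagger \mathcal{U} = \mathcal{U}\mathcal{U}^\dagger = \mathcal{I}$. Once these algebraic rewrites are in place, the proposition will follow immediately if I can show $\nrm{\mathcal{U}^\dagger \Psi}_\diamond = \nrm{\Psi}_\diamond = \nrm{\Psi \mathcal{U}^\dagger}_\diamond$ for an arbitrary linear map $\Psi$.

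For the left-composition case, I would unfold the definition of the diamond norm:
\[
\nrm{\mathcal{U}^\dagger \Psi}_\diamond
= \max_{\rho} \nrm{((\mathcal{U}^\dagger \otimes \mathcal{I}_d)(\Psi \otimes \mathcal{I}_d))(\rho)}_1
= \max_{\rho} \nrm{(U^\dagger \otimes I)\,X_\rho\,(U \otimes I)}_1,
\]
where $X_\rho := (\Psi \otimes \mathcal{I}_d)(\rho)$. Then I invoke unitary invariance of the Schatten $1$-norm, $\nrm{V X V^\dagger}_1 = \nrm{X}_1$ for any unitary $V$ (a standard consequence of $\sqrt{\,\cdot\,}$ commuting with unitary conjugation), to strip off the $U^\dagger \otimes I$ and $U \otimes I$ factors. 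This yields $\nrm{\mathcal{U}^\dagger \Psi}_\diamond = \max_\rho \nrm{X_\rho}_1 = \nrm{\Psi}_\diamond$.

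For the right-composition case, I would instead perform a change of variables in the maximization:
\[
\nrm{\Psi \mathcal{U}^\dagger}_\diamond = \max_\rho \nrm{(\Psi \otimes \mathcal{I}_d)\bigl((\mathcal{U}^\dagger \otimes \mathcal{I}_d)(\rho)\bigr)}_1 = \max_{\rho'} \nrm{(\Psi \otimes \mathcal{I}_d)(\rho')}_1,
\]
where the reparametrization $\rho' = (\mathcal{U}^\dagger \otimes \mathcal{I}_d)(\rho)$ is justified because $\mathcal{U}^\dagger \otimes \mathcal{I}_d$ is a bijection on density matrices (its inverse $\mathcal{U} \otimes \mathcal{I}_d$ is also a valid quantum channel preserving trace and positivity). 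This again returns $\nrm{\Psi}_\diamond$.

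The main obstacle is essentially bookkeeping rather than anything conceptually difficult: making sure that the ancilla dimension in the definition of $\nrm{\cdot}_\diamond$ is handled consistently when the $U$ acts only on the ``system'' register, and that the set over which the maximum is taken (density matrices on the doubled space) really is preserved by $\mathcal{U}^\dagger \otimes \mathcal{I}_d$. Both points are routine once one writes out the tensor structure, so the full proof should be short; it is essentially a two-line argument after the algebraic factorizations above are noted.
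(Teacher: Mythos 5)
Your proof is correct and takes essentially the same route as the paper: both establish one equality via unitary invariance of the Schatten $1$-norm applied to the output, and the other by reparametrizing the optimization variable via conjugation by $U\otimes I$ (which preserves the trace-norm-at-most-one constraint set). Your explicit factorizations $\mathcal{U}^\dagger\Phi-\mathcal{I}=\mathcal{U}^\dagger(\Phi-\mathcal{U})$ and $\Phi\mathcal{U}^\dagger-\mathcal{I}=(\Phi-\mathcal{U})\mathcal{U}^\dagger$ are a slightly cleaner repackaging of what the paper leaves implicit, but the ingredients are identical.
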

\begin{proof}
The first equality follows from the unitary left and right invariance of the Schatten one norm ( also know as trace norm ).
That is for any matrix $A$ we have $\nrm{AU}_1 = \nrm{UA}_1 = \nrm{A}_1$.
For any density matrix $\rho$ we have
$$
\nrm{ (\Phi\otimes \mathcal{I})(\rho) - (U \otimes I) \rho (U^\dagger \otimes I) }_1 = \nrm{ (U^\dagger \otimes I) (\Phi\otimes \mathcal{I})(\rho) (U \otimes I) - \rho }_1,
$$
which show the first equality by the definition of the diamond norm.
The second equality follows from the fact that taking minimum over all matrices in \cref{eq:induced-trace-norm} with trace norm 
at most one is the same as taking minimum over all matrices $(U^\dagger \otimes I) X  (U \otimes I)$
with trace norm at most one.
\end{proof}

When approximating unitary $U$ with probabilistic mixtures of unitaries described by a channel $\Phi$, we typically find that 
$\mathcal{U}^\dagger \Phi $ is a Pauli channel. For qubit unitaries the following corollaries of \cref{thm:diamond-distance-between-unitary-channels}
are useful:

\begin{cor}[Diamond distance between qubit diagonal unitaries]
\label{cor:diamond-distance-between-diaognal}
For any real numbers $\phi_1,\phi_2$, the diamond norm distance between channels $\mathcal{Z}_{\phi_1}$, $\mathcal{Z}_{\phi_2}$ induced by unitaries $e^{i\phi_1 Z}$, $e^{i\phi_2 Z}$
is
$$
\nrm{\mathcal{Z}_{\phi_1} - \mathcal{Z}_{\phi_2}}_\diamond = 2|\sin(\phi_1 - \phi_2)|\le 2|\phi_1-\phi_2|.
$$
\end{cor}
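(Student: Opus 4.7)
The plan is to invoke \cref{thm:diamond-distance-between-unitary-channels} directly: the diamond norm distance between two unitary channels equals the diameter of the smallest disc containing the eigenvalues of $U^\dagger V$. Here I would set $U = e^{i\phi_1 Z}$ and $V = e^{i\phi_2 Z}$, so that $U^\dagger V = e^{i(\phi_2-\phi_1) Z}$, which is diagonal with eigenvalues $e^{i(\phi_2-\phi_1)}$ and $e^{-i(\phi_2-\phi_1)}$.

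Since there are only two eigenvalues, the smallest enclosing disc is the one whose diameter is the segment joining them, so its diameter equals $|e^{i(\phi_2-\phi_1)} - e^{-i(\phi_2-\phi_1)}| = 2|\sin(\phi_2-\phi_1)| = 2|\sin(\phi_1-\phi_2)|$. This gives the equality claimed. The inequality $2|\sin(\phi_1-\phi_2)| \le 2|\phi_1-\phi_2|$ is then just the elementary bound $|\sin x| \le |x|$ on the real line.

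There is no real obstacle here; the only subtle point worth mentioning is why the diameter of the smallest enclosing disc of two points on the unit circle is precisely the chord length rather than something smaller, but this is immediate because any disc containing both must have diameter at least the distance between them, and the disc with that chord as diameter achieves it. Overall the proof is a two-line application of the already-stated diamond-distance formula together with a standard trigonometric identity.
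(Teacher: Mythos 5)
Your proposal is correct and matches the paper's own proof essentially step for step: both apply \cref{thm:diamond-distance-between-unitary-channels}, identify the eigenvalues $e^{\pm i(\phi_1-\phi_2)}$, compute the chord length $2|\sin(\phi_1-\phi_2)|$, and finish with $|\sin x|\le|x|$. The sign convention ($U^\dagger V$ vs.\ $UV^\dagger$) differs trivially and is irrelevant after taking absolute values.
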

\begin{proof}
Use \cref{thm:diamond-distance-between-unitary-channels} and note that eigenvalues of $e^{i\phi_1 Z} e^{-i\phi_2 Z}$ are $e^{\pm i \delta}$ where $\delta = \phi_1 - \phi_2$.
The diameter of the disc enclosing both eigenvalues is $|e^{i\delta} - e^{-i\delta}| = 2|\sin(\delta)|$.
Finally we use inequality $|\sin(\delta)| \le |\delta|$.
\end{proof}

There is a closed form expression for diamond norm distance when approximating diagonal qubit unitary by any qubit unitary:

\begin{cor}[Diamond distance between qubit diagonal and general qubit unitaries]
\label{cor:diamond-distance-between-general-and-diaognal}
For any real numbers $\phi$ and special qubit unitary $U  =  \at{\begin{smallmatrix} u & -v^\ast \\ v & u^\ast \end{smallmatrix}}$, the diamond norm distance between channels $\mathcal{Z}_{\phi}$, $\mathcal{U}$ induced by unitaries $e^{i\phi Z}$, $U$
is
$$
\nrm{\mathcal{Z}_{\phi} - \mathcal{U}}_\diamond = 2 \sqrt{1 - \left(\mathrm{Re}(ue^{-i\phi})\right)^2} \le  2 \sqrt{2 - 2|\mathrm{Re}(ue^{-i\phi})|}
$$
\end{cor}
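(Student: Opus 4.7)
The plan is to apply \cref{thm:diamond-distance-between-unitary-channels} directly: compute the eigenvalues of $\at{e^{i\phi Z}}^\dagger U = e^{-i\phi Z} U$ and then identify the diameter of the smallest disc that contains them. Writing $t = \mathrm{Re}\at{u e^{-i\phi}}$, I expect the trace of $e^{-i\phi Z} U$ to equal $2t$ and the determinant to equal $|u|^2 + |v|^2 = 1$, so the characteristic polynomial is $\lambda^2 - 2t \lambda + 1 = 0$. Since $|t| \le |u| \le 1$, the two eigenvalues are a complex-conjugate pair $\lambda_\pm = t \pm i \sqrt{1-t^2}$ lying on the unit circle.

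Next I would argue that the smallest enclosing disc of two points on the unit circle is the one having them as antipodal points of a diameter, so its diameter equals $|\lambda_+ - \lambda_-| = 2\sqrt{1-t^2}$. Combining with \cref{thm:diamond-distance-between-unitary-channels} gives the claimed equality
\[
\nrm{\mathcal{Z}_\phi - \mathcal{U}}_\diamond = 2\sqrt{1 - \at{\mathrm{Re}\at{u e^{-i\phi}}}^2}.
\]

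Finally, the stated upper bound follows from the elementary inequality $1 - t^2 = (1-|t|)(1+|t|) \le 2(1-|t|)$, which holds because $1 + |t| \le 2$, and then taking square roots.

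The only place that might require a brief justification is the smallest-enclosing-disc step; concretely, any disc containing two points has diameter at least their distance, and the disc whose diameter is the segment joining them realises equality. Everything else is a routine $2\times 2$ matrix computation, so I do not anticipate any real obstacle; the main content of the statement is really the reduction to \cref{thm:diamond-distance-between-unitary-channels}, which has already been done for us.
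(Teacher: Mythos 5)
Your proof is correct and takes essentially the same route as the paper: both invoke \cref{thm:diamond-distance-between-unitary-channels}, identify the two eigenvalues of the product via the trace $2\,\mathrm{Re}(ue^{-i\phi})$ and unit determinant, and read off the diameter as the distance $2\sqrt{1-t^2}$ between the conjugate pair. The only cosmetic difference is in the final inequality, where you factor $1-t^2=(1-|t|)(1+|t|)\le 2(1-|t|)$ while the paper substitutes $\varepsilon = 1-|t|$ and drops the $-\varepsilon^2$ term; these are the same computation.
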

\begin{proof}
Use \cref{thm:diamond-distance-between-unitary-channels} and let $e^{\pm i\delta}$ be eigenvalues of 
$U e^{-i\phi Z}$. The diameter of the disc enclosing both eigenvalues is $|e^{i\delta} - e^{-i\delta}| = 2|\sin(\delta)|$.
Recall that $e^{i \delta} + e^{-i \delta} = \mathrm{Tr}(U e^{-i\phi Z}) = 2 \mathrm{Re}(ue^{-i\phi})$.
That is $2\cos(\delta) = 2 \mathrm{Re}(ue^{-i\phi})$. 
Now we use 
$$
|\sin(\delta)|=\sqrt{1-\cos^2(\delta)} = \sqrt{1 - \left(\mathrm{Re}(ue^{-i\phi})\right)^2}.
$$ 
To show the remaining inequality, write $\varepsilon = 1 - |\mathrm{Re}(ue^{-i\phi})|$. 
Note that $\varepsilon$ is always positive and:
$$
\sqrt{1 - \left(\mathrm{Re}(ue^{-i\phi})\right)^2}  = \sqrt{ 1 - (1-\varepsilon)^2} = \sqrt{2\varepsilon - \varepsilon^2} \le \sqrt{2\varepsilon}
$$
Note that the inequality is tight when $\varepsilon$ goes to zero.
\end{proof}
Finally we note that for qubit unitaries the spectral norm is related to the diamond norm distance between 
the corresponding channels.
\begin{cor}[Diamond distance between qubit unitaries]
\label{cor:diamond-distance-between-general}
For an $U,V$ from $\mathrm{SU}(2)$ the diamond distance between the unitary channels $\mathcal{U}$, $\mathcal{V}$ induced by $U,V$
$$
\nrm{\mathcal{U} - \mathcal{V} }_\diamond \le 2\min\at{\nrm{U-V},\nrm{U+V}}
$$
\end{cor}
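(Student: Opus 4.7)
My plan is to prove the corollary by direct computation in the eigenbasis of $U^\dagger V$, invoking \cref{thm:diamond-distance-between-unitary-channels} to turn the diamond-norm quantity into something explicit, and then matching it against the spectral norms $\nrm{U\pm V}$. The key observation is that since $U,V\in\mathrm{SU}(2)$, the product $U^\dagger V$ is also in $\mathrm{SU}(2)$, so its eigenvalues come in a conjugate pair $e^{\pm i\delta}$ for some real $\delta$; by \cref{thm:diamond-distance-between-unitary-channels}, the diamond distance is the diameter of the smallest disc containing this pair, namely $\nrm{\mathcal{U}-\mathcal{V}}_\diamond = |e^{i\delta}-e^{-i\delta}| = 2|\sin\delta|$.

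Next I would express $\nrm{U\pm V}$ in terms of the same $\delta$. Writing
\[
(U\pm V)^\dagger(U\pm V) = 2I \pm (U^\dagger V + V^\dagger U),
\]
I note that $U^\dagger V + V^\dagger U$ is diagonal in the eigenbasis of $U^\dagger V$ with both diagonal entries equal to $e^{i\delta}+e^{-i\delta}=2\cos\delta$. Thus $(U-V)^\dagger(U-V) = 2(1-\cos\delta)I$ and $(U+V)^\dagger(U+V) = 2(1+\cos\delta)I$, so the half-angle identities give
\[
\nrm{U-V} = 2|\sin(\delta/2)|, \qquad \nrm{U+V} = 2|\cos(\delta/2)|.
\]

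To conclude, use $\sin\delta = 2\sin(\delta/2)\cos(\delta/2)$ to write $\nrm{\mathcal{U}-\mathcal{V}}_\diamond = 4|\sin(\delta/2)||\cos(\delta/2)|$. Since $|\cos(\delta/2)|\le 1$ the quantity is at most $4|\sin(\delta/2)| = 2\nrm{U-V}$, and since $|\sin(\delta/2)|\le 1$ it is also at most $4|\cos(\delta/2)| = 2\nrm{U+V}$. Taking the minimum gives the stated bound. There is no real obstacle here: the statement is essentially immediate once one has \cref{thm:diamond-distance-between-unitary-channels} and the $\mathrm{SU}(2)$ eigenvalue structure. The only bookkeeping subtlety is the $U+V$ term, but it is natural because the channel $\mathcal{V}$ is invariant under $V\mapsto -V$, so any bound in $\nrm{U-V}$ automatically yields one in $\nrm{U+V}$ via this sign-flip symmetry, which is exactly what the factorization $\sin\delta = 2\sin(\delta/2)\cos(\delta/2)$ makes explicit.
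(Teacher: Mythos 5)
Your proof is correct and follows essentially the same route as the paper's: both invoke \cref{thm:diamond-distance-between-unitary-channels}, identify the eigenvalue pair $e^{\pm i\delta}$ of $U^\dagger V$ to get $\nrm{\mathcal{U}-\mathcal{V}}_\diamond = 2|\sin\delta|$, and then relate $\nrm{U\pm V}$ to the same $\delta$. The only cosmetic difference is that the paper first diagonalizes $V = V_0 e^{i\phi Z}V_0^\dagger$ and routes the computation through \cref{cor:diamond-distance-between-general-and-diaognal}, whereas you compute $(U\pm V)^\dagger(U\pm V)=(2\pm 2\cos\delta)I$ directly and finish with the half-angle identity $\sin\delta = 2\sin(\delta/2)\cos(\delta/2)$ in place of the paper's equivalent inequality $\sqrt{1-x^2}\le\sqrt{2-2|x|}$; your version is a touch more self-contained but the substance is identical.
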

\begin{proof}
There exist unitary $V_0$ such that $V = V_0 e^{i\phi Z} V_0^\dagger$. 
By using unitary invariance of the diamond norm and spectral norm, it is sufficient to show
inequality for $U' = V_0^\dagger U V_0 =  \at{\begin{smallmatrix} u & -v^\ast \\ v & u^\ast \end{smallmatrix}}$ and diagonal unitary $e^{i\phi Z}$.
Let $e^{\pm i\delta}$ be eigenvalues of $U' e^{-i\phi Z}$, then the spectral distance between $U'$ and $e^{i\phi Z}$
is equal to $\max\{ |e^{\pm i\delta} - 1| \}  = \sqrt{2-2\cos(\delta)}$. 
Similar to the argument in \cref{cor:diamond-distance-between-general-and-diaognal}, $\cos(\delta) = \mathrm{Re}(ue^{-i\phi})$
and therefore $\nrm{U' \pm e^{i\phi Z}} = \sqrt{2\pm2\mathrm{Re}(ue^{-i\phi})}$.
Finally we use \cref{cor:diamond-distance-between-general-and-diaognal} and note that 
$$
\sqrt{2-2|\mathrm{Re}(ue^{-i\phi})|} = 2\min\at{\nrm{U-V},\nrm{U+V}}.
$$
\end{proof}
Note that bound in the above corollary is tight when $\nrm{U\pm V}$ goes to zero.

\section{Additional solutions for unitary and fallback mixing}
\label{appendix:optimal-mixing-solutions}

\subsection*{Additional solutions for unitary mixing}

In \cref{sec:unitary-mixing} we discussed a mixing strategy in which the approximation error $\varepsilon$ was evenly divided between an "under rotation"
\begin{equation}
  U_1 =
  \left(
  \begin{array}{cc}
      r_1e^{i(\theta + \delta_1)} & v_1^*                        \\
      v_1                         & r_1e^{-i(\theta + \delta_1)}
    \end{array}
  \right)
\end{equation}
and "over rotation"
\begin{equation}
  U_2 =
  \left(
  \begin{array}{cc}
      r_2e^{i(\theta + \delta_2)} & v_2^*                        \\
      v_2                         & r_1e^{-i(\theta + \delta_2)}
    \end{array}
  \right).
\end{equation}
In this section, we discuss a strategy that might lead to lower expected and worst case gate cost.

The inefficiency in partitioning $\varepsilon$ ahead of time is that it precludes solutions in which the approximation accuracy of $U_1$ and $U_2$ are significantly different. 
If, for example, $U_1$ is a close approximation to $e^{i\theta Z}$ then we would like to consider low-cost solutions for $U_2$ that are correspondingly loose.

Recall, that according to \cref{thm:unitary-mixture-diamond-distance} we randomly choose $\{S,Z\}$ twirls of 
$U_1, U_2$ with probability $p,1-p$ where 
$$
    p = \frac{r_2^2\sin(2\delta_2)}{r_2^2\sin(2\delta_2) - r_1^2\sin(2\delta_1)}.
$$
and then the diamond norm distance is given by 
$$
    ||p\mathcal{T}_{U_1} + (1-p)\mathcal{T}_{U_2} - \mathcal{Z}_\theta||_\diamond = 2\left(p(1-r_1^2\cos^2(\delta_1)) + (1-p)(1 - r_2^2\cos^2(\delta_2))\right) \le \varepsilon
$$
Suppose that we have already found a very cheap under-rotation and $r_1, \delta_1$ are fixed.
For example, when target angle $\theta$ is close to zero
but
$$\mathcal{D}_\diamond(e^{i\theta Z},I) = 2|\sin(\theta)| > \varepsilon $$
the identity gate is a very cheap under-rotated approximation, however it is not sufficiently close to $e^{i\theta Z}$
such that identity gate is a solution to the diagonal approximation problem.
Next we derive the 2D region of all possible values of $r_2e^{i(\theta + \delta_2)}$, such that
the diamond norm distance is bounded by $\varepsilon$. We focus on the rotated region for values $r_2 e^{i\delta_2}$

Let us assume that $\delta_1 > -\pi/2$, $\delta_2 < \pi/2$ and introduce the following notation: 
$$
x_1 = r_1 \cos(\delta_1),\, y_1 = -r_1 \sin(\delta_1), x = r_2\cos(\delta_2), y = r_2 \sin(\delta_2) 
$$
By definition $x_1,y_1$ are non-negative and the following constraints on $x,y$ hold:
$$
x > 0,\, y \ge 0,\, x^2 + y^2 \le 1
$$
Using this notation probabilities $p,1-p$ become: 
$$
p = xy/(xy+x_1 y_1),\, 1-p = x_1 y_1 / (xy + x_1 y_1)
$$
And the diamond norm condition becomes
$$
xy\cdot(1 - x_1^2) + x_1 y_1\cdot(1 - x^2) \le \varepsilon(xy + x_1 y_1)/2
$$
Next we collect coefficients in front of $x^2$ on the right and in front of $xy$ on the left:
$$
xy\cdot(1 - \varepsilon/2 - x_1^2 ) \le x^2\cdot(x_1 y_1) + (x_1 y_1) (\varepsilon/2 - 1)
$$
Because $x$ is positive we get: 
\begin{equation}\label{eq:mixing-hyperbola}
y\cdot(1  - \varepsilon/2 - x_1^2) \le x\cdot(x_1 y_1) + \frac{1}{x}\cdot(x_1 y_1)(\varepsilon/2 - 1).
\end{equation}

Note that when $x_1 = \sqrt{ 1 - \varepsilon/2}$, above conditions becomes $x \ge \sqrt{ 1 - \varepsilon/2}$.
This is exactly the case of evenly distributed errors between under and over rotations.
When $x_1 > \sqrt{ 1 - \varepsilon/2}$ the under-rotation is looser than evenly-distributed and over-rotation needs to be more accurate.
When $x_1 < \sqrt{ 1 - \varepsilon/2}$ under-rotation is more accurate than evenly-distributed and over-rotation can be looser.
In the latter two cases the 2D region of possible values of the top-left entry is bounded by line $y = 0$, circle $x^2 + y^2$
and the hyperbola given by~\eq{mixing-hyperbola}.
The hyperbola crosses line $y = 0$ at $x = \sqrt{1-\varepsilon/2}$. In all cases we can use our general approach to 
find sequences with top-left entry in the 2D region. See~\fig{additional-unitary-mixing}.

\begin{figure}
\centering
\includegraphics[scale=0.4]{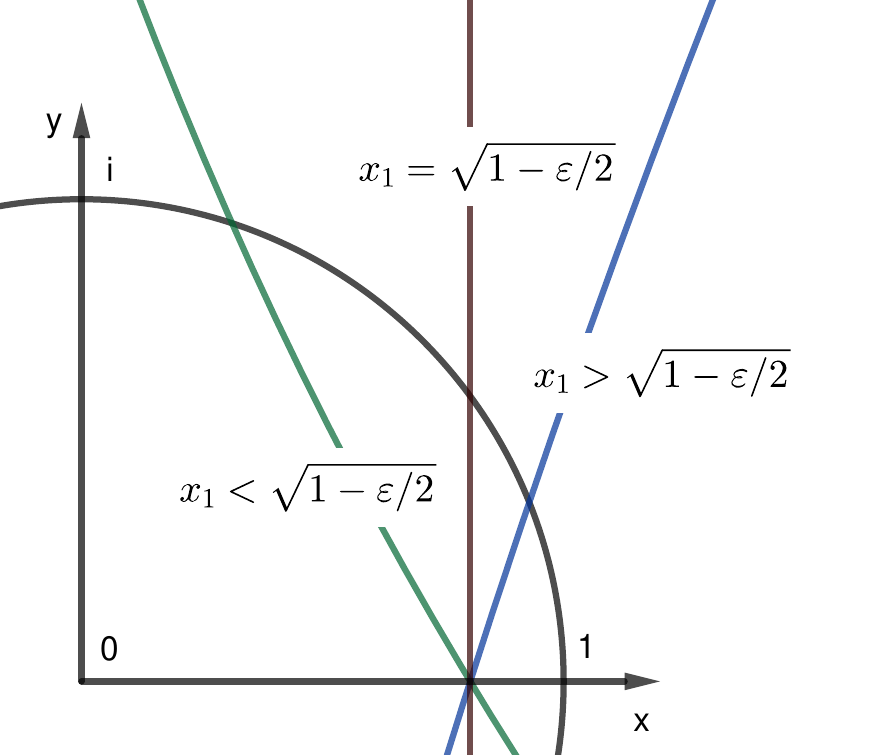}
\caption[Additional solutions for unitary mixing]{
\label{fig:additional-unitary-mixing}
Curves bounding 2D region for the top-left entry $u$ of over-rotated approximating unitary, when under-rotated approximating unitary is fixed.
All curves must be rotated by angle $\theta$ around the origin $(0,0)$ for target rotation angle $\theta$;
$x_1 = r_1 \cos(\delta_1)$, where $r_1 e^{i(\delta_1 + \theta)}$ is top-left entry of the under-rotated approximating unitary.
The curves intersection point has coordinates $(\sqrt{1-\varepsilon/2},0)$.
}
\end{figure}

\subsection*{Additional solutions for fallback mixing}

Similar analysis applies to \cref{sec:fallback-mixing}. We again consider under-rotated approximation fixed and use notation from the previous subsection.
Recall, that according to \cref{thm:fallback-mixture-diamond-distance} we mix under-rotated and over-rotated approximations with the same 
probabilities as in unitary mixing case. However, the expression for the diamond norm distance is slightly different: 

$$
\nrm{  p r_1^2 \left( \mathcal{Z}_{\theta + \delta_1} - \mathcal{Z}_\theta \right) + (1-p) r^2_2 \left( \mathcal{Z}_{\theta + \delta_2} - \mathcal{Z}_\theta \right) }_\diamond = 
2\left(p r^2_1\sin^2(\delta_1) + (1-p) r^2_2\sin^2(\delta_2)\right) \le \varepsilon
$$
Using the notation from the previous subsection we get 
$$
xy \cdot y_1^2 + y^2 \cdot x_1 y_1 \le \varepsilon(xy + x_1 y_1)/2
$$
Collecting coefficients near $xy$ on the left and coefficients near $y^2$ on the right we get: 
$$
xy\cdot(y_1^2 - \varepsilon/2) \le -y^2 \cdot x_1 y_1 + \varepsilon x_1 y_1 / 2 
$$
Using the fact that $y$ is positive and dividing both sides by $y$ we get inequality:
\begin{equation}\label{eq:fallback-mixing-hyperbola}
x\cdot(y_1^2 - \varepsilon/2) \le -y \cdot x_1 y_1 + \frac{1}{y} \cdot \varepsilon x_1 y_1 / 2.
\end{equation}
Similarly to the previous subsection, the 2D region for complex values $r_2 e^{i\delta_2}$ is bounded by a unit circle, line $x=0$
and the hyperbola~\eq{fallback-mixing-hyperbola}.

The condition $y_1 = \sqrt{\varepsilon/2}$ corresponds to equally distributing errors.
In this case we have $y \le \sqrt{\varepsilon/2}$. Note that this condition is different from the slightly weaker condition
that we use in \cref{sec:fallback-mixing}. One can modify proofs in \cref{sec:fallback-mixing}
to use condition $y \le \sqrt{\varepsilon/2}$ instead.

\section{Diamond difference of a twirled mixture}
\label{sec:diamond-difference-twirled-mixture}

In this section we prove \cref{lem:pauli-channel-error} and \cref{thm:unitary-mixture-diamond-distance} which provide expressions for the accuracy of twirled mixtures.
We begin by finding a convenient form for the $SZ$ twirl of a qubit unitary.
The following Proposition states that the twirl 
\begin{equation}
  \label{eq:sz-twirl-definition}
  \mathcal{T}_U(\rho)  
  = \frac{1}{4}\sum_{\sigma \in \{I,Z,S,S^\dagger\}} \left(\sigma U \sigma^\dagger\right) \rho \left(\sigma U^\dagger \sigma^\dagger\right)
\end{equation}
of $U$ is a Pauli channel, except for two terms: $\rho Z$ and $Z\rho$.
\begin{prop}[$SZ$ twirling of a qubit unitary]
  \label{prop:sz-twirl}
  Given a qubit unitary
  \begin{equation}
    U =
    \left(
    \begin{array}{cc}
        re^{i\theta} & -v^*          \\
        v             & re^{-i\theta}
      \end{array}
    \right),
  \end{equation}
  the twirl of $U$ over $\{S,Z\}$ can be expressed as
  \begin{equation}
    \label{eq:sz-twirl-process-matrix}
    \mathcal{T}_{U}(\rho) = \mathcal{P}_{r,\theta}(\rho) - \frac{ir^2\sin(2\theta)}{2}(\rho Z - Z\rho)
  \end{equation}
  where Pauli channel $\mathcal{P}_{r,\theta}$ is defined by
  \begin{equation}
    \mathcal{P}_{r,\theta}(\rho) = r^2\cos^2(\theta)\rho + \frac{1-r^2}{2}(X\rho X + Y\rho Y) + r^2\sin^2(\theta)Z\rho Z.
  \end{equation}
\end{prop}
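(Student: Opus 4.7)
The plan is to decompose $U$ in the Pauli basis, track how each of the four conjugating elements acts on that decomposition, and then directly compute the process (chi) matrix of the averaged channel $\mathcal{T}_U$.

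First I would write $U = a I + i b X + i c Y + i d Z$ with real coefficients $a = r\cos\theta$, $d = r\sin\theta$, and $b,c$ determined by the real and imaginary parts of $v$. Unitarity of $U$ gives $a^2 + b^2 + c^2 + d^2 = 1$, which in particular yields $b^2 + c^2 = 1 - r^2$. Next I would record the action of conjugation by each $\sigma \in \{I, Z, S, S^\dagger\}$ on the Pauli generators. Since $S$ and $Z$ both commute with $Z$, we have $\sigma Z \sigma^\dagger = Z$ and $\sigma I \sigma^\dagger = I$ for all four. For the off-diagonal Paulis, $Z X Z = -X$, $Z Y Z = -Y$, $S X S^\dagger = Y$, $S Y S^\dagger = -X$, and $S^\dagger X S = -Y$, $S^\dagger Y S = X$. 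This yields explicit Pauli expansions of the four conjugates $U_\sigma = \sigma U \sigma^\dagger$ which differ only in the coefficients of $X$ and $Y$.

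The main step is to compute the process matrix of $\mathcal{T}_U$. For any unitary $V = \sum_Q p_Q Q$ expanded in the Pauli basis $\{I,X,Y,Z\}$, the induced channel satisfies $V\rho V^\dagger = \sum_{Q,Q'} p_Q\, p_{Q'}^*\, Q\rho Q'$, so the chi matrix of $\mathcal{T}_U$ is the average of the four rank-one matrices $p^{(\sigma)}(p^{(\sigma)})^*$. For the diagonal entries, the $I$ and $Z$ coefficients are unchanged across $\sigma$, giving $\tilde\chi_{I,I} = a^2$ and $\tilde\chi_{Z,Z} = d^2$, whereas the $X$ and $Y$ coefficients swap (with signs) across the four conjugates, producing $\tilde\chi_{X,X} = \tilde\chi_{Y,Y} = (b^2+c^2)/2 = (1-r^2)/2$. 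For off-diagonal entries, the sign pattern in the coefficients of $X$ across $\{I, S, Z, S^\dagger\}$ is $(ib, -ic, -ib, ic)$ and for $Y$ is $(ic, ib, -ic, -ib)$; summing the four products shows that every off-diagonal chi entry vanishes except for $\tilde\chi_{I,Z}$ and $\tilde\chi_{Z,I}$. Those two evaluate to $-iad$ and $iad$ respectively, and substituting $ad = \tfrac{1}{2} r^2 \sin(2\theta)$ gives the claimed $\rho Z - Z\rho$ contribution.

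Finally I would assemble the pieces into the channel form, identify the diagonal-in-chi part with $\mathcal{P}_{r,\theta}$ as defined in the proposition, and rewrite the remaining two terms as $-\tfrac{i r^2 \sin(2\theta)}{2}(\rho Z - Z\rho)$ to match \eqref{eq:sz-twirl-process-matrix}. The calculation is routine once the conjugation rules are in hand; the main obstacle is purely bookkeeping, namely carefully signing the contributions of $X$ and $Y$ under the four conjugations so that the off-diagonal chi entries that should vanish do vanish and the surviving $I$-$Z$ coherences come out with the correct factor of $i$.
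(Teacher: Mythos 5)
Your proposal is correct and takes essentially the same approach as the paper: decompose $U$ in the Pauli basis, track the action of each conjugating element, compute the process (chi) matrix of the averaged channel, and use $b^2+c^2 = 1-r^2$ to eliminate the dependence on $v$. The paper only sketches the process-matrix bookkeeping, whereas you spell out the sign pattern of the $X$ and $Y$ coefficients and the surviving $I$--$Z$ coherences explicitly; this is the same calculation, just more detailed.
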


\begin{proof}[Proof (sketch)]
  Recall that any qubit unitary can be written as:
  \begin{equation}
    U = t\cdot I + x\cdot iX + y\cdot iY + z\cdot iZ
  \end{equation}
  where
  \begin{equation}\begin{aligned}
      t & = \mathrm{Re}(r e^{i\theta}) = r\cos(\theta), \\
      x & =\mathrm{Im}(v),                               \\
      y & =-\mathrm{Re}(v),                            \\
      z & =\mathrm{Im}(r e^{i\theta}) = r\sin(\theta).
    \end{aligned}\end{equation}

  Using $SXS^\dagger = Y$, $SYS^\dagger = -X$ and $SZS^\dagger = Z$ we have
  \begin{align}
    ZUZ &= t\cdot I - x\cdot iX - y\cdot iY + z\cdot iZ \\
    SUS^\dagger &= t\cdot I - y\cdot iX + x\cdot iY + z\cdot iZ \\
    S^\dagger US &= t\cdot I + y\cdot iX - x\cdot iY + z\cdot iZ 
  \end{align}

  Any qubit channel can be written in term of its process matrix $\chi$ as:
  \begin{equation}
    \Phi(\rho)  = \sum_{P,Q \in \{I,X,Y,Z\}} \chi_{P,Q} P \rho Q.
  \end{equation}
  The proof then proceeds by deriving the process matrix with respect to $t, x, y, z$ for each of the four terms of \cref{eq:sz-twirl-definition} and taking the sum.  The resulting sum contains some $x^2 + y^2$ terms which can be rewritten as
  \begin{equation}
    x^2 + y^2 =
    1 - t^2 - z^2 =
    1-r^2
  \end{equation}
  in order to remove the dependence on $v$.
\end{proof}

We now proceed with proving \lemm{pauli-channel-error}, that the mixture $p\mathcal{T}_{U_1} + (1-p)\mathcal{T}_{U_2}$ yields a Pauli channel error.  
The main idea is to show that mixing $U_1$, $U_2$ eliminates the terms $\rho Z$ and $Z\rho$ from \cref{eq:sz-twirl-process-matrix}.
\begin{proof}[Proof of \cref{lem:pauli-channel-error}]
  By substituting $\mathcal{Z}_{-\theta}(\rho) = e^{-i Z\theta}\rho e^{i\theta Z}$ for $\rho$ in \cref{eq:pauli-channel-error-claim}, proving \cref{lem:pauli-channel-error} is equivalent to proving
  \begin{equation}
    p\mathcal{T}_{U_1}(\mathcal{Z}_{-\theta}(\rho)) + (1-p)\mathcal{T}_{U_2}(\mathcal{Z}_{-\theta}(\rho)) = \mathcal{E}(\rho).
  \end{equation}
  Define rotated versions of $U_1$ and $U_2$,
  \begin{equation}
    V_1 := U_1e^{-i\theta Z}, V_2 := U_2e^{-i\theta Z}.
  \end{equation}
  Then the twirl of $V_1$ is equivalent to $e^{-i\theta}$ followed by the twirl of $U_1$,
  \begin{equation}\begin{aligned}
      \mathcal{T}_{V_1}(\rho) & = \frac{1}{4} \sum_{W\in\{I,Z,S,S^\dagger\}} W V_1\rho V_1^\dagger W^\dagger                         \\
                              & = \frac{1}{4} \sum_{W\in\{I,Z,S,S^\dagger\}} W U_1 (e^{-i\theta Z} \rho e^{i\theta Z}) U_1^\dagger W^\dagger \\
                              & = \mathcal{T}_{U_1}(\mathcal{Z}_{-\theta}(\rho)).
    \end{aligned}\end{equation}
  Similarly for $V_2$,
  \begin{equation}
    \mathcal{T}_{V_2}(\rho) = \mathcal{T}_{U_2}(\mathcal{Z}_{-\theta}(\rho)).
  \end{equation}
  We therefore seek to show that
  \begin{equation}
    \label{eq:rotated-mixture-is-pauli-channel}
    p\mathcal{T}_{V_1}(\rho) + (1-p)\mathcal{T}_{V_2}(\rho) = \mathcal{E}(\rho) = p\mathcal{P}_{r_1,\delta_1} + (1-p)\mathcal{P}_{r_2,\delta_2}.
  \end{equation}

  First, expand $V_1$
  \begin{equation}
    V_1 = U_1 e^{-i\theta Z} =
    \left(
    \begin{array}{cc}
        r_1 e^{i \delta_1} & -e^{i \theta } v_1^* \\
        e^{-i \theta } v_1  & r_1 e^{-i \delta_1 }
      \end{array}
    \right)
  \end{equation}
  Next substitute into~\eq{sz-twirl-process-matrix} to obtain
  \begin{equation}
    \mathcal{T}_{V_1}(\rho) = \mathcal{P}_{r_1, \delta_1}(\rho) - \frac{ir_1^2\sin(2\delta_1)}{2}(\rho Z - Z\rho).
  \end{equation}
  A similar result is obtained for $\mathcal{T}_{V_2}$,
  \begin{equation}
    \mathcal{T}_{V_2}(\rho) = \mathcal{P}_{r_2, \delta_2}(\rho) - \frac{ir_2^2\sin(2\delta_2)}{2}(\rho Z - Z\rho).
  \end{equation}

  In order to obtain~\eq{rotated-mixture-is-pauli-channel} the $\rho Z$ and $Z\rho$ terms of $p\mathcal{T}_{V_1} + (1-p)\mathcal{T}_{V_2}$ must cancel.  Define $\alpha_k = r_k^2\sin(2\delta_k)$.  Then indeed
  \begin{equation}\begin{aligned}
      p\frac{ir_1^2\sin(2\delta_1)}{2} + (1-p)\frac{ir_2^2\sin(2\delta_2)}{2} & =
      \frac{\alpha_2}{\alpha_2 - \alpha_1}\cdot \frac{i\alpha_1}{2} + \left(1-\frac{\alpha_2}{\alpha_2 - \alpha_1}\right)\frac{i\alpha_2}{2}                                                                    \\
                                                                              & = \frac{\alpha_2}{\alpha_2 - \alpha_1}\cdot \frac{i\alpha_1}{2} - \frac{\alpha_1}{\alpha_2 - \alpha_1}\cdot \frac{i\alpha_2}{2} \\
                                                                              & = \frac{i\alpha_2\alpha_1}{2(\alpha_2 - \alpha_1)} - \frac{i\alpha_1\alpha_2}{2(\alpha_2 - \alpha_1)}                           \\
                                                                              & = 0.
    \end{aligned}\end{equation}
  Finally, note that $0 < p < 1$ since $r_1^2\sin(2\delta_1) < 0 < r_2^2\sin(2\delta_2)$ due to constraints on $\delta_1, \delta_2$.
\end{proof}

We are now ready to prove~\theo{unitary-mixture-diamond-distance} that
\begin{equation}
  ||p \mathcal{T}_{U_1} + (1-p)\mathcal{T}_{U_2} - \mathcal{Z}_\theta||_\diamond 
  = 2\left(1 - p r_1^2\cos^2(\delta_1) - (1-p)r_2^2\cos^2(\delta_2)\right).
\end{equation}
\begin{proof}[Proof of~\theo{unitary-mixture-diamond-distance}]
  First note that
  \begin{equation}
    e^{-i\theta} U_1 =
    \left(
    \begin{array}{cc}
        r_1e^{i\delta_1} & e^{i\theta} v_1^* \\
        e^{-i\theta} v_1 & r_1e^{-i\delta_1}
      \end{array}
    \right),
  \end{equation}
  \begin{equation}
    e^{-i\theta} U_2 =
    \left(
    \begin{array}{cc}
        r_2e^{i\delta_2} & e^{i\theta} v_1^* \\
        e^{-i\theta} v_1 & r_2e^{-i\delta_2}
      \end{array}
    \right).
  \end{equation}
  By~\lemm{pauli-channel-error} we have
  \begin{equation}
    p\mathcal{T}_{e^{-i\theta} U_1}(\rho) + (1-p)\mathcal{T}_{e^{-i\theta} U_2}(\rho)
    = \mathcal{E}(\mathcal{Z}_0(\rho))
    = \mathcal{E}(\rho).
  \end{equation}
  Also note that $\mathcal{Z}_\theta(\mathcal{T}_U(\rho)) = \mathcal{T}_{e^{i\theta Z}U}(\rho)$ since $e^{i\theta Z}$ commutes with $S$ and $Z$.
  Therefore by unitary invariance of the diamond norm
  \begin{equation}\begin{aligned}
      ||p\mathcal{T}_{U_1} + (1-p)\mathcal{T}_{U_2} - e^{i\theta Z}||_\diamond
       & = ||p\mathcal{Z}_{-\theta}(\mathcal{T}_{U_1}) + (1-p)\mathcal{Z}_{-\theta}(\mathcal{T}_{U_2}) - I||_\diamond \\
       & = ||p\mathcal{T}_{e^{-i\theta} U_1} + (1-p)\mathcal{T}_{e^{-i\theta} U_2} - I||_\diamond                         \\
       & = ||\mathcal{E} - I||_\diamond.
    \end{aligned}\end{equation}
  The quantity $\mathcal{E} - I$ is a Pauli channel. The diamond norm of a Pauli channel is 
  given by the sum of the absolute values of the terms in the process matrix~(see~\cref{thm:diamond-distance-beween-pauli-channels}).
  We have
  \begin{equation}\begin{aligned}
      ||\mathcal{E} - I||_\diamond =\,
       & |p r_1^2\cos^2(\delta_1) + (1-p) r_2^2\cos^2(\delta_2) - 1| \\
       & + 2|p (1-r_1^2)/2 + (1-p) (1-r_2^2)/2|                      \\
       & +|p r_1^2\sin^2(\delta_1) + (1-p) r_2^2\sin^2(\delta_2)|.
    \end{aligned}\end{equation}
  The right hand side can then be simplified to $2(1 - p r_1^2\cos^2(\delta_1) - (1-p) r_2^2\cos^2(\delta_2))$.
\end{proof}

\section{Diamond distance of a fallback mixture}
\label{sec:diamond-distance-fallback-mixture}

In this section we prove the bound of~\theo{fallback-mixture-diamond-distance} on the diamond difference between a mixture of fallback protocols and a target diagonal unitary.  The main idea is to coerce the mixture of projective rotations into the form required by~\theo{unitary-mixture-diamond-distance}.  The bound then follows by simple substitution.

\begin{proof}[Proof of \theo{fallback-mixture-diamond-distance}]
  The mixture of the two fallback channels $F_1$ and $F_2$ is given by
  \begin{equation}
    pF_1(\rho) + p'F_2(\rho) =
    p q_1\mathcal{Z}_{\theta_1}(\rho) + pq'_1\mathcal{B}_1(\rho) +
    p' q_2\mathcal{Z}_{\theta_2}(\rho) + p'q'_2\mathcal{B}_{2}(\rho),
  \end{equation}
  where we have used $p' = 1-p$ and similarly for $q_1',q_2'$.
  Using the triangle inequality we obtain
  \begin{equation}\begin{aligned}
      ||pF_1 + p'F_2 - \mathcal{Z}_{\theta}||_\diamond
      \leq & \, ||p q_1\mathcal{Z}_{\theta_1} + p'q_2\mathcal{Z}_{\theta_2} - (pq_1 + p'q_2)\mathcal{Z}_{\theta}||_\diamond \\
           & + pq'_1||\mathcal{B}_1 - \mathcal{Z}_{\theta}||_\diamond                                                       \\
           & + p' q_2'||\mathcal{B}_{\theta'_2}-\mathcal{Z}_{\theta}||_\diamond.
    \end{aligned}\end{equation}
  The second and third terms match those of~\cref{eq:fallback-mixture-bound}.
  We now examine the first term
  \begin{equation}
    \label{eq:projective-mixture-rescaled}
    ||p q_1\mathcal{Z}_{\theta_1} + p'q_2\mathcal{Z}_{\theta_2} - (pq_1 + p'q_2)\mathcal{Z}_{\theta}||_\diamond  =
    (pq_1 + p'q_2) ||s\mathcal{Z}_{\theta_1} + (1-s)\mathcal{Z}_{\theta_2} - \mathcal{Z}_\theta||_\diamond
  \end{equation}
  where
  \begin{equation}
    s = pq_1 / (pq_1 + p'q_2) = \frac{\sin(2\delta_2)}{\sin(2\delta_2) - \sin(2\delta_1)}.
  \end{equation}
  Now, since $\mathcal{Z}_\theta = \mathcal{T}_{e^{i\theta Z}}$ and by~\theo{unitary-mixture-diamond-distance} we have
  \begin{equation}\begin{aligned}
      ||s\mathcal{Z}_{\theta+\delta_1} + (1-s)\mathcal{Z}_{\theta + \delta_2} - \mathcal{Z}_\theta||_\diamond
       & = ||s\mathcal{T}_{R(\theta+\delta_1)} + (1-s)\mathcal{T}_{R(\theta + \delta_2)} - \mathcal{Z}_\theta||_\diamond \\
       & = 2(1-s\cos^2(\delta_1) - (1-s)(\cos^2(\delta_2))                                                               \\
       & = 2(s\sin^2(\delta_1) - (1-s)(\sin^2(\delta_2)).
    \end{aligned}\end{equation}
  Finally, substituting back into~\cref{eq:projective-mixture-rescaled} we obtain
  \begin{equation}\begin{aligned}
      ||p q_1\mathcal{Z}_{\theta_1} + p'q_2\mathcal{Z}_{\theta_2} - (pq_1 + p'q_2)\mathcal{Z}_{\theta}||_\diamond
       & = (pq_1 + p'q_2) \cdot 2(s\sin^2(\delta_1) - (1-s)(\sin^2(\delta_2))\\
       & = 2\left(pq_1\sin^2(\delta_1) + (1-p)q_2\sin^2(\delta_2)\right).
    \end{aligned}\end{equation}
\end{proof}

\end{document}